\newcommand{\beqn}{\begin{eqnarray*}}
\newcommand{\eeqn}{\end{eqnarray*}}
\newcommand{\beq}{\begin{equation}}
\newcommand{\eeq}{\end{equation}}
\newcommand{\R}{\mathbb{R}}
\newcommand{\argmin}{\arg\min}
\newcommand{\argmax}{\arg\max}
\newcommand{\mean}[1]{\mathbb{E}\!\left[#1\right]}
\newcommand{\prob}[1]{\Pr\left[#1\right]}
\newcommand{\indic}[1]{\mathbbm{1}_{\{#1\}}}
\newcommand{\Zipf}{\tau}
\newcommand{\youtube}{{YouTube}}
\newcommand{\sfrac}{\nicefrac}
\DeclareSymbolFont{symbolsC}{U}{pxsyc}{m}{n}
\DeclareMathSymbol{\multimapdot}{\mathrel}{symbolsC}{20}
\DeclareMathSymbol{\multimapbothvert}{\mathrel}{symbolsC}{149}
\newcommand{\downtruncated}{{ _\text{\rotatebox[origin=c]{270}{$\multimapdot$}}}}
\newcommand{\uptruncated}{{ _\text{\rotatebox[origin=c]{90}{$\multimapdot$}}}}
\newcommand{\nottruncated}{{ _{\text{\scalebox{.85}{$\multimapbothvert$}}}}}
\DeclareSymbolFont{symbolsC}{U}{pxsyc}{m}{n}
\DeclareMathSymbol{\multimapdot}{\mathrel}{symbolsC}{20}
\DeclareMathSymbol{\multimapbothvert}{\mathrel}{symbolsC}{149}
\newcommand{\ev}{{\bf e}} 
\newcommand{\Yc}{{\cal Y}}
\newcommand{\Xc}{{\cal X}}
\newcommand{\Sc}{{\cal S}}
\newcommand{\Nc}{{\cal N}}
\newcommand{\Mc}{{\cal M}}
\newcommand{\Sigmam}{\hbox{\boldmath$\Sigma$}}
\newcommand{\Mone}{{I1}}
\newcommand{\Mtwo}{{I2}}
\newtcolorbox{box_example}[2][]{colbacktitle=red!10!white,colframe=white,
	colback=red!1!white,coltitle=red!70!black,
	title={#2}, fonttitle=\bfseries,#1}
\newtcolorbox{opt}[1]{title={#1},fonttitle=\sffamily\bfseries\large,colbacktitle=black!60!white,colframe=white!97!black,colback=white!94!black}
\begin{document}
\title{Cache Optimization Models and Algorithms}
\newcounter{one}
\setcounter{one}{1}
\newcounter{two}
\setcounter{two}{2}
\newcounter{three}
\setcounter{three}{3}
\author{
Georgios S. Paschos$^{\fnsymbol{one}}$, George Iosifidis$^{\fnsymbol{two}}$ and Giuseppe Caire$^{\fnsymbol{three}}$\vspace{0.1in}\\
\phantom{a}\\
$^\fnsymbol{one}$
Huawei Technologies Co.~Ltd., France \\
$^\fnsymbol{two}$  Trinity College, Ireland \\
$^\fnsymbol{three}$ TU-Berlin, Germany
}

\maketitle
\tableofcontents
%
\chapter{Introduction}\label{ch:intro}

Storage resources and caching techniques permeate almost every area of communication networks today. 
In the near future, caching is set to play an important role in storage-assisted Internet architectures,  information-centric networks, and wireless systems,  reducing operating and capital expenditures and improving the services offered to users.
 In light of the remarkable data traffic growth and the increasing number of rich-media applications, the impact of caching is expected to become even more profound than it is today. Therefore, it is crucial to design these systems in an optimal fashion, ensuring the maximum possible performance and economic benefits from their deployment. To this end, this article presents a collection of detailed models and algorithms, which are synthesized to  build a powerful analytical framework for caching optimization.

\section{Historical background and scope}

The term \emph{cache} was introduced in computer systems in 1970s to describe a memory with very fast access but typically small capacity. In computer applications, memory access often exhibits locality, i.e., the majority of requests are related to memory blocks in a specific area known as \emph{hot spot}. {By replicating these hot spots on a cache it is possible to accelerate the performance of the entire memory system.}
One of the most important first problems in this context was to select which memory blocks to replicate in order to maximize the expected benefits. The design of such \emph{caching policies} remains as  one of the main challenges {in caching systems} 
and several important results {were developed in the early era of computer systems}, for instance the optimality of the oracle MIN policy \cite{Belady66}.

The above caching idea was later applied to the Internet. As the population of users was growing fast in 1990s, the classical client-server connection model became impractical since all content requests (for web pages, in particular) were routed to few central servers. This was creating server congestion leading to an unsustainable network, and the idea of using \emph{Internet caches} was proposed to address this issue. Mimicking the computer cache, an Internet cache is deployed closer to end users and replicates certain web pages. Given the content popularity skewness, i.e., few pages attracting the majority of requests, even small caches can have impressive performance benefits by storing only subsets of the web pages. Indeed, it soon became clear that \emph{web caching} reduces the network bandwidth usage and server congestion, and improves the content access time for users. The caching policy design is a more intricate problem in these interconnected web caches, and decisions such as content routing and cache dimensioning must be {jointly} devised.

The last few years we witness a resurgence of interest in caching in the domain of wireless networks. The expansive growth of mobile video traffic in conjunction with exciting developments{ --- like the use of  \emph{coding techniques} --- }
have placed caching at the forefront of research in wireless communications. There is solid theoretical and practical evidence today that memory can be a game-changer in our efforts to increase the effective throughput and other key performance metrics, and there are suggestions for deploying caches at the network core, the base stations or even the mobile devices. At the same time, novel services which involve in-network computations, require pre-stored information (e.g., various machine learning services), or are bounded by low latency operation constraints, can greatly benefit from caching. In fact, many caching enthusiasts argue that such services can only be deployed if they are supported by intelligent caching techniques. 

Amidst these developments, it is more important than ever to model, analyze, and optimize the performance of caching systems. Quite surprisingly, many existing caching solutions, albeit practical, have not been designed using rigorous mathematical tools. Hence, the question of whether they perform optimally remains open. At the same time, the caching literature spans more than 40 years, different systems and even different research communities, and we are lacking a much-needed unified view on caching problems and solutions. This article aspires to fill these gaps, by presenting the theoretical foundations of caching and the latest conceptual and mathematical advances in the area. We provide detailed technical arguments and proofs, aiming to create a stable link between the past and future of caching analysis, and offer a useful starting point for new researchers. In the remainder of this chapter, we set the ground by discussing certain key caching systems and ideas, and explain how the contents of this monograph are organized.

\section{The content delivery network}

A central idea in caching systems is the \emph{Content Delivery Network}. CDNs consist of (i) the origin server; (ii) the caches; (iii) the backbone network; and (iv) the  points of ingress user traffic. The origin server is often deployed at a remote location with enormous storage capabilities (e.g., a datacenter), and stores all the content items a user of this service might request, i.e., the content \emph{catalog}. The caches are smaller local server installations, which are distributed geographically {near demand, and are connected with the backbone network}.
Before CDNs, the users would establish TCP connections with the origin server in order to obtain the content items. In CDNs however, these TCP connections are redirected to caches, which can serve only the items that are already locally replicated. 
 Apart from content caching, modern CDN systems {perform} traffic optimization, offer \emph{Quality of Service} (QoS), and protect from \emph{Denial of Service} (DoS) attacks. However, in this article we will focus on the aspect of content caching, and specifically on the intelligence involved in orchestrating the caching operations.

An iconic CDN system is the ``{Akamai Intelligent Platform}'', see \cite{akamai} for a detailed description. Akamai was one of the most prominent CDN providers in the booming Internet era of 2000s, and today is responsible for delivering 20$\%$ approximately of the overall Internet traffic. Its 216K caching servers are dispersed at network edges offering low-latency content access around the globe. The {Akamai model} was designed to intercept http traffic using a DNS redirect: when a user wants to open a website with the http protocol, it would first contact the local DNS server to recover the IP of the origin  server. The intelligent platform  replaces the DNS entry with the IP of an Akamai cache containing the requested content, and hence the http request is eventually served by that cache. The intelligent operations are handled by the \emph{mapping} system, which decides where to cache each content item, and accordingly maps DNS entries to {caches}.
Although the mapping system is effectively deciding the {placement of content}, the local caches are also operated with reactive policies such as the famous LRU and its variants.

\subsubsection{Benefits of caching}

The replication of few popular contents can significantly \textbf{reduce the traffic at the backbone network}. When a requested content  is available at a nearby cache (an event called \emph{hit}), the user request is redirected to the path that connects it with that cache instead of the origin server. Therefore, caches are often scattered around the network to minimize the geodesic distance, and/or network hops, from potential requesters. Previous research has investigated solutions for the {optimal placement of servers}, e.g., \cite{qiu_replicas01}, and the sizing of cache storage, called {dimensioning of caches} \cite{kelly00}. {Since more hits mean less network traffic, an important criterion for deploying caches is the increase of the \emph{cache hit ratio}}. Related mathematical optimization problems in this context include the choice of the eviction policy, i.e., the dynamic selection of the contents that are evicted from an overflowing cache \cite{Sleator85}, as well as the strategic content placement for cache collaboration \cite{BektasCOR07}. 
Typically, different such policies are combined  to optimize the transportation of Internet traffic and make CDNs profitable. 

Another benefit of caching is \textbf{latency reduction}, i.e., the decrease of elapsed time between the initiation of a request and the content delivery. Typically, the latency improvement is attributed to cutting down {propagation latency}. Packets traversing a transcontinental link, for example, may experience latency up to $250msec$ due to the speed-of-light limitation \cite{singla2014internet}. Given that each TCP connection involves the exchange of several messages, it might take seconds before a requested content is eventually delivered over such links. These large latencies are very harmful for e-commerce and other real-time applications, and their improvement has been one of the main market-entry advantages of CDNs. Indeed, when a user retrieves  contents 
from a nearby cache, the distance is greatly reduced, and so is the propagation time that delays the content delivery. Nevertheless, latency optimization in caching systems is an intricate goal, and there are some notable misconceptions.

Firstly, in most applications latency effects smaller than $30msec$ do not impact the user experience. Hence, one needs to be cautious in increasing the infrastructure costs in order to achieve faster delivery than this threshold. In other words, when it comes to latency criteria, a single local cache often suffices to serve a large metropolitan area. Secondly, regarding video content delivery, the latency requirements apply only to the first video chunks and not on the entire file. Delivering fast the first chunks and then exploiting the device's buffer is adequate for ensuring smooth reproduction even if the later video segments are delivered with higher latency. Finally, several low latency applications, such as reactive virtual reality, vehicular control, or industrial automation, cannot typically benefit from caching since their traffic is not reusable. Nevertheless, we note that there are scenarios where one can exploit caching (e.g., using proactive policies) in order to boost the Quality-of-Service performance of such demanding services.

Another important effect of web caching is that it \textbf{balances the load of servers}. 
For example, the Facebook Photo CDN leverages web browser caches on user devices, edge regional servers, and other caches in order to reduce the traffic reaching the origin servers. Notably, browser caches serve almost $60\%$ of traffic requests, due to the fact that users view the same content multiple times. Edge caches serve $20\%$ of the traffic (i.e., approximately $50\%$ of traffic not served by browser caches), and hence offer important off-network bandwidth savings by serving locally the user sessions. Finally, the remaining $20\%$ of content requests are served at the origin, using a combination of slow back-end storage and a fast origin-cache \cite{Huang_2013}. This CDN functionality shields the main servers from high load and increases the scalability of the architecture. Note that the server load is minimized when the cache hits are maximized, and hence the problem of server load minimization is equivalent to cache hit maximization. Therefore, in the remaining of this article  we will focus on hit maximization, as well as bandwidth and latency minimization.

\section{Wireless caching}

Caching has also been considered for improving content delivery in wireless networks \cite{paschos_16}. There is growing consensus that network capacity enhancement through the increase of physical layer access rate or dense deployment of base stations is a costly approach, and also outpaced by the fast-increasing mobile data traffic \cite{cisco}. Caching techniques promise to fill this gap, and several interesting ideas have been suggested to this end: {(i)} deep caching at the {evolved packet core} (EPC) in order to reduce content delivery delay \cite{caching-cellular}; {(ii)} caching at the base stations to alleviate congestion in their throughput-limited backhaul links \cite{golrezaei2012femtocaching}; {(iii)} caching at the mobile devices to leverage device-to-device communications \cite{femtocaching_d2d}; and {(iv)} coded caching for accelerating transmissions over a broadcast medium \cite{MaddahAli2014Fundamental}.

Recently developed techniques that combine caching with coding 
 demonstrate revolutionary \emph{goodput} scaling in bandwidth-limited cache-aided networks. This motivated researchers to revisit the fundamental question of how memory ``interacts'' with other types of resources. The topic of \emph{coded caching} started as a powerful tool for broadcast mediums, and led 
  towards establishing an information theory for memory. 
  Similarly, an interesting connection between memory and processing has been identified \cite{Li18}, creating novel opportunities for improving the performance of distributed and parallel computing systems. These lines of research have re-stirred the interest in joint consideration of bandwidth, processing {and memory resources, and promise high performance gains.}

Furthermore, the advent of technologies such as Software-Defined Networking (SDN) and Network Function Virtualization (NFV) create new opportunities for leveraging caching. Namely, they enable the fine-grained and unified control of storage capacity, computing power and network bandwidth, and allow a flexible deployment of in-network caching services in time and space. This gives rise to the new concept of content-centric network architectures that aspire to use storage and caching as a means to revolutionize the Internet,  as well as new business models are emerging today as new players are entering the content delivery market. Service providers like Facebook are acquiring their own CDNs, network operators deploy in-network cache servers to reduce their bandwidth expenditures, and content providers like Google, Netflix, and Amazon use caches to replicate their content world-wide. Interestingly, smaller content providers can buy caching resources on the cloud market to instantiate their service \emph{just in time and space}. 
These novel concepts create, unavoidably, new research questions for caching architectures and the caching economic ecosystem, and one of our goals in this document is to provide the fundamental underlying mathematical theories that can support research in these exciting directions.

\section{Structure}

In this section we provide a quick summary of the article, serving both as a warm-up for reading the entire article, as well as a map with directions to specific information.

We begin in \textbf{\emph{Chapter 2}} with a detailed treatment of content popularity, a key factor for the performance of caching policies. We first  explain the power-law popularity model, and how we can infer its parameters from a given dataset, and then use it to determine the optimal cache size. We define the \emph{Independence Reference Model} (IRM) for describing a request {generation} process. IRM is a widely used model for caching analysis, but it has limited accuracy since it fails to capture correlation effects between requests, namely \emph{temporal and spatial locality}. We discuss the state-of-the-art mathematical models which are more accurate than IRM in that respect, but also more difficult to use in practice. For the case of temporal correlations, we provide the optimal rule for popular/unpopular content classification that maximizes cache performance.

In \textbf{\emph{Chapter 3}} we explore the realm of \emph{online eviction policies}. A single cache receives content requests and we must design a  rule for evicting a content when the cache overflows. The design of eviction policies is an equally challenging and important problem, and we present the key theoretical results in this domain. We begin with the case where requests are arbitrary: an oracle policy --- known as ``Belady'' --- is shown to achieve the maximum number of hits under any request sequence, or sample path. This policy requires knowledge of the future requests, and therefore it is useful only as a benchmark. Using the Belady policy we prove that the ``Least Recently Used'' (LRU) rule provides the best competitive performance among all online policies, i.e.,  those that do not know the future. Then, for stationary IRM requests the ``Least Frequently Used'' (LFU) rule is the optimal, as it estimates (the considered static) content popularity using the observed frequencies. We also study the \emph{characteristic time} approximation, with which we obtain the performance of LRU for stationary requests, as well that of \emph{Time To Live} (TTL) caches which allow to optimally tune individual content hit probabilities. Last, we depart from the stationary assumption and take a model-free approach inspired by the Machine Learning framework of \emph{Online Convex Optimization}. We present an adaptation of the Zinkevich's online gradient policy to the  caching problem, and show that it achieves the optimal \emph{regret}, i.e., the smallest losses with respect to the best static cache configuration with complete future knowledge.

In \textbf{\emph{Chapter 4}} we study caching networks (CNs), i.e., systems where multiple caches are interconnected via a network graph. Here, we focus  on \emph{proactive caching} policies which populate the caches based on expected, i.e., estimated, demand. In CNs, the designer needs to decide where to cache each content item (caching policy), how to route the content from caches to the requesters (routing policy), and often decisions such as the dimensioning of the different caches and network links. Therefore, we start the chapter by explaining this general \emph{CN design and management problem}. This is a notoriously hard problem, that cannot be solved optimally for large CNs and content libraries. To gain a better understanding into the available solution methodologies, we survey a number of key subproblems: (i) the \emph{cache dimensioning} problem where we  decide where to place storage in the network; (ii) the \emph{content caching} in bipartite and tree graphs; and  (iii) the \emph{joint content caching and routing} problem in general graphs. Although these are all special cases of the general CN problem, they are governed by significantly different mathematical theories. Therefore, our exposition in this chapter serves to clarify where each mathematical theory applies best, and how to get a good approximate guarantee for each scenario.

In the following \textbf{\emph{Chapter 5}} we take an approach that combines the two previous chapters. Here we study a CN where the content popularity is unknown, and therefore the objective of the chapter is to design an CN eviction \& routing policy that at the same time learns the content popularity, decides in an online manner what to cache, and in a reactive manner how to route the content to the requestor. Our approach is also a generalization of the Online Gradient Ascent (OGA) explained in Chapter 3. We show that a policy that takes a step in the direction of a subgradient of the previous slot utlility function can provide ``no regret'' in the CN case as well. However, finding a subgradient direction is a much more complicated than the gradient of Chapter 3, and hence we provide directions as to cast  this problem as a Linear Program.

In the last \textbf{\emph{Chapter 6}} we examine a scaling network of caches, arranged in a square grid. We relax the original combinatorial problem and obtain a relaxed solution via convex optimization,  shown to be of the same order of performance with the actual integral. 
The chapter includes detailed results about the sustainability of networks aiming to deliver content in different regimes of (i) network size, (ii) catalogue size, and (iii) cache size.


\chapter{Content popularity}\label{ch:2}

Managing a cache effectively requires deep understanding of content popularity and the principles underlying the content request sequences. This chapter introduces basic concepts, presents the important IRM request model with power law popularities, and then proceeds with more detailed models that capture time and space correlations.

\section{Introduction to caching-related terms}

We begin with a few  terms that  will be useful across the entire article.  
A piece of reusable information is called a \emph{content}. Examples of contents include \youtube~videos, Netflix movies, music files, Facebook pictures, http web pages, documents, news feeds, etc. On the other hand, examples of non-reusable information include  VoIP calls, teleconference, sms, and control signaling {which are not relevant to caching}.
Fortunately, reusable contents account today for  80\% of the total Internet traffic~\cite{cisco2015}. This makes caching extremely relevant for communication networks and content delivery platforms.

\begin{dfn}[theorem style=plain]{Content catalog}{}
We consider a set of contents  $\mathcal{N}=\{1,2,\dots,N\}$,  called the catalog. The size of the catalog is  $N=|\mathcal{N}|$.
\end{dfn}

For simplicity we assume that all contents are of equal size, {unless otherwise stated}.\footnote{Capturing  varying sizes in the analysis is possible and interesting, see  \cite{gds,Fricker12}; {and we present some examples in Chapter 4.}} Sometimes we measure the catalog size in Bytes (B), with the understanding that $N$ can be recovered if we divide with the file size. The catalog, and its size,  depend on the application. For instance, the catalog of videos in a Video-on-Demand (VoD) application is in the order of TBs ($10^{12}$Bs), while it is estimated that the total catalog of the Internet has surpassed $10^{24}$Bs \cite{hilbert2012much,HowBigInternet16}.

Due to memory limitations, a cache can typically store only a subset of the catalog $\mathcal{M}\subset \mathcal{N}$, and we denote with $M=|\mathcal{M}|$ the total number of contents that fit in the cache. The ratio $\gamma=M/N \in [0,1]$ is the fraction of catalog that can be cached, called the \emph{relative cache size}. The value of $\gamma$ affects the cache performance. Ideally, we would like $\gamma$ to be as large as possible, since $\gamma=1$ means that we can cache the entire catalog, and therefore satisfy all possible requests.

However, in most applications $\gamma$ is very small. For example, \youtube's catalog grows by roughly 30PB every year, and the standing estimate today is 300PB. On the other hand, the Global Google Cache system --- used to cache \youtube~videos --- is based on  disk arrays with estimated size 200TB. 
 Related figures for Facebook catalog \cite{facebook_url}, Facebook photos \cite{Huang_2013}, and Netflix \cite{netflix_url}  are summarized in Table \ref{tab:gamma}. We see that the cache size is in most cases much smaller than the catalog. However, a small cache may  be useful due to the fact that some contents are very popular and may be requested very frequently. Hence 
caching may be effective even for small $\gamma$. To shed more light into this possibility, this chapter studies  the properties of content popularity, defined as follows.

\begin{table}[h!]
\begin{center}
\scriptsize
  \begin{tabular}{ | l | l | l  | l |}
    \hline
    \textbf{Application}           				& \textbf{Catalog Size $N$ (TB)}            & \textbf{Cache Size $M$ (TB)}      &  $\gamma\!=\!M/N$ \\ \hline \hline
    \text{Small CDN}   			& $10$	     				              & $1$						& $0.1$\\ \hline 
    \text{Netflix}              				& $314$   						      & $200+$		        		        & $0.63$\\ \hline 
    \text{Torrents file sharing}   			&  $1.5\times 10^{3}$	     	              & $40$					& $2.6\!\times\! 10^{-2}$\\ \hline 
        \text{Youtube}              				& $3\times 10^5$ 				      & $200+$					& $0.5\!\times\!10^{-3}$\\ \hline 
    \text{Facebook (photos)}              		& $1.5\times 10^{3}$ 			      & $300$  					& $0.2$\\  \hline
    \text{Facebook (total)}              		& $3\times 10^5$ 				      & $300$  					& $10^{-3}$\\  \hline
    \text{The Internet}					& $3\times 10^{8}$				      & --  					& -- \\  \hline
  \end{tabular}\vspace{-0.2in}
\end{center} 
\caption{Catalog estimates  and cache sizes of popular applications (source: web 2019).}
\label{tab:gamma}
\end{table}

\begin{dfn}[theorem style=plain]{Content popularity}{}
Consider the probability distribution $(p_n)$ over the set $\mathcal{N}$, where $p_n$ expresses how 
 likely it is that a request is issued for content $n$.
 The probability $p_n$ is called the popularity of content $n$. 
\end{dfn}

\section{Power law popularity}

In this section we ask the question: what is a ``good model'' for the distribution $(p_n)$? Where a good model should  capture reality and facilitate mathematical analysis and optimization.

\subsection{Zipf's law}
{The frequency of words appearing in text is} a well-studied topic  in the field of statistical linguistics \cite{manning1999foundations}. Zipf's law specifies that given a text of  natural language, the probability we encounter 
a word is inversely proportional to its rank in the frequency table.
We write,
 \[
p_n\propto n^{-1},\quad n=1,\dots,N,
\]
which means  that $p_1$ is the popularity of the most frequent word, $p_2$ that of the second most frequent, etc., and the Zipf  law governs the relation between all popularities, i.e.,  the distribution $(p_n)$.
 For example, the popularity ratio of the most frequent word  ``the''  over the 2nd most frequent ``be''  is exactly $p_1/p_2=2$. 
 
The reason why power laws appear in natural systems   has been a topic of extensive study. Links have been made to: mixtures of exponential distributions \cite{R_Farmer_08}, scale-invariance \cite{J_Chater_99} and invariance under aggregation \cite{R_Farmer_08}, random division of elements into groups \cite{R_Baek_11}, and  others mentioned in \cite{Piantadosi14}. Sometimes the connection between the rules of the underlying system and the formation  of power laws  remains unexplained  \cite{Mitz04,Piantadosi14}. Nevertheless, the frequency with which power laws appear in experimental data is too striking to be ignored. Hence, it is not surprising that Zipf models are used to emulate content requests and perform mathematical optimization. Besides, as statistician George Box once said, ``\emph{All models are wrong, but some models are useful}'' \cite{box1976science}.

\subsection{Evidence of Zipf's law in the Internet}
Extensive traffic measurements 
have revealed that the popularity of requests for Internet content (web pages, music, video, etc.) exhibits  a Zipf behavior, see \cite{breslau}. 
 A significant difference from word frequencies in texts is that the content popularity $p_n$ is not always inversely proportional to rank, but a different exponent may be observed: the distribution is written as  $p_n\propto n^{-\Zipf}$, where  $\Zipf$ sets the rate of popularity decline.
For $\Zipf\to 0$, we recover the discrete uniform distribution, where all contents are equally popular. For $\Zipf=1$, we recover the Zipf's law.
Taking the logarithm in both sides produces $\log p_n \propto -\Zipf \log n$, hence a log-log popularity/rank plot is a line with negative slope equal to $\Zipf$.  Larger values of $\Zipf$ result in steeper slopes, and hence  more skewed popularity distributions. One can observe evidence of power law in a dataset by looking at the log-log plot of the observed empirical frequencies, as in Fig.~\ref{fig:evidZipf}.

\begin{figure}[h!]
	\begin{center}
		\begin{overpic}[scale=.32]{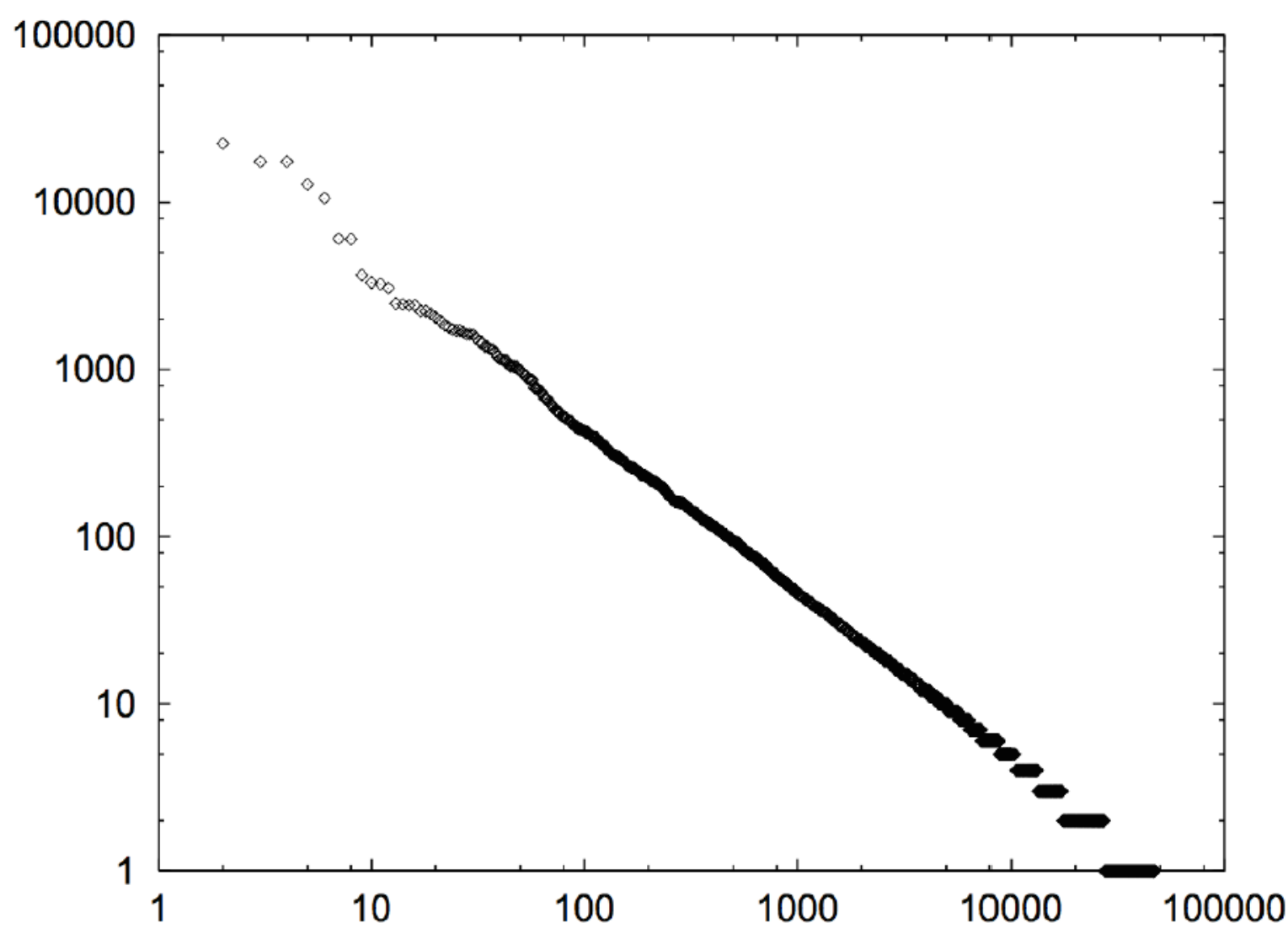}
			\put(49,-6){Rank}
			\put(-6,22){\rotatebox{90}{\# of Requests}}			
			\put(52,-13){(a)}			
		\end{overpic}\hspace{0.09in}\quad\quad
		\begin{overpic}[scale=.35]{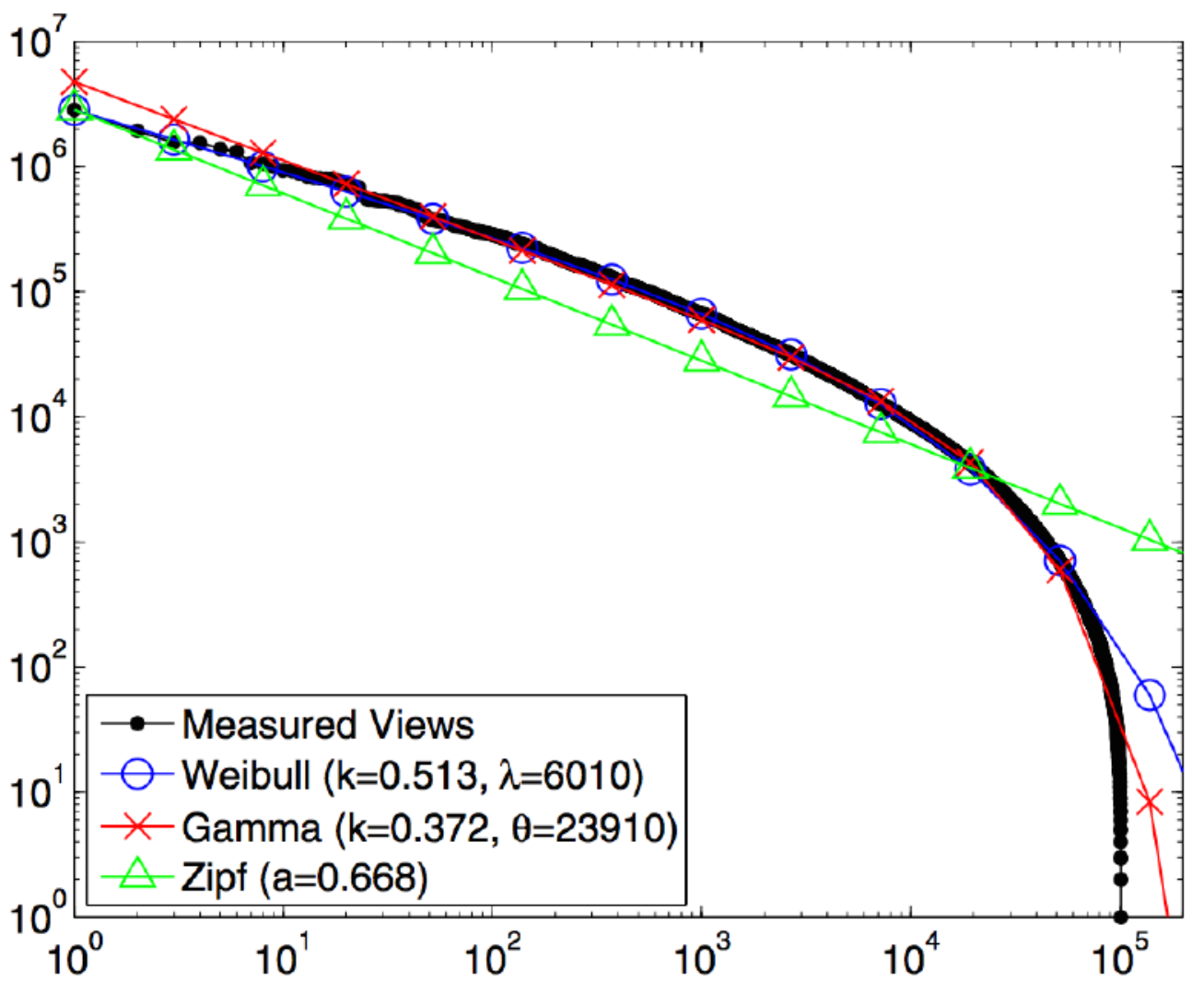}
			\put(49,-6){Rank}
			\put(-7,22){\rotatebox{90}{\# of Requests}}	
			\put(52,-13){(b)}
		\end{overpic}\hspace{0.03in}\\\vspace{0.4in}
		\begin{overpic}[scale=.35]{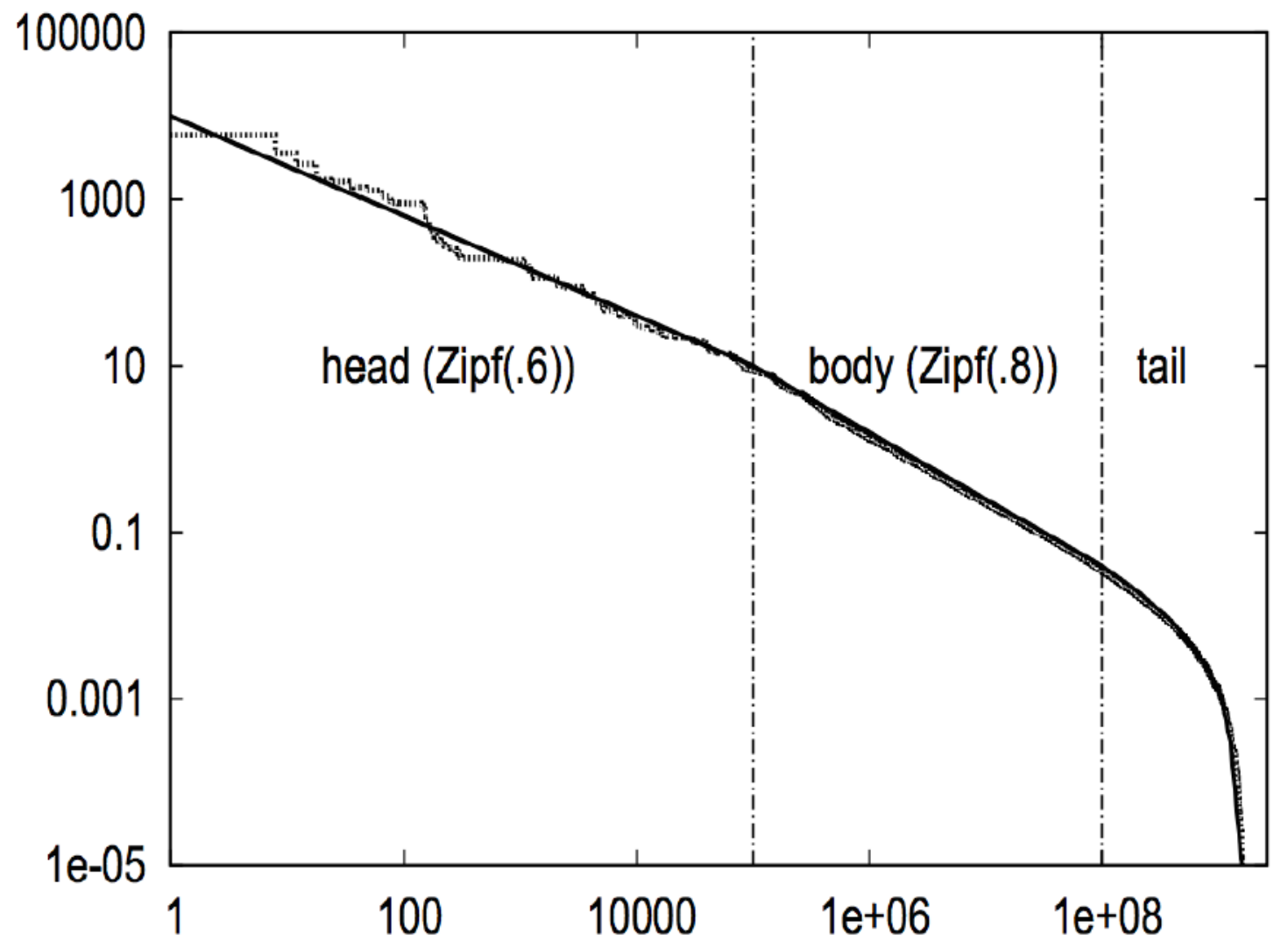}
			\put(49,-6){Rank}
			\put(-4,22){\rotatebox{90}{\# of Requests}}	
			\put(52,-13){(c)}
		\end{overpic}\hspace{0.03in}\quad\quad\quad
		\begin{overpic}[scale=.38]{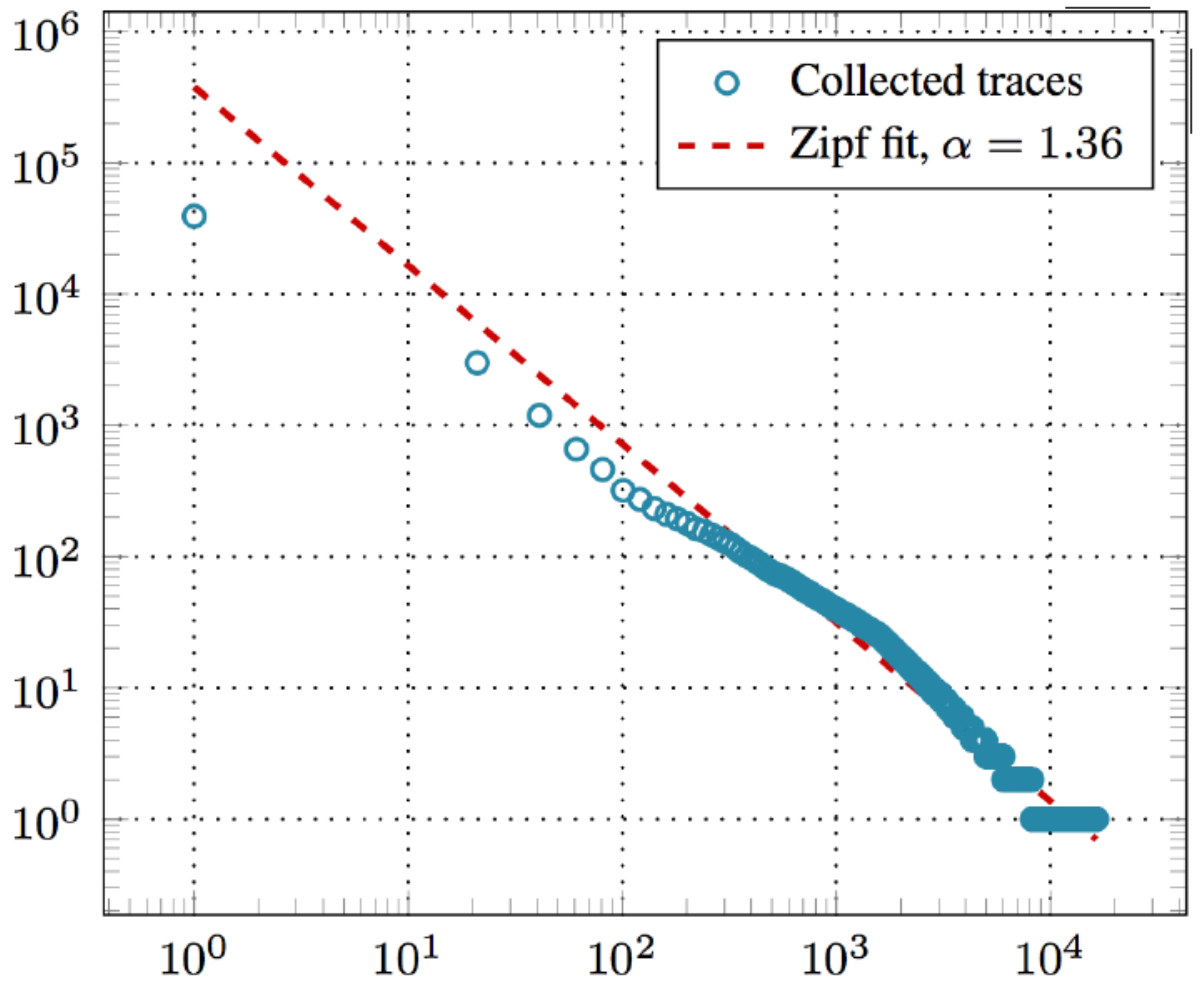}
			\put(45,-6){Rank}
			\put(-8,22){\rotatebox{90}{\# of Requests}}	
			\put(48,-13){(d)}
		\end{overpic}\vspace{0.2in}
		\caption{Log-log plots of number of content requests versus rank, showing evidence of power law popularity in  different applications. (a) www documents \cite{breslau} (1999),  (b) YouTube videos \cite{C_Cheng08} (2008) (c) p2p torrents \cite{Roberts} (2013), and (d) http requests in mobile operator \cite{Ejder15} (2015). }\vspace{-0.3in}
		\label{fig:evidZipf}
	\end{center}
\end{figure}

Since $(p_n)$ is a distribution, and hence $\sum_{n=1}^N p_n=1$, the power law distribution is given by
\begin{equation}\label{eq:powerlaw}
p_n=\frac{n^{-\Zipf}}{\sum_{j=1}^Nj^{-\Zipf}}, ~~n=1,\dots,N,
\end{equation}
where the denominator is the truncated zeta function    $H_{\Zipf}(N)\triangleq \sum_{j=1}^N j^{-\Zipf}$ evaluated at $\Zipf$, also called the $N^{\text{th}}$ \emph{$\Zipf$-order generalized harmonic number}. Although this sum does not have a closed form, we provide below a useful approximation.

\begin{box_example}[detach title,colback=blue!5!white, before upper={\tcbtitle\quad}]{Approximation of generalized harmonic number.}
{
\scriptsize
First for $n\ge m\ge 0$ we bound the sum  $\sum_{j=m}^n j^{-\Zipf}=H_{\Zipf}(n)-H_\Zipf(m)$ using integrals:\vspace{-0.14in} 
\begin{align*}\nonumber & \!
\int_m^n (x+1)^{-\Zipf}dx \leq H_{\Zipf}(n)-H_\Zipf(m)\leq 1+\int_{m+1}^n x^{-\Zipf}dx,
\Rightarrow \displaybreak[0]\\
&\left\{\begin{array}{rcll}	\frac{(n+1)^{1-\Zipf}-(m+1)^{1-\Zipf}}{1-\Zipf} & \le H_{\Zipf}(n)-H_\Zipf(m) &\le \frac{n^{1-\Zipf}-(m+1)^{1-\Zipf}}{1-\Zipf}+1, & \text{if } \Zipf\ne 1,\\
	\phantom{\big|^A}\ln \frac{n+1}{m+1} & \le H_{\Zipf}(n)-H_\Zipf(m)& \le \ln \frac{n+1}{m+2}, & \text{if } \Zipf=1.
\end{array}\right.
\end{align*}
Also, setting $m=0, n=N$ we obtain:
\begin{equation}\label{eq:H_bound}
\frac{(N+1)^{1-\Zipf}-1}{1-\Zipf}   \leq H_{\Zipf}(N)\leq  \frac{N^{1-\Zipf}-1}{1-\Zipf}+1, \text{if } \Zipf \neq 1.
\end{equation}
If $N$ is large, we then have the following approximation:
\begin{equation}\label{eq:H_approx}
H_{\Zipf}(N)\approx\left\{\begin{array}{ll}
 \frac{N^{1-\Zipf}}{1-\Zipf}, & \text{if } \Zipf < 1, \\
  \log N, &\text{if } \Zipf = 1, \\
    \frac{1}{\Zipf-1}, &\text{if } \Zipf > 1.
\end{array}
\right.
\end{equation}
As Fig.~\ref{fig:approxH} shows, the approximation error is very small for all values of $\Zipf$ as $N$ becomes large, while it is typically larger near $\Zipf=1$. Also, as $N$ increases, the approximation error diminishes.}\end{box_example}

\begin{figure*}[h!]
	\centering
	\subfigure[]{
		\includegraphics[width=0.45\linewidth ]{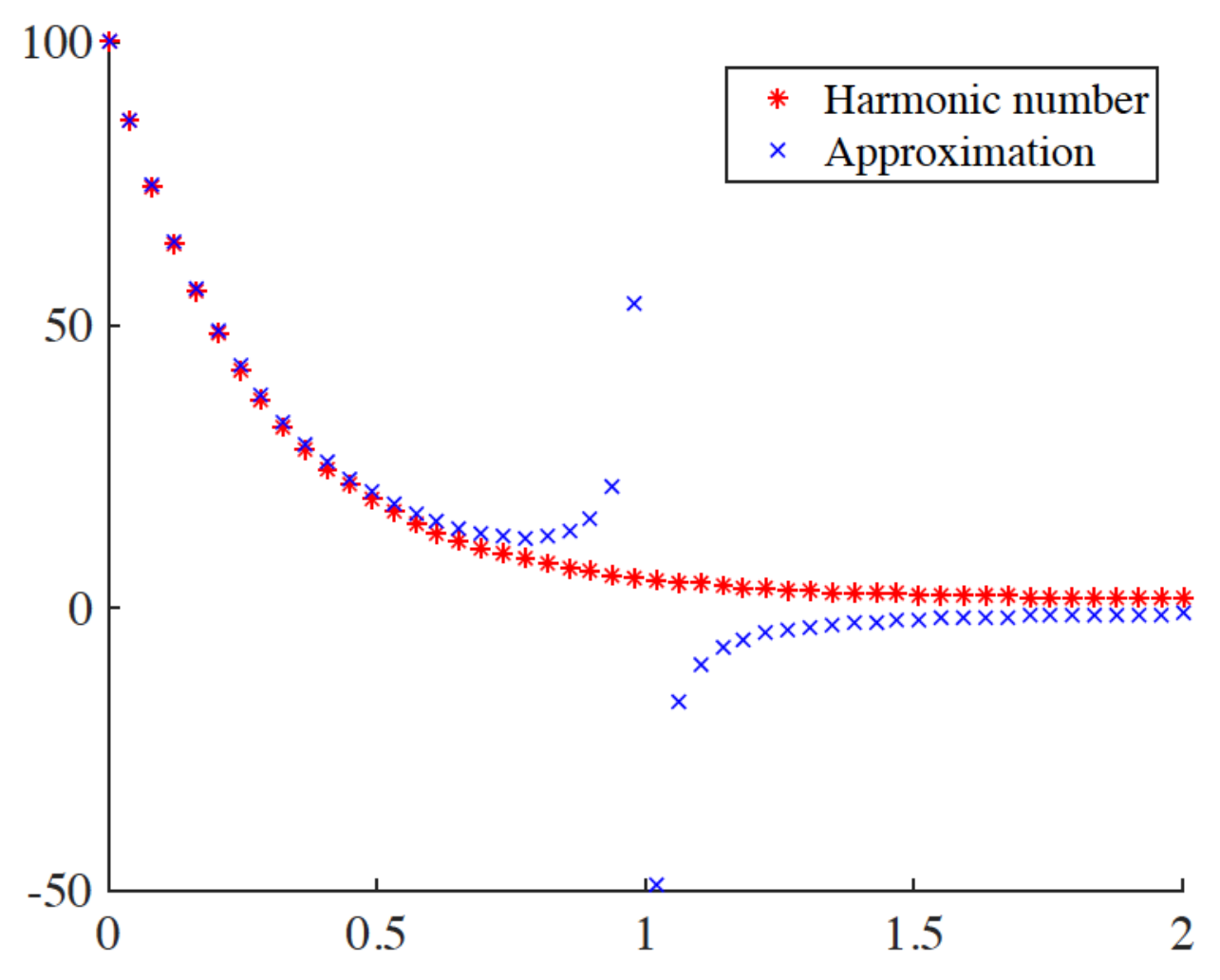}
		\put(-8,14){\small $\Zipf$}		}
	\subfigure[]{
		\centering
		\includegraphics[width=0.45\linewidth]{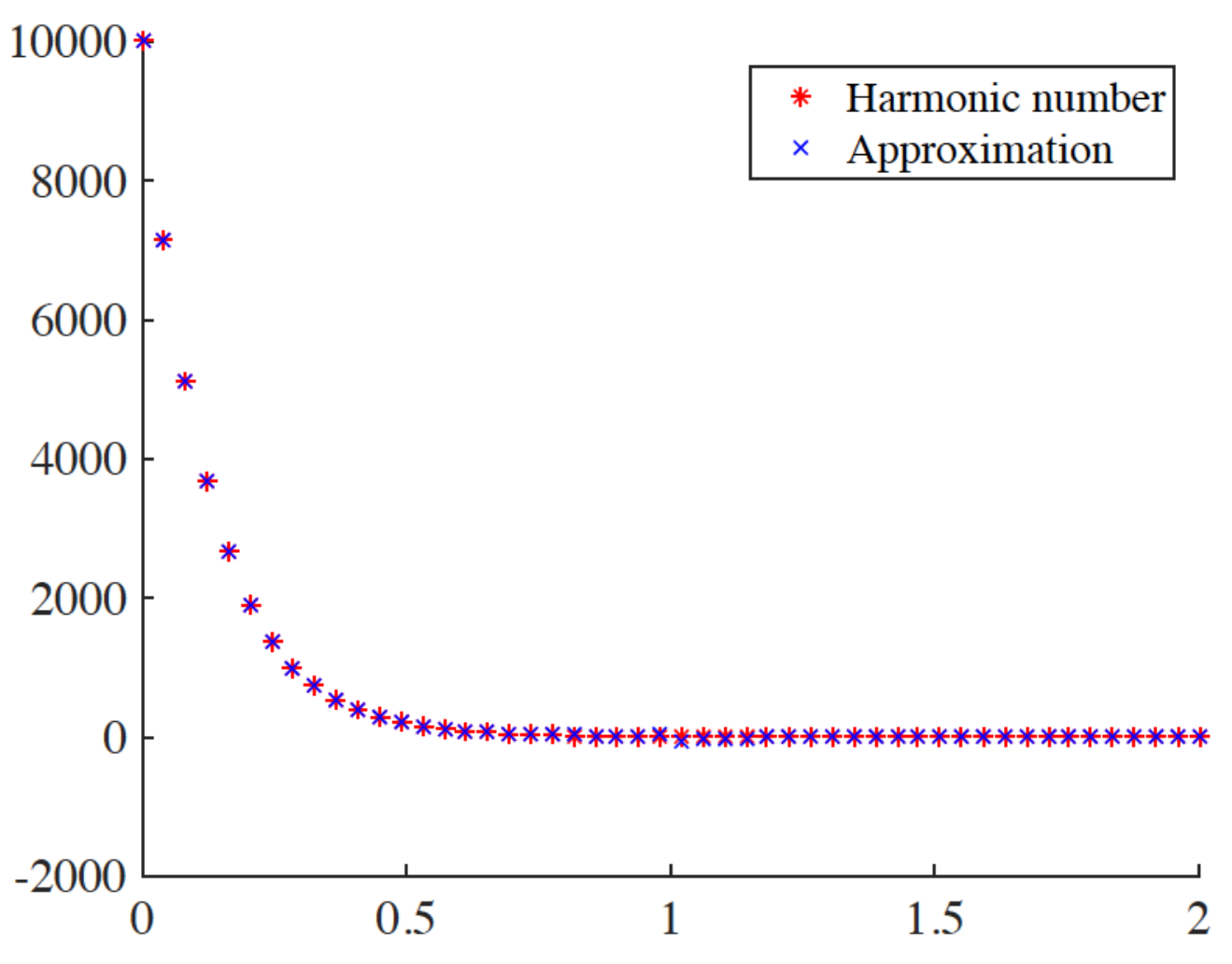}
		\put(-8,14){\small $\Zipf$}}
		\caption{Approximation \eqref{eq:H_approx} versus actual value of the generalized harmonic number. Parameters: 		$\gamma=0.1$, (left) $M=10^2, N=10^3$, (right) $M=10^4, N=10^5$.}
\label{fig:approxH}  
\end{figure*}

\subsection{Hit probability maximization}\label{sec:mostpop}

We present here a toy one-cache optimization problem and its solution, in order to understand  how the  popularity model affects the analysis of caching systems. 
 We draw one request for a content randomly from the catalog, and define the hit probability as follows.  

\begin{dfn}[theorem style=plain]{Hit probability}{}
The random event $A=$``the requested content is  cached'' is called a hit. The probability of  $A$ is called {hit probability}.
\end{dfn}

The hit probability is indicative of the fraction of  requests that is served by the cache. Since we would like to serve as many requests from the cache as possible, one way to optimize system performance is to decide which contents should be cached in order to  maximize hit probability. This procedure is described next.

We introduce decision variables $y=(y_n\in\{0,1\}: n=1,\ldots,N)$, where $y_n=1$ if we decide to cache content $n$, and $y_n=0$ otherwise. 
 Due to the limited cache size  $M<N$, a \emph{feasible} decision must satisfy the  constraint $\sum_{n=1}^N y_n\leq M$. Conditioning on the request of content $n$, the hit probability is computed by  $h( y,M)=\sum_{n=1}^N y_n p_n$, when the  decision $ y$  is taken.  Overall, we are interested in the question ``which feasible decision  $ y$ will maximize hit probability?'', which is answered by the following optimization problem.

\begin{opt}{Hit Probability Maximization}\vspace{-0.25in}
\begin{align}
h( y^*,M)=&\max_{\boldsymbol y\in \{0,1\}^N} \sum_{n=1}^N y_n p_n \label{eq:hit_prob_opt}\\
& \sum_{n=1}^N y_n\leq M.\quad \text{ (cache size constraint)} \notag
\end{align}
\end{opt}

We may solve the above optimization problem by inspection: set $y_n=1$ for the $M$  contents with greatest probability, and zero otherwise. 
This solution is also known as ``cache the most popular'' contents.
Hence, assuming w.l.o.g. that the contents are ordered in non-increasing popularity (i.e. $p_1\geq \dots \geq p_N$), the maximum hit probability is given by $h( y^*,M)=\sum_{n=1}^M  p_n $. When popularities are  power-law with exponent $\Zipf$,
\[
h( y^*,M)=\sum_{n=1}^M  p_n = \sum_{n=1}^M \frac{n^{-\Zipf}}{\sum_{j=1}^Nj^{-\Zipf}}= \frac{\sum_{n=1}^Mn^{-\Zipf}}{\sum_{n=1}^Nn^{-\Zipf}}=\frac{H_{\Zipf}(M)}{H_{\Zipf}(N)},
\]
and we may employ the approximation of $H_{\Zipf}(i)$ in \eqref{eq:H_approx} to obtain an  approximation of the maximum hit probability
\begin{equation}\label{eq:approx_hit}
h( y^*,M)\approx \left\{\begin{array}{ll}
\left(\frac{M}N\right)^{1-\Zipf}=\gamma^{1-\Zipf} & \text{ if } 0\leq \Zipf<1 \\
\frac{ \ln M}{ \ln N}=1-\frac{\ln \gamma^{-1}}{\ln N} & \text{ if } \Zipf=1 \\
1 & \text{ if } \Zipf>1.
\end{array}\right.
\end{equation}
Fig.~\ref{fig:approxhitrate} shows the maximum hit  probability comparing the actual value (found numerically by evaluating the sums) with the closed-form approximation of \eqref{eq:approx_hit}. We see that the  approximation is very accurate, while an  error is {introduced when $\Zipf\approx1$ 
 which  decreases as $M$ or $N$ increase}.

\begin{figure*}[t!]
	\centering
	\subfigure[ ]{
		\includegraphics[width=0.475\linewidth ]{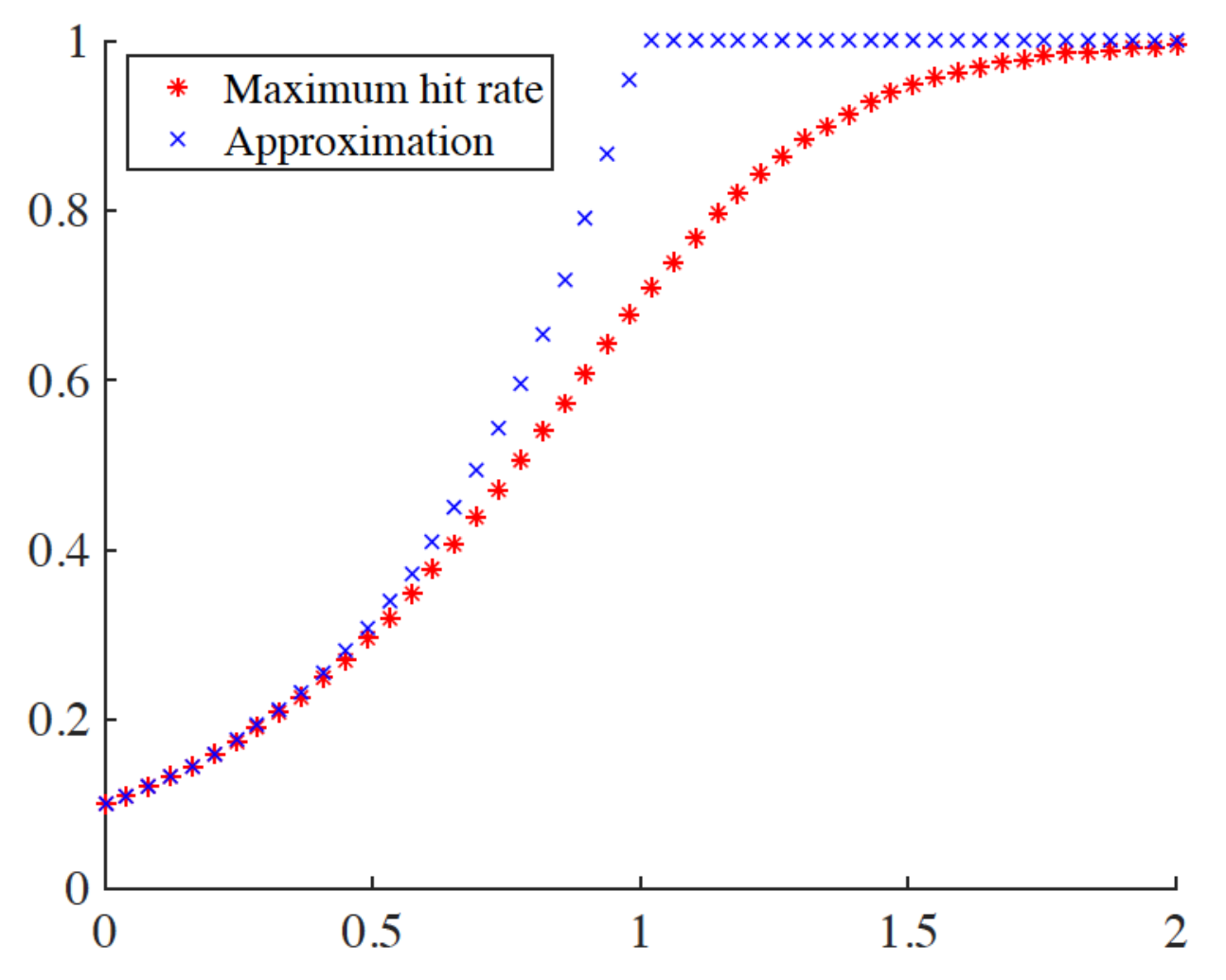}
		\put(-8,16){\small $\Zipf$}		}
	\subfigure[ ]{
		\centering
		\includegraphics[width=0.475\linewidth]{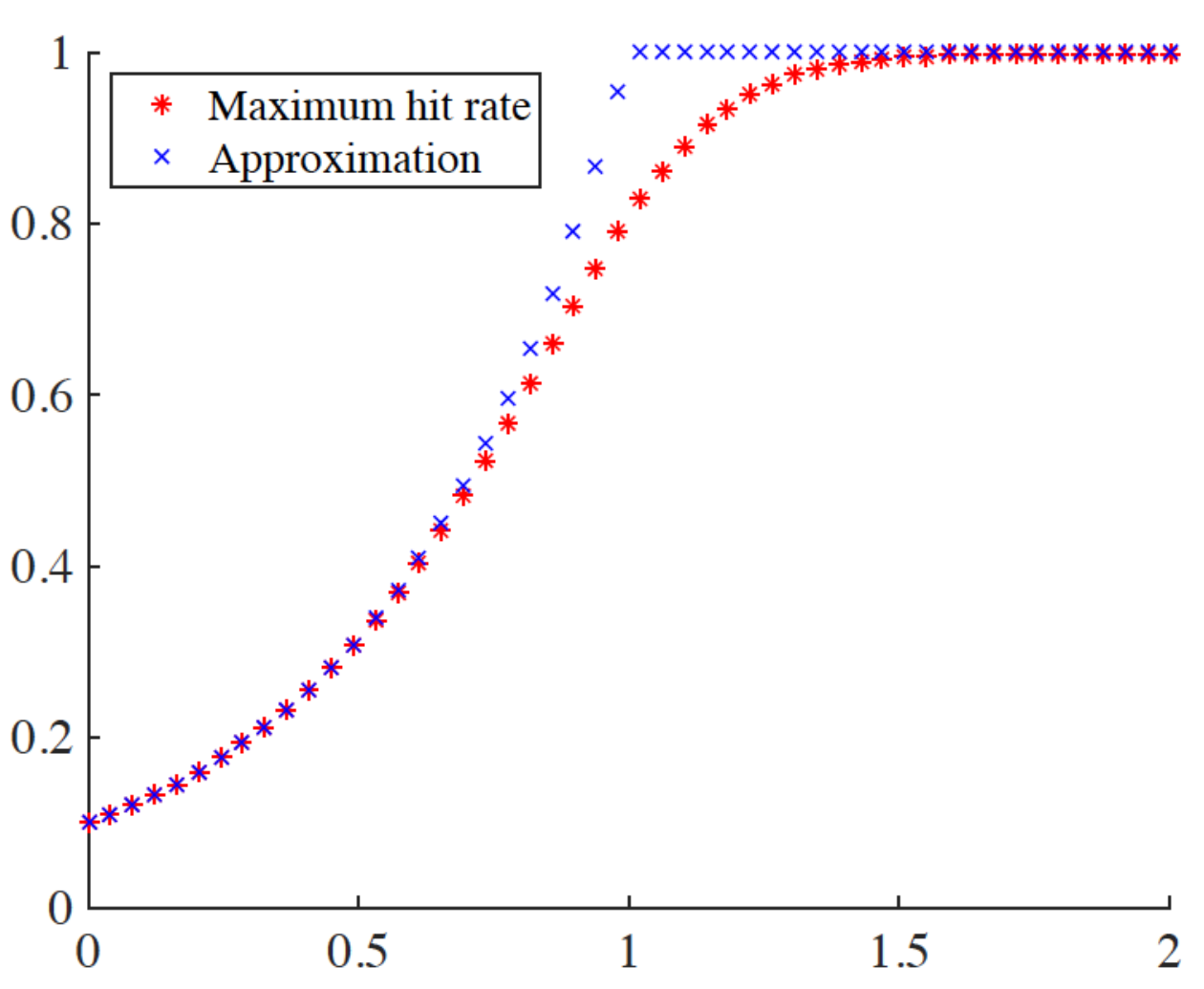}
		\put(-8,16){\small $\Zipf$}}
	\caption{Maximum hit probability: approximation versus actual value. Parameters: $\gamma=0.1$, (a) $M=100$, (b) $M=10000$.}
	\label{fig:approxhitrate}
\end{figure*}

Observe that in the common  case where $\Zipf<1$, the maximum hit probability can be approximated by the expression $\gamma^{1-\Zipf}$. The cache  performance depends on  the relative cache size  $\gamma$ and the power law exponent $\Zipf$. In the next section, we will discuss in detail how these parameters can be estimated in real systems.

\subsection{Fitting the power law model to data}\label{sec:inference}

One of the most important design problems in caching  is the problem of \emph{cache dimensioning}, where we must  decide the cache size $M$ that is appropriate for a specific system. Given storage costs, data transfer costs, and traffic estimation, the cache dimensioning problem aims to tune the cache size for cost minimization. In the example of one cache, this can be done via  the function $h( y^*,M)$ which provides an estimate of the cache efficiency {for a certain value of $M$}. 
However, as we saw above, this estimation requires knowledge about the Zipf law exponent $\Zipf$. Namely, we would like to determine an exponent $\Zipf^*$  that would accurately reproduce the observed system  behavior in terms of content requests. Inferring the parameters of a distribution from data is known as \emph{statistical inference}, and in this section we focus on highlighting some subtle points that appear in the inference of the power law exponent.

In practice, engineers sample a dataset with requests from the said system during peak-demand and then analyze it statistically. In this subsection, we discuss how $\Zipf^*$  can be extracted from such a dataset. Throughout the section we assume that the samples are drawn independently from a distribution $(p_n),~p_n\propto n^{-\Zipf}$ with unknown $\Zipf$, and we explain how to  compute the parameter  $\Zipf^*_{\text{\tiny MLE}}$ that best fits the samples in a maximum likelihood sense.

A few remarks are in order. First, we note that in general the popularity distribution changes over time. Our assumption in this section of fixed distribution  is meaningful only when the dataset is collected in a reasonably small time interval. 
If we knew there are time correlations, we could use this knowledge to improve our estimate further; this approach is explained in section \ref{sec:temporal} where we study temporal locality. Second, in practice it is common to fit data to a power law using the graphical method, where one simply plots the log-log requests versus rank and then decides a value for $\Zipf$ that  approximates the curve. Unfortunately, such graphical analysis can be grossly erroneous \cite{Goldstein2004,J_Clauset_09}. Wrong estimates for $\Zipf$ may lead us to wrong decisions about cache dimensioning. Therefore, we present a rigorous approach of fitting the data to a power law  by means of maximizing the log-likelihood function \cite{Goldstein2004}. We mention that  similar conclusions can be reached using the linear least squares method, albeit at  a higher error \cite{Bauke2007}.

We begin with a dataset $X=(x_i\in\mathcal{N}: i=1,\ldots,K)$, where the $i^{\text{th}}$ datum $x_i$ is the id of one of the contents in the catalog; for instance  we have in mind the \youtube~dataset from \cite{Kurose08}. 
For a single datum $x_i$ 
%
 we denote with $\ell(x_i;\Zipf)$ the likelihood that $x_i$ is generated from a power law model with exponent $\Zipf$, and
\[
\ell(x_i;\Zipf)=n^{-\Zipf}/H_{\Zipf}(N),~~\text{ iff}~~ x_i=n.
\]
The likelihood function is multiplicative over the data in the dataset. Hence, for a dataset $X=(x_1,\dots,x_K)$ with $K$ samples, the likelihood function is
\[
\ell(X;\Zipf)=\prod_{i=1}^K\ell(x_i;\Zipf).
\]
\begin{dfn}[theorem style=plain]{Empirical frequency}{freq}
For a sequence of requests, the empirical frequency of content $n$ is the ratio:
\[
f_n=\frac{\text{Number of requests for content } n}{\text{Total number of requests}}.
\] 
\end{dfn}
Disregarding the constant of total number of requests, we can simply write:
\[
\ell(X;\Zipf)=
p_1^{f_1}p_2^{f_2}\dots p_N^{f_N},\quad \text{where }~ p_n=n^{-\Zipf}/H_{\Zipf}(N).
\]
The log-likehood function $\Lambda(\Zipf)=\log \ell(X;\Zipf)$ is therefore:
\begin{align}
\Lambda(\Zipf)&=\sum_{n=1}^N f_n \log\left(n^{-\Zipf}/H_{\Zipf}(N)\right)=\sum_{n=1}^N f_n  \left(\log n^{-\Zipf}- \log H_{\Zipf}(N) \right)= \nonumber \\
&=-\Zipf \sum_{n=1}^N f_n  \log n - \log H_{\Zipf}(N)
\end{align}
The log-likelihood function is preferred from likelihood because it leads to simpler calculations. Since the logarithm is a strictly increasing function, the maximization of both functions will lead to the same solution. The MLE we are seeking can be recovered by solving the optimization problem:
\begin{equation}\label{eq:maxMLE}
\Zipf^*_{\text{\tiny MLE}}={\argmax}_{\Zipf\geq 0} \Lambda(\Zipf).
\end{equation}
The maximum likelihood estimator  is \emph{efficient} meaning that as the sample size increases,
$K\to\infty$, it achieves the Cram\'er-Rao bound \cite{rao1992information}, and therefore it has the smallest  mean squared error among all unbiased estimators.
Solving \eqref{eq:maxMLE} can be done in closed-form using  the approximation \eqref{eq:H_approx} and computing the stationary point of $\Lambda(\Zipf)$. 
Alternatively, we may solve the maximization numerically.
Fig.~\ref{fig:mlle} presents the results of the described method applied to the Youtube dataset of \cite{Kurose08}.

\begin{figure*}[t!]
	\centering
	\subfigure[ ]{
		\includegraphics[width=0.4\linewidth ]{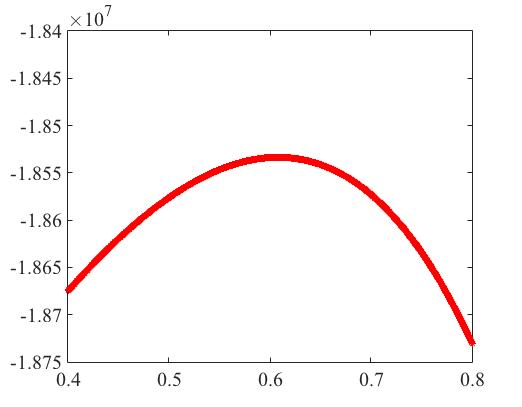}			
		\put(-105,-2){\footnotesize{$\Zipf$}}
		\put(-220,80){\footnotesize{$\Lambda(\Zipf)$}}
		\put(-115,107){\footnotesize{$\Zipf^*_{\text{\tiny MLE}}=0.6082$}}
		\put(-100,97){\footnotesize{${\color{black}\bullet}\mathllap{\circ}$}	}}\,\,\,\,\,
	\subfigure[ ]{
		\centering
		\includegraphics[width=0.4\linewidth]{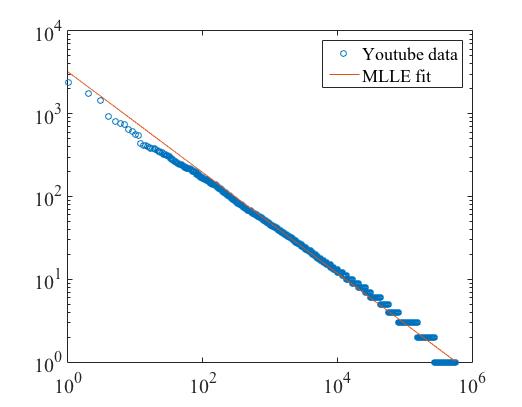}
		\put(-110,-3){\footnotesize{Rank}}
		\put(-204,47){\rotatebox{90}{\footnotesize{\# of Requests}} }}	
	\vspace{-0.2in}
		\caption{Experiments on the 2008 YouTube dataset \cite{Kurose08}: (left) log-likelihood function $\Lambda(\Zipf)$, (right) log-log plot of ranked requests and  MLE power-law vs rank.}
	\label{fig:mlle}
\end{figure*}

\subsubsection{Unknown labels}

A standard complication that arises in caching is the unavailability of content labels.
Here, we use the term {label} to refer to  \emph{the actual index of a content in the underlying popularity distribution $(p_n)$}. For instance, the content with popularity $p_n$ has label $n$. When working with real traces we observe how many times a specific content is requested, but its actual label remains unknown. The content with label 100 and  popularity $p_{100}$ might appear less times than the content with label 99 and popularity $p_{99}$, due to limited size of our sample. Therefore, although $p_{99}>p_{100}$, it may be that $f_{99}<f_{100}$. When the dataset is unlabelled with respect to the underlying popularity distribution one possibility is to adapt the maximum likelihood estimation as follows: (i) we rank the empirical  frequencies in decreasing order such that $f_{\sigma(1)}\geq \dots \geq f_{\sigma(N)}$, (ii) we \emph{assume} that the ranked frequency $f_{\sigma(n)}$ corresponds to label $n$, and (iii) we infer the power law exponent with the above MLE method. However, the implicit assumption $\sigma(n)=n$ {does not always hold in practice}, and therefore the known guarantees for maximum likelihood do not apply. We provide an example below. 

\begin{box_example}[detach title,colback=blue!5!white, before upper={\tcbtitle\quad}]{Fitting unlabeled data with MLE.} \footnotesize
We create $1.46m$ samples from a  power law distribution with exponent $\Zipf=0.6082$, and catalog size $N=566k$  (same parameters as in the Youtube dataset but ficticiously generated data). Then we extract from the samples three different empirical frequency vectors:
\begin{itemize}
\item $(f_n^1)$ is ranked according to the actual labels of $p_n$,
\item $(f_n^2)$ is permuted so that $f_1^2\geq \dots \geq f_N^2$, and
\item $(f_n^3)$ is same as $(f_n^2)$ but limited to the first 1000 elements (head of the ranked frequencies).
\end{itemize}
The vectors are then used to produce the  maximum loglikelihood estimates of the power law exponent; we find $\Zipf^1_{\text{\tiny MLE}}=0.6078$ ($0.06\%$ relative error), $\Zipf^2_{\text{\tiny MLE}}=0.6406$ ($5.32\%$ relative error), $\Zipf^3_{\text{\tiny MLE}}=0.6050$ ($0.52\%$ relative error). Fig.~\ref{fig:unlabelled} shows log-log requests vs rank plots in all cases. 
\end{box_example}

\begin{figure}[h!]
	\begin{center}
		\begin{overpic}[scale=.3]{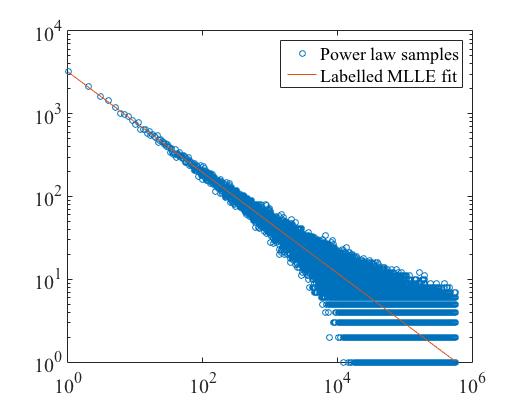}
			\put(40,-4){\small Label}
			\put(-3,12){\small \rotatebox{90}{\# of Requests}}	
			\put(35,50){\footnotesize $\Zipf^1_{\text{\tiny MLE}}=0.6078$}
		\end{overpic}
		\hspace{0.005in}
		\begin{overpic}[scale=.3]{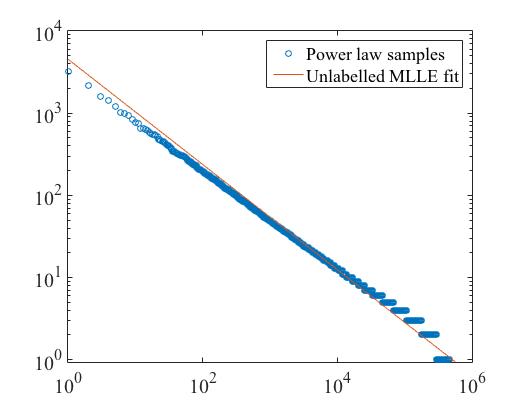}
			\put(40,-4){\small Rank}
			\put(-3,12){\small\rotatebox{90}{\# of Requests}}				
			\put(35,50){\footnotesize $\Zipf^2_{\text{\tiny MLE}}=0.6406$}
		\end{overpic}
		\hspace{0.005in}
		\begin{overpic}[scale=.3]{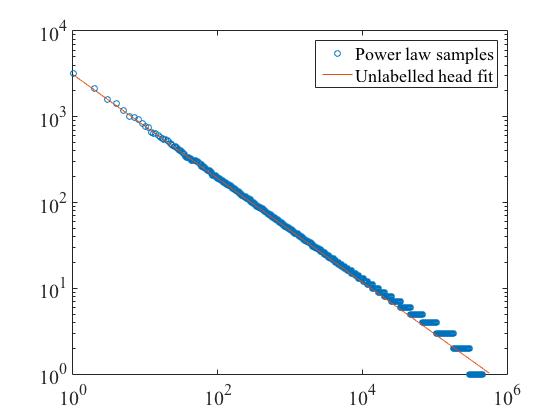}
			\put(40,-4){\small Rank}
			\put(-3,12){\small\rotatebox{90}{\# of Requests}}				
			\put(37,50){\footnotesize $\Zipf^3_{\text{\tiny MLE}}=0.6050$}
		\end{overpic}
		\caption{Generated power law data with $\Zipf=0.6082$, catalog $N=566k$ and $1.46m$ samples.
 (left) MLE fit  with the correct labels,  (middle) MLE fit with ranked frequencies, (right) MLE fit of the head with ranked frequencies.}
		\label{fig:unlabelled}
	\end{center}
\end{figure}

While the labelled MLE has a very small relative error ($0.06\%$) as expected, the example shows that the ranking of frequencies leads us to erroneous estimation ($5.32\%$ relative error). The reason 
is the  permutation of the labels at the tail of the distribution. Indeed, we can obtain better accuracy if we restrict the fitting to the head of the ranked empirical frequencies ($0.52\%$ relative error). While the tail is very noisy, the head \& body of the labelled frequencies and the ranked frequencies are similar, leading to better estimates; this approach is called ``trust your body'' in \cite{nair2013fundamentals}. A different approach to inference of unlabeled data is given in \cite{J_olmos_15}, where the authors assume that the popularity of each content is itself a random variable with the same power-law distribution.

\subsubsection{Observed values of $\Zipf$} 

We summarize literature measurements of $\Zipf$ in Table \ref{tab:zipfexponent}. Evidently, the power law model is  fitted in many different scenarios over the years, but each application --- or even each dataset --- fits a different value of $\Zipf$. Therefore, our mathematical analysis should be parametric to  the power law exponent $\Zipf$, and adjustable to the application dataset. What if data are not available--what should be our choice  of $\Zipf$ then? For visualizing results, it is customary to take $\Zipf\in \{0.6,0.8,1\}$, which are the most depictive values. Table \ref{tab:datasets} lists  publicly available datasets for experimentation with request sequences and power-law popularity.

\begin{table}[h!]
\begin{center}
\scriptsize
  \begin{tabular}{ | l || l || l  || l |}
    \hline
    \textbf{Reference}         		& \textbf{Value of $\Zipf$}      & \textbf{Year }   & \textbf{Application} \\ \hline \hline
    \text{Cuhna et al. \cite{bestavros}}   			& 1			  	& 1995		& Web traffic 	\\ \hline     
    \text{Breslau et al. \cite{breslau}}   			& 0.64 - 0.83	  	& 1999		& Web traffic	\\ \hline 
    \text{Roadknight et al. \cite{roadknight2000file}}   	& 0.64 - 0.91	   	& 1999		& Web traffic (university proxy)	\\ \hline 
    \text{Padmanabhan et al. \cite{padmanabhan2000content}}   & 1.39 - 1.81	& 2000	& Web traffic (MSNBC news) \\ \hline 
    \text{Artlitt and Jin \cite{Arlitt}}			   	& 1.16	          	& 2000		& Web traffic (1998 world cup)\\ \hline 
    \text{Mahanti et al. \cite{J_Mahanti_00}}   		& 0.74 - 0.84	   	& 2000		& Web traffic \\ \hline 
    \text{Adamic et al. \cite{adamic}}   				& 1			   	& 2002		& Web traffic (AOL)\\ \hline     
    \text{Chu et al. \cite{chu}}	   				& 0.58 - 0.64	   	& 2003		& P2P (Gnutella $\&$ Napster)	\\ \hline     
    \text{Challenger et al. \cite{challenger2004efficiently}}  	& 1		   	& 2004		& Web pages (Japanese)\\ \hline 
    \text{Wierzbicki \cite{wierzbicki2004cache}}	  	& 1		   		& 2004		& P2P (FastTrack)\\ \hline     
    \text{Krashakov et al. \cite{krashakov2006universality}} & 0.92 - 1.09	& 2006		& Web traffic (Russian)	\\ \hline 
    \text{Yamakami \cite{yamakami}} 				& 1 - 1.5			& 2006		& Mobile web 	\\ \hline 
    \text{Yu \cite{yu2006understanding}}			 & 0.6			& 2006		& VoD (China Telecom)	\\ \hline 
    \text{Gill et al. \cite{gill2007youtube}}   			& 0.56    		        & 2007		& Video (Youtube)	\\ \hline 
    \text{Cheng et al. \cite{cheng2013understanding}}  & 0.54    		        & 2007		& Video (Youtube) 	\\ \hline 
    \text{Hefeeda et al. \cite{hefeeda2008traffic}}  	& 0.6 - 0.78    		& 2008		& P2P 	\\ \hline 
    \text{Urdaneta et al. \cite{urdaneta2009wikipedia}} & 0.53    		        & 2009		& Wikipedia	\\ \hline 
    \text{Kang et al. \cite{kang2010understanding}} & 0.85    		        & 2009		& Video (Yahoo)	\\ \hline 
    \text{Dan et al. \cite{dan2010power}} 		& 0.6 - 0.86    		        & 2010		& P2P (BitTorrent)	\\ \hline 
    \text{Traverso et al. \cite{snm}} 			  	& 0.7 - 0.85	        & 2012		& Video (Youtube)	\\ \hline 
    \text{Huang et al. \cite{Huang_2013}}   			& 0.7 - 1		        & 2013		& Photos (Facebook)	\\ \hline 
    \text{Imbreda et al. \cite{imbrenda2014analyzing}}	& 0.83		        & 2014		& Web traffic (French)	\\ \hline 
    \text{Zotano et al. \cite{zotano2015analysis}}   	& 1.09 - 1.87	        & 2015		& Web traffic (Spanish)		\\ \hline     
    \text{Bastug et al.  \cite{Ejder15}}				& 1.36		        & 2015		& Mobile web (Turkish)		\\ \hline     
    \text{Hasslinger et al. \cite{hasslinger2017performance}}   	& 0.5 - 0.75	        & 2015		& Wikipedia		\\ \hline         
  \end{tabular}\vspace{-0.2in}
\end{center}  
\caption{List of $\Zipf$ values from datasets.}\label{tab:zipfexponent}
\end{table}

\begin{table}[h!]
\begin{center}
\scriptsize
  \begin{tabular}{ | l || l || l  || l |}
    \hline
    \textbf{Dataset}         				& \textbf{Ref.}            & \textbf{Number of requests}      &  \textbf{Time span} \\ \hline \hline
    \text{Web proxy accesses}			   	& \cite{gds}		   & $24M$ accesses		   	   & Aug '96 - Sep '96\\ \hline 
    \text{Netflix movie ratings} 				& \cite{Netflix:2009} 	   & $100M$ movie ratings		   & Oct '98 - Dec '05\\ \hline  
    \text{Anonymized URL requests}			& \cite{Meiss08WSDM}& $25B$ requests  		   & Sep '06 - Mar '08\\ \hline
    \text{YouTube traces, UMass campus}   	& \cite{Kurose08}	   & $1.5M$ requests			   & Jun '07 - Mar '08\\ \hline 
    \text{top-1000 Wikipedia pages}			& \cite{wikipedia15}	   & $34M$ requests			   & Oct '15 - Nov '15\\ \hline 
  \end{tabular}\vspace{-0.2in}
\end{center}  
\caption{Publicly available datasets for caching experimentation.}\label{tab:datasets}
\end{table}

\subsection{Catalog estimation}

The problem of cache dimensioning requires also the knowledge of catalog size $N$. While estimation of $\Zipf$ is  commonly encountered in the literature, the catalog size estimation is often overlooked.\footnote{We would like to acknowledge our discussions with Dr. J. Roberts for this section.} For a dataset $X$ with $K$ samples, the number of different observed contents is called  the \emph{visible catalog}, and has  size: 
\[
N_K=|\{n\in \mathcal{N} | f_{n}(X)>0 \}|
\]
which clearly depends on the dataset $X$, and its length. Using this notation, the true catalog size is $N_{\infty}$, and we have $N_{\infty}\geq N_K$.
 
Let $E_j$ be the number of contents observed exactly $j$ times in   $X$, and let $e_j$ denote its mean. We have,
\begin{equation}
E_j =  \sum_{n=1}^{N_{K}} \indic{\text{content }n \text{ is requested } j\text{ times}}, 
\label{eq:number}
\end{equation}
and taking expectations yields,
\begin{eqnarray*}
e_j(X)&  = & \sum_{n=1}^{N_{K}} \text{Pr}\: [\text{content }n \text{ is requested } j\text{ times}] \\
& \stackrel{(a)}{=} &  \sum_{n=1}^{N_{K}} \binom{K}{j} p_n^j (1-p_n)^{K-j},
\end{eqnarray*}
where in $(a)$ we have assumed that the samples are i.i.d. 
A mean estimator  $\widehat{N}_{\infty}$ of the actual catalog $N_{\infty}$ can therefore be obtained by
\[
\widehat{N}_{\infty}=N_K + e_0(X),
\]
where recall that $N_K$ is the size of the visible catalog and therefore easy to find by enumerating the unique ids in the dataset, and $e_0$ is the mean number of contents that are never requested, calculated as
\[
e_0(X)=\sum_{n=1}^{N_{K}} \binom{K}{0} p_n^0 (1-p_n)^{K-0}=\sum_{n=1}^{N_{K}}(1-p_n)^K.
\]
In summary, combining the calculation of $e_0(X)$ with the observable $N_K$, one can obtain an unbiased estimate of the true catalog size $\widehat{N}_{\infty}$, which subsequently  determines our estimate of $\gamma$ and allows for correct cache dimensioning  via $h( y^*,M)$.


\section{Request sequences}

So far we have considered a single isolated request, drawn randomly from  the distribution $(p_n)$. This  allowed us to discuss about content popularity, and introduce the optimization of hit probability. In this section, we will introduce the notion of the \emph{request sequence}:  a succession of content requests. Using the request sequence, we will investigate  more intricate aspects of content popularity such as  temporal and spatial correlations. 
\begin{dfn}[theorem style=plain]{Request sequence}{}
The request sequence $\boldsymbol R=(R_1,\dots,R_{T})$ is a sequence of integers, where $R_t\in \mathcal{N}$  denotes the content id of the $t^{\text{th}}$  request. 
\end{dfn}\vspace{-0.1in}
\noindent We present next different models of request sequences.

\vspace{-0.1in}
\subsection{Independent reference model}\vspace{-0.1in}
\begin{dfn}[theorem style=plain]{IRM model}{}
A request sequence $\boldsymbol R$ is called  IRM if $(R_t)_{t=1,2,\dots}$ are i.i.d. random variables drawn from $(p_n)_{n=1,\dots,N}$.
\end{dfn}
The term ``independent'' reflects the property of an IRM sequence that each random variable is drawn with distribution $(p_n)$ independently of all  others. The term ``reference'' is identical  to the term ``request'' in our context.
As a sequence of i.i.d.   random variables, IRM represents a special case where each element of the sequence repeats the same random experiment, in which case the  notion of hit probability continues to apply. 
 In the next chapter we provide the hit probability analysis  for several online caching policies under IRM.

Sometimes we define the request sequence in continuous time. The simplest extension in this direction is to consider a Poisson process whose  points denote the requests. Each point of the Poisson process is   \emph{marked} with a value from $\{1,\dots,N\}$ denoting which content is requested. The marks are chosen randomly and  i.i.d.  from distribution $(p_n)$.

\begin{dfn}[theorem style=plain]{Poisson IRM model}{poissIRM}
A request sequence is called Poisson IRM if it is determined by the marks of an associated homogeneous Poisson process with intensity $\lambda$ and random i.i.d. marks drawn from $(p_n)_{n=1,\dots,N}$.
\end{dfn}
The intensity parameter $\lambda$ defines how densely the different requests are spaced apart in time. We remark that the standard IRM model can be obtained from a Poisson IRM by discarding the inter-request  time information and keeping only the (ordered) marks at the points of the process. When restricting attention to  the marks, we may still use the notion of hit probability. For the associated continuous time process, we will later define the notion of \emph{hit rate}, see  discussion in Sec.~\ref{sec:stationary}.

IRM sequences are  the simplest and most established models. They are often combined with a power law distribution $(p_n)$, thus  yielding a one-parameter model (or two-parameter in case of a Poisson IRM), which is easy to fit. Past work  validated IRM models, showing that they fit well real request traces  obtained at the peak hour \cite{breslau}. On the other hand, IRM sequences by design have no  temporal or spatial correlations. Hence, if the actual data do exhibit such correlations, an IRM model will not capture them. We discuss next alternative models that capture temporal and spatial correlations.

\subsection{Temporal locality}\label{sec:temporal}

Past measurements have shown that content popularity exhibits temporal correlations. Specifically, the measurements show that the popularity of recent requests is higher than the popularity of past requests when these have the same empirical frequency. This phenomenon is called ``temporal locality'', and it is particularly important for caching as it suggests that \emph{we should not cache contents that are  popular on average, but rather those that are popular locally in time}. Since IRM is i.i.d., it does not capture temporal correlations. In this section  we focus on alternative request  models.

Evidence of temporal locality in content requests dates back to web page requests in the 90s. For example, \cite{Almeida96} attributed to temporal locality the hit probability discrepancy between dataset and generated data from  a distribution with the same power-law exponent; \cite{breslau} observed that the popularity of  highly popular contents changes less often than that of the unpopular ones; and
\cite{snm} explained how decaying popularity in \youtube~videos affects cache hits. \emph{In sum, content popularity is often time-varying, and this has a profound effect on hit probability and cache dimensioning.}

Regarding the span of temporal locality over time, a \youtube~study from  \cite{Tortelli16} confirms that correlations on time scales of a few hours can be ignored, as their impact on cache performance is negligible. Instead, the time scale of a few days up to weeks is deemed most significant. 
For example, Fig.~\ref{fig:temporal_locality} reproduced from \cite{gds}, shows how the  web proxy accesses evolve over time,  and it is indicative of strong ephemerality. Recent accesses are 100 times more probable than those requested one day ago, and 1000 times that those last requested one week ago. Another work
\cite{hasslinger2017performance} demonstrated that requests of wikipedia articles have a rapid day-to-day change in popularity rank. In particular, measuring the fraction of contents in  top-$x$ popularities of  a day  and the previous, they find $76\%$ for $x=1000$ and only $54\%$ for $x=25$, meaning that half of the top 25 contents change from day to day. In conclusion, a caching analysis that extends beyond a few hours must  take into account temporal correlations. This finding is particularly important for systems with few requests per unit time, where learning the evolving popularity might be very challenging.

\begin{figure}[h!]
  \centering
      \includegraphics[width=0.5\textwidth]{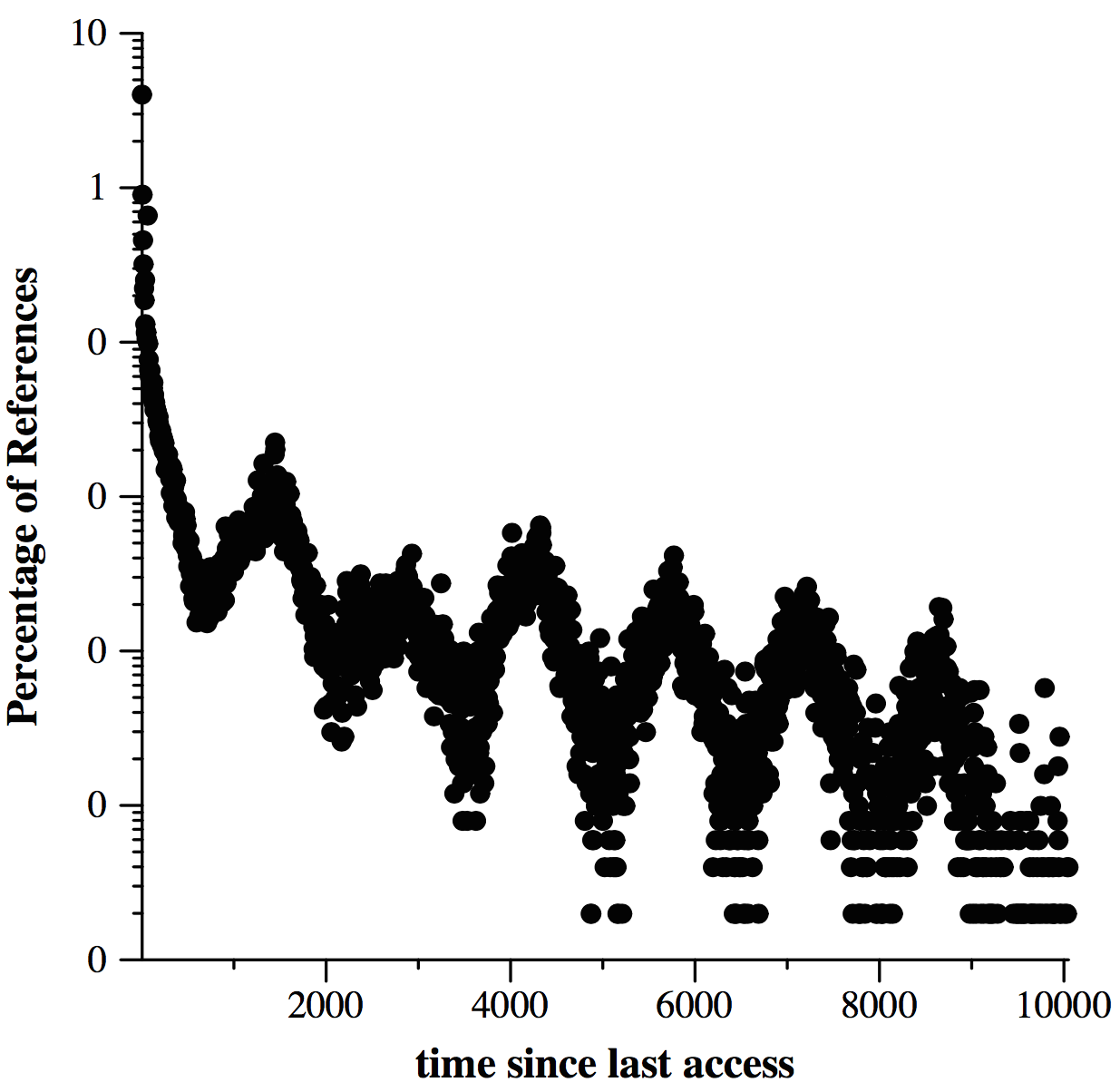}
\caption{Percentage of references as a function of the time since last access to the same document by the same user, from \cite{gds}. The time is in minutes and y-axis is in log scale.}
\label{fig:temporal_locality}
\end{figure}

In order to model temporal locality, we consider content request models with time-varying popularity distribution $(p_n(t))$. When designing such models we aim to fit well the observed correlations and maintain a simple model structure for easy fitting to the actual dataset, while  ensuring mathematical tractability. In the literature, there exist several such models. We mention a few: \cite{Gitz_14} uses the theory of variations, \cite{Elayoubi2015} makes random replacements of contents in the catalog, and \cite{kauffman} uses a dependent Poisson model. Below, we describe in detail the Shot Noise Model (SNM) proposed in \cite{snm}, which is also similar to the one independently proposed by \cite{kauffman}.

\subsubsection{The shot noise model}\label{sec:snm}

{The main idea in SNM} is to define the request sequence as a superposition of many independent \emph{inhomogeneous} Poisson processes called shots, where each shot is associated to an individual content and describes the temporal profile of its popularity \cite{snm}. This is a natural generalization of the Poisson IRM model in the following sense. Recall that the Poisson IRM model is represented by a  homogeneous Poisson process with intensity $\lambda$, where a Poisson point corresponds to a request for content $n$ with probability $p_n$. Using properties of the Poisson  process, this can be shown to be equivalent  to superposing  $N$ thinned homogeneous Poisson processes, one for each content, where the process for content $n$ has  intensity $\lambda p_n$. In SNM, we replace each thinned homogeneous Poisson process with an inhomogeneous Poisson process of a time-varying intensity $\lambda_n(t)$, like the one in  Fig.~\ref{fig:SNM_shot}.

If the shape of all shots is   horizontal $\lambda_n(t)=\lambda p_n,~\forall n$  then we recover the Poisson IRM. Typically however, we  use a time-varying shape to reflect the  changing nature of popularity. 
The temporal locality phenomenon can be modelled using a rapid growth towards maximum popularity (highest intensity) followed by a phase of slow decrease \cite{Zhao06}, similar to Fig.~\ref{fig:SNM_shot}. In general, the shot is characterized by its {(i)} shape, {(ii)} duration, {(iii)}  arrival instance, and {(iv)} height.

\begin{figure}[h!]
  \centering
      \includegraphics[width=0.7\textwidth]{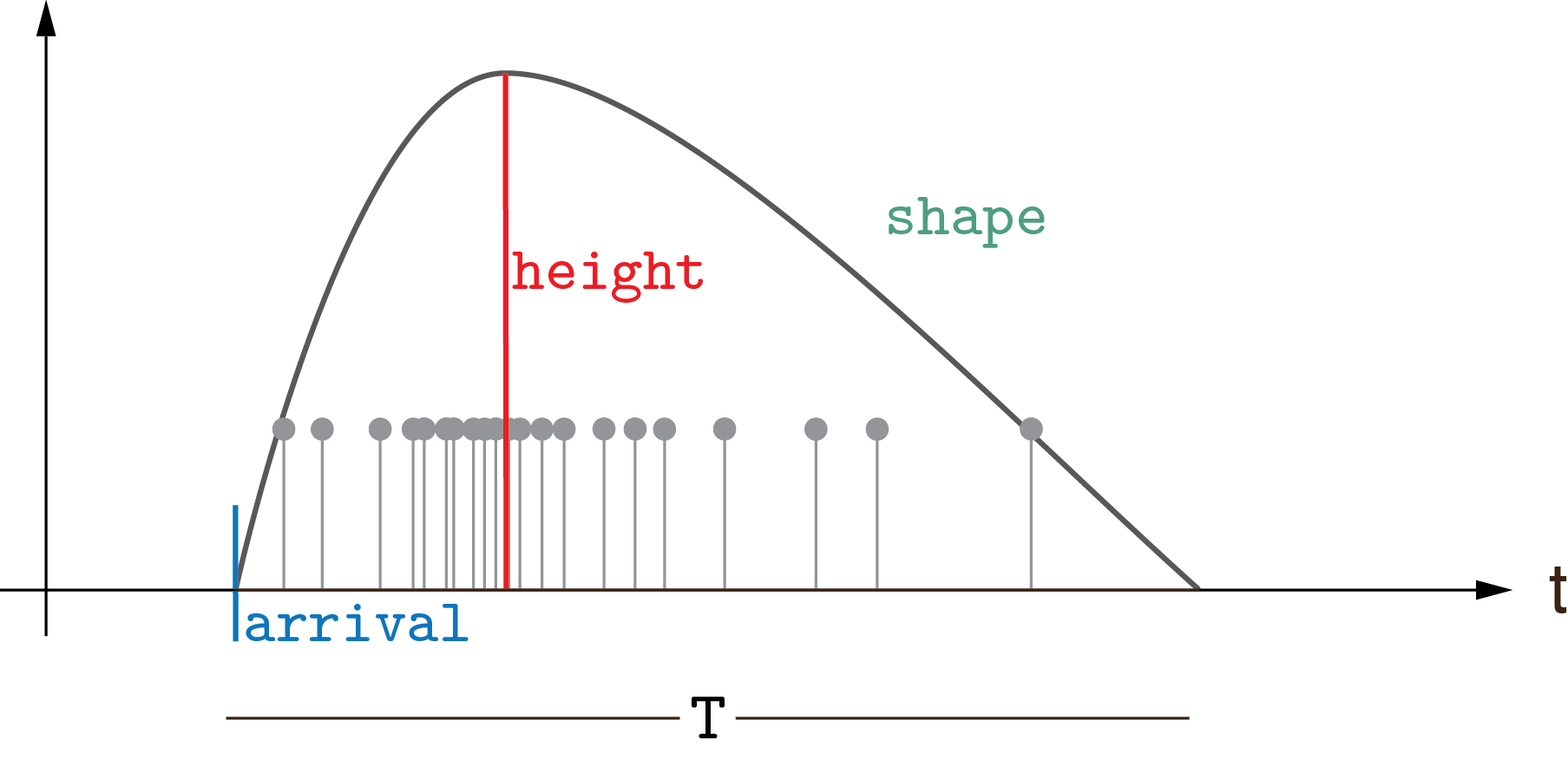}
\caption{Characteristics of a shot in SNM.}
\label{fig:SNM_shot}
\end{figure}

It is reported that different choices of (i)-(ii) result in similar  hit probability performance \cite{snm}, therefore as a general recipe we may fix the shape to   rectangular pulses and the duration to a fixed $T$ in order to maintain tractability. The shot arrival instances (feature (iii)) are points of another Poisson process with constant rate $\nu$. Denote with $t_n$ the arrival time of shot $n$. At time $t$ the active content catalog is given by the set:
\[
\mathcal{N}(t) = \{n : t-T\leq  t_n\leq t\},
\]
and we have that the expected size of the catalog at any time instance $t$ will be $\mathbbm{E}\left[|\mathcal{N}(t)|\right]=\nu T$. Hence, the parameter $\nu$ is used to tune the model to the estimated catalog. The shot height (iv) is often taken random, and power-law distributed, which ensures a good fit to the power law model when we inspect the instantaneous content popularity. Specifically, we set the height of content $n$--while it is alive--to the random variable $p_n$ constructed in the following way. First, for any $n$, let $U_n$ be an i.i.d. random variable drawn uniformly at random in $[0, 1]$. Then
\[
p_n=\frac{U^{-\Zipf}_n}{\int_0^1z^{-\Zipf}dz}\overline{p}=U_n^{-\Zipf}(1-\Zipf)\overline{p}, ~\text{for all}~n,
\]
where $\overline p$ is the mean popularity (over all contents), and $\Zipf$ is the power law exponent. We let $f(x)$ denote the density of $p_n$  at $x\in[\overline{p}(1-\Zipf),+\infty)$;\footnote{Note that density is not defined for smaller $x$.} $f(x)$ is in fact a Pareto distribution with parameters $\alpha_{\text{\tiny{Par}}}=1/\Zipf$ and $x_{\text{\tiny{Par}}}=\overline{p}(1-\Zipf)$ \cite{newman}; Pareto is the limit of Zipf distributions for large catalogs.

\begin{dfn}[theorem style=plain]{SNM model}{}
The SNM is the superposition of $N$ Poisson inhomogeneous processes, where process $n=1,\dots,N$ corresponds to  content $n$ and has a time-varying intensity $\lambda_n(t)$, characterized by (i) shape, (ii) duration, (iii) arrival instance determined by a Poisson point process with intensity $\nu$, and (iv) height $p_n$ Pareto distributed.  

Furthermore, a rectangular-SNM has (i) rectangular shape, and (ii) fixed duration $T$. 
\end{dfn}

\cite{snm} proposed to use an SNM  with four different shot shapes (all with rectangular shape but with different shot durations) to accurately fit a dataset with \youtube~video requests. Parallel work from Orange (French telecom operator) \cite{kauffman} observed that the duration of the shots is correlated with the height: highly popular contents tend to stay alive longer. The latter implies that  the SNM model can become more accurate if we correlate the duration $T_n$ of content $n$ to its shot height.

\subsubsection{Classifying contents according to instantaneous popularity}\label{sec:class}

When content popularity changes over time, we must follow a different approach from the analysis in \eqref{eq:hit_prob_opt}. The focus is now on the instantaneous hit probability, which  is random, and hence we wish to maximize its expectation. An equivalent --- and instructive --- approach is to  formulate the maximization of instantaneous hit probability  as a binary classification task. Given a cache with relative size  $\gamma$, we would like to \emph{observe  a request sequence up to $t$ and classify  each content into  the $\gamma N(t)$ most popular contents, or not,} where recall that $N(t)$ is the number of ``alive'' contents. The classification is performed by means of solving the following optimization problem.
\begin{opt}{Popularity classification}\vspace{-0.2in}
\begin{align}\label{eq:class}
&\min_{\boldsymbol y\in \{0,1\}^N} \sum_{n=1}^N y_n\mathbbm{E}\left[p_n(t)-\hat{p}_n(t)\right] ~~\quad\text{s.t. }~~ \sum_{n=1}^N y_n=\gamma N(t),
\end{align}
\end{opt}
where $p_n(t)$ is the instantaneous popularity of content $n$ at time instance $t$, {and $\hat{p}_n(t)$ is our respective estimation. Hence the objective function expresses} the difference between the maximum possible hit probability, and the one we would obtain by caching the most popular contents according to our estimations. The constraint ensures that we only classify $\gamma N(t)$ contents as cacheable. Note that $\sum_{n=1}^N y_n\mathbbm{E}\left[p_n(t)\right]$ is in fact always equal to $\overline{p}\gamma N(t)$, hence fixed and independent of the optimization variables $y_n$. Furthermore, the solution $\bm y^*$ of the classification problem \eqref{eq:class} is obtained as before by ordering the contents in decreasing estimated  popularity $\mathbbm{E}\left[\hat{p}_n(t)\right]$ and then setting $y_n^*=1$ for the top $\gamma N(t)$ of them. Hence, below we will focus on finding a good popularity estimate $\hat{p}_n(t)$ in order to decrease the hit rate loss $\sum_{n=1}^N y_n^*\mathbbm{E}\left[p_n(t)-\hat{p}_n(t)\right]$.

Hereinafter, we focus on the example of the rectangular-SNM model.  
To arrive at a popularity estimate, we assume that at time instance $t$  we know (i) the total number of requests for content $n$ up to $t$,  $K_n(t)=\sum_{i=1}^{t}\mathbbm{1}\{R_i=n\}$,  (ii) the  time $t_n$ of the first request for content $n$, from which we can infer the \emph{age} of the content $a_n(t)=t-t_n$, (the age of a content is defined as the time duration for which the content has been active), and (iii) the model of $p_n$ which is power law if $n$ active, and zero if inactive.  Accordingly,  $\hat{p}_n(t)$ is the popularity estimate based on measurements $(K_n(t),a_n(t))$. It is tempting  to use as estimates the frequencies $f_n(t)=K_n(t)/a_n(t)$. However, such a frequentist approach  would disregard the prior model. Instead, we follow the Bayesian approach, and use the \emph{a posteriori} estimate of the instantaneous popularity of content $n$ at time $t$, given by:
\begin{equation}\label{eq:apost}
\hat{p}_n \triangleq \mathbbm{E}[p_n|K_n,a_n]= \frac{\int_{p_n}p_n\prob{K_n|p_n,a_n}f(p_n)dp_n}{\int_{p_n}\prob{K_n|p_n,a_n}f(p_n)dp_n},
\end{equation}
where for notation simplicity we omitted the dependence on $t$ from all terms.  The term $\prob{K_n|p_n,a_n}$ is a Poisson probability, i.e., 
\[
\prob{K_n|p_n,t_n}=(p_na_n)^{K_n}\frac{e^{-p_na_n}}{K_n !},
\]
and $f(x)$ is the probability density function of Pareto.

Now let $F(x)$ be the distribution of $\hat{p}_n$ and $\theta$ be a threshold defined as follows: $\theta=F^{-1}(1-\gamma)$, hence $\theta$ is in fact the $\gamma$--upper quantile of $F(x)$, or more simply $\theta$ is the smallest value for which $\prob{p_n>\theta}=\gamma$ is true. We may also think of $\theta$ as a popularity threshold after which contents are classified as ``popular''. Then we define the corresponding threshold on the number of observed requests as a function of age $a$: 
\begin{equation}\label{eq:Nthres}
 \tilde{K}(a)=\min\{k=0,1,\dots | E[p_n| K_n=k,a_n=a]\geq \theta\},
\end{equation}
From \cite{mathieu} we have the following result.

\begin{thm}[theorem style=plain]{Popularity classification for rectangular SNM}{snm}
Consider a cache with relative  size $\gamma$, and let $\tilde{K}(a)$ be the threshold defined in \eqref{eq:Nthres}. Then if content $n$ has $K_n$ requests and age $a_n$ classify it according to a posteriori estimate rule:
\[
y_n^*=\left\{\begin{array}{ll}
1 & \text{if }  K_n\geq  \tilde{K}(a_n) \\ 
0 & \text{otherwise}.
\end{array}
\right.
\]
As $\nu\to\infty$,  $y_n^*$ is a solution to \eqref{eq:class} with probability 1.
\end{thm}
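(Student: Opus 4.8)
The plan is to prove optimality in two stages: first reduce the count-based threshold rule $K_n\ge\tilde K(a_n)$ to the estimate-based rule $\hat p_n\ge\theta$, and then show that in the limit $\nu\to\infty$ the latter selects exactly the $\gamma N(t)$ contents of largest posterior estimate, which the excerpt already identifies as the optimizer of \eqref{eq:class}.

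For the first stage, the crucial monotonicity is that the posterior mean $\hat p_n=\mean{p_n\mid K_n,a_n}$ is nondecreasing in $K_n$ for each fixed age $a_n$. This follows because the Poisson observation law $\prob{K_n=k\mid p_n,a_n}=(p_na_n)^k e^{-p_na_n}/k!$ has the monotone likelihood ratio property in $k$ with respect to the mean $p_na_n$, hence (for fixed $a_n$) with respect to $p_n$; therefore the posterior law of $p_n$ given $K_n=k$ is stochastically increasing in $k$, and so is its mean. Since \eqref{eq:Nthres} defines $\tilde K(a)$ as the smallest $k$ with $\mean{p_n\mid K_n=k,a_n=a}\ge\theta$, monotonicity makes the set $\{k:\mean{p_n\mid K_n=k,a_n=a}\ge\theta\}$ an upper interval, so $K_n\ge\tilde K(a_n)\iff\hat p_n\ge\theta$. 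Thus the rule caches precisely the active contents whose estimate exceeds the upper $\gamma$-quantile $\theta=F^{-1}(1-\gamma)$.

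For the second stage, I would use the i.i.d. structure of the shots: conditioned on being alive at $t$, the age, the Pareto height, and the Poisson count are i.i.d. across contents, so the estimates $\{\hat p_n:n\in\Nc(t)\}$ are i.i.d. samples from the common law $F$. Because $\mean{|\Nc(t)|}=\nu T$, sending $\nu\to\infty$ forces $N(t)\to\infty$, and by the Glivenko--Cantelli theorem the empirical distribution of these estimates converges uniformly to $F$ almost surely. Consequently the empirical upper-$\gamma$ quantile converges to $\theta$, the fraction of active contents passing $\hat p_n\ge\theta$ tends to $\gamma$, and the threshold set asymptotically coincides with the set of the $\gamma N(t)$ largest estimates. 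Combined with the first stage and the recalled optimality of top-$\gamma N(t)$ selection for \eqref{eq:class}, this gives that $y_n^*$ solves \eqref{eq:class} with probability one.

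The step I expect to be the main obstacle is reconciling the random cardinality of $\{n:\hat p_n\ge\theta\}$ with the hard constraint $\sum_n y_n=\gamma N(t)$: the test accepts $\gamma N(t)$ contents only on average, with $O(\sqrt{N(t)})$ fluctuations, so one must verify that the ``boundary'' contents whose estimates lie in a vanishing neighbourhood of $\theta$ carry a negligible share of the objective of \eqref{eq:class}. Assuming $F$ is continuous at $\theta$ (no atom of estimates sitting exactly on the threshold), the mass of such boundary contents vanishes and any tie-breaking is immaterial; the delicate point is to make this quantile control uniform enough to survive the passage to the almost-sure limit, which the Glivenko--Cantelli convergence together with continuity of $F$ at $\theta$ delivers.
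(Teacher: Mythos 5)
Your proposal is correct and follows essentially the same route as the paper's proof: both reduce the count rule to the estimate rule $\hat p_n\ge\theta$ and then argue that, as $\nu\to\infty$, the empirical distribution of the i.i.d.\ posterior estimates over alive shots converges to $F$, so thresholding at the $\gamma$-upper quantile asymptotically selects the top $\gamma N(t)$ estimates and hence solves \eqref{eq:class} with probability 1. The two steps you make explicit --- the monotone-likelihood-ratio argument showing $\hat p_n$ is nondecreasing in $K_n$ so that $K_n\ge\tilde K(a_n)\iff\hat p_n\ge\theta$, and the Glivenko--Cantelli plus continuity-at-$\theta$ control of the $O\big(\sqrt{N(t)}\big)$ fluctuation of the accepted set around the hard constraint --- are exactly the points the paper asserts without detail, so your write-up is a more careful rendering of the same argument rather than a different one.
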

\begin{proof}
We focus on time instance $t$ and drop all mention of $t$, and let $\mathcal{N}$ denote the set of alive shots. As $\nu\to\infty$, the mean number of alive shots $\mathbbm{E}\left[|\mathcal{N}|\right]=\nu T$ scales to infinity as well, and the random variable $|\mathcal{N}|\to \infty$ w.p.1.  Each alive shot's popularity is Pareto distributed, and therefore we are observing a scaling number of independent Pareto experiments. In the following, we use this fact to obtain the proof.

Recall that $\hat{p}_n \triangleq E[p_n|K_n,a_n]$ is the expected popularity of a shot conditioned on  observing $K_n$ requests in $a_n$ time (and factoring the prior model of $p_n$ which is Pareto). Then, consider the empirical distribution of $\hat{p}_n$ of all alive shots:
\[
F^{\nu}(x)=\frac{1}{|\mathcal{N}|}\sum_{n=1}^{|\mathcal{N}|}\indic{\hat{p}_n<x}.
\]
Since $|\mathcal{N}|\to\infty$ w.p.1, we also have that $\lim_{\nu\to\infty}F^{\nu}(x)=F(x)$, where $F$ is the distribution of $\hat{p}_n$. Let now $\theta^*$ denote the $\gamma$-upper quantile of $F$, such that $\prob{\hat{p}_n\leq \theta^*}=1-\gamma$. This quantile is very convenient because it represents the threshold on $\hat{p}_n$ above which we should cache a shot in order to meet exactly the cache constraint of \eqref{eq:class}. Indeed, we note that this is the rule with which we decide $y_n^*$. By the convergence of distributions, it follows that the optimal empirical threshold also converges to $\theta^*$. As a result, classifying with $y_n^*$ ensures that (i) the cache constraint is exactly met (by the property of quantile), and (ii) the shots with $y_n^*=1$ have higher  $\hat{p}_n$ than those with $y_n^*=0$, hence the objective of \eqref{eq:class} is minimized.
\end{proof}

In Fig.~\ref{fig:class} we compare the above (optimal) classifier for the rectangular-SNM to the frequency estimates, and we see that they are indeed different. The differences are greater in the general SNM, without the assumption of equal shot duration $T$.

\begin{figure}[h!]
	\begin{center}
		\begin{overpic}[scale=.3]{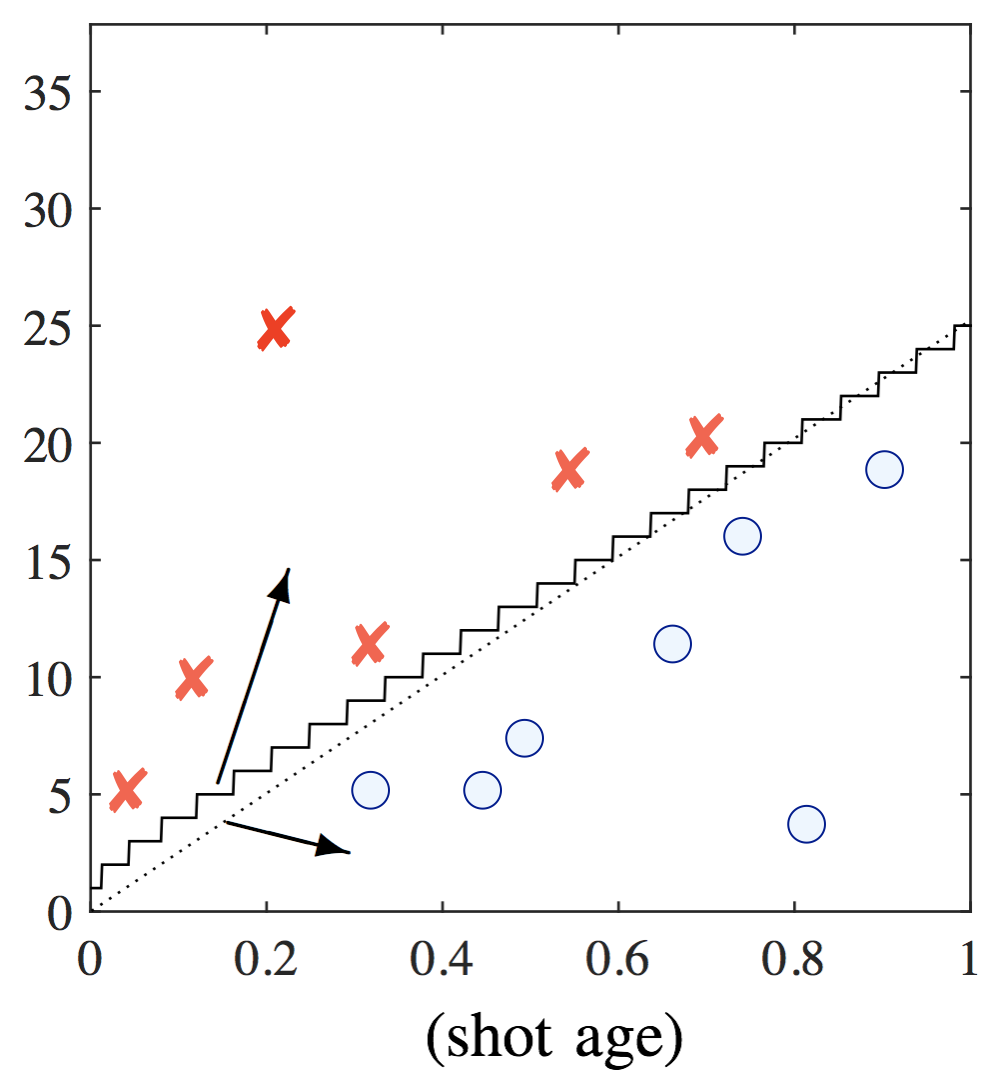}
			\put(34,2){$\alpha$}
			\put(-5,38){\rotatebox{90}{\# of Requests}}	
			\put(9,50){\scriptsize \emph{A posteriori} estimate}
			\put(35,19){\scriptsize Frequency estimate}
		\end{overpic}
		\caption{Two classifiers: the \emph{a posteriori} estimate (solid line) and the frequency (dotted line).  
		}
		\label{fig:class}
	\end{center}
\end{figure}

To summarize the above, we have discussed the common phenomenon of temporal locality, which amounts to time correlations in the request sequence. Such correlations can be captured accurately by non-stationary models, such as the SNM. The latter, also allows for a simple derivation of the optimal classifier (to popular/unpopular) by  exploiting the knowledge of the prior model to  outperform the standard frequency estimator. Future research is needed along these lines to solidify our understanding of what is a good non-stationary model, and derive Bayesian estimators for time-varying  popularity distributions.


\subsection{Spatial locality}

The requests also exhibit correlations across space. For example, crossing the boarders between two countries with different language has a severe impact on the observed content popularity. Recent work \cite{Kurose08,Scellato_2011} has also shown that the popularity distribution changes rapidly within a city as well.
Another work \cite{Fourno_17} focused on profiling traffic volumes at base stations in several European cities and showed that the traffic fluctuates based on a location-dependent pattern, called ``signature''. 
Metro station areas, residential, commercial, touristic, academic, and other type of areas have their own distinctive signature. Similarly, it is believed that the content preference may vary depending on the culture and activity  at each geographical area. 
In a multi-cache  installation spanning a large geographical area, such spatial correlations result in certain contents being very popular in a subset of caches, and unpopular in the rest. This also suggests that \emph{the popularity model has location-dependent  ``features'', and clustering together locations with similar features can improve popularity learning}.

We mention, however, that identifying 
spatial correlations in popularity is challenging since it requires collection of data at many different points in the network, simultaneous availability of content id (application information) and location information (user-specific information), and a large number of data in a short time interval from a sizeable geographical area. To the best of our knowledge, no such data are published to date. Our discussion is therefore restrained here to the proposition of mathematical models for generating spatially correlated requests, without the ability to validate the models with real data. In the next subsection we propose the stochastic block model that  captures the  geographical correlations of content requests, and allows  easy inference and optimization.   

\subsubsection{The stochastic block model}

In this subsection we present  the \emph{Stochastic Block Model} (SBM).  We consider $L+1$ caches, where the edge caches $\{1,\dots,L\}$ receive local requests from $L$ different geographical locations, while the parent cache $0$ receives the aggregate  requests. Cache $l\in \{0,1,\dots,L\}$ ``sees'' a popularity distribution $(p_{nl})_{n=1,\dots,N}$, which in general might vary across caches, see Fig.~\ref{fig:tree_learning}.

\begin{figure}[h!]
  \centering
      \includegraphics[width=0.75\textwidth]{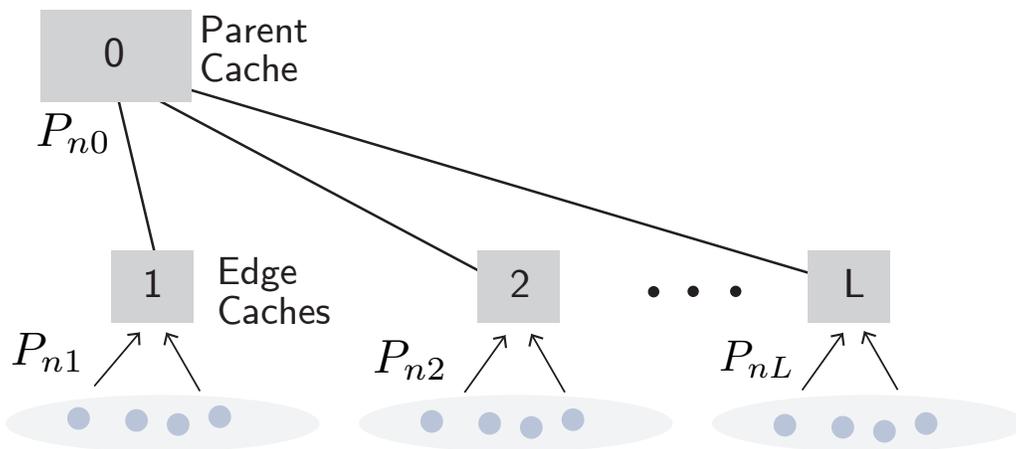}
\caption{Modeling correlated popularities in a system with $L+1$ caches.}
\label{fig:tree_learning}
\end{figure}

We propose here a model for correlated local popularities $(p_{nl})_{n}, l=1,\dots,L$. The global popularities $(p_{n0})_n$ can follow any law we desire, for example we may assume that they are described by an SNM with mean popularity $\overline{p}$. To model correlations between the local cache popularities, we draw inspiration from the field of community detection \cite{Lelarge13}:
\begin{itemize}
\item Each content $n$ is associated with a feature vector $X_n$. To simplify the model we let $(X_n)$ be independent uniform random variables, each taking values in $[0, 1]$.
\item Each location $l$ is associated with feature vector $Y_l$, which
are again chosen independently and uniformly in $[0, 1]$.
\item We define a kernel $K(x, y) = g(|x-y|)$, where $g$ is a continuous function, strictly decreasing on [0, 1/2], symmetric and 1-periodic, with $\int_{[0,1]} g(|x-y|)dy=1$ for all $x\in[0, 1]$. Since the kernel $K$  is periodic,  we can think of the feature vectors $(X_n),(Y_l)$ as lying on the torus $[0, 1]$ rather than the interval.
\end{itemize}

The local popularity of content $n$ at cache $l$ is defined as
\[
p_{nl}=p_{n0}\frac{K(X_n,Y_l)}{\sum_{l'\in L}K(X_n,Y_{l'})},~~\forall n,~l=1,\dots,L.
\]
One can trivially verify that by  averaging the popularity of locations $1,\dots,L$, we obtain $p_{n0}$ (the popularity of location 0). As such,  SBM is completely determined by the model of $p_{n0}$ and the chosen kernel $K(x, y)$, where the latter is responsible for inducing the correlations. We can further simplify the model by choosing the kernel to have a specific form, for example \cite{mathieu} uses:
\[
K(x,y)=5(1-2|x-y|)^4.
\]
The SBM is an analytical model where  popularity  exhibits spatial correlations. In this setting, we note  an inherent tradeoff of learning. The global location 0 aggregates the requests from all locations, and therefore  learns quickly the average popularity distribution $(p_{n0})$, owing to the large number of request samples per unit time. On the other hand, each individual location learns slower (due to smaller sample), but it  learns the local model $(p_{nl})$, which is more accurate than the average of all models $(p_{n0})$. An interim approach is also possible, where the distribution is learned from aggregating requests from a subset of locations with similar popularity. Such a cluster-based learning would be faster than local learning and  more accurate than global learning. As an example, suppose half of the locations share the same popularity distribution, and they are negatively correlated with the other half such that $p_1\geq p_2\geq \dots,\geq p_N$ in the first, and   $p_1\leq p_2\leq \dots,\leq p_N$ in the second. Then learning in two clusters would greatly outperform both  global and  local learning. This can be achieved with standard spectral clustering techniques from the literature, for example \cite{von2007tutorial}.

\section{Conclusions}

We have surveyed the key concepts of content popularity, explained spatio-temporal correlations in the request process, and presented the most widely used content popularity models. Predicting a time-varying popularity is a key challenge in caching, and we provide more information in Sec.~\ref{sec:prediction}. In the next chapter we study policies that decide which contents to cache in reaction to a unknown (and potentially changing) popularity.



\chapter{Cache eviction policies}\label{ch:3}

More often than not,
the cache is full, and hence in order to cache a new content, another content must be evicted to make room in the cache. 
This chapter focuses on cache eviction policies that decide  which content should be evicted. 
  Broadly speaking we seek to find dynamic eviction policies that maximize hit performance. 
In light of this, the chapter presents offline and online cache eviction policies with associated performance guarantees under various settings. We examine different assumptions for the underlying request sequence, emphasizing fundamental properties of caching policies, and providing intuition for designing new policies in order to match the needs of new systems. 

\begin{figure}[!h]
\centering	
\includegraphics[width=3.2in]{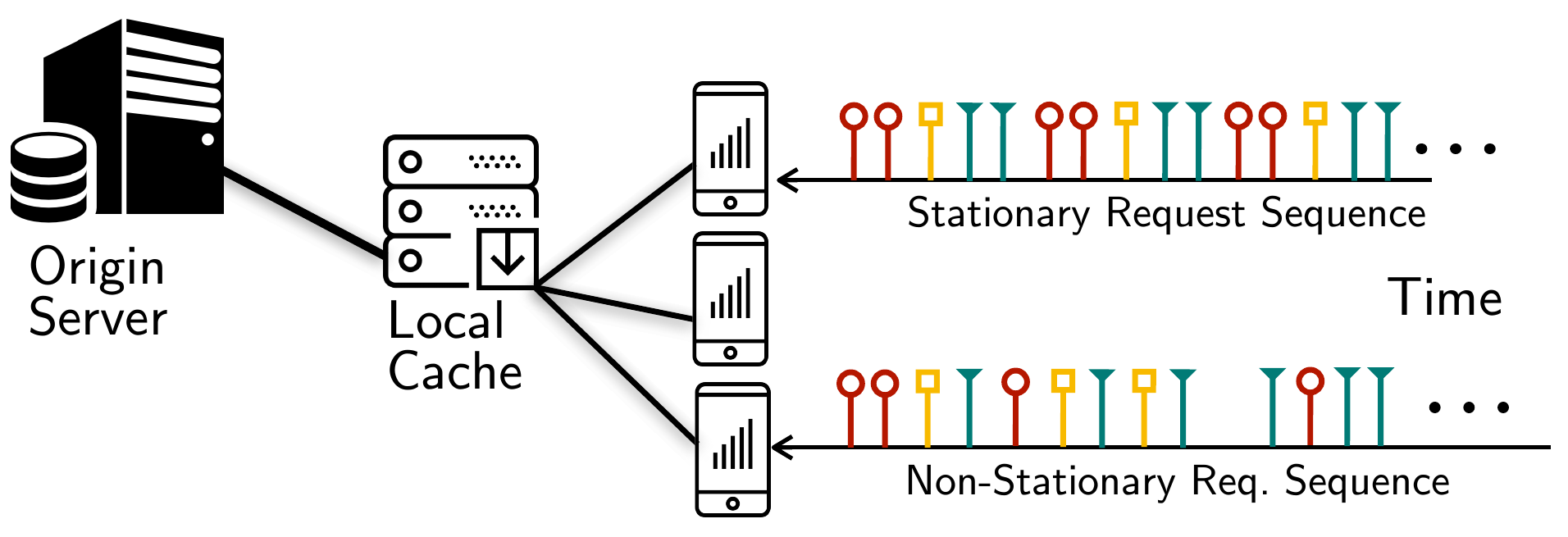}
\caption{
Content requests must  be served by  a local cache (hit) or the origin server (miss). An eviction policy observes past requests and decides which contents to cache in order to increase the hit probability (or hit rate).
}
\label{Fig:motivation}
\end{figure}

\section{Performance under arbitrary requests}

In this section we make no assumptions about requests, and analyze the performance of caching policies under an arbitrary request sequence.

\subsection{The paging problem}

Consider a catalog of $\mathcal{N}=\{1,2,\dots,N\}$ contents,  and a cache that fits at most $M<N$ of them, where all contents have equal size. Time is slotted, $t=1,2,\dots,T$, and there is  an arbitrary request sequence $\boldsymbol R=(R_1,R_2,\dots,R_T)$, {with $R_t\in \mathcal{N}$ for every $t$}. The cached contents at time $t$ are given by the set $\mathcal{M}_t\subseteq \mathcal{N}$, and there is a {hit} if $R_t\in \mathcal{M}_t$ and a {miss} if $R_t\notin \mathcal{M}_t$. In case of a miss, we add the newly requested content to the cache and, when $|\mathcal{M}_t\cup R_t|>M$ we \emph{evict} one of the cached items, denoted $E_t$. The evolution of the  \emph{cache state} is given by:
\begin{align*}
&\mathcal{M}_{t+1}=\left\{ \begin{array}{ll}
 \mathcal{M}_t\cup R_t - E_t & \text{if we have a miss}\\
\mathcal{M}_t & \text{if we have a hit}
\end{array}\right.
\quad t=1,\dots,T-1.
\end{align*}

\begin{dfn}[theorem style=plain]{Eviction policy}{}\label{eq:eviction}
The online  eviction policy $\pi$ is a rule that at time $t$ maps the current cache state $\mathcal{M}_t$ and  the observed request sequence $(R_1,R_2,\dots,R_t)$ into an evicted content with the constraint that $E^{\pi}_t\in \mathcal{M}_t \cup R_t$ when $R_t\notin \mathcal{M}_t$ and $E^{\pi}_t=\emptyset$ otherwise. 
\end{dfn}
We allow the possibility for a policy $\pi$ to be aware of the entire sequence $(R_1,R_2,\dots,R_T)$ at each slot $t$ (\emph{offline policy}) instead of just $(R_1,R_2,\dots,R_t)$ (\emph{online policy}). 
We will denote the set of online and offline policies with $\Pi$, which includes all the $T$-vectors of feasible eviction decisions. 
We will consider two models for eviction decisions:
\begin{enumerate}
\item[(\Mone)]  $E^{\pi}_t\in  \mathcal{M}_t\cup R_t$, where the requested item at slot $t$ is also eligible for eviction.
\item[(\Mtwo)]  $E^{\pi}_t\in  \mathcal{M}_t$,  where the requested item at slot $t$ is \emph{not} eligible for eviction.
\end{enumerate}
I2 is often encountered in the vast literature of databases and computing, while \Mone~is a reasonable option for Internet caching applications. For clarity, we will make explicit reference to the employed model, but we mention that the differences between the two models are minor.

We denote with  $h^{\pi}(\boldsymbol R)$ the \emph{number of hits} under policy $\pi\in\Pi$ and request sequence $\boldsymbol R$. We can now formally state the \emph{paging problem}: find an eviction policy that maximizes the number of hits for the worst-case request sequence.

\begin{opt}{Paging problem}\vspace{-0.1in}
\begin{equation}\label{eq:paging}
\max_{\pi\in\Pi}\min_{\boldsymbol R\in \mathcal{N}^{T}} h^{\pi}(\boldsymbol R)
\end{equation}
\end{opt}

The problem is called paging since it was originally  formulated in the scope of paging  memory of computing systems \cite{Sleator85}. In computer science it is used as  a didactic example of competitive analysis for deterministic algorithms.

\subsection{Belady's offline paging}

We first describe an offline policy that maximizes $h^{\pi}(\boldsymbol R)$  for any $\boldsymbol R$, and therefore solves \eqref{eq:paging}. The algorithm was first proposed in \cite{Belady66} by Belady, and it is also known as the \emph{max stack distance}, or the \emph{MIN} replacement policy (it minimizes the misses), or caching oracle.

We denote with $D(n,t)$ the \emph{stack distance} of content $n$, which at time $t$ is defined  as the number of slots until the next request for $n$ arises. Formally:
\[
D(n,t)=\inf \{ \tau> t, R_{\tau} = n\}-t. 
\]
If there is no other occurrence of $n$ in the remaining sequence, then we trivially set $D(n,t)=\infty$. {An example is presented in Figure  \ref{fig:belady}.}

\begin{figure}[h!]
	\centering
	\includegraphics[width=0.85\textwidth]{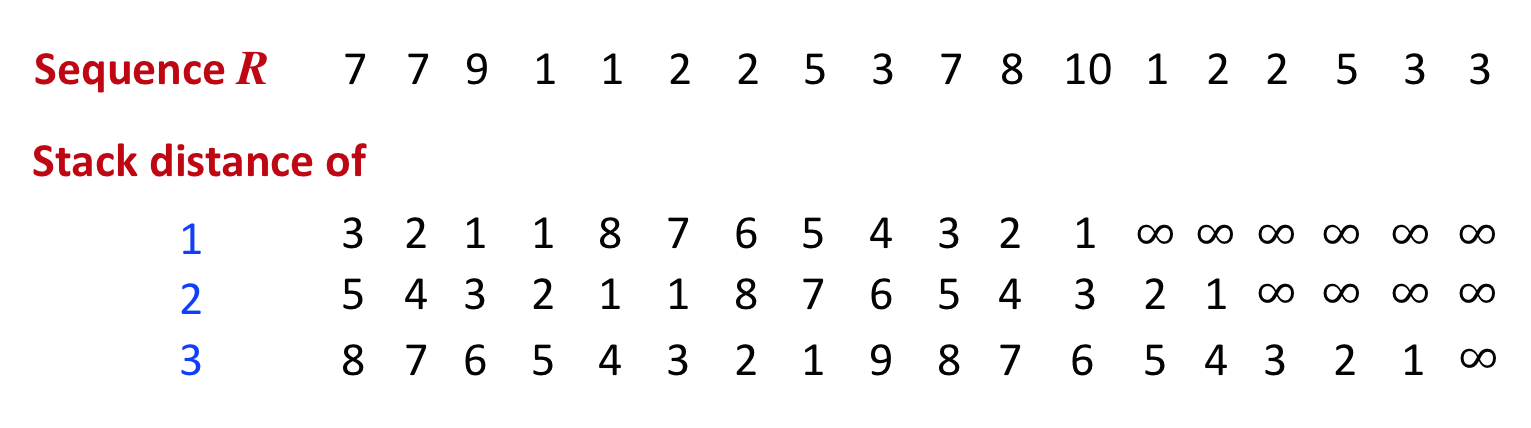}
	\caption{Showcased stack distances $D(1,t),D(2,t),D(3,t)$.}
	\label{fig:belady}
\end{figure}

\begin{box_example}[detach title,colback=blue!5!white, before upper={\tcbtitle\quad}]{Belady eviction policy.}
Belady evicts the content with the largest stack distance:
\[
E^{\textit{MIN}}_t={\arg\max}_{n\in \mathcal{M}_t\cup R_t} D(n,t),
\]
where we use ``\emph{MIN}'' for Belady. Hence the policy always evicts the content that will be requested further in the future. 
\end{box_example}

We remark that Belady is an impractical eviction policy because it requires the calculation of the stack distance, {and this in turn involves information about the future request arrivals}. Nevertheless, Belady has important practical uses. First, since it achieves the maximum possible hits, it provides  a performance upper bound for online eviction policies on datasets. {Therefore it can be used to test whether the hits achieved by another policy 
are satisfactory.} 
Second, as we will show in the following, it can be used to 
characterize the performance of practical online  eviction policies. Next, we rigorously state the optimality of Belady.

\begin{thm}[theorem style=plain]{Belady achieves optimal paging}{belady}
	Fix $T$, and choose any sequence $\boldsymbol R\in\mathcal{N}^T$. Belady achieves the highest number of hits  in class $\Pi$ under both \Mone~and \Mtwo. 
	\[
	\textit{MIN}\in {\arg\max}_{\pi\in\Pi} h^{\pi}(\boldsymbol R). 
	\]
\end{thm}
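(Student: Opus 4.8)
The plan is to prove optimality through an exchange argument that converts an arbitrary policy $\pi\in\Pi$ into \emph{MIN} one decision at a time, never decreasing the hit count. Fix $\boldsymbol R$ and let $t^*$ be the first slot at which the eviction of $\pi$ differs from that of \emph{MIN}; since both policies are given the same initial cache and agree before $t^*$, they share the cache state $\mathcal{M}_{t^*}$ and face the same miss. \emph{MIN} evicts the item $b$ of largest stack distance $D(\cdot,t^*)$, while $\pi$ evicts some $a\neq b$, so by the \emph{MIN} rule the next request for $a$ precedes that for $b$ (reading $+\infty$ when an item is never requested again). I would define a hybrid policy $\pi'$ agreeing with $\pi$ up to $t^*$, evicting $b$ at $t^*$, and thereafter imitating $\pi$. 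It then suffices to show $h^{\pi'}(\boldsymbol R)\ge h^{\pi}(\boldsymbol R)$: iterating this step, each application strictly increasing the number of slots on which the policy coincides with \emph{MIN}, transforms $\pi$ into \emph{MIN} without loss and proves $\textit{MIN}\in{\arg\max}_{\pi\in\Pi}h^{\pi}(\boldsymbol R)$.

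The heart of the argument is a forward coupling of the two cache trajectories after $t^*$, maintaining the invariant that the caches of $\pi$ and $\pi'$ differ in \emph{at most one} content: when they differ, $\pi$ holds an item $g$ and $\pi'$ holds an item $f$, initialized to $g=b$, $f=a$. I would run through the requests $R_t$ for $t>t^*$ case by case: (i) a request in the common part is a hit for both and changes nothing; (ii) a request outside both caches is a common miss, where I let $\pi'$ copy $\pi$'s eviction whenever $\pi$ does not evict its distinct item $g$, and otherwise have $\pi'$ evict $f$, after which the two caches coincide and remain identical forever; (iii) a request for the distinct item is the decisive case, and because $f$ is requested before $g$ this produces a hit for $\pi'$ on $f$ while $\pi$ suffers a miss, so $\pi'$ banks a hit $\pi$ does not. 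In each case I would verify that $\pi'$ is never behind $\pi$ in cumulative hits and that the single-element invariant is restored (possibly with a new distinct pair), until the caches resynchronize or the horizon $T$ is reached.

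The same bookkeeping applies to both eviction models. Under \Mtwo~the requested item is never a candidate, so $b\neq R_{t^*}$; under \Mone~one must additionally allow the borderline case in which \emph{MIN} ``evicts'' the freshly requested item itself, when $R_{t^*}$ has the largest stack distance. This only changes which of $a,b$ plays the role of the distinct element and does not alter the coupling.

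The step I expect to be the main obstacle is the precise case analysis in (ii)--(iii): one must track how the single differing element migrates when a third content forces an eviction of the distinct item in exactly one of the two caches, and argue that the ``$\pi'$ is never behind'' property survives every such migration (in particular, once $\pi'$ is one hit ahead, a later hit of $\pi$ on $g$ can at worst restore a tie and permits resynchronization). The structural fact that makes everything work is exactly the defining property of \emph{MIN}: its evicted item $b$ has the \emph{furthest} next request, which guarantees that $\pi'$'s retained item $a$ is requested first and hence that $\pi'$ collects its compensating hit before any request that could favor $\pi$.
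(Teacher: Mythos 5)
Your proposal is correct and follows essentially the same route as the paper's proof: an exchange argument that fixes the first disagreement with \emph{MIN}, builds a hybrid policy, and couples the two cache trajectories under the invariant that they differ in at most one content, with the hit accounting resting on exactly the paper's two facts --- the event where the original policy gets a hit that the hybrid misses (the paper's $\Omega$) can occur at most once, forces resynchronization, and must be preceded by a compensating hit for the hybrid (the paper's $\hat\Omega$) because the \emph{MIN}-evicted item has the furthest next request. The migration bookkeeping you flag as the main obstacle is precisely what the paper's construction handles by having $A_{k+1}$ minimize $|\mathcal{M}^k_t-\mathcal{M}^{k+1}_t|$ with its tie-breaking rule, so your coupling rules in cases (ii)--(iii) are the same mechanism made explicit.
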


The classical proofs \cite{Mattson70,Roy10,Vogle08,Lee15} prove the statement under model \Mtwo. Here we give an elegant proof for \Mone.  
\begin{proof}[Proof of Theorem \ref{thm:belady} (case \Mone)]
Assume any eviction policy $A\in\Pi$. We will show that it yields less or equal  hits to Belady, i.e., $h^A(\boldsymbol R)\leq h^{\textit{MIN}}(\boldsymbol R)$. The proof is based on showing that  starting from policy $\pi_1$ we can produce a new policy $\pi_2$ that takes one more decision like Belady and performs no worse than $\pi_1$. Then we use induction on the number of steps. Below we drop the notation $(\boldsymbol R)$ with the understanding that $\boldsymbol R$ is fixed.

\begin{lem}[theorem style=plain]{One more Belady step}{dom}
Consider a policy $A_k\in \Pi$ with $h^{A_k}$ hits, which takes the same eviction decisions with {MIN} up to slot $k$, where $0\leq k\leq T-1$, that is
\[
E^{A_k}_t=E^{\textit{MIN}}_t,\quad \forall t=0,\dots, k.
\]
Then, there exists a policy $A_{k+1}\in \Pi$ satisfying:
\begin{align}
	& E^{A_{k+1}}_t=E^{MIN}_t,\quad \forall t=0,\dots, k+1.\label{eq:part1}\\
	& h^{A_{k+1}}\geq h^{A_{k}}.\label{eq:part2}
\end{align}
\end{lem}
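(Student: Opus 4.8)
The plan is to run the standard \emph{exchange argument}: starting from $A_k$ I alter only its slot-$(k+1)$ decision so that it coincides with \textit{MIN}, and then repair the tail of the policy so that feasibility is preserved and no hits are lost. Since $A_k$ agrees with \textit{MIN} on all evictions for $t\le k$ and both start from the same cache, their states coincide through slot $k+1$; in particular $\mathcal{M}^{A_k}_{k+1}=\mathcal{M}^{\textit{MIN}}_{k+1}$. If slot $k+1$ is a hit, or if $A_k$ already evicts $E^{\textit{MIN}}_{k+1}$, then $A_k$ itself satisfies \eqref{eq:part1} and we take $A_{k+1}=A_k$. So the only real work is the case of a miss at $k+1$ with $u:=E^{A_k}_{k+1}\neq v:=E^{\textit{MIN}}_{k+1}$.

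First I would define $A_{k+1}$ to agree with $A_k$ (hence \textit{MIN}) for $t\le k$ and to evict $v$ at slot $k+1$, which gives \eqref{eq:part1} immediately. For the remaining slots I maintain the invariant that the two caches differ in \emph{at most one} element, with $A_k$ holding $v$ and $A_{k+1}$ holding some content $x$ (initially $x=u$), all other cache slots being common. The construction has $A_{k+1}$ \emph{shadow} $A_k$: on a request in the common part they agree; on a miss outside both caches $A_{k+1}$ copies $A_k$'s eviction whenever that item is common or is the request itself, and it collapses the two caches into one identical state precisely when $A_k$ evicts its extra $v$; and on the request $R_t=x$ (the single content $A_{k+1}$ keeps but $A_k$ has dropped) $A_{k+1}$ scores a hit while $A_k$ misses and reloads $x$, after which the invariant is re-established with a possibly updated extra $x$. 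The key bookkeeping fact is that $A_k$'s extra element stays equal to $v$ until the two caches coincide.

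The engine that makes this repair profitable is Belady's defining property. Because $v$ maximizes the stack distance over $\mathcal{M}_{k+1}\cup\{R_{k+1}\}$, we have $D(v,k+1)\ge D(u,k+1)$, so the next request for $v$ occurs at a time $t_v\ge t_u$, where $t_u$ is the next request for $u$; moreover $v$ is \emph{not} requested anywhere in the open interval $(k+1,t_v)$. Consequently the only slot at which $A_{k+1}$ can lose a hit relative to $A_k$ is $t_v$ itself, where $A_k$ hits $v$ while $A_{k+1}$ misses, loads $v$, evicts its extra, and the caches become identical thereafter; and this single loss is preceded at $t_u\le t_v$ by a hit that $A_{k+1}$ gains on $x=u$ while $A_k$ misses. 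Tallying the gains against the at-most-one loss yields $h^{A_{k+1}}\ge h^{A_k}$, which is \eqref{eq:part2}. If $D(v,k+1)=\infty$ there is no loss at all and the inequality is immediate.

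I expect the main obstacle to be the careful case analysis that keeps the single-difference invariant intact under the eviction model \Mone --- in particular checking feasibility of each substituted eviction (that the evicted item indeed lies in $\mathcal{M}^{A_{k+1}}_t\cup R_t$) and confirming that the extra of $A_k$ never drifts away from $v$ before resolution, since this is exactly what pins the lone potential loss to slot $t_v$ and guarantees it is paid for in advance by the gain at $t_u$. Once the invariant and this gain/loss accounting are nailed down, induction on $k$, i.e.\ applying Lemma~\ref{lem:dom} repeatedly, upgrades any $A\in\Pi$ to \textit{MIN} without decreasing hits, which completes the proof of Theorem~\ref{thm:belady}.
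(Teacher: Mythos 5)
Your proposal is correct and follows essentially the same route as the paper's proof: you construct $A_{k+1}$ to copy \textit{MIN} at slot $k+1$ and then shadow $A_k$ while maintaining an at-most-one-element cache difference (the paper phrases this as choosing evictions that minimize $|\mathcal{M}^{k}_t-\mathcal{M}^{k+1}_t|$ with ties broken as in $A_k$), and your gain/loss accounting at $t_u<t_v$ via $D(v,k+1)\geq D(u,k+1)$ is exactly the paper's argument that the event $\Omega$ occurs at most once and only after $\hat\Omega$. The bookkeeping facts you flag as the main obstacle (the extra of $A_k$ staying pinned to $v$ until collapse, and feasibility of each copied eviction under \Mone) all check out, including the corner cases where $u$ or $v$ equals the incoming request $R_{k+1}$.
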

Lemma~\ref{lem:dom} says that there exists a modification of any policy $A_k$  to $A_{k+1}$ which makes one more decision like Belady, and  the number of hits of $A_{k+1}$ are no less than those of $A_k$. Assuming w.l.o.g. that policy $A$ performs the same steps with Belady up to slot $a$, where $0\leq a \leq T$, by Lemma \ref{lem:dom} and induction on $k$, we get $h^A= h^{A_a} \leq h^{A_{a+1}} \leq \dots \leq h^{MIN}$, which proves the theorem.
\end{proof}

\begin{proof}[Proof of Lemma \ref{lem:dom}]
The proof is by construction. {We construct $A_{k+1}$ as follows:} (i) for slots $0,\dots, k$ it mimics $A_k$ and \emph{MIN} (which are the same by the premise of the Lemma), (ii) at slot $k+1$ it takes the decision that \emph{MIN} would have taken, (iii) for slots $t=k+2,\dots,T$ it takes feasible decisions that minimize $|\mathcal{M}^{k}_t-\mathcal{M}^{k+1}_t|$,\footnote{The set difference $\mathcal{M}^{k}_t-\mathcal{M}^{k+1}_t$ corresponds to the contents policy $A_k$ maintains in the cache but policy $A_{k+1}$ does not.} where in case of a tie the content to be evicted is the same as in $A_k$. We immediately obtain that $A_{k+1}\in \Pi$ and \eqref{eq:part1} hold. {It remains to show \eqref{eq:part2}.}
	
Up to time $k$, both policies take the same decisions and thus yield the same hits. Hence we will compare the two policies in the interval $\{k+1,\dots,T\}$. Additionally, we note that if at some slot $t>k$ it is $\mathcal{M}^{k+1}_t=\mathcal{M}^{k}_t$, then by construction of $A_{k+1}$ the caching states of the two policies $A_k,A_{k+1}$ become identical until $T$. 
Consider two useful slot events in the interval $\{k+1,\dots,T\}$:
\begin{align*}
	\Omega:~~\text{in a slot $A_k$ gets a hit while $A_{k+1}$ not}\\
	\hat\Omega:~~\text{in a slot $A_{k+1}$ gets a hit while $A_{k}$ not}
\end{align*}
We will prove that $\Omega$ occurs at most once and only after $\hat\Omega$ has previously occurred, which proves \eqref{eq:part2} and hence the lemma. Note that at time $k$ policy $A_{k+1}$ has evicted the content with the maximum stack distance (as Belady does) and hence $\Omega$ cannot occur at slot $k+1$. Therefore, it suffices to prove the following two statements:  
\begin{itemize}
	\item[(a)] If at slot $t\in \{k+2,\dots,T\}$ $\Omega$ occurs, then $|\mathcal{M}^{k}_t-\mathcal{M}^{k+1}_t|$ becomes zero, and as mentioned above the two policies keep the same states until the end. Hence,  $\Omega$ can not occur again.
	
	\item[(b)] Suppose that $\Omega$ occurs at $\hat{t} \in \{k+2,\dots,T\}$, then there exists another slot in the interval $\{k+1,\dots,\hat{t}-1\}$ such that $\hat\Omega$ occurs.
\end{itemize}
	
\textbf{Preliminary:} Let $a\in \mathcal{N}$ denote the max-distance content  evicted by $A_{k+1}$ in slot $k+1$ (mimicking Belady)  and $b\in \mathcal{N}$ denote another content which  is evicted by $A_{k}$, $b\neq a$. Also, denote by $\mathcal{I}$ the intersection of the caching states of the two policies. For clarity, at time $k+1$:
\begin{align*}
	&\mathcal{M}^k_{k+1}=\mathcal{I}\cup \{a\}\\
	&\mathcal{M}^{k+1}_{k+1}=\mathcal{I}\cup \{b\}\\
	&\mathcal{M}^k_{k+1}-\mathcal{M}^{k+1}_{k+1}=\{a\}\\
	& \text{Uncached contents: } \mathcal{N}- \mathcal{I}- \{a,b\},
\end{align*}
where, the notation $\mathcal{M}^m_n$ denotes the set of contents that are cached in slot $n$ by policy $A_m$.

\textbf{Proof of (a):} Observe that for any $t$ there are only two possible set differences $\mathcal{M}^k_t-\mathcal{M}^{k+1}_t\in \{\emptyset, \{a\}\}$. This is ensured by the construction of $A_{k+1}$ which takes decisions to minimize $|\mathcal{M}^k_t-\mathcal{M}^{k+1}_t|$. Any new content cached by $A_k$ must then also be cached by $A_{k+1}$. Therefore,  for  event $\Omega$ to occur, content $a$ must be requested. In this case, $A_{k+1}$ will take a decision that makes the set difference an empty set (evicting its own extra content in order to cache $a$). After this slot, we always have $|\mathcal{M}^k_t-\mathcal{M}^{k+1}_t|=0$.

\textbf{Proof of (b):} Since  $\Omega$ occurs, content $a$ was requested at slot $\hat t$, and since $a$ is the max stack distance content in slot $k+1$, it follows that  the interval $\{k+1,\dots,\hat{t}-1\}$ must contain a slot where $b$ is requested. Let $t_b$ denote the slot where $b$ is requested for the first time in the interval $\{k+1,\dots,\hat{t}-1\}$. For $\hat\Omega$ to occur at $t_b$ it suffices to show that  $b\in \mathcal{M}^{k+1}_{t_b}$, which we do next. Observe that no requests of $a$ or $b$ occur in $\{k+1,\dots,{t}_b\}$. Hence, requested contents belong either to $\mathcal{I}$ (which  preserve $b$), or to $\mathcal{N}-\mathcal{I}-\{a,b\}$ in which case a new content $x$ might be cached by both policies while another content is evicted. Here note that $b$ is not evicted by $A_{k+1}$ by the way the ties are resolved.
\end{proof}

\subsection{LRU and competitive online paging}
%

{Having studied the benchmark Belady offline policy, we now move to the study of online paging policies.} Our goal here is to study which online eviction policy maximizes the hits in the worst-case.  For presentation purposes, we consider the equivalent problem of minimizing the misses, i.e., $m^*(\boldsymbol R,M)=\min_{\pi\in \Pi}m^{\pi}(\boldsymbol R,M)$, where $m^{\pi}(\boldsymbol R,M)$ is the number of misses employing $\pi$ on a cache of size $M$ versus sequence $\boldsymbol R$. This is equivalent to hit maximization, and the minimum misses are  also  achieved by the Belady offline policy. 

We emphasize that the paging problem is an  ``adversarial" setting, where the cache selects its eviction policy first, and then an adversary chooses the request sequence that induces the most misses. Analyzing 
the worst-case misses of an online policy is not informative because for any online policy $\pi$ we can find a request sequence that yields $m^{\pi}(\boldsymbol R)=T$ (or zero hits), which implies that all online paging policies have the same worst-case performance. Hence, in order to have a meaningful competitive analysis 
 we will  disadvantage the  offline paging  with a smaller cache and then  compare the number of misses between online and  size-restricted offline. This methodology was first proposed in \cite{Sleator85}.

\begin{dfn}[theorem style=plain]{Miss competitive ratio}{}
A policy $\pi\in\Pi$ is $(\xi,\rho)$--competitive if 
\[
m^{\pi}(\boldsymbol R,M)\leq \xi\cdot m^{*}(\boldsymbol R,\rho M), ~~\forall \boldsymbol R.
\] 
\end{dfn}

In words, a $(\xi,\rho)$--competitive policy is a policy that for  any sequence $\boldsymbol R$  yields at most $\xi\geq 1$ times more misses than Belady would do  with $\rho\leq 1$ times smaller cache. 

\begin{thm}[theorem style=plain]{Lower bound on miss competitive ratio}{lbound}
Run an online policy $\pi$ with cache size $M$, and Belady with cache size $x\leq M$. There exists sequence $\boldsymbol R$ such that: 
\begin{align*}
&\text{\Mone:}\quad m^{\pi}(\boldsymbol R,M)\geq  \left(\frac{M+1}{M-x+1}\right)m^*(\boldsymbol R,x),\\
&\text{\Mtwo:}\quad m^{\pi}(\boldsymbol R,M)\geq  \left(\frac{M}{M-x+1}\right)m^*(\boldsymbol R,x).
\end{align*}
For both \Mone, \Mtwo, $(\frac{M}{M-x+1},\frac{x}M)$ is a lower bound on the  miss competitive ratio.
\end{thm}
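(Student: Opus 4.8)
The plan is to use the standard adversarial (``cruel'') construction and then, for each model, exhibit a concrete offline policy whose misses are few, so that the competitive ratio is forced to be large. Throughout I fix an arbitrary online policy $\pi\in\Pi$ and let the adversary build $\boldsymbol R$ adaptively to $\pi$ over a restricted catalog of only $M+1$ contents, say $\{1,\dots,M+1\}$. Since the cache of $\pi$ holds at most $M$ of these, at every slot $t$ there is at least one content $q_t\notin\mathcal{M}_t$; the adversary simply sets $R_t=q_t$. This forces a miss at every slot (in \Mone{} even if $\pi$ declines to load $q_t$, the request was still a miss), so that $m^{\pi}(\boldsymbol R,M)=T$ for a sequence of length $T$, no matter how $\pi$ evicts. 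The entire difficulty is thereby transferred to upper-bounding $m^{*}(\boldsymbol R,x)$, i.e.\ to exhibiting one offline policy with cache $x$ that misses rarely on this same sequence (any such policy bounds the optimum $m^{*}$ from above).

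For model \Mone{} this is clean, because offline bypassing (evicting the just-requested item) is legal. First I would note that freezing a fixed set $S\subseteq\{1,\dots,M+1\}$ with $|S|=x$ forever --- serving every request inside $S$ as a hit and bypassing every other request --- is a valid \Mone{} policy, with miss count exactly $\#\{t:R_t\notin S\}$. Averaging over all $\binom{M+1}{x}$ choices of $S$, each content lies in $S$ with probability $x/(M+1)$, so the expected number of misses is $\tfrac{M-x+1}{M+1}\,T$; hence some, and therefore the optimal, $S$ gives $m^{*}(\boldsymbol R,x)\le\tfrac{M-x+1}{M+1}\,T$. With $m^{\pi}=T$ this yields $m^{\pi}(\boldsymbol R,M)\ge\tfrac{M+1}{M-x+1}\,m^{*}(\boldsymbol R,x)$, the \Mone{} bound.

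For model \Mtwo{} the offline policy must cache each requested item, so the frozen-set trick is illegal and the bound degrades to $\tfrac{M}{M-x+1}$. Here I would partition the sequence into consecutive blocks of $M$ slots. The governing combinatorial fact is that a block of $M$ requests over a catalog of $M+1$ contents omits at least one content, so any two consecutive (full) blocks overlap in at least $M-1\ge x-1$ contents, by inclusion--exclusion on subsets of a ground set of size $M+1$. This lets me carry, from one block to the next, an offline cache that already holds $x-1$ of the contents the upcoming block will request, so that within each block a Belady-type offline policy incurs at most $M-x+1$ misses. Summing over the $\lceil T/M\rceil$ blocks gives $m^{*}(\boldsymbol R,x)\le\tfrac{M-x+1}{M}\,T$, hence $m^{\pi}(\boldsymbol R,M)\ge\tfrac{M}{M-x+1}\,m^{*}(\boldsymbol R,x)$.

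Finally, since $\tfrac{M+1}{M-x+1}\ge\tfrac{M}{M-x+1}$, both models satisfy $m^{\pi}(\boldsymbol R,M)\ge\tfrac{M}{M-x+1}\,m^{*}(\boldsymbol R,\rho M)$ with $\rho=x/M$, which is exactly the claim that $(\tfrac{M}{M-x+1},\tfrac{x}{M})$ lower-bounds the miss competitive ratio. The one genuinely delicate step is the per-block miss count in \Mtwo: forced caching can trigger re-faults whenever a block uses more than $x$ distinct contents, so the bound ``at most $M-x+1$ misses per block'' needs the overlap fact above together with careful Belady bookkeeping to control those re-faults. Getting this accounting tight --- rather than off by a small additive constant --- is where I expect the main effort to lie; the construction and the \Mone{} averaging are essentially immediate by comparison.
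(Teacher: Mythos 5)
Your \Mone{} argument is correct, and it takes a genuinely different route from the paper: you use a cruel adversary over a fixed catalog of $M+1$ contents and then bound the offline cost by averaging over all $\binom{M+1}{x}$ frozen sets $S$ (legal under \Mone{} because bypassing, i.e.\ $E_t=R_t$, is permitted), which yields $m^*(\boldsymbol R,x)\le\frac{M-x+1}{M+1}T$ nonconstructively. The paper instead constructs the sequence in explicit rounds of length $M+1$: a sub-phase $\boldsymbol R_1$ of $M-x+1$ fresh contents on which both policies must miss, followed by $x$ requests $\boldsymbol R_2$ drawn adaptively from the set $\mathcal{S}$ (Belady's initial $x$ contents plus $\boldsymbol R_1$, so $|\mathcal{S}|=M+1>M$ guarantees $\mathcal{S}\setminus\mathcal{M}^{\pi}_t\neq\emptyset$), on which $\pi$ misses but Belady, knowing the future, hits. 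The paper's construction sidesteps all offline accounting --- Belady's miss count is $M-x+1$ per round by construction --- whereas your averaging buys a shorter argument for \Mone{} at the price of having to analyze the offline optimum, which is exactly where your \Mtwo{} case breaks.

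The \Mtwo{} argument has a genuine gap: the claim that any two consecutive blocks of $M$ slots overlap in at least $M-1$ contents is false, because a block of $M$ consecutive requests of the cruel sequence need not contain $M$ distinct contents. Under \Mtwo{} with catalog $M+1$, exactly one content is outside $\pi$'s cache at each slot, and the next request equals $\pi$'s previous eviction; so a policy that alternates its evictions between two victim pages induces a cruel sequence that cycles among two or three contents for arbitrarily long stretches, and your inclusion--exclusion step (which presumes each block realizes $M$ distinct contents) collapses. Without it, the mechanism of carrying $x-1$ useful contents across block boundaries fails, and a fixed $M$-slot block can in fact incur up to $M$ offline misses when the carried set is wrong; establishing $m^*(\boldsymbol R,x)\le\frac{M-x+1}{M}\,T$ on an arbitrary cruel sequence requires an amortized analysis of Belady's furthest-in-future evictions (the cyclic sequence over $M+1$ pages shows the bound is tight, so there is no slack to absorb sloppy accounting), not fixed time windows. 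Note also that your scheme gives no visibility into why the \Mtwo{} ratio is $\frac{M}{M-x+1}$ rather than $\frac{M+1}{M-x+1}$: in the paper this loss of $1$ appears transparently because forced caching of the last fresh request of $\boldsymbol R_1$ consumes one of Belady's $x$ slots, so $\boldsymbol R_2$ can only contain $x-1$ guaranteed hits and the round shortens to length $M$. The cleanest repair of your proof is to abandon the fixed-catalog cruel adversary for \Mtwo{} and adopt the paper's round construction there, keeping your averaging argument as an alternative for \Mone{} only.
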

\begin{proof} We prove the claim for \Mone, and we mention that  the same argument works for \Mtwo~with minor differences, see \cite{Sleator85}.
We  construct a sequence $\boldsymbol R$ of length $M+1$ on which ${\pi}$
has $M+1$ misses and Belady only $M-x+1$.  

 Specifically $\boldsymbol R$ is constructed to be the concatenation  of two subsequences, $\boldsymbol R_1$ with the first $M-x+1$ requests and $\boldsymbol R_2$ with the last $x$ requests. The $M-x+1$ requests of subsequence $\boldsymbol R_1$ are all for different contents, that are not cached  neither in ${\pi}$'s nor in Belady's cache, hence both policies incur $M-x+1$ misses. Fix $\boldsymbol R_1$, and let $\mathcal{S}$ be the set denoting the union of  Belady's original $x$ cached contents and the $M-x+1$ newly requested contents of $\boldsymbol R_1$. Note that $|\mathcal{S}|=M+1>M$, and it follows that irrespective of the choices of $\pi$, the set $\mathcal{S}\setminus  \mathcal{M}^{\pi}_{t}$ is non-empty for all $t=M-x+2,\dots,M+1$. The remaining $x$ requests (of subsequence $\boldsymbol R_2$) are selected  from these sets such that on the combined sequence of $M+1$ requests, ${\pi}$ misses all requests. 

Since all last $x$ requests are selected from either Belady's original cache state or $\boldsymbol R_1$, Belady achieves a hit for each one of them, hence Belady misses only $M-x+1$ times in total. Although this construction has a fixed sequence length equal to $M+1$, we may repeate it as many times as desired, giving the theorem. This concludes the proof for \Mone.

We mention that the proof for \Mtwo~is obtained by examining a sequence of length $M$, having a $\boldsymbol R_2$ of length $x-1$. The reason lies in the fact that Belady can only guarantee  hits if $\boldsymbol R_2$ is made of $x-1$ contents due to the restriction in \Mone~of always caching the arriving request even if it is the max stack distance one.
\end{proof}

The theorem shows that the best competitive ratio we can hope for is $(\frac{M}{M-x+1},\frac{x}M), x=1,\dots,M$. Next, we discuss an online policy that achieves this bound.
\begin{box_example}[detach title,colback=blue!5!white, before upper={\tcbtitle\quad}]{LRU eviction policy.}
\small Upon a miss, evict the content that is least recently requested (used). Formally, at time $t$ let $r_n(t)$ be the time elapsed since last request of content $n$,  called \emph{recency of  $n$ at time $t$}. Upon a miss, evict the content with maximum recency $\arg\max_{n} r_n(t)$.

Since recency $r_n(t)$ depends only on the past sequence $R_1,\dots, R_{t}$,  LRU is an online policy.
\end{box_example}

\begin{thm}[theorem style=plain]{LRU is $(\frac{M}{M-x},\frac{x}M)$-competitive}{compLRU}
Fix sequence $\boldsymbol R$ and some $x\leq M-1$ (or $x\leq M$ in \Mtwo). LRU satisfies:
\begin{align*}
\text{\Mone:}\quad &m^{LRU}(\boldsymbol R, M) \leq \frac{M}{M-x} m^{*}(\boldsymbol R, x),\\
\text{\Mtwo:}\quad &m^{LRU}(\boldsymbol R, M) \leq \frac{M}{M-x+1} m^{*}(\boldsymbol R, x).
\end{align*}
\end{thm}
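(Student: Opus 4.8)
The plan is to run the classical phase-decomposition (marking) argument relative to the \emph{online} cache size $M$, so that essentially all of the content reduces to two per-phase counting bounds, one for LRU and one for $m^*$. I would partition the sequence $\boldsymbol R$ into consecutive \emph{phases}: the first phase starts at $R_1$, and each phase is extended maximally so that it contains requests to exactly $M$ distinct contents, the next phase beginning at the first request to an $(M{+}1)$-th distinct content. Thus every phase but possibly the last holds exactly $M$ distinct contents, and the opening request of each new phase is, by construction, to a content absent from the previous phase. If there are $m$ phases, the goal is to show $m^{LRU}(\boldsymbol R,M)\le mM$ while any cache-$x$ policy, and hence $m^*(\boldsymbol R,x)$, is forced to incur at least $M-x$ misses per phase in \Mone~and $M-x+1$ per phase in \Mtwo, and then divide.

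For the LRU upper bound I would show LRU misses each content at most once within a phase. If a content $c$ were missed twice, at slots $t_1<t_2$ of the same phase, it would have to be evicted at some $t\in(t_1,t_2)$; since LRU evicts the least-recently-used item, at that eviction the cache held $M-1$ contents used more recently than $c$'s request at $t_1$, and the request triggering the eviction is to yet another content distinct from all of these and from $c$. This exhibits $M+1$ distinct contents requested inside the phase, contradicting its definition. Hence LRU suffers at most one miss per distinct content, i.e. at most $M$ per phase. Note this argument is insensitive to the \Mone/\Mtwo~distinction, since LRU never evicts the just-requested item.

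The lower bound on $m^*$ is the crux, and is precisely where the two models diverge. In \Mone~I would observe that at the start of a full phase an arbitrary cache of size $x$ contains at most $x$ of that phase's $M$ distinct contents, so each of the remaining $\ge M-x$ contents is uncached at the phase start and is missed at its first in-phase request; the phases being disjoint, this already gives $\ge M-x$ forced misses per phase. To gain the extra unit in \Mtwo~I would instead charge misses over the \emph{shifted} interval running from the second request of phase $P_i$ through the first request of $P_{i+1}$. Under \Mtwo~the content served by the first request of $P_i$ is not eligible for eviction at its own request, so it occupies a cache slot at the start of this interval; the $M$ distinct contents still to be served in the interval (the other $M-1$ contents of $P_i$ together with the new content opening $P_{i+1}$) are all different from it, so at most $x-1$ of them can be cached, forcing $\ge M-x+1$ misses. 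These shifted intervals are pairwise disjoint, so the per-interval estimates may be summed. I expect this constant-sharpening to be the main obstacle: getting $M-x+1$ rather than $M-x$ requires exactly the shifted bookkeeping, together with checking both that the intervals stay disjoint and that the opening request genuinely wastes a slot under the \Mtwo~rule (whereas in \Mone~that request may be evicted immediately, which is why only $M-x$ survives there).

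Finally I would sum the per-phase estimates over the $m$ phases, combining $m^{LRU}(\boldsymbol R,M)\le mM$ with $m^*(\boldsymbol R,x)\ge m(M-x)$ in \Mone~and $m^*(\boldsymbol R,x)\ge m(M-x+1)$ in \Mtwo. The only subtlety is restoring the full factor $m$ from the $m-1$ disjoint shifted intervals: with empty initial caches the cold-start first phase forces a full complement of misses on every policy, and a terminal interval inside the last phase supplies the remainder, so no genuine loss occurs at the boundaries. Dividing the two estimates then yields the stated competitive ratios $\tfrac{M}{M-x}$ (for $x\le M-1$ in \Mone) and $\tfrac{M}{M-x+1}$ (for $x\le M$ in \Mtwo).
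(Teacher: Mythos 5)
Your proposal is correct, and for \Mone~it follows the same route as the paper: the identical maximal decomposition into phases of exactly $M$ distinct contents, the same at-most-one-miss-per-content-per-phase bound for LRU (you spell out the recency contradiction with $M+1$ distinct contents that the paper compresses into the single line ``due to recency''), and the same $\geq M-x$ forced misses per full phase for a size-$x$ cache. Where you genuinely go beyond the paper is \Mtwo: the paper does not prove that case at all, deferring to \cite{Sleator85}, whereas your shifted-interval argument---charging the offline policy over the window from the second request of a phase through the opening request of the next, and using the \Mtwo~eviction rule to pin the phase-opening content in one of the offline's $x$ slots so that at most $x-1$ of the $M$ fresh contents can be resident---is precisely the Sleator--Tarjan device, correctly reconstructed, including the disjointness of the shifted windows. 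One implicit step worth a sentence: a content absent from the offline cache at the window start is necessarily missed at its first in-window request \emph{because} in this model items enter a cache only upon being requested, so no prefetching can defeat the count.

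The one loose spot is your boundary bookkeeping, and to be fair the paper is looser still: it only derives $m^{LRU}(\boldsymbol R,M)\leq \frac{M}{M-x}m^{*}(\boldsymbol R,x)+M$ and waves the additive $M$ away by ``amortizing.'' Your cold-start accounting does close the gap cleanly for \Mone: the first phase costs the offline a full $M$ rather than $M-x$, and cross-multiplying shows the surplus $x$ absorbs any partial final phase. For \Mtwo, however, ``a terminal interval inside the last phase supplies the remainder'' fails when the final partial phase contains $M'\leq x$ distinct contents: the terminal interval then forces no offline misses, and $1+(m-1)(M-x+1)$ falls short of $m(M-x+1)$, while LRU may still pay up to $M'$ extra misses there (note also that the partial phase's contents need not be first-ever requests, so universal cold-start misses do not automatically rescue the count). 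The standard repair is to define phases by LRU's faults, as in \cite{Sleator85}, rather than by distinct contents, or simply to settle---as the paper implicitly does---for an additive constant that vanishes on long sequences. This is a fixable accounting slip in a corner case, not a flaw in the approach.
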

Note that the LRU achievable performance matches that of Theorem \ref{thm:lbound} exactly for \Mtwo, and almost matches less a small term for \Mone. Specifically for \Mone, the LRU misses are in the interval ${\scriptstyle \left[ \left(\frac{M+1}{M-x+1}\right)m^*(\boldsymbol R,x), \frac{M}{M-x} m^{*}(\boldsymbol R, x)\right]}$ and the interval  vanishes as $M$ increases (independent of the value of $x$). 
\emph{We may conclude that  LRU provides the best  miss competitive ratio in the adversarial setting.} If we set $x=M-1$, we can see that LRU makes  at most $M$ times more misses than the offline optimal, and this also  holds for very long sequences where $T>>M$. More interestingly, set $x=\lfloor M/2 \rfloor$ and check that even an optimal algorithm that knows the future would have caused at least half misses using roughly half the cache. 

\begin{proof}[Proof of Theorem~\ref{thm:compLRU}]

Decompose $\boldsymbol R$ into phases that contain exactly $M$ \emph{different} requests (and possibly more than $M$ requests in total) and a possible remaining sub-phase. In particular, each phase ends just before a request for a content that is different than the $M$ previous. Let $p$ denote the number of phases containing $M$ different requests. Due to recency, multiple requests for the same contents within a phase result  always in hits, hence we have
\[
m^{LRU}(\boldsymbol R, M)\leq M\cdot p +M,
\]
where the last term represents misses in a possible remaining sub-phase. Considering Belady with a cache $x$, there must be at least $M-x$ misses at each phase irrespective of what is cached at the beginning of each phase, i.e., $m^{*}(\boldsymbol R, x)\geq (M-x)\cdot p$. Combining, we conclude
\[
m^{LRU}(\boldsymbol R, M) \leq M\cdot p +M \leq \frac{M}{M-x}m^{*}(\boldsymbol R, x)+M.
\]
The case of \Mone follows by amortizing the term $M$ over the much longer sequence $\boldsymbol R$. See  \cite{Sleator85} for \Mtwo.
\end{proof}


\begin{figure}[h!]
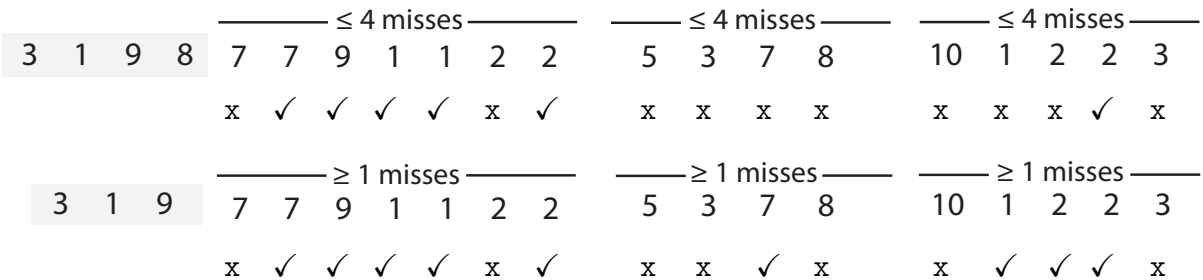

\begin{center}
	\begin{overpic}[scale=.62]{LRU_competitive_new3}
		\put(18.7,9){\texttt{x}}
		\put(22.7,9){\checkmark}
		\put(27,9){\checkmark}
		\put(31,9){\checkmark}
		\put(35.4,9){\checkmark}
		\put(40.3,9){\texttt{x}}
		\put(44.4,9){\checkmark}
		\put(53.2,9){\texttt{x}}
		\put(57.8,9){\texttt{x}}
		\put(62.8,9){\texttt{x}}
		\put(67.6,9){\texttt{x}}
		\put(77.5,9){\texttt{x}}
		\put(82.5,9){\texttt{x}}
			\put(87,9){\texttt{x}}
			\put(90.5,9){\checkmark}
			\put(95.5,9){\texttt{x}}
			\put(18.7,-4){\texttt{x}}
			\put(22.7,-4){\checkmark}
			\put(27,-4){\checkmark}
			\put(31,-4){\checkmark}
			\put(35.4,-4){\checkmark}
			\put(40.3,-4){\texttt{x}}
			\put(44.4,-4){\checkmark}
			\put(53.2,-4){\texttt{x}}
			\put(57.8,-4){\texttt{x}}
			\put(62.8,-4){\checkmark}
			\put(67.6,-4){\texttt{x}}
			\put(77.5,-4){\texttt{x}}
			\put(82.5,-4){\checkmark}
			\put(87,-4){\checkmark}
			\put(90.5,-4){\checkmark}
			\put(95.5,-4){\texttt{x}}
	\end{overpic}
\vspace{0.1in}\caption{Example of competitive paging with $M=4, x=3, p=3$ (Theorem \ref{thm:compLRU} ensures that $m^{LRU}(\boldsymbol R,4)\leq 4 m^{*}(\boldsymbol R, 3)$ for any sequence and on the given sequence we observe that indeed $m^{LRU}(\boldsymbol R,4)=10$, $m^*(\boldsymbol R,3)=7$). 
}\label{fig:compLRU}
\end{center}
\end{figure}

From the above matching bounds we conclude that  LRU is the \emph{optimal} online eviction policy under  arbitrary request sequences with respect to the miss competitive ratio metric. Additionally, LRU is a very practical policy. Its doubly linked list and hashmap implementation runs in $O(1)$ \cite{tanenbaum2009modern} (hence independent of cache size and catalog size) and it is often used in CDNs. However, as we shall see, if more information is known about the request sequence, then LRU may be outperformed by other practical online policies.

Furthermore, although no online policy can guarantee a better worst-case performance than LRU, consider the following: an online paging policy  is called \emph{conservative}, if on any  request sequence with $M$ or fewer  distinct contents will incur $M$ or fewer misses. It turns out that Theorem \ref{thm:compLRU} applies to every conservative policy, examples of which are LRU, \emph{First In First Out} (FIFO) and \emph{CLOCK} \cite{wsclock}. 
From practical experience,  LRU often performs well but FIFO has typically poor performance. This certainly reduces the value of LRU optimality result, and serves as an indication that perhaps the metric of competitive ratio 
does not allow us to successfully distinct good caching policies from bad ones when it comes to the operation of a cache with real requests.

\subsection{Discussion of Related Work}

{Apart from the typical}
FIFO and RANDOM policy (which evicts a content at random), perhaps the most well-known eviction policy is \emph{Least Frequently Used} (LFU), which maintains a list of content  empirical frequencies, and evicts the least frequent content (assumed to be the least popular). In the next section we show that this policy achieves the optimal hit rate under stationary requests.

The simplicity and appealing properties of the LRU policy has inspired a great deal of work on  online paging policies. Perhaps the simplest generalization is the \emph{CLOCK} policy, which provides low implementation complexity. Specifically, while LRU needs to lock the content list in order to move the accessed content to the most recently used position (in order to ensure consistency and correctness under concurrency), CLOCK's clever implementation eliminates  lock contention, and supports high concurrency and high throughput. It was later improved to \emph{WSCLOCK} \cite{wsclock}. 
  
In an interesting short survey  \cite[pp. 601--604]{enc_algo},  a number of generalizations of the  paging problem are summarized. \emph{Weighted caching} is the generalization  where the eviction cost  is  content-dependent. 
For weighted caching, \cite{Young02}  studied the linear programming structure of the more general $k$-cache problem  and  obtained the surprising insight that the LRU  and the \emph{Balance} policies are primal-dual algorithms. By  generalizing them both, it obtained \emph{GreedyDual}, a  competitive primal-dual strategy for weighted caching. 

In a similar manner, \emph{GreedyDualSize} is an extension for  paging with differently sized  contents \cite{gds}, where each content is assigned a size related credit, and the credits are adjusted according to requests. By making small adjustments in credits, the algorithm is able to promote popular and size-efficient contents. The \emph{LANDLORD} is a generalization of GreedyDualSize, where both  content-specific costs and sizes are considered; the authors in \cite{young2002line} showed that LANDLORD encapsulates  GreedyDualSize,  GreedyDual, and LRU, and achieves the standard competitive ratio $(\frac{M}{M-x},\frac{x}M)$.

Another line of work aims to improve LRU's hit  performance by making it more ``frequency-aware''. For example,  the \emph{$q$--LRU} policy works as follows: upon a miss  the content is cached with probability $q$, and then an LRU update is performed; a stationary performance analysis is provided  in \cite{garetto2016unified}, where it is shown that as $q\to 0$, the policy approaches LFU performance. In a different approach, the \emph{LRU-$k$} maintains $k$ connected layers of LRU caches. The content enters the cache at the first layer, and each subsequent hit moves the content to a deeper layer. The layers have the effect of ``filters'', such that only highly popular contents reach the deepest layers, this is shown to improve the ability of LRU to keep highly popular content \cite{Elayoubi2015}, but predictably makes the cache less reactive to popularity changes \cite{li2018accurate}. Martina et al. showed that  LRU-$k$ achieves the same performance as LFU for $k\to\infty$ \cite{garetto2016unified}, while similar results were previously established with a different approach \cite{o1999optimality}. Inspired by the above two, the \emph{ARC}  combines LRU and LFU by adaptively allocating parts of the cache to either recent or frequent content {\cite{megiddo2003arc}}. As a well-performing solution, it was later implemented in its ``CLOCK'' version, called \emph{CAR} \cite{bansal2004car}. Another related approach is to use a score for each content, and apply LRU only to high-score contents (the rest are never cached), cf.~\cite{hasslinger2017performance}. This general approach allows to combine LRU with any mechanism that discovers content popularity. 

Last, a seperate category of eviction policies is called \emph{Time To Live} (TTL), where the contents are evicted when a timer expires. These policies provide the extra capability to tune the hit rate of each individual content, allowing in this way to differentiate service between contents. An analysis of TTL  policies is presented  in section \ref{sec:ttl}. In summary, the properties of different online paging policies are collected in table \ref{tab:onlinepaging}.

\begin{table}[h!]
\begin{center}
\tiny
  \begin{tabular}{ | l | l | l | c  | l |}
    \hline
    \small\textbf{Name}         		& \small\textbf{Eviction rule}      						& \small\textbf{Guarantee}		& \small\textbf{Cplx}    		& \small\textbf{Ref.} \\ \hline \hline 
    Belady   			& max stack distance 					& max hits pathwise 			& N/A				& \cite{Belady66,Mattson70} 	\\ \hline\hline        
     LRU	   			& least recently requested 					& competitive 				& $\bigstar$			& \cite{Sleator85} 	\\ \hline        
    FIFO	   			& first-in				  					& competitive 				& $\bigstar$			& \cite{garetto2016unified} 	\\ \hline        
    WSCLOCK   			& linked list LRU	  						& competitive				& $\bigstar$			& \cite{wsclock} 	\\ \hline        
     {GreedyDualSize}		& least size-related credit	& competitive 			& $\bigstar\bigstar\bigstar$ & \cite{gds} 	\\ \hline        
    LANDLORD			& least size+utility credit		& competitive 		& $\bigstar\bigstar\bigstar$ & \cite{young2002line} 	\\ \hline\hline        
      LFU   			& least frequently requested	  				& opt. stationary				& $\bigstar\bigstar$		& --  	\\ \hline        
     $q$--LRU			& cache with prob $q$, evict LRU   				& opt. stationary $q\to 0$ 		& $\bigstar$			& \cite{garetto2016unified} 	\\ \hline        
        LRU-$k$			& $k$ LRU stages	& opt. stationary $k\to\infty$	& $\bigstar\bigstar$		& \cite{o1999optimality,garetto2016unified} 	\\ \hline        
         TTL	   			& upon timer expiration		  				& opt CUM stationary 					& $\bigstar$			& \cite{fofack2012analysis,C_Dehghan_16} 	\\ \hline\hline               
    RANDOM   			& random			  						& N/A 					& $\bigstar$			&\cite{garetto2016unified} 	\\ \hline         
     CLIMB				& hit moves up one position, evict tail  		& N/A					& $\bigstar$			& \cite{aven1987stochastic,gast2015transient, Starobinski01}	\\ \hline        
    partial LRU   			& least recently requested chunk	  				& N/A 					& $\bigstar\bigstar$		& \cite{wang2015optimal,maggi2018adapting} 	\\ \hline        

    { score-gated}		& threshold-based  	& N/A					& $\bigstar\bigstar$		& \cite{hasslinger2017performance} 	\\ \hline                  
        ARC				& split cache to LRU and LFU 	& 4--competitive					& $\bigstar\bigstar$		& \cite{megiddo2003arc,consuegra17} 	\\ \hline        
	CAR				& combine ARC and CLOCK			 	 	& 18--competitive					& $\bigstar\bigstar$		& \cite{bansal2004car,consuegra17} 	\\ \hline        
  \end{tabular}
\end{center}  
\caption{Online paging policies.}\label{tab:onlinepaging}
\end{table}

\section{Performance under stationary requests}\label{sec:stationary}

In this section, the request sequence is assumed to be stationary and independent. In particular, we will study the Poisson IRM model (see Definition  \ref{dfn:poissIRM}). {When the sequences have finite length we are interested in maximizing the number of hits, yet when $T\to\infty$ the criterion is the limit of the empirical hit probability, known also as the \emph{hit rate}.}
\begin{dfn}[theorem style=plain]{Hit rate}{}
Let $R_t$ denote the $t^{\text{th}}$ request, $\mathcal{M}^{\pi}_t$  the  cached contents under policy $\pi$, and $R_t\in \mathcal{M}^{\pi}_t$ the hit event. 
The hit rate  $\chi^{\pi}$ of policy $\pi$  is:
\[
\chi^{\pi}={\lim\inf}_{T\to\infty} \frac{\sum_{t=1}^{T}\indic{R_t\in \mathcal{M}^{\pi}_t}}{T}.
\]
\end{dfn}

Therefore, the standard caching problem when $T\to\infty$ is to  solve $\sup_{\pi\in\Pi}\chi^{\pi}$. We will see that the optimal online policy in this case  is LFU, and we will also analyze the performance of LRU, as well as the TTL policies that are tunable per each different content item.

\subsection{Least Frequently Used (LFU)}

As we explained in section \ref{sec:mostpop}, if distribution $(p_n)$ is static and known, then the optimal caching policy is the one that  caches the most popular contents. Here we will assume that the distribution is static but  unknown.

\begin{box_example}[detach title,colback=blue!5!white, before upper={\tcbtitle\quad}]{LFU eviction policy.}
	\small
Upon a miss, evict the least  frequently requested (used) content. Formally, let $f_n(t)$ denote the \emph{frequency} of content $n$ at time $t$; see definition \ref{dfn:freq}. Then upon a miss at $t$, evict item $\arg\min_{n} f_n(t)$ (ties solved randomly).

Since frequency $f_n(t)$ depends only on the past sequence $R_1,\dots, R_{t}$,  LFU is an online policy.

\end{box_example}

\begin{thm}[theorem style=plain]{LFU is optimal under IRM}{optLFU}
Suppose  $\boldsymbol R$ is Poisson IRM. LFU satisfies:
\[
\chi^{LFU} \stackrel{\text{w.p.1}}{=}\sup_{\pi\in\Pi}\chi^{\pi}\stackrel{\text{Pois}}{=}\max_{\boldsymbol y}h(\boldsymbol y).
\]
\end{thm}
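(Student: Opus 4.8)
The plan is to prove the double equality as a sandwich, anchored by the static result of Section~\ref{sec:mostpop}. The rightmost quantity is exactly the optimum of the hit-probability maximization \eqref{eq:hit_prob_opt}: assuming w.l.o.g.\ that $p_1\ge\cdots\ge p_N$, we have $\max_{\boldsymbol y}h(\boldsymbol y)=\sum_{n=1}^{M}p_n$, attained by caching the $M$ most popular contents. I would therefore establish two facts: (i) an upper bound showing that no online policy can asymptotically exceed $\sum_{n=1}^{M}p_n$ hits per slot, almost surely; and (ii) a lower bound showing that LFU attains this value almost surely. Since LFU is itself a feasible online policy, (i) and (ii) pin $\chi^{LFU}$, the supremum, and $\max_{\boldsymbol y}h(\boldsymbol y)$ to the common value $\sum_{n=1}^{M}p_n$.

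For the upper bound I would exploit the defining feature of an online policy: the cache $\mathcal{M}^{\pi}_t$ is a function of $(R_1,\dots,R_{t-1})$ only, hence measurable with respect to $\mathcal{F}_{t-1}=\sigma(R_1,\dots,R_{t-1})$ and, under IRM, independent of $R_t$. Consequently the per-slot conditional hit probability obeys
\[
\expect{\indic{R_t\in\mathcal{M}^{\pi}_t}\mid\mathcal{F}_{t-1}}=\sum_{n\in\mathcal{M}^{\pi}_t}p_n\le\sum_{n=1}^{M}p_n,
\]
because any set of size $M$ has total popularity at most that of the $M$ most popular contents. Setting $Z_t=\indic{R_t\in\mathcal{M}^{\pi}_t}-\expect{\indic{R_t\in\mathcal{M}^{\pi}_t}\mid\mathcal{F}_{t-1}}$, the $Z_t$ form a bounded martingale-difference sequence, so by the strong law for martingales (equivalently, Azuma--Hoeffding together with Borel--Cantelli) $\tfrac1T\sum_{t\le T}Z_t\to0$ almost surely. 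Combined with the displayed bound this yields $\limsup_{T}\tfrac1T\sum_{t\le T}\indic{R_t\in\mathcal{M}^{\pi}_t}\le\sum_{n=1}^{M}p_n$ a.s., hence $\chi^{\pi}\le\sum_{n=1}^{M}p_n$ a.s.\ for every online $\pi$.

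For the lower bound I would use the law of large numbers to control the statistic that LFU tracks: by the SLLN for the i.i.d.\ marks, $f_n(t)\to p_n$ almost surely for every $n$. Assuming a strict gap $p_M>p_{M+1}$, there is an almost surely finite (random) time after which the empirical ranking agrees with the popularity ranking at the boundary, i.e.\ every top-$M$ content has strictly larger frequency than every other content. Since $p_n>0$, each content is requested infinitely often; thus after that time each of the $M$ most popular contents is (re)inserted on its next request and, having the largest frequencies, is never again selected for eviction---under the I1 convention a freshly requested low-frequency content is itself the eviction victim. Hence the cache state locks onto $\{1,\dots,M\}$ in finite time a.s., and thereafter every request for a top-$M$ content is a hit, giving $\chi^{LFU}=\sum_{n=1}^{M}p_n$ a.s.

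I expect the main obstacle to be the lower-bound ``lock-on'' step: one must argue not merely that the frequencies converge, but that the cache \emph{configuration} itself converges in finite time, which is exactly where the boundary behaviour matters. If $p_M=p_{M+1}$ the configuration need not stabilize, and one instead argues directly that the tied contents contribute identically in every configuration, so the hit rate still converges to $\sum_{n=1}^{M}p_n$; moreover it is the I1 eviction rule (as opposed to I2's forced caching of every request) that makes the limiting configuration exactly the top-$M$ set. A secondary subtlety worth flagging is that the conditional-independence identity above uses crucially that $\mathcal{M}^{\pi}_t$ does not anticipate $R_t$, so the upper bound is genuinely a statement about the online class. Granting both bounds, the sandwich $\sum_{n=1}^{M}p_n=\chi^{LFU}\le\sup_{\pi}\chi^{\pi}\le\sum_{n=1}^{M}p_n$ closes and proves the theorem.
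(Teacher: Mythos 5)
Your proposal is correct and follows the same overall sandwich: anchor both sides at $\sum_{n=1}^{M}p_n$, show no online policy can beat it, and show LFU locks onto the top-$M$ set in almost-surely finite time --- the lock-on argument is essentially the paper's (SLLN for the frequencies, then a separation gap $\epsilon$ below which the empirical ranking agrees with the true one). Where you genuinely diverge is the upper bound: the paper simply asserts that ``the hit rate is maximized by the policy that caches the most popular contents,'' whereas you prove it, using the fact that an online cache state $\mathcal{M}^{\pi}_t$ is $\sigma(R_1,\dots,R_{t-1})$-measurable, hence the conditional per-slot hit probability is $\sum_{n\in\mathcal{M}^{\pi}_t}p_n\le\sum_{n=1}^{M}p_n$, and then closing with Azuma--Hoeffding plus Borel--Cantelli. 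This martingale lemma is a real improvement: it makes precise that the result is a statement about \emph{non-anticipating} policies, which matters because the paper's class $\Pi$ formally includes offline policies, for which the theorem is false --- e.g.\ with $N=2$, $M=1$, $p_1=p_2=\tfrac12$, Belady caches whichever content is requested next and achieves hit rate $1$, versus $\max_{\boldsymbol y}h(\boldsymbol y)=\tfrac12$. Your two caveats are likewise substantive and absent from the paper: the tie case $p_M=p_{M+1}$, where the cache configuration need not stabilize but every limiting configuration (all contents with popularity strictly above $p_M$, padded with tied contents) yields the same conditional hit probability, so your martingale lemma again closes the argument; and the I1/I2 distinction, where under I2's forced caching of every request the boundary slot churns (the cache holds the top $M-1$ contents plus the last-missed content), so in-cache LFU falls strictly short of $\max_{\boldsymbol y}h(\boldsymbol y)$ --- the paper's claim that after $T_\epsilon$ the cache always contains only top-popularity items implicitly relies on I1, exactly as you flag.
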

\begin{proof}
Fix distribution $(p_n)$ unknown to the policy, and Poisson IRM requests drawn from $(p_n)$. Recall that the  empirical frequencies are given by $f_n(t)\triangleq{\sum_{i=1}^t\indic{R_i=n}}/{t}$. We first note that by the strong law of large numbers, the frequencies converge to the popularities $f_n(t)\to p_n$ w.p.1. It follows that the hit rate is maximized by the policy that caches the most popular contents, which implies
\[
\sup_{\pi\in\Pi}\chi^{\pi}\stackrel{\text{Pois}}{=}\max_{\boldsymbol y}h(\boldsymbol y),
\]
where $\max_{\boldsymbol y}h(\boldsymbol y)$ expresses the max hit probability of the corresponding static distribution.
In the remaining, we show that due to the convergence of frequencies, LFU eventually caches the most popular contents, maximizing the hit rate. 

Pick a  positive $\epsilon$ such that for any $n,m$ with $p_{n}\neq p_{m}$, we have $|p_{n}- p_{m}|>\epsilon$. From the convergence $f_n(t)\to p_n$ there is a finite time $T_{\epsilon}$, such that  we have $\max_n |f_n(T_{\epsilon}) - p_n|\leq \epsilon/2$ (w.p.1) and the following holds: 
\[
\text{if } n\in \mathcal M^{LFU}_t ~~\text{ then } p_n\geq p_m ~~\text{ for all } m\notin \mathcal M^{LFU}_t, t\geq T_{\epsilon}.
\]
The above implies that after convergence of sup norm of frequencies within $\epsilon/2$, all popular contents have been requested at least once.
Now, for all $t\geq T_{\epsilon}$, $\mathcal M^{LFU}_t$ contains the most popular items.  
Then we have:
\begin{align*}
\chi^{LFU} &= {\lim\inf}_{T\to\infty} \frac{\sum_{t=1}^{T}\indic{R_t\in \mathcal{M}^{\pi}_t}}{T} \geq  {\lim\inf}_{T\to\infty} \frac{\sum_{t=T_{\epsilon}}^{T}\indic{R_t\in \mathcal{M}^{\pi}_t}}{T}\\
& \stackrel{\text{SLLN}}{=}\mathbbm{E}_{\infty}\left[\indic{R_t\in \mathcal{M}^{\pi}_t}\right]=\max_{\boldsymbol y}h(\boldsymbol y), ~~\text{w.p.1},
\end{align*}
where in the  inequality we dropped  ${\lim\inf}_{T\to\infty} {\sum_{t=1}^{T_{\epsilon}}\indic{R_t\in \mathcal{M}^{\pi}_t}}/T=0$, and the expectation is taken with respect to the invariant measure (of the converged LFU cache). 
\end{proof}

From the proof we understand that LFU's good performance relies on the convergence of the empirical frequencies  to popularities, which requires the popularities to be static. When popularities are time-varying, LFU can perform quite poorly; for instance a content that becomes suddenly unpopular, may stay in an LFU cache for a long time. Another disadvantage of LFU is that we need to maintain the frequency of all contents in the catalog, which creates a large memory overhead.

\subsection{Least Recently Used (LRU)}

We next study the performance of LRU under Poisson IRM. 
 We will discuss an exact Markovian model, which unfortunately results in cumbersome computations. To facilitate {the analysis} 
we will subsequently  present the seminal Che's approximation.

\subsubsection{Exact analysis via Markovian model}

Under IRM, the  LRU cache can be described by  a Discrete Time Markov Chain (DTMC) such that every new request is associated to a transition in this chain. The resulting chain is irreducible and aperiodic, and thus if we can 
compute its stationary distribution we can then use it to calculate the hit probability.

Let $\sigma_t(.)$ be a mapping of cache positions $\{1,\dots,M\}$ to content indices, such that $\sigma_t(i)\in \{1,\dots,N\}$ denotes the content that is cached at location $i$ of the cache at slot $t$ (for example $\sigma_t(M)$ is the next content to be evicted under LRU), and let $\Sigma$ be the set of all such mappings. The number of these mappings equals the different ways we can craft permutations of length $M$ out of $N$ elements, i.e., $2^M \binom NM$. Note that every mapping corresponds to a precise order of recency. Then, let $J(\sigma_t)$ be the set of cached contents under mapping $\sigma_t$ and also the state of the Markov chain--further note that two different mappings $\sigma^1\neq \sigma^2$ may satisfy $J(\sigma^1)=J(\sigma^2)$ since the same contents might occupy different spots in the cache. King \cite{C_King_71} derived the stationary probability of set $J$ as:
\begin{equation}\label{eq:statLRU}
\pi_J=\sum_{\sigma:J(\sigma)=J}\frac{p_{\sigma(1)}p_{\sigma(2)} \dots p_{\sigma(M)}}{(1-p_{\sigma(1)})\times\dots
\times (1-p_{\sigma(1)}-p_{\sigma(2)}-\dots-p_{\sigma(M)})}, ~~\forall J,
\end{equation}
where $p_n$ denotes  the  popularity of content $n$ as before. After $\pi_J$ is computed, then the stationary hit probability of LRU cache is simply determined by:
\[
h^{LRU}(\infty)=\sum_J \pi_J\sum_{n\in J} p_n .
\] 

These analytical formulas are computationally expensive because they involve enumeration of all  ordered subsets of content with length $M$, which are exponential to $N$. Flajolet characterized the stationary probabilities $\pi_J$ with  an alternative  integral formula \cite{Flajolet92}, which was recently shown  to be exactly equivalent to \eqref{eq:statLRU} in terms of computations \cite{R_Berthet_16}. Therefore, computing exactly the hit probability of LRU under IRM for large caches is prohibitively complex. Several approximations of this probability exist in the literature, cf. \cite{C_Dan_90}, but the most commonly used  is the one attributed to Che et al \cite{J_Che_02}, which we present next.

\subsubsection{Che's approximation}

{We present a simple approximation for}
the characterization of the LRU cache performance under IRM. Although the  approximation  takes its name from \cite{J_Che_02}, we mention here that the idea  partially appears {in a prior work} in the context of computer cache  \cite{FAGIN1977}. Che's approximation not only is the de facto approach to analyzing LRU, and it has  been extended to many other setups. For example, \cite{garetto2016unified} extends it to non-stationary popularity and to the study of other policies (RANDOM, FIFO, LRU-$k$, $q$--LRU); \cite{maggi2018adapting} extends it to caching chunks of contents; and \cite{C_Jung_TTL} to the study of TTL caches.
\vspace{-0.1in}
\begin{dfn}[theorem style=plain]{Characteristic time of content $n$}{}
Consider a cache of size $M$. The characteristic time of content $n$, denoted by $t_n(M)$, is the elapsed time until  $M$ different contents other than $n$ are requested. 
Formally, letting $\delta_j$ denote the (random)  time until the next arrival of content $j$:
\[
t_n(M)=\inf\left\{t:\sum_{j\neq n}\indic{\delta_j < t}>M\right\}.
\]
\end{dfn}\vspace{-0.1in}
\noindent If no request for $n$ occurs in the interval $[t,t+t_n(M)]$, then content $n$ will be evicted by LRU. In this case, the cache is replenished; the state of the cache $\mathcal{M}_{t+t_n(M)}$ is independent of the state of the cache $\mathcal{M}_t$, and depends only on the arrivals that occured within the interval $[t,t+t_n(M)]$.  

When the request sequence is random, the characteristic time is a random variable, whose distribution is complicated to derive. However, its analysis can be greatly simplified if we assume that $t_n(M)$ is strongly concentrated around its mean:  
\begin{lem}[theorem style=plain]{Che's approximation}{che}
Consider an LRU cache of size $M$, fed with a Poisson IRM sequence with rate $\lambda$ and popularity $(p_n)$. The characteristic time of content $n$ can be approximated by $t_n(M)\approx \hat{t}(M)$, where $\hat{t}(M)$ is the unique solution to the following equation:
\[
M=\sum_{n=1}^N (1-e^{-\lambda p_n \hat{t}(M)}).
\]
\end{lem}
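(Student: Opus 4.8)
The plan is to justify the approximation in two stages: first reduce the random characteristic time to a deterministic quantity by a concentration argument, and then remove its residual dependence on the content index $n$. The starting point is the Poisson structure of the model. By the thinning property of the Poisson IRM process already used above, the requests for each content $j$ form independent homogeneous Poisson processes with rates $\lambda p_j$. Consequently the waiting times $\delta_j$ until the next request of content $j$ are independent and exponentially distributed, so that $\prob{\delta_j < t} = 1 - e^{-\lambda p_j t}$. Writing $Z_n(t) = \sum_{j \neq n} \indic{\delta_j < t}$ for the number of distinct contents other than $n$ requested by time $t$, the definition of the characteristic time reads $t_n(M) = \inf\{t : Z_n(t) > M\}$; since $t \mapsto Z_n(t)$ is nondecreasing and integer-valued, this is precisely the first-passage time of $Z_n$ above level $M$.

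First I would compute the mean of the counting process. Because $Z_n(t)$ is a sum of independent Bernoulli indicators,
\[
m_n(t) \triangleq \mean{Z_n(t)} = \sum_{j \neq n} \left(1 - e^{-\lambda p_j t}\right),
\]
and its variance is $\sum_{j \neq n} e^{-\lambda p_j t}\left(1 - e^{-\lambda p_j t}\right) \leq m_n(t)$. At the level of interest, $m_n(t) \approx M$, so the standard deviation is $O(\sqrt{M})$ while the mean is $M$, giving a relative fluctuation of order $1/\sqrt{M}$. This is exactly the content of the stated hypothesis that $t_n(M)$ is strongly concentrated around its mean: by Chebyshev's inequality together with the strict monotonicity of $m_n$ (which lets fluctuations of $Z_n$ be transported to fluctuations of the crossing time through $m_n^{-1}$), the random $t_n(M)$ may be replaced by the deterministic solution $\hat{t}_n$ of $m_n(\hat{t}_n) = M$, with a relative error that vanishes as $M \to \infty$ provided no single popularity $p_j$ carries an $\Theta(1)$ share of the total mass.

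Second I would remove the dependence on $n$. The only discrepancy between $m_n(\hat{t}_n) = M$ and the displayed equation is the single omitted summand $1 - e^{-\lambda p_n t}$, which lies in $[0,1]$. Since the full sum equals $M \gg 1$, adding or dropping one term perturbs the right-hand side by at most $1$, hence perturbs $\hat{t}_n$ by a relative amount $O(1/M)$; all the $\hat{t}_n$ therefore collapse onto a common value $\hat{t}(M)$ solving
\[
M = \sum_{n=1}^N \left(1 - e^{-\lambda p_n \hat{t}(M)}\right),
\]
which is the claimed approximation. It remains to check that this equation is well posed: the map $g(t) = \sum_{n=1}^N \left(1 - e^{-\lambda p_n t}\right)$ is continuous and strictly increasing, with $g(0) = 0$ and $g(t) \to N > M$ as $t \to \infty$, so by the intermediate value theorem there is a unique root $\hat{t}(M) \in (0, \infty)$.

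The main obstacle is the concentration step, since the lemma is an approximation rather than an identity. Making the passage from $Z_n$ to its mean rigorous requires control on the popularity profile (e.g.\ a Lindeberg-type condition ensuring $\mathrm{Var}\,Z_n / m_n^2 \to 0$) together with a lower bound on the local slope $m_n'(\hat{t}_n)$, so that inverting $m_n$ does not amplify the $O(\sqrt{M})$ fluctuations. Because the statement explicitly postulates strong concentration, the role of the proof is to exhibit the mean computation and verify that, under that hypothesis, the deterministic surrogate $\hat{t}(M)$ is both well defined and, up to negligible $O(1/M)$ corrections, independent of $n$.
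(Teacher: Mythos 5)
Your proposal is correct and follows essentially the same route as the paper's proof: take the defining identity $M=\sum_{j\neq n}\mathbbm{1}_{\{\delta_j<t_n(M)\}}$, pass to expectations using the exponential inter-request times (the paper's ``assume $t_n(M)$ constant'' step, which you upgrade to an explicit Chebyshev concentration argument), then drop the single $j\neq n$ term to remove the $n$-dependence, and verify uniqueness of the root. The only differences are refinements in your favor: you quantify the two approximation errors as $O(1/\sqrt{M})$ and $O(1/M)$ respectively---rigor the paper delegates to the cited result of Fricker et al.---and your uniqueness argument via strict monotonicity of $g(t)=\sum_{n=1}^N(1-e^{-\lambda p_n t})$ is more direct than the paper's appeal to strict concavity of the right-hand side.
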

\begin{proof}[Proof of Lemma \ref{lem:che}]
Let $\delta_n$ denote the inter-request time for content $n$, which by the premise of Poisson arrivals is assumed to be exponentially distributed with mean $1/\lambda_n=1/(\lambda p_n)$. 
By definition, there are exactly $M$ contents other than $n$ that will arrive before the characteristic time has elapsed. This leads to the following identity:
\[
M=\sum_{j\neq n} \mathbbm{1}_{\{\delta_j < t_n(M)\}}.
\]
Let us assume that $t_n(M)$ is \emph{constant} (first part of the approximation), and take expectation with respect to $\delta_j$:
\[
M=\sum_{j\neq n}\left(1-e^{-\lambda p_jt_n(M)}\right).
\]
Then, let us further assume that $t_1(M)=\dots=t_N(M)=\hat{t}(M)$ and drop the $j\neq n$ in the sum (second part of the approximation), it follows $M=\sum_{n=1}^N\left(1-e^{-\lambda p_n\hat{t}(M)}\right)$. Note that the RHS is strictly concave, hence the equation has a unique solution.
\end{proof}

Very recently \cite{Fricker12} showed that under a Zipf-like popularity, the coefficient of variation of $t_n(M)$ vanishes as the cache size grows, hence as $M\to\infty$ the approximation becomes exact. 
Although $t_n(M)$ is content-dependent and $\hat{t}(M)$ is not,  \cite{Fricker12} also showed that the dependence of $t_n(M)$ on $n$ becomes negligible when the catalog size $N$ is sufficiently large.

Next, we will use Che's approximation to derive the hit probability of LRU. 
We begin with the hit probability for content $n$ conditioned on the characteristic time, denoted with $h_n^{LRU}(M | t_n(M))$.
Using the definition of the characteristic time, we write:
\begin{equation}\label{eq:hitprobn}
h_n^{LRU}(M | t_n(M))=\prob{\delta_n<t_n(M)}=1-e^{-\lambda p_nt_n(M)},
\end{equation}
where the first equality follows because LRU scores a hit if the next request arrives before the characteristic time has elapsed, and the second equality follows from the fact that $\delta_n$ is exponentially distributed in the Poisson IRM. Although $h_n^{LRU}(M | t_n(M))$ is in principle a random variable, using Che's approximation for ${t}_n(M)$, we can approximate the hit probability of LRU cache as:
\begin{align}\label{eq:hitLRU}
h^{LRU}(M)&=\sum_{n=1}^N p_n h_n^{LRU}(M | t_n(M))\stackrel{\eqref{eq:hitprobn}}{=}\sum_{n=1}^N p_n(1-e^{-\lambda p_nt_n(M)})\notag\\&\approx \sum_{n=1}^N p_n(1-e^{-\lambda p_n\hat{t}(M)}).
\end{align}
Fig.~\ref{fig:che} compares the hit probability of LRU cache as computed by \eqref{eq:hitLRU} versus simulations under an IRM model with $\Zipf=0.8$. The plot provides evidence of the approximation quality for different values of $\Zipf$, and $\gamma=0.1$. 

\begin{figure*}[t!]
	\centering
		\includegraphics[width=0.475\linewidth]{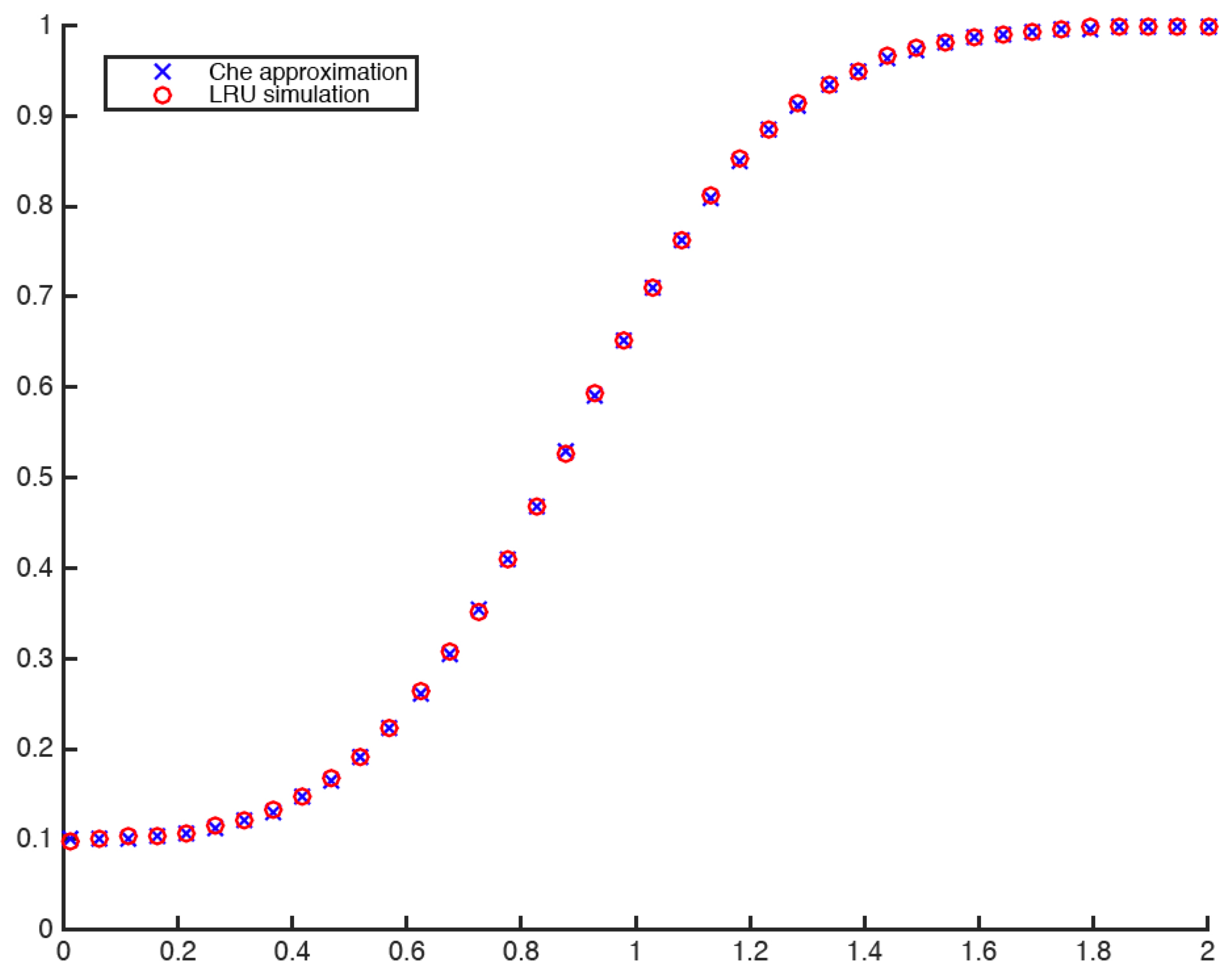}
	\caption{Simulated LRU hit probability under IRM versus Che's approximation for different $\Zipf$. Parameters: $\gamma=0.1$, $N=10000$.}
	\label{fig:che}
\end{figure*}

\subsection{TTL caches}\label{sec:ttl}


We move now to the analysis of caches with the ``Time To Live'' (TTL) attribute. When a content is cached in a TTL cache, a timer is set. Upon expiration of the timer, the content is evicted from the cache. TTL caches are extensively used in the Internet because of their usefulness in caching content that becomes obsolete. For instance, the content of the \url{www.cnn.com} web page may change rapidly within the day, and one can use the TTL attribute to keep information in the cache fresh. However, the TTL caches also allow us to surgically configure the hit probability of each individual content and balance hit probabilities of different contents in any desirable way. This is in contrast to  LRU and LFU where the relative hit probability of each individual content is automatically determined  by the popularity of the content. 

\subsubsection{How TTL caches work}

Let us denote with $T_n$ the timer associated with  content $n$. In all TTL caches  timer $T_n$ starts counting after a miss for content $n$, but there are two different TTL cache models for how we react to  a hit:
\begin{enumerate} 
\item In \emph{reset} TTL, the timer is reset after a hit for content $n$.
\item In \emph{non-reset} TTL, the timer is \emph{not} reset after a hit.
\end{enumerate}
See a graphical example in figure~\ref{fig:ttl}.

\begin{figure}[h!]
	\centering
	\includegraphics[width=0.95\textwidth]{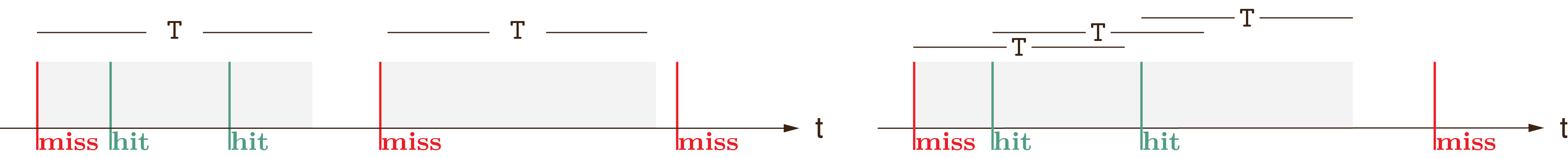}
	\caption{Non-reset (left) and Reset (right) TTL cache.}\label{fig:ttl}
\end{figure}

Assuming a Poisson IRM request model, we would like to derive the hit probability of content $n$ when we choose a timer $T_n$. At time $t$ the stationary hit probability is denoted with $h_n(\infty,T_n)$. Next, we derive  $h_n(\infty,T_n)$ for both TTL models. For the non-reset model, we use the elementary renewal theorem \cite{C_Jung_TTL} to obtain the limiting hit probability of a TTL cache as
\[
h^{non}_n(\infty, T_n)=\frac{\mathbbm{E}[N_n(T_n)]}{1+\mathbbm{E}[N_n(T_n)]},
\]
where $N_n(T_n)$ counts the number of requests for content $n$ from 0 up to the timer expiration, and  $\mathbbm{E}[N_n(T_n)]$ is its expectation. 
Since $N_n(t)$ is taken to be a Poisson process with rate $\lambda_n=\lambda p_n$, we have $\mathbbm{E}[N_n(T_n)]=\lambda_n T_n =\lambda p_n T_n$, and finally
\[
h^{non}_n(\infty, T_n)=\frac{\lambda p_n T_n}{1+\lambda p_n T_n}.
\]

\noindent For the reset model, we observe that $T_n$ plays the role of a (deterministic) characteristic time, and using Che's approximation we  obtain:
\[
h^{res}_n(\infty, T_n)= 1- e^{-\lambda p_n T_n}.
\]
In the next section, we ask the question how to select $(T_n)$  in order to induce a desirable per-content hit probability. We will use the notation $h_n^{\text{\tiny TTL}}(\infty, T_n)$ to simultaneously refer to both models.

\subsubsection{Cache Utility Maximization}

The task of designing a TTL caching policy consists of setting the timers of all  {contents ${\bm T}=(T_1,\dots,T_N)$}
so that (i) the number of contents with non-expired timers are equal to the cache size to avoid cache overflow and (ii) the hit probabilities of the contents are balanced in a desirable way. For this purpose we introduce the following optimization problem, called Cache Utility Maximization (CUM). The optimization variables are the content timers ${\bm T}$. To capture the cache overflow constraint we should ensure at every instance that: 
\[
\sum_n \indic{n \text{ is cached}} \leq M.
\]
However, to satisfy this demanding constraint we must fine-tune the placement of contents in the cache. Alternatively, an easier approach is to  satisfy the constraint on average  by  taking expectations in both sides and obtaining the relaxed probabilistic constraint:
\[
\sum_{n=1}^{N} h_n^{\text{\tiny TTL}}(\infty, T_n)=M.
\]
{Since this constraint is ensured on average,  the cache will  overflow when  many contents with large timers appear together. However, as \cite{C_Dehghan_16} shows, this happens rarely and can be avoided either by evicting a random content, or by under-subscribing the cache by a small fraction.}

Our objective is to optimize the individual  hit probabilities $h_n^{\text{\tiny TTL}}(\infty, T_n)$ for all $n$. This can be achieved using a concave, differentiable, and separable utility function $U(\boldsymbol T)=\sum_n g_n(h_n^{\text{\tiny TTL}}(\infty, T_n))$, where $g_n(.)$ 
{maps the limiting} hit probability of content $n$ to the utility obtained by it.

\begin{opt}{Cache Utility Maximization (CUM)}\vspace{-0.15in}
\begin{align}
& \max_{\boldsymbol T} U(\boldsymbol T) \label{eq:cum}\\
\text{s.t. } & \sum_{n=1}^{N} h_n^{\text{\tiny TTL}}(\infty, T_n)=M. \notag
\end{align}\vspace{-0.08in}
\end{opt}

If we select the objective $U(\boldsymbol T)$ to be a concave function of $\boldsymbol h=(h_n^{\text{\tiny TTL}}(\infty, T_n))$, then the CUM problem can be expressed in terms of $\boldsymbol h$ as a constrained convex optimization problem, known as \emph{resource allocation  with simplex constraints}, a relatively simple problem to solve. Before presenting a scalable distributed solution, we first discuss how to choose different utility functions. A powerful model for convex resource allocation problems is the family of $\alpha$-fair functions, defined by choosing $g(\cdot)$ to be a  polynomial parametrized with $\alpha$:
\[
g_n(x)=\frac{x^{1-\alpha}}{1-\alpha},~~\alpha \in (0,1)\cup(1,\infty).
\] 
Therefore, our $\alpha$-fair objective becomes
\[
U(\boldsymbol T)= \sum_n \frac{(h_n^{\text{\tiny{TTL}}}(\infty, T_n))^{1-\alpha}}{1-\alpha}.
\]
Depending on the choice of $\alpha$, the optimal solution vector of \eqref{eq:cum}, denoted with $\boldsymbol h^*$, has  special fairness properties in the set of feasible hit probabilities $\mathcal{H}=\{\boldsymbol h | \sum_{n=1}^{N} h_n^{\text{\tiny{TTL}}}(\infty, T_n)=M, \boldsymbol T\in \R_+^N\}$. Past work \cite{J_Mo_00} shows that:
(i) for $\alpha\to 0$, $\boldsymbol h^*$ maximizes the sum of hit probabilities,
(ii) for $\alpha\to 1$, $\boldsymbol h^*$ is proportionally fair,
(iii) for $\alpha=2$, $\boldsymbol h^*$ is potential delay fair, and
(iv) for $\alpha\to \infty$, $\boldsymbol h^*$ is max-min fair. 

Hence, the solution of \eqref{eq:cum} allows us to balance the individual hit probabilities in different ways. It remains to explain how to tune $\boldsymbol{T_n}$ to achieve a targeted solution $\boldsymbol h^*$. We remark that the convex program \eqref{eq:cum} can be solved in many different ways, aiming for distributization, runtime, or robustness criteria \cite{Ber99book}. Since in practice we encounter \eqref{eq:cum} in the form of a large constrained convex program, \cite{C_Dehghan_16} proposed a dual subgradient algorithm that adjusts the individual hit probabilities according to subgradients of $U$. Due to $U$ being separable, its subgradient at $\boldsymbol h$ can be denoted as $(g_1'(h_1), \dots, g_N'(h_N))$, where $g_n'(h_n)$ is the local directional derivative with respect to content $n$.
\begin{box_example}[detach title,colback=blue!5!white, before upper={\tcbtitle\quad}]{Dual subgradient TTL caching.}
	\footnotesize
Set the timer of content $n$ to:
\begin{align*}
& T_{n,t}=-\frac{1}{\lambda_n}\log\left(1-g_n'^{-1}(\mu_t)\right)~~~\text{(for reset)},\\
& T_{n,t}=-\frac{1}{\lambda_n}\log\left(1-1/[1-g_n'^{-1}(\mu_t)]\right)~~~\text{(for non-reset)},
\end{align*}
where $\mu_t$ is the Lagrangian multiplier at time $t$, and update the multiplier as:
\[
\mu_{t+1}=\left[\mu_t+\eta_t\left(\sum_{n=1}^{N} h_n^{\text{\tiny TTL}}(\infty, T_{n,t})-M\right)\right]^+,
\]
where $\eta_t$ is a stepsize parameter, that may affect the convergence rate of the algorithm. Often it is set to a fixed value, or set to a diminishing value $\propto 1/\sqrt{t}$. 
\end{box_example}

\section{Online popularity learning}\label{sec:online}

Comparing the optimal classification rule under SNM of Theorem~\ref{thm:snm}, the maximum hit rate of LFU under IRM, as well as other optimality results in the literature, we see that  to decide which policy  to use, one must know the underlying request model. However,  in practice the request model is \emph{a priori} unknown and time-varying. \emph{This renders imperative the design of a {universal} caching policy that will work provably well for all request models.} In fact, this was the inspiration behind  the online paging problem. However, we saw that the competitive ratio metric admits many optimizers, such as the  LRU policy, but also the commonly avoided FIFO policy. Although an eviction policy is shown to have optimal worst-case hit rate, the actual hit rate performance on a dataset  might be poor.  

In this section we return to the arbitrary requests, but we change our  perspective. We study a model-free caching model, along the lines of the Online Linear Optimization framework used extensively in machine learning literature to design robust algorithms \cite{Shalev12}. Specifically, we assume that content requests are drawn from a general distribution, which is equivalent to caching  versus an adversary who chooses the requests to harm our hit performance. At each slot,  {(i)} the caching policy decides which content parts to cache; {(ii)} the adversary introduces the new  request; and {(iii)} a content-specific utility is obtained proportional to the fraction of the request that was served by the cache. This generalizes the criterion of cache hit ratio and allows one to build policies that, for instance, optimize delay performance or provide preferential treatment to contents.

\begin{figure}[!h]
	\centering
	\includegraphics[width=5in]{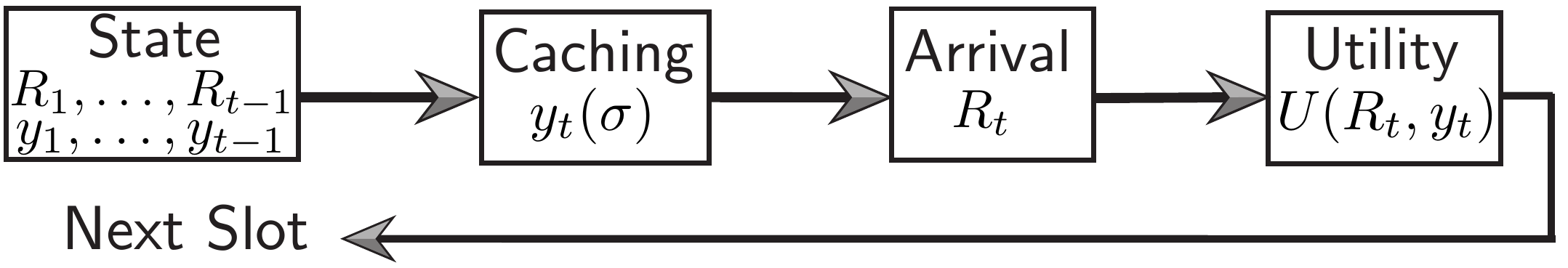} 
	\caption{Online caching model (slot $t$).}
	\label{fig:model}
\end{figure}  
In this setting, we seek to find a  caching policy with minimum \emph{regret}, {i.e., minimum utility loss over an horizon $T$, when compared to the best cache configuration on a given request sample path}.

\subsection{Regret in caching}\label{sec:model}
As usual, we consider a library $\mathcal{N}=\{1,2,\dots, N\}$ of equal size contents and a cache that fits $M<N$ of them. The system evolves in time slots;  in slot $t$ a single request is made for content $n\in\mathcal{N}$, denoted with the event $R_{t}^{n}\!=\!1$.\footnote{Notice the change in notation. While before $R_t\in\mathcal{N}$, now we use $R_t^n\in\{0,1\}$ to denote if the content $n$ was requested in slot $t$. Hence, while before $R_t$ would be a scalar id, now $R_t$ is a ``hot vector'', with the only non-zero element indicating the id.} The  vector $R_t=(R_{t}^n, n\in\mathcal{N})$ represents the $t$-slot request, chosen from the set of feasible request vectors:
\[
\mathcal{X}=\left\{R\in \{0,1\}^N ~\bigg |~ \sum_{n=1}^N R^n=1\right\}.
\]
The instantaneous content popularity is determined by the probability distribution $\prob{R_t}$ (with support $\Xc$), which is allowed to be unknown and arbitrary, and the same holds for the joint distribution $\prob{R_1,\dots,R_T}$ that describes the content popularity evolution. This generic model captures all possible request sequences, including stationary (i.i.d. or otherwise), non-stationary, and adversarial models. 

The cache is managed with the caching configuration variable $y_t\in [0,1]^N$, where $y_t^n$ denotes the fraction of content $n$ that is cached in slot $t$.\footnote{Why caching of content fractions $y_t^n\in[0,1]$ makes sense?
Large video contents are typically composed of thousands of chunks,  stored independently, as it  is well explained in the literature of \emph{partial caching}, cf.~\cite{maggi2018adapting}.
The fractional variables may also represent probabilities of caching a content (or a chunk) \cite{Shalev12,Blaszczyszyn2014Geographic}, or coded equations of chunks \cite{golrezaei2012femtocaching}. 
For practical systems, the fractional output $y_t^n$ of our schemes should be rounded to the closest element in this finer granularity, which will induce a small application-specific error.}
 Taking into account the size of the cache $M$, 
the set $\mathcal{Y}$ of admissible caching  configurations is:
\begin{equation}\label{eq:capped}
\mathcal{Y}=\left\{y\in [0,1]^N ~\bigg |~ \sum_{n=1}^Ny^n\leq M\right\}.
\end{equation}
  
\begin{dfn}[theorem style=plain]{Caching Policy}{}
A caching policy $\sigma$ is a (possibly randomized)  
rule that at slot $t=1,\dots,T$ maps past observations $R_1,\dots,R_{t-1}$ and configurations $y_1,\dots,y_{t-1}$ to the configuration $y_t^{\sigma}\in \mathcal{Y}$ of slot $t$.
\end{dfn}

We denote with $w^n$ the  utility obtained  when  content $n$ is requested and found in the cache (when we have a hit). This content-dependent utility can be used to model bandwidth economization from cache hits \cite{maggi2018adapting}, QoS improvement \cite{golrezaei2012femtocaching}, or any other cache-related benefit. We will also be concerned with the special case $w^n=w, n\in\mathcal{N}$, i.e., the \emph{hit probability} maximization. A cache configuration $y_t$ accrues in slot $t$ a utility $U\big(R_t,y_t\big)$  determined as follows:
\[
U\big(R_t,y_t\big)=\sum_{n=1}^N w^n R_t^ny^n_t.
\]

Consider now that an adversary selects the utility function of the system $U_t(y)$, by means of choosing $R_t$, i.e., $U_t(y)\equiv U(R_t,y)$. Differently from the competitive ratio approach of \cite{Sleator85} however, we introduce a new metric that compares how our caching policy fares against the \emph{best static policy in hindsight}. This metric is often used in the literature of machine learning \cite{Shalev12,Mert18} with the name {worst-case static regret}. In particular, we define the \emph{regret of policy $\sigma$} as:
\begin{align*}
\texttt{Reg}_T(\sigma)&=\max_{\prob{R_1,\dots,R_T}}
\mean{\sum_{t=1}^T U_t\big(y^*\big)-\sum_{t=1}^T U_t\big(y_t(\sigma)\big)} 
\end{align*}
where $T$ is the horizon, the maximization is over the admissible adversary distributions, the expectation is taken with respect to the possibly randomized $R_t$ and $y_t(\sigma)$, and $y^*\in\arg\max_{y\in \Yc}\sum_{t=1}^T U_t(y)$ is the best configuration in hindsight, i.e., a benchmark policy that knows the sample path $R_1,\dots,R_T$ but is limited to a static configuration $y^*$. Intuitively, measuring the utility loss of $\sigma$ w.r.t a static $y^*$  constrains the power of the adversary; for example  a rapidly changing adversary will challenge $\sigma$ but also incur huge losses in $y^*$. This benchmark comparison makes regret different from  the standard competitive hit ratio of \cite{Sleator85}. 
 
 We seek to find a caching policy that minimizes the regret by solving the problem $\inf_{\sigma}\texttt{Reg}_T(\sigma)$. 
\begin{opt}{ Caching for Regret Minimization}
Find an online caching policy $\sigma^*$ that attains:
\begin{equation}\label{eq:minregret}
\inf_{\sigma}\texttt{Reg}_T(\sigma)
\end{equation}
where $\texttt{Reg}_T(\sigma)=\max_{\prob{R_1,\dots,R_T}}\mean{\sum_{t=1}^T U_t\big(y^*\big)-\sum_{t=1}^T U_t\big(y_t(\sigma)\big)}$
\end{opt}

This problem is known as \emph{Online Linear Optimization} (OLO) \cite{Shalev12}. 
The analysis in OLO aims to discover how the regret scales with the horizon $T$. A caching policy with sublinear regret $o(T)$ produces   average losses $\texttt{Reg}_T(\sigma)/T\!\to\! 0$ with respect to the benchmark, hence it  learns to adapt to the best cache configuration without any knowledge about the request distribution (smaller regret  means faster adaptation). We emphasize that our problem is further compounded due to the cache size dimension. Namely, apart from optimizing the regret with respect to $T$, it is of  huge practical importance to consider the dependence on $N$ (or $M$).  
\begin{lem}[theorem style=plain]{}{}
The regret of \textup{LRU}, \textup{LFU}  satisfies:
\begin{align*}
&\texttt{Reg}_T(\textup{LRU})\geq wM\left(\frac{T}{M+1}-1\right)\,,\\
&\texttt{Reg}_T(\textup{LFU})\geq wM\left(\frac{T}{M+1}-1\right)\,.
\end{align*}
\end{lem}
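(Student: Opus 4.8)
Because $\texttt{Reg}_T(\sigma)$ is a maximum over adversarial request laws, it suffices to exhibit a single request sequence $\boldsymbol R$ (a deterministic, i.e.\ point-mass, adversary) on which the policy gathers far fewer hits than the best static configuration $y^*$. For a fixed $\boldsymbol R$ with per-content counts $K_n=\sum_t R_t^n$, the hindsight benchmark caches the $M$ most requested contents, so $\sum_t U_t(y^*)=w\sum_{n\in S^*}K_n=w\big(T-\textstyle\sum_{n\notin S^*}K_n\big)$, where $S^*$ collects the $M$ largest counts. The regret on this instance is then at least $w$ times the difference between these offline hits and the (expected) hits of the policy, and the plan is to choose $\boldsymbol R$ so that this difference is $\tfrac{M}{M+1}T$.

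\textbf{The LRU bound.} I would take the round-robin sequence on the first $M+1$ contents, $\boldsymbol R=(1,2,\dots,M+1,1,2,\dots)$. The key observation is that the content requested in slot $t$ was last seen exactly $M+1$ slots earlier, and in the interim the $M$ other contents were each requested once; hence that content is the unique least-recently-used item and has already been evicted by LRU. Consequently every slot is a miss, LRU earns zero utility, and $\sum_t U_t(y_t(\textup{LRU}))=0$. Meanwhile each content is requested $\lfloor T/(M+1)\rfloor$ or $\lceil T/(M+1)\rceil$ times, so caching any $M$ of them gives $\sum_t U_t(y^*)\ge w\big(T-\lfloor T/(M+1)\rfloor\big)\ge w\tfrac{MT}{M+1}$. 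Subtracting yields $\texttt{Reg}_T(\textup{LRU})\ge w\tfrac{MT}{M+1}\ge wM\big(\tfrac{T}{M+1}-1\big)$, the $-wM$ being pure slack. This part is routine.

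\textbf{The LFU bound and the main obstacle.} The same round-robin instance is the natural candidate for LFU, since the content about to be requested has been idle longest and should carry among the fewest requests, placing it at the bottom of the frequency ranking LFU evicts by. If LFU always excluded precisely the next-requested content it would again miss every slot and the LRU computation would transfer verbatim. The difficulty is that LFU ranks by \emph{cumulative} counts with long memory: over the horizon the $M+1$ counts equalise up to $\pm1$, so the eviction is perpetually a near-tie broken at random, and a lucky tie-break can retain the next-requested content and score a hit --- unlike recency, frequency does not induce a strict order for free. The route I would pursue is to replace the plain round-robin by a staged sequence that maintains a \emph{strict} frequency ordering: arrange cumulative counts with gaps exceeding $1$ so that at every step there is a unique least-frequent content, request exactly that content, and rotate the role among the $M+1$ contents so each is requested a $\tfrac1{M+1}$ fraction of the time (keeping $S^*$, and hence the offline value, fixed at $\tfrac{M}{M+1}T$). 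Engineering this rotation so that the victim's count is provably strictly below all $M$ incumbents at each of its requests, while preventing the long-run totals from drifting enough to alter $S^*$, is the step I expect to be hardest; it is precisely here that ties must be eliminated to recover the clean ``miss everything'' behaviour, and thus the stated constant.
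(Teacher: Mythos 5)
Your LRU argument is correct and coincides with the paper's proof: the same round-robin sequence $(1,2,\dots,M+1,1,2,\dots)$, with LRU collecting at most $wM$ utility (every post-warm-up request is for the unique least-recently-used, hence evicted, content) against at least $wTM/(M+1)$ for the best static configuration in hindsight.

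For LFU there is a genuine gap, and it is the step you yourself flag as unresolved. The paper handles LFU in one line, reusing the round-robin and observing that ``the least frequent content is also the least recent content''; this is airtight once frequency ties are broken in favor of evicting the least recently used item (the standard LFU implementation), since then LFU behaves identically to LRU on this sequence and your LRU computation transfers verbatim. You are right that under the paper's literal ``ties solved randomly'' convention the round-robin alone does not give the stated constant: for $M=1$, after every even slot the two counts are exactly tied, the eviction re-randomizes the cache with probability $1/2$, and LFU scores about $T/4$ expected hits, i.e.\ regret $\approx wT/4$, short of $w(T/2-1)$. But your proposed repair --- a staged sequence maintaining strict gaps so that there is always a unique least-frequent victim --- provably cannot produce the ``miss everything'' behaviour you need. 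Counts never decrease, so keeping the rotating victim's count strictly below the $M$ incumbents at each of its requests forces the incumbents' counts to grow at least as fast; every request that grows an incumbent's count is a request for a content LFU is holding, i.e.\ a hit. Quantitatively (take $M=1$): from counts $(a,b)$ with cache $\{1\}$, requesting content $2$ yields a deterministic self-eviction (a clean miss) only while $b+1<a$, so a gap of $g$ buys about $g$ misses and then a tie recurs; restoring the gap costs about $g$ hits on content $1$. On any strict-gap sequence the hits forced on LFU are commensurate with the misses achieved, its hit count stays within a constant factor of the hindsight optimum, and the regret collapses rather than reaching $wMT/(M+1)$. The workable fix is thus not a cleverer sequence but a tie-breaking stipulation: keep the round-robin and either adopt the recency-consistent tie-break (which is what the paper's one-liner implicitly assumes), or accept a degraded constant under purely random tie-breaking.
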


\begin{proof}  
Suppose the adversary chooses the following periodic request sequence: $\{1,2,\dots,M+1,1,2,\dots\}$. For any $t>M$, since  the requested content is the $M+1$ least recent content, it is not in the LRU cache, and no utility is received. Hence overall, LRU can achieve at most $wM$ utility--from the first $M$ slots. However, a static policy with hindsight  achieves at least $wTM/(M+1)$ by caching the first $M$ contents. The same proof holds for LFU by noticing that due to the form of the periodic arrivals, the least frequent content is also the least recent content.
\end{proof}
The poor $\Omega(T)$ performance of standard caching policies is  not surprising. These policies are designed to perform well only under specific request models, i.e., LRU for requests with  temporal locality and LFU for stationary requests. On the contrary,  they are known to under-perform in other models: LRU in stationary, and LFU in decreasing popularity request patterns. Poor regret performance means that there exist request distributions for which the  policy fails completely to ``learn'' what is the best learnable configuration. \emph{Remarkably, we will show below that there exist universal caching policies that can provide low regret under any request model.}

\subsection{Regret lower bound}\label{sec:regret-bound}
A regret lower bound is a powerful theoretical result that provides the fundamental limit of how fast any algorithm can learn to cache, much like the upper bound of the channel capacity. Regret lower bounds in OLO have been previously derived for different action sets: for  $N$-dimensional unit ball centered at the origin in \cite{Abernethy08} and  $N$-dimensional hypercube in \cite{Hazan06}, both resulting in $\Omega(\sqrt{\log N\cdot T})$. In our case, however, the above results do not apply since  $\Yc$  in \eqref{eq:capped} is a capped simplex, i.e., the intersection of a box and a simplex inequality. 
Due to the specific form of our constraint set, we expect that a tighter lower bound may exist.
Indeed, below we provide such a lower bound  tailored to the online caching problem.

\begin{thm}[theorem style=plain]{General Caching Regret Lower Bound}{sumz}
The regret of any  online caching policy $\sigma$   satisfies: 
	\[
\texttt{Reg}_T(\sigma) > \sum_{i=1}^M\mean{Z_{(i)}}\sqrt{T},\quad \text{ as } T\to\infty,
	\]
where $Z_{(i)}$ is the $i-$th maximum element of a gaussian random vector with zero mean and covariance matrix $\Sigmam(w)$ given by \eqref{eq:covarianceMatrix}.  
	Furthermore, assume $M<N/2$ and define $\phi$ any permutation of $\mathcal{N}$ and $\Phi$ the set of all such permutations:
	\[
\texttt{Reg}_T(\sigma)>\frac{\max_{\phi\in\Phi}\sum_{k=1}^M\sqrt{w^{\phi(2(k-1) +1)}+ w^{\phi(2k)}}}{\sqrt{2\pi\sum_{n=1}^N1/w^n}}\sqrt{T}.	
	\] 
\end{thm}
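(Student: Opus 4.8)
The plan is to lower-bound the worst-case regret by exhibiting a single \emph{stochastic} adversary and analysing it exactly in the large-$T$ limit; since $\texttt{Reg}_T(\sigma)=\max_{\prob{R_1,\dots,R_T}}\mean{\cdots}$, any fixed request law gives a valid lower bound. I would take the i.i.d.\ law $p_n=(1/w^n)/S$ with $S=\sum_{m=1}^N 1/w^m$, engineered so that $w^n p_n = 1/S$ is \emph{constant} in $n$. The purpose of this choice is to neutralise the learner: since $y_t$ is fixed before $R_t$ is drawn and $R_t$ is independent of the past, $\mean{U_t(y_t)\mid\mathcal F_{t-1}}=\sum_n w^n p_n y_t^n=\tfrac1S\sum_n y_t^n\le M/S$, so $\mean{\sum_{t=1}^T U_t(y_t(\sigma))}\le MT/S$ for \emph{every} policy $\sigma$. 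No past information helps decide what to cache.

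Next I would compute the benchmark exactly. With $K_n=\sum_{t=1}^T R_t^n$ the request count of content $n$, the cumulative utility of a static $y$ is $\sum_n w^n y^n K_n$, a linear function over the capped simplex $\Yc$; its maximiser sets $y^n=1$ on the $M$ indices with largest $w^n K_n$, so $\sum_t U_t(y^*)=\sum_{i=1}^M (w^nK_n)_{(i)}$, the sum of the top-$M$ order statistics of $G_n:=w^nK_n$. Because $\mean{G_n}=Tw^np_n=T/S=:\mu$ is identical for all $n$, subtracting $\mu$ preserves the ordering and $\sum_{i=1}^M G_{(i)}=M\mu+\sum_{i=1}^M(G-\mu)_{(i)}$. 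As $K\sim\mathrm{Multinomial}(T,p)$, the multivariate CLT gives $(G_n-\mu)/\sqrt T=w^n(K_n-Tp_n)/\sqrt T\Rightarrow Z$, a centred Gaussian vector with $\Sigma_{nn}=w^n/S-1/S^2$ and $\Sigma_{nm}=-1/S^2$, which is exactly the covariance $\Sigmam(w)$ of the statement. The top-$M$-sum being continuous, $\tfrac1{\sqrt T}\sum_{i=1}^M(G-\mu)_{(i)}\Rightarrow\sum_{i=1}^M Z_{(i)}$. Combining with the online bound yields $\texttt{Reg}_T(\sigma)\ge\mean{\sum_t U_t(y^*)}-MT/S=\sqrt T\,\big(\mean{\sum_{i=1}^M Z_{(i)}}+o(1)\big)$, which is the first inequality.

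For the explicit bound I would lower-bound the Gaussian order-statistic sum by a pairing argument, which is where $M<N/2$ enters (it guarantees $2M\le N$ distinct indices to pair). For any $\phi\in\Phi$, selecting from each pair $\{\phi(2k-1),\phi(2k)\}$ its larger coordinate gives $M$ distinct indices, whose sum is dominated by the true top-$M$ sum: $\sum_{i=1}^M Z_{(i)}\ge\sum_{k=1}^M\max\{Z_{\phi(2k-1)},Z_{\phi(2k)}\}$. For zero-mean jointly Gaussian $(Z_a,Z_b)$ one has $\mean{\max\{Z_a,Z_b\}}=\tfrac12\mean{|Z_a-Z_b|}=\tfrac1{\sqrt{2\pi}}\sqrt{\mathrm{Var}(Z_a-Z_b)}$, and with the covariance above $\mathrm{Var}(Z_a-Z_b)=\Sigma_{aa}+\Sigma_{bb}-2\Sigma_{ab}=(w^a+w^b)/S$. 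Summing over pairs and maximising over $\phi$ gives $\mean{\sum_{i=1}^M Z_{(i)}}\ge\big(\max_{\phi}\sum_{k=1}^M\sqrt{w^{\phi(2k-1)}+w^{\phi(2k)}}\big)/\sqrt{2\pi S}$, and substituting $S=\sum_n 1/w^n$ produces the stated expression.

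The main obstacle I anticipate is the analytic rigour at the CLT step rather than the combinatorics: convergence in distribution of the rescaled top-$M$ sum does not by itself give convergence of \emph{expectations}, so I would need a uniform-integrability argument — e.g.\ a uniform $L^2$ bound on $(G_n-\mu)/\sqrt T$, available since the multinomial has bounded normalised fourth moments — to pass $\mean{\cdot}$ through the limit and control the $o(1)$ term uniformly. A secondary point is justifying the strict ``$>$'' together with the ``as $T\to\infty$'' qualifier, i.e.\ that the vanishing correction is eventually absorbed.
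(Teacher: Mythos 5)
Your proposal follows essentially the same route as the paper's proof: the same i.i.d.\ adversary $p_n\propto 1/w^n$ that caps every causal policy at $MT/\sum_n 1/w^n$, the same CLT and continuous-mapping reduction of the top-$M$ hindsight benchmark to $\sum_{i=1}^M\mean{Z_{(i)}}$ with the identical covariance $\Sigmam(w)$, and the same pairwise-maximum argument for the explicit bound (your direct computation $\mean{\max\{Z_a,Z_b\}}=\tfrac12\mean{\abs{Z_a-Z_b}}$ simply rederives the formula the paper imports from Clark (1961)). Your uniform-integrability caveat for passing from convergence in distribution to convergence of expectations is a legitimate refinement --- the paper asserts that step without justification --- and your bounded-fourth-moment argument would close it.
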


\begin{cor}[theorem style=plain]{Hit Regret Lower Bound }{lowerBoundRegret}
Fix a $\gamma\triangleq M/N$, $\gamma\in(0,1/2)$ and  $w^n=w,~\forall n$ (which corresponds to the case of adversarial hit maximization). Then, the regret of any caching policy $\sigma$  satisfies: 
	\[
		\texttt{Reg}_T(\sigma) > w\sqrt{\frac{\gamma}{\sigma}}\sqrt{MT}, \quad \text{ as } T\to\infty.
	\]
\end{cor}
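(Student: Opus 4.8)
The plan is to derive Corollary~\ref{cor:lowerBoundRegret} as an immediate specialization of the second (permutation-based) bound in Theorem~\ref{thm:sumz}, taking $w^n=w$ for every $n$ and re-expressing the result in terms of $\gamma=M/N$. The hypothesis $\gamma\in(0,1/2)$ is precisely the condition $M<N/2$ assumed in that theorem; equivalently $2M<N$, which is exactly what guarantees that $2M$ distinct indices $\phi(1),\dots,\phi(2M)$ exist for the permutations in $\Phi$. So the general bound applies verbatim, and the entire argument is a short computation with no new probabilistic content.

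First I would simplify the numerator $\max_{\phi\in\Phi}\sum_{k=1}^M\sqrt{w^{\phi(2(k-1)+1)}+w^{\phi(2k)}}$. Because all weights coincide, each summand equals $\sqrt{w+w}=\sqrt{2w}$ independently of $\phi$, so the maximum over permutations is attained trivially and the numerator collapses to $M\sqrt{2w}$. Next I would evaluate the denominator, $\sqrt{2\pi\sum_{n=1}^N 1/w^n}=\sqrt{2\pi N/w}$, again using uniformity of the weights.

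Combining these two evaluations, the bound of Theorem~\ref{thm:sumz} becomes
\[
\texttt{Reg}_T(\sigma)>\frac{M\sqrt{2w}}{\sqrt{2\pi N/w}}\,\sqrt{T}=\frac{Mw}{\sqrt{\pi N}}\,\sqrt{T}.
\]
I would then substitute $N=M/\gamma$ and use $M/\sqrt{N}=M/\sqrt{M/\gamma}=\sqrt{\gamma M}$, which turns the right-hand side into $w\sqrt{\gamma/\pi}\,\sqrt{MT}$. This matches the claimed expression, with the constant appearing under the root being $\pi$.

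Since this is a one-line specialization, there is no genuine obstacle in the corollary itself: all the heavy lifting is inherited from Theorem~\ref{thm:sumz}. The only point deserving care is confirming that the second bound of that theorem is the correct object to specialize here rather than the order-statistic bound $\sum_{i=1}^M\mean{Z_{(i)}}$; as a consistency check one should verify that in the uniform-weight regime the covariance $\Sigmam(w)$ degenerates so that the Gaussian order-statistic bound is compatible with the explicit permutation bound, ensuring the two forms do not contradict one another before the substitution $N=M/\gamma$ is carried out.
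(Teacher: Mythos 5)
Your proposal is correct and follows essentially the same route as the paper, which obtains the corollary by specializing the permutation bound of Theorem \ref{thm:sumz} to uniform weights; with $w^n=w$ the numerator collapses to $M\sqrt{2w}$, the denominator to $\sqrt{2\pi N/w}$, and the substitution $N=M/\gamma$ (valid precisely because $\gamma\in(0,1/2)$ matches the theorem's hypothesis $M<N/2$) yields $w\sqrt{\gamma/\pi}\,\sqrt{MT}$ exactly as you computed. Your identification of the constant as $\pi$ is also the right reading: the $\sigma$ appearing under the square root in the corollary's statement is evidently a typographical slip, as confirmed by the matching bound $w\sqrt{\gamma/\pi}\,\sqrt{MT}$ in Corollary \ref{cor:2}.
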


Before providing the proof, we remark that our bound is tighter than  the classical $\Omega(\sqrt{\log N\cdot T})$ of OLO in the literature  \cite{Hazan06,Abernethy08}, which is attributed to the difference of  sets $\Xc,\Yc$. In the next section we will provide a caching policy that achieves regret  $O(\sqrt{MT})$, which will establish that the regret of online caching is in fact $\Theta(\sqrt{MT})$.

\begin{proof}[Proof of Theorem \ref{thm:sumz} and Corollary \ref{cor:lowerBoundRegret}]
To find a lower bound, we will analyze a specific adversary $R_t$. In particular, we will  consider an i.i.d.  $R_t$ such that content $n$ is requested with probability 
\[
\prob{R_t=\ev_n}=\frac{1/w^n}{\sum_{i=1}^N1/w^i}, ~~\forall n,t,
\] 
where $\ev_n$ is a vector with element $n$ equal to one and the rest zero. With such a choice of $R_t$,	any causal caching policy 	yields an expected utility  at most $MT/\sum_{n=1}^N(1/w^n)$, since 
\begin{align}
\mean{\sum_{t=1}^T U_t(y_t(\sigma))}
&=\sum_{t=1}^T\sum_{n=1}^Nw^n\prob{R_t=\ev_n}y_t^n(\sigma)\notag\\
&\hspace{-0.2in}=\sum_{t=1}^T\frac{1}{\sum_n 1/w^n}\sum_{n=1}^Ny_t^n(\sigma) 
\leq\frac{MT}{\sum_n 1/w^n}, \label{eq:util}
\end{align} 
independently of $\sigma$. 
To obtain a regret lower bound
we show that a static policy with hindsight can exploit the knowledge of the sample path $R_1,\dots,R_T$ to  achieve a higher utility than \eqref{eq:util}. Specifically, defining $\nu^n_t$ the number of times content $n$ is requested in slots $1,\dots,t$, the best static policy will cache the $M$ contents with highest products $w^n\nu^n_T$. In the following, we characterize how this compares against the average utility of \eqref{eq:util} by analyzing the order statistics of a Gaussian vector. We have the following technical lemma:


{
For i.i.d. $R_t$ we may rewrite regret as the expected difference between the best static policy in hindsight and  \eqref{eq:util}: 
\begin{equation}\label{eq:regretIID}
\texttt{Reg}_T =\mean{ \max_{y\in\Yc}y^T\sum_{t=1}^T w \odot R_t }- \frac{MT}{\sum_n 1/w^n}
,\end{equation}
where $w \odot R_t=[w^1R^1_t, w^2R^2_t,...,w^NR^N_t]^T$ is the Hadamard product between the weights and request vector. Further, \eqref{eq:regretIID} can be rewritten as a function:
\[
\texttt{Reg}_T=\mean{g_{N,M}(\overline{z}_T)} = \mean{\max_{b\in \stackrel{\circ}{\Yc}}\left[b^T\overline{z}_T\right]},
\]
where, \emph{(i)} $\stackrel{\circ}{\Yc}$ is  the set of all possible integer caching configurations (and therefore $g_{N,M}(.)$ is the sum of the maximum $M$ elements of its argument); and \emph{(ii)} the process $\overline{z}_T$ is  the vector of utility obtained by each content after the first $T$ rounds, centered around its mean:
\begin{align} \nonumber
\overline{z}_T & = \sum_{t=1}^{T}w \odot R_t - w \odot\frac{T}{\sum_{n=1}^N1/w^n}w^{-1} \\ \label{eq:centeredDemandVector}
& = \sum_{t=1}^{T}\left(z_t - \frac{1}{\sum_{n=1}^N1/w^n}\mathbf{1}_N\right) \end{align}
where $z_t = w \odot R_t$ are i.i.d. random vectors,  with distribution 
\[ \mathbb{P}\left(z_t = w^i \ev_i\right) = \frac{1/w^i}{\sum_{n=1}^N1/w^n}, \forall t, \forall i,
\]
and, therefore, mean $\mean{z_t}=\frac{1}{\sum_{n=1}^N1/w^n}\mathbf{1}_N$.\footnote{Above we have used the notation $w^{-1} = \left[1/w^1, 1/w^2,...,1/w^N\right]^T.$} 
}
The A main ingredient of our proof is the following limiting behavior of $g_{N,M}(\overline{z}_T)$:
\begin{lem}[theorem style=plain]{}{regretConvDistribution}
	Let ${Z}$ be a Gaussian vector  $\Nc\left(\mathbf{0}, {\Sigmam}(w)\right)$, where ${\Sigmam}(w)$ is given in \eqref{eq:covarianceMatrix}, and ${Z}_{(i)}$ its $i-$th largest element. Then
	\[
\frac{g_{N,M}(\overline{z}_T)}{\sqrt{T}} \xrightarrow[T\rightarrow\infty]{\text{distr.}} \sum_{i=1}^M{Z}_{(i)}
	.\]
\end{lem}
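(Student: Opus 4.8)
The plan is to combine the multivariate Central Limit Theorem with the continuous mapping theorem, exploiting that $g_{N,M}$ is continuous and positively homogeneous. First I would observe that the summands $z_t = w \odot R_t$ appearing in \eqref{eq:centeredDemandVector} are i.i.d.\ and take only the finitely many values $\{w^i\ev_i\}_{i=1}^N$; hence they have finite second moments, common mean $\mean{z_t}=\frac{1}{\sum_n 1/w^n}\mathbf{1}_N$, and covariance exactly $\Sigmam(w)$ of \eqref{eq:covarianceMatrix}. Writing $\overline{z}_T=\sum_{t=1}^T\bigl(z_t-\mean{z_t}\bigr)$, the Lindeberg--L\'evy multivariate CLT gives
\[
\frac{\overline{z}_T}{\sqrt{T}} \xrightarrow[T\to\infty]{\text{distr.}} Z \sim \Nc\bigl(\mathbf{0},\Sigmam(w)\bigr).
\]
No non-degeneracy of $\Sigmam(w)$ is required here: the CLT and all subsequent steps apply verbatim when the Gaussian limit is singular.

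Next I would exploit the structure of $g_{N,M}$. By definition $g_{N,M}(x)=\max_{b\in\stackrel{\circ}{\Yc}}b^{\T}x$ is the maximum of the finitely many linear functionals indexed by the integer configurations $\stackrel{\circ}{\Yc}$ (the $\binom{N}{M}$ indicator vectors of $M$-subsets of $\Nc$). Consequently $g_{N,M}$ is convex, Lipschitz, and in particular continuous on $\R^N$, and it is positively homogeneous, $g_{N,M}(\lambda x)=\lambda\,g_{N,M}(x)$ for every $\lambda>0$. Since $\sqrt{T}>0$, this homogeneity lets me commute the normalisation with the functional,
\[
\frac{g_{N,M}(\overline{z}_T)}{\sqrt{T}}=g_{N,M}\!\left(\frac{\overline{z}_T}{\sqrt{T}}\right).
\]

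With these ingredients the conclusion follows from the continuous mapping theorem: applying the continuous map $g_{N,M}$ to the weakly convergent sequence $\overline{z}_T/\sqrt{T}$ shows that $g_{N,M}(\overline{z}_T/\sqrt{T})$ converges in distribution to $g_{N,M}(Z)$. Finally I would identify the limit: because $g_{N,M}$ returns the sum of the $M$ largest coordinates of its argument, $g_{N,M}(Z)=\sum_{i=1}^M Z_{(i)}$ with $Z_{(i)}$ the $i$-th largest component of $Z$, which is exactly the claimed limit.

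There is little genuine difficulty in the argument; the only points requiring care are the finiteness of the covariance (immediate from boundedness of $z_t$) and the commutation of the $\sqrt{T}$-normalisation with $g_{N,M}$ (immediate from positive homogeneity). I would emphasise, however, that this lemma delivers only convergence \emph{in distribution}; upgrading it to the convergence of \emph{expectations} $\mean{g_{N,M}(\overline{z}_T)}/\sqrt{T}\to\sum_{i=1}^M\mean{Z_{(i)}}$ that the bound in Theorem~\ref{thm:sumz} ultimately invokes is a separate matter. That step needs a uniform-integrability argument for the family $\{g_{N,M}(\overline{z}_T)/\sqrt{T}\}_T$ --- for instance a uniform bound on its second moment, which is available since $g_{N,M}$ is $1$-Lipschitz in the $\ell_\infty$-sense and $\mean{\|\overline{z}_T\|^2}=O(T)$ --- and I would handle it outside of this lemma.
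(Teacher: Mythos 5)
Your proof is correct and follows essentially the same route as the paper's: the multivariate CLT applied to the i.i.d.\ bounded summands $z_t-\mean{z_t}$, followed by the continuous mapping theorem applied to $g_{N,M}$, with the positive homogeneity $g_{N,M}(\overline{z}_T)/\sqrt{T}=g_{N,M}\bigl(\overline{z}_T/\sqrt{T}\bigr)$ used to commute the normalisation (a step the paper performs without comment). Your closing observation is also well taken: the paper passes from this distributional limit to the convergence of expectations in \eqref{eq:convExpectations} by calling it ``an immediate consequence,'' whereas, as you note, that upgrade genuinely requires uniform integrability of $\{g_{N,M}(\overline{z}_T)/\sqrt{T}\}_T$, which your Lipschitz-plus-second-moment argument supplies.
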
  
\begin{proof}[Proof of Lemma \ref{lem:regretConvDistribution}]
Observe that $\overline{z}_T$ is the sum of $T$ uniform i.i.d. zero-mean random vectors, where the covariance matrix  can be calculated using \eqref{eq:centeredDemandVector}: ${\Sigmam}(w) =$
\begin{align}\nonumber
&=\mean{\left(z_1 - \frac{1}{\sum_{n=1}^N1/w^n}\mathbf{1}_N\right)\left(z_1 - \frac{1}{\sum_{n=1}^N1/w^n}\mathbf{1}_N\right)^T}\\ \label{eq:covarianceMatrix}
&= \frac{1}{\sum_{n=1}^N1/w^n}\begin{cases}
w_i -\frac{1}{\sum_{n=1}^N1/w^n}, i=j\\
-\frac{1}{\sum_{n=1}^N1/w^n}, i\neq j
\end{cases}
,\end{align}
where the second equality follows from the distribution of  $z_t$ and some calculations.\footnote{For the benefit of the reader, we note that $Z$ has no well-defined density (since ${\Sigmam}(w)$ is singular). For the proof, we only use its distribution.}
Due to the Central Limit Theorem, we then have 
\begin{equation}\label{eq:convCenteredX}
\frac{\overline{z}_T}{\sqrt{T}} \xrightarrow[T\rightarrow\infty]{\text{distr.}} {Z}.\end{equation}
Since $g_{N,M}(x)$ is continuous, \eqref{eq:convCenteredX}  and the Continuous Mapping Theorem \cite{Billingsley99} imply
\[
\frac{g_{N,M}\left({\overline{z}_T}\right)}{\sqrt{T}} = g_{N,M}\left(\frac{\overline{z}_T}{\sqrt{T}}\right) \xrightarrow[T\rightarrow\infty]{\text{distr.}} g_{N,M}\left({Z}\right)
,\] 
and the proof is completed by noticing that $g_{N,M}(x)$ is the sum of the maximum $M$ elements of its argument. 
\end{proof}
An immediate consequence of Lemma \ref{lem:regretConvDistribution},  is that 
\begin{equation}\label{eq:convExpectations}
\frac{\texttt{Reg}_T}{\sqrt{T}}=\frac{\mean{g_{N,M}(\overline{z}_T)}}{\sqrt{T}}\xrightarrow{T\rightarrow\infty}\mean{\sum_{i=1}^M{Z}_{(i)}} = \sum_{i=1}^{M}\mean{{Z}_{(i)}}
\end{equation}
which proves the first part of Theorem \ref{thm:sumz}. 

To prove the second part, we remark that the RHS of \eqref{eq:convExpectations} is the expected sum of $M$ maximal elements of vector $Z$, and hence larger than the expected sum of any $M$ elements of $Z$. In particular, we will compare with the following: Fix a permutation $\bar{\phi}$ over all $N$ elements, partition the first $2M$ elements in pairs by combining 1-st with 2-nd, ..., $i$-th with $i$+1-th, $2M$-1-th with $2M$-th, and then from each pair choose the maximum element and return the sum. Using this, we obtain: 
\begin{align}\nonumber
\mean{\sum_{i=1}^M{Z}_{(i)}} &\geq \mean{\sum_{i=1}^M\max\left[Z^{\bar{\phi}(2(i-1)+1)}, Z^{\bar{\phi}(2i)}\right]} \\ \nonumber
& = \sum_{i=1}^M\mean{\max\left[Z^{\bar{\phi}((2(i-1)+1)}, Z^{\bar{\phi}(2i)}\right]}
,\end{align}
where the second step follows from the linearity of the expectation, and the expectation is taken over the marginal distribution of a vector with two elements of $Z$. We now focus on  $\max\left[Z^{k}, Z^{\ell}\right]$ for (any) two fixed $k,\ell$. We have that 
\[
(Z^{k}, Z^{\ell})^T \sim \Nc\left(\mathbf{0}, \Sigmam(w_k, w_{\ell})\right)
\]  
where
\begin{align*}
&\hspace{0.3in}\Sigmam(w^k, w^{\ell})= \\&= \frac{1}{\sum_{n=1}^N1/w^n}\begin{bmatrix}
w^k-\frac{1}{\sum_{n=1}^N1/w^n} & -\frac{1}{\sum_{n=1}^N1/w^n}\\
-\frac{1}{\sum_{n=1}^N1/w^n} & w^{\ell}-\frac{1}{\sum_{n=1}^N1/w^n}
\end{bmatrix}
.\end{align*}
From \cite{Clark1961} we then have
\[
\mean{\max\left[Z^{k}, Z^{\ell}\right]} = \sqrt{\frac{1}{\sum_{n=1}^N1/w^n}}\frac{1}{\sqrt{2\pi}}\sqrt{w^{k}+w^{\ell}}
,\]
therefore 
\begin{equation}\label{eq:orderStatTransformed}
\mean{\sum_{i=1}^M{Z}_{(i)}} \geq \frac{1}{\sqrt{2\pi}}\frac{\sum_{i=1}^M\sqrt{w^{\bar{\phi}((2(i-1)+1)}+w^{\bar{\phi}(2i)}}}{\sqrt{\sum_{n=1}^N1/w^n}},
\end{equation}
for all $\bar{\phi}$. This completes the proof of Theorem \ref{thm:sumz}.
Corollary \ref{cor:lowerBoundRegret} follows noticing that the  tightest bound is obtained by maximizing   \eqref{eq:orderStatTransformed}  over all permutations. 
\end{proof}

\subsection{Online gradient ascent}\label{sec:gradient}

We show that the online variant of the standard gradient ascent algorithm achieves the best possible regret.
Recall that the utility in slot $t$ is described by the linear function $U_t(y_t)=\sum_{n=1}^Nw^nR_t^ny_t^n$, hence  the gradient at $y_t$ is an $N$-dimensional vector $\nabla U_t=
(\frac{\partial U_t}{\partial y_t^1},\dots,\frac{\partial U_t}{\partial y_t^N})=(w^1 R^1_t,\dots,w^N R^N_t)$.

\begin{box_example}[detach title,colback=blue!5!white, before upper={\tcbtitle\quad}]{Online Gradient Ascent Policy.}
	\small
Upon a request $R_t=(R_t^1,\dots,R_t^N)$, adjust the caching decisions ascending in the direction of the gradient:
\[
y_{t+1}=\Pi_{\mathcal{Y}}\left(y_t+\eta_t \nabla U_t\right),
\]
where $\eta_t$ is the stepsize, and $\Pi_{\mathcal{Y}}\left(z\right)\triangleq \argmin_{y\in\Yc}\|z-y \|$ 
is the Euclidean projection of the argument vector $z$ onto $\mathcal{Y}$.
\end{box_example}

The projection step is discussed in detail in the  following subsection. 
From a practical perspective, OGA works as follows. Upon the request of file $n$, the cache configuration is updated by caching more chunks from content $n$ (the number of extra chunks is decided according to the stepsize selection) and dropping chunks equally from all files in order to satisfy the cache constraints. 
Therefore, OGA bases the decision $y_{t+1}$ on the caching configuration $y_t$ and the most recent request $R_t$ and  it is a very simple online policy that does not require memory for storing the entire state (full history of $R$ and $y$). 

Let us now discuss the regret performance of OGA. We define first the set diameter $\textit{diam}(\mathcal{S})$  to be the largest $L_2$ distance between any two elements of set $\mathcal{S}$. To determine the diameter, we inspect two vectors $y_1,y_2\in \Yc$ which cache entire and totally different contents and obtain
\[
\textit{diam}(\mathcal{Y})=\left\{
\begin{array}{ll}
\sqrt{2M} & \text{if } 0<M\leq N/2,\\
\sqrt{2(N-M)} & \text{if } N/2<M\leq N. 
\end{array}
\right.
\]
Also, let $L$ be an upper bound of $\|\nabla U_t\|$, we have $L\leq\max_n(\sum_n w^nR_t^n)\leq \max_n(w^n)\equiv w^{(1)}$.
\begin{thm}[theorem style=plain]{Regret of OGA}{2}
Fix stepsize $\eta_t=\frac{\textit{diam}(\mathcal{Y})}{L\sqrt{T}}$,  the regret of \textup{OGA} satisfies:
\[
\texttt{Reg}_T(\textup{OGA})\leq \textit{diam}(\mathcal{Y})L{\sqrt{T}}.
\]
\end{thm}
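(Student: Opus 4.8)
The plan is to establish the regret bound for OGA by invoking the standard analysis of online gradient ascent (equivalently, online gradient descent on $-U_t$) for online convex optimization over a convex feasible set. The key observations that make this applicable are that the feasible set $\mathcal{Y}$ in \eqref{eq:capped} is convex, closed, and bounded, and that each utility function $U_t(y)=\sum_n w^n R_t^n y^n$ is linear, hence concave, in $y$; moreover its gradient $\nabla U_t=(w^1R_t^1,\dots,w^NR_t^N)$ is bounded in norm by $L$. These are exactly the ingredients needed to apply the classical OGA regret guarantee.

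The main step I would carry out is the potential-function (or ``regret telescoping'') argument. Fix the best static configuration $y^*\in\arg\max_{y\in\Yc}\sum_{t=1}^T U_t(y)$ and track the squared Euclidean distance $\|y_t-y^*\|^2$ as the potential. Using that the projection $\Pi_{\Yc}$ onto a convex set is nonexpansive, one shows
\[
\|y_{t+1}-y^*\|^2=\|\Pi_{\Yc}(y_t+\eta_t\nabla U_t)-y^*\|^2\leq\|y_t+\eta_t\nabla U_t-y^*\|^2.
\]
Expanding the right-hand side gives
\[
\|y_{t+1}-y^*\|^2\leq\|y_t-y^*\|^2+2\eta_t\nabla U_t^{\T}(y_t-y^*)+\eta_t^2\|\nabla U_t\|^2.
\]
Rearranging and using concavity (linearity) of $U_t$, namely $U_t(y^*)-U_t(y_t)\leq\nabla U_t^{\T}(y^*-y_t)$, yields the per-slot bound
\[
U_t(y^*)-U_t(y_t)\leq\frac{\|y_t-y^*\|^2-\|y_{t+1}-y^*\|^2}{2\eta_t}+\frac{\eta_t}{2}\|\nabla U_t\|^2.
\]

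With a constant stepsize $\eta_t=\eta=\textit{diam}(\Yc)/(L\sqrt{T})$, I would then sum over $t=1,\dots,T$. The distance terms telescope to $(\|y_1-y^*\|^2-\|y_{T+1}-y^*\|^2)/(2\eta)\leq\|y_1-y^*\|^2/(2\eta)\leq\textit{diam}(\Yc)^2/(2\eta)$, since any two points of $\Yc$ are within $\textit{diam}(\Yc)$ of each other. The gradient terms contribute at most $\sum_{t=1}^T(\eta/2)L^2=\eta L^2 T/2$, using $\|\nabla U_t\|\leq L$. Combining,
\[
\sum_{t=1}^T\bigl(U_t(y^*)-U_t(y_t)\bigr)\leq\frac{\textit{diam}(\Yc)^2}{2\eta}+\frac{\eta L^2 T}{2},
\]
and substituting the chosen $\eta$ makes both terms equal to $\tfrac12\textit{diam}(\Yc)L\sqrt{T}$, giving the pathwise bound $\textit{diam}(\Yc)L\sqrt{T}$. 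Finally, taking expectation over the (possibly randomized) requests and configurations and then the worst-case adversary distribution preserves the inequality, since the right-hand side is deterministic; this yields $\texttt{Reg}_T(\textup{OGA})\leq\textit{diam}(\Yc)L\sqrt{T}$.

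I do not expect a genuine obstacle here, as this is a textbook OLO argument; the only points requiring care are justifying the projection inequality (nonexpansiveness of Euclidean projection onto a convex set) and confirming that the adversarial/expectation structure in the definition of $\texttt{Reg}_T$ commutes harmlessly with the deterministic pathwise bound. If anything, the subtlety is purely bookkeeping: ensuring the telescoping is applied with a \emph{constant} $\eta$ (for which the sum is clean) rather than a time-varying one, and verifying that $L\leq w^{(1)}$ and the stated formula for $\textit{diam}(\Yc)$ plug in consistently to recover the claimed constant.
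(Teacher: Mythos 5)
Your proposal is correct and matches the paper's own proof essentially step for step: the same nonexpansiveness-of-projection inequality, the same telescoping of $\|y_t-y^*\|^2$ with constant stepsize, the same use of linearity of $U_t$, and the same substitution $\eta=\textit{diam}(\Yc)/(L\sqrt{T})$ to balance the two terms. Your explicit remark that the deterministic pathwise bound survives the expectation and worst-case maximization is a slightly more careful handling of the step the paper dispatches with ``the $\max$ operator is removed,'' but it is the same argument.
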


\begin{proof}
Using the non-expansiveness property of the Euclidean projection \cite{Ber99book} we can bound the distance of the algorithm iteration from the best static policy in hindsight:
\begin{align*}
\|y^{t+1}\!-\!y^*\|^2&\triangleq\|\Pi_{\mathcal{Y}}\left(y_t\!+\!\eta_t \nabla U_t\right)\!-\!y^*\|^2\!\leq\! \|y_{t}\!+\!\eta_t\nabla U_t\!-\!y^*\|^2\notag\\
&= \|y_t-y^*\|^2+2\eta_t{\nabla U_t}^T(y_t-y^*)+\eta_t^2\|\nabla U_t\|^2,
\end{align*}
where we expanded the norm.
If we fix $\eta^t=\eta$ and sum telescopically over horizon $T$, we obtain:
\begin{equation*}
\|y_{T}-y^*\|^2\!\!\leq\! \|y_1-y^*\|^2+2\eta\sum_{t=1}^T{\nabla U_t}^T(y_t-y^*)+\eta^2\sum_{t=1}^T\!\|\nabla U_t\|^2.
\end{equation*}
Since $\|y_{T}-y^*\|^2\geq 0$, rearranging terms and using $\|y_1-y^*\|\leq \textit{diam}(\Yc)$ and $\|\nabla U_t\|\leq L$: 
\begin{equation}\label{eq:teleonl}
\sum_{t=1}^T{\nabla U_t}^T(y^*-y_t)\leq \frac{\textit{diam}(\Yc)^2}{2\eta} +\frac{\eta TL^2}2.
\end{equation}

For  $U_t$ convex it holds $U_t(y_t)\geq U_t(y)+ {\nabla U_t}^T(y_t-y)$, $\forall y\in\mathcal{Y}$, and with equality if $U_t$ is linear (as here). Plugging these in the OGA regret expression ($\max$ operator is removed) we get:
\begin{align*} 
\texttt{Reg}_T(OGA)&\leq \sum_{t=1}^T(U_t(y^*)-U_t(y_t)) =\sum_{t=1}^T{\nabla U_t}^T(y_t-y^*)\\
&\stackrel{\eqref{eq:teleonl}}{\leq} \frac{\textit{diam}(\Yc)^2}{2\eta} +\frac{\eta TL^2}2,
\end{align*}
and for $\eta=\textit{diam}(\Yc)/L\sqrt{T}$ we obtain the result. 
\end{proof}
Using the above values of $L$ and $\textit{diam}(\mathcal{Y})$ we obtain:
\[
\texttt{Reg}_T(OGA)\leq {w^{(1)}\sqrt{2MT}},\,\,\,\,\text{for}\,\, M<N/2\,.
\]

\begin{cor}[theorem style=plain]{Minimum Hit Regret}{2}
Fix a $\gamma\triangleq M/N$, $\gamma\in(0,1/2)$ and  $w^n=w,~\forall n$. Then, the regret of any caching policy $\sigma$  satisfies: 
\[
w\sqrt{\frac{\gamma}{\pi}} \sqrt{MT}\leq \inf_{\sigma}\texttt{Reg}_T(\sigma) \leq w\sqrt{2}\sqrt{MT}~~~\text{as}~T\to\infty.
\]
\end{cor}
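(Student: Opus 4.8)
The plan is to obtain the corollary as an immediate sandwich of two results already established in this section: the lower bound is read off from Corollary~\ref{cor:lowerBoundRegret}, and the upper bound follows by using OGA as an explicit witness policy in Theorem~\ref{thm:2}. No new analysis is required; the work is purely in specializing the general statements to the regime $\gamma\in(0,1/2)$, $w^n=w$.

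For the lower bound, I would invoke Corollary~\ref{cor:lowerBoundRegret}, which asserts that under uniform weights $w^n=w$ and $\gamma\in(0,1/2)$ every admissible policy $\sigma$ satisfies $\texttt{Reg}_T(\sigma) > w\sqrt{\gamma/\pi}\,\sqrt{MT}$ as $T\to\infty$. Because this inequality holds uniformly over all policies, it is preserved under the infimum, giving $\inf_\sigma\texttt{Reg}_T(\sigma)\geq w\sqrt{\gamma/\pi}\,\sqrt{MT}$. No further argument is needed on this side.

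For the upper bound, I would use the elementary fact that the infimum over all policies is dominated by the regret of any single policy, and take that policy to be OGA, which is itself an admissible online caching policy. Under the stated specializations the two constants in Theorem~\ref{thm:2} simplify: with equal weights $w^n=w$ the Lipschitz bound becomes $L\leq w^{(1)}=w$, and with $M<N/2$ the diameter computation preceding the theorem gives $\textit{diam}(\mathcal{Y})=\sqrt{2M}$. Substituting, Theorem~\ref{thm:2} yields $\texttt{Reg}_T(\textup{OGA})\leq \textit{diam}(\mathcal{Y})\,L\,\sqrt{T}=\sqrt{2M}\cdot w\cdot\sqrt{T}=w\sqrt{2}\,\sqrt{MT}$, whence $\inf_\sigma\texttt{Reg}_T(\sigma)\leq w\sqrt{2}\,\sqrt{MT}$.

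Chaining the two estimates yields the claimed two-sided bound. There is no genuine obstacle here — the only point requiring care is confirming that the adversarial i.i.d.\ request distribution driving the lower bound and the fixed stepsize $\eta_t=\textit{diam}(\mathcal{Y})/(L\sqrt{T})$ driving the upper bound are both admissible under the single regime $\gamma\in(0,1/2)$, $w^n=w$; since Corollary~\ref{cor:lowerBoundRegret} and Theorem~\ref{thm:2} are each proven under exactly these hypotheses, compatibility is automatic. The upshot, worth emphasizing, is that the lower and upper constants differ only by the fixed factor $\sqrt{2\pi/\gamma}$, so both scale as $\Theta(\sqrt{MT})$ and the minimax caching regret is pinned to this order.
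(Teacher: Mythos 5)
Your proposal is correct and follows exactly the paper's route: the paper derives this corollary immediately by sandwiching the lower bound of Corollary~\ref{cor:lowerBoundRegret} against the OGA guarantee of Theorem~\ref{thm:2} specialized with $L\leq w^{(1)}=w$ and $\textit{diam}(\mathcal{Y})=\sqrt{2M}$ for $M<N/2$. Your only addition is making the ``infimum dominated by any witness policy'' step explicit, which the paper leaves implicit (and note the $\sqrt{\gamma/\sigma}$ in Corollary~\ref{cor:lowerBoundRegret} is a typo for $\sqrt{\gamma/\pi}$, as you correctly read it).
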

\noindent Corollary \ref{cor:2} follows  immediately from Corollaries \ref{cor:lowerBoundRegret}-\ref{cor:2}. We conclude that disregarding a constant $\sqrt{2\pi/\gamma}$ (which is amortized over $T$), OGA achieves the best possible regret and hence fastest possible model-free learning rate for caching.

\subsubsection{Projection algorithm}\label{sec:projection}

We explain next the Euclidean projection $\Pi_{\mathcal{Y}}$ used in  OGA, which can be written as a constrained quadratic program:
\begin{align}
\Pi_{\mathcal{Y}}\left(z\right)\triangleq \argmin_{y\geq \bm{0}} & \sum_{n=1}^N (z^n-y^n)^2 \label{eq:projection}\\
\text{s.t. }& \sum_{n=1}^N y^n \leq M~\text{ and}~y^n\leq 1,~\forall n\in\mathcal{N}.\notag 
\end{align}
In practice $N$ is expected to be large, and hence we require a fast algorithm. 
Let us introduce the Lagrangian:
\begin{align}
	L(y, \rho, \mu, \kappa )&=\sum_{n=1}^N(z^n-y^n)^2+\rho(\sum_{n=1}^Ny^n-M)\nonumber \\ &+\sum_{n=1}^N\mu_n(y^n-1) - \sum_{n=1}^N\kappa_ny_n,\nonumber
\end{align}
where $\rho,\mu,\kappa$ are the  Lagrangian multipliers. The KKT conditions of \eqref{eq:projection} ensure that  the values of $y^n$  at optimality will be partitioned into three sets $\Mc_1, \Mc_2, \Mc_3$:
\begin{align}
&\Mc_1=\{n\in\mathcal{N}: y^n\!=\!1  \},\,\,\, \Mc_2\!=\!\{n\in\mathcal{N}:\! y^n=z^n\!-\!\rho/2\},\nonumber\\
 &\Mc_3\!=\!\{n\in\mathcal{N}:\! y^n\!=\!0 \},\label{eq:proj2} 
\end{align}
where $\rho=2\big(|\Mc_1|-M+\sum_{n\in \Mc_2} z^n\big)/|\Mc_2|$ follows from the tightness of  the simplex constraint. In order to solve the projection problem, it suffices to determine a partition of contents into these sets. Note that given a candidate partition, we can check in linear time whether it satisfies all KKT conditions (and only the unique optimal partition does). Additionally, one can show that the ordering of contents in $z$ is preserved at optimal $y$, hence a known approach is to search exhaustively over all possible ordered partitions, which takes $O(N^2)$ steps \cite{wang2015projection}. For our problem, we propose Algorithm 1, which exploits the property that all elements of $z$ satisfy $z^n\leq 1$ except at most one (hence also $|\mathcal{M}_1|\in \{0,1\}$), and computes the projection in $O(N\log N)$ steps (where the term $\log N$ comes from sorting $z$). In our simulations each loop is visited at most two times, and the OGA simulation  takes comparable time with LRU.

\begin{algorithm}[h!]  
\caption{Fast Cache Projection on Capped Simplex}  
\label{alg1}  
\begin{algorithmic}[1]  
    \REQUIRE $M$; sorted $z^{(1)}\geq\dots\geq z^{(N)}$
    \ENSURE $y = \Pi_{\mathcal{Y}}\left(z\right)$
    \STATE $\Mc_1\leftarrow \emptyset,\Mc_2\leftarrow \{1,\dots,N\},\Mc_3 \leftarrow \emptyset$
\REPEAT 
    \STATE $\rho\leftarrow 2({|\Mc_1|-M+\sum_{n\in \Mc_2} z^n})/{|\Mc_2|}$
    \STATE $y^n\leftarrow\left\{\begin{array}{ll}
1 &  n\in \Mc_1, \\
z^n-\rho/2 & n\in \Mc_2, \\
0 & n\in \Mc_3 \end{array}\right.$
    \STATE $\Sc \leftarrow \left\{n\in\mathcal{N}: y^n<0\right\}$
    \STATE $\Mc_2 \leftarrow \Mc_2\setminus \Sc$, $\Mc_3 \leftarrow \Mc_3\cup \Sc$
\UNTIL{$\Sc=\emptyset$}
        \IF{$y^1 >1$}
        \STATE $\Mc_1\leftarrow \{1\},\Mc_2\leftarrow \{2,\dots,N\},\Mc_3 \leftarrow \emptyset$
        \STATE Repeat 2-7
    \ENDIF \,\,\,\,\,\,\% KKT conditions are satisfied. 
\end{algorithmic} 
\end{algorithm}

\subsubsection{Performance comparison}\label{sec:FoLreg}

The online gradient descent (similar to OGA) is identical to the {well-known} \emph{Follow-the-Leader} (FtL) policy with a Euclidean regularizer $\frac{1}{2\eta_t}\|y\|$, cf.~\cite{Shalev12}, where FtL chooses in slot $t$ the configuration that maximizes the average utility in  slots $1,2,\dots,t-1$; it is the hypothesis that best describes the existing observations. We may observe that the unregularized FtL {applied here} would cache the contents with the highest frequencies, hence it is identical to the LFU. Therefore, OGA can be thought of as a regularized version of a utility-LFU policy, where additionally to largest frequencies, we smoothen the decisions (by means of a Euclidean regularizer).

\begin{figure*}[!t]
\centering
\subfigure[CDN aggregation (IRM model)] 
{\includegraphics[width=1.6in]
{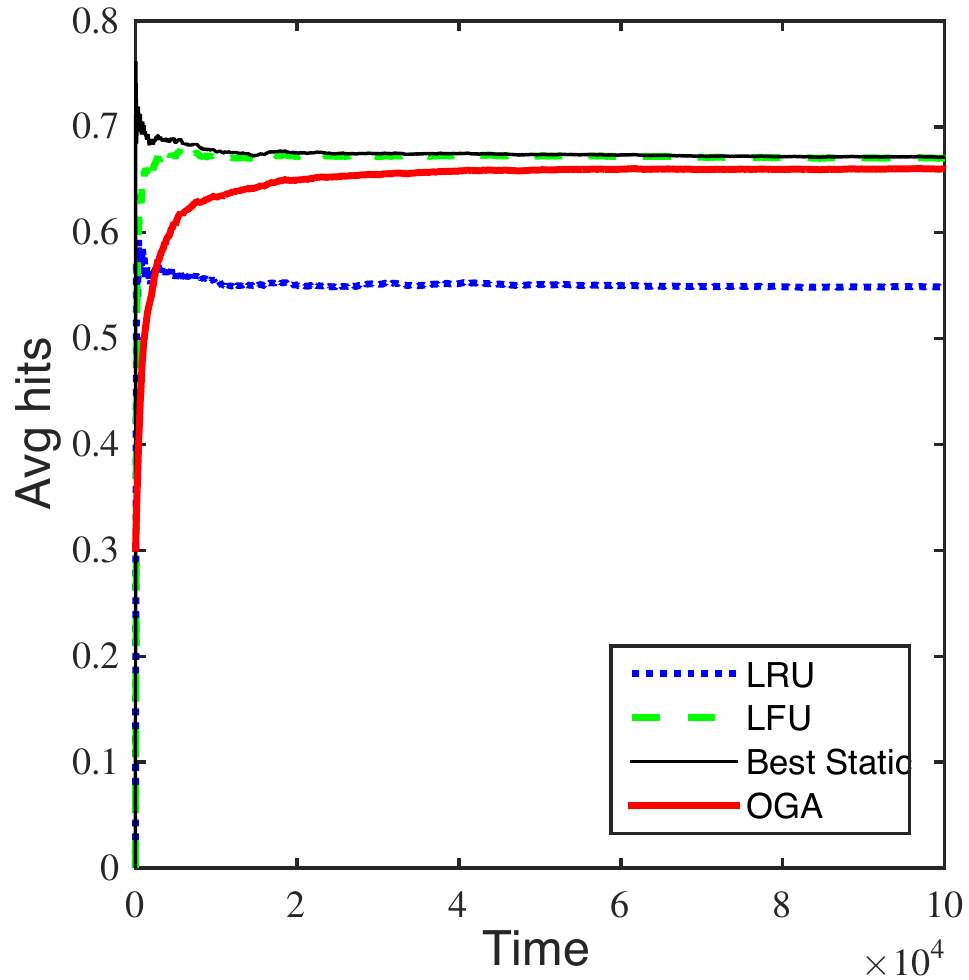}
\label{Fig:Fig1}}
\subfigure[Youtube videos (model \cite{snm})]{\includegraphics[width=1.6in]
{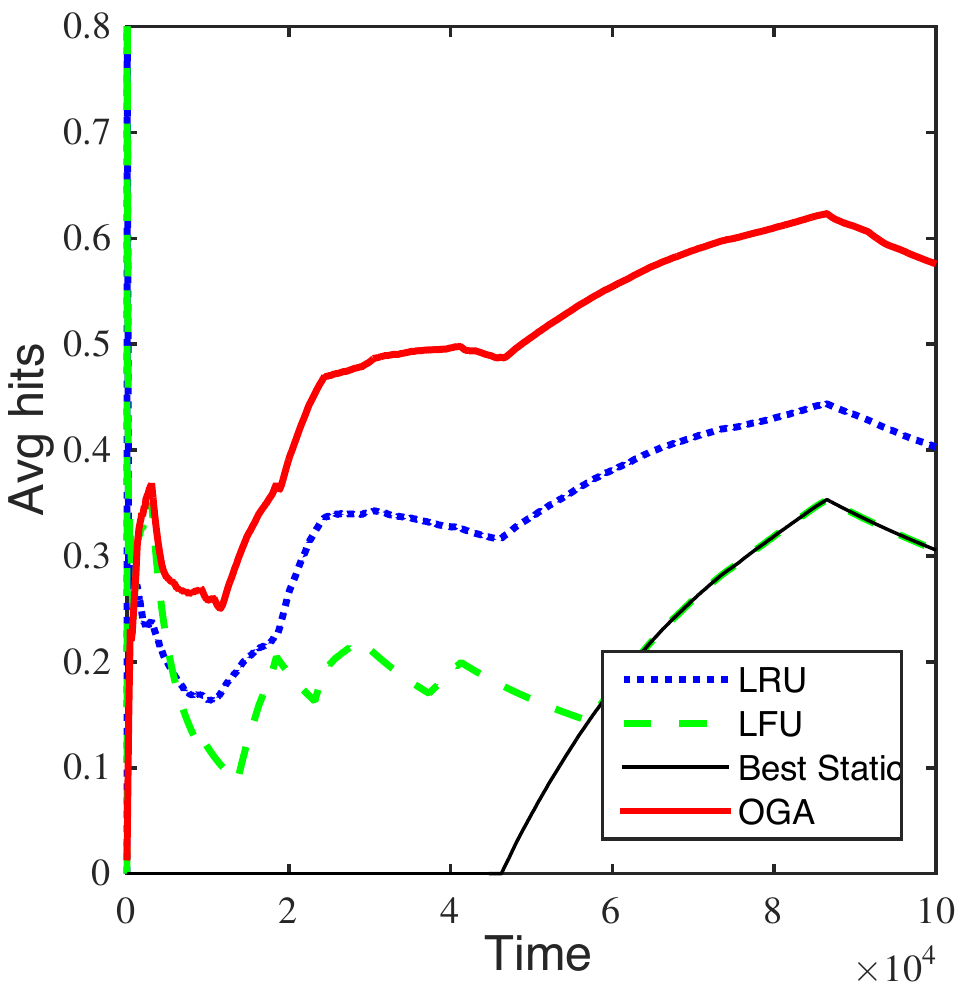}
\label{Fig:Fig2}}
\subfigure[Web browsing (trace \cite{Kurose08})]{\includegraphics[width=1.6in]
{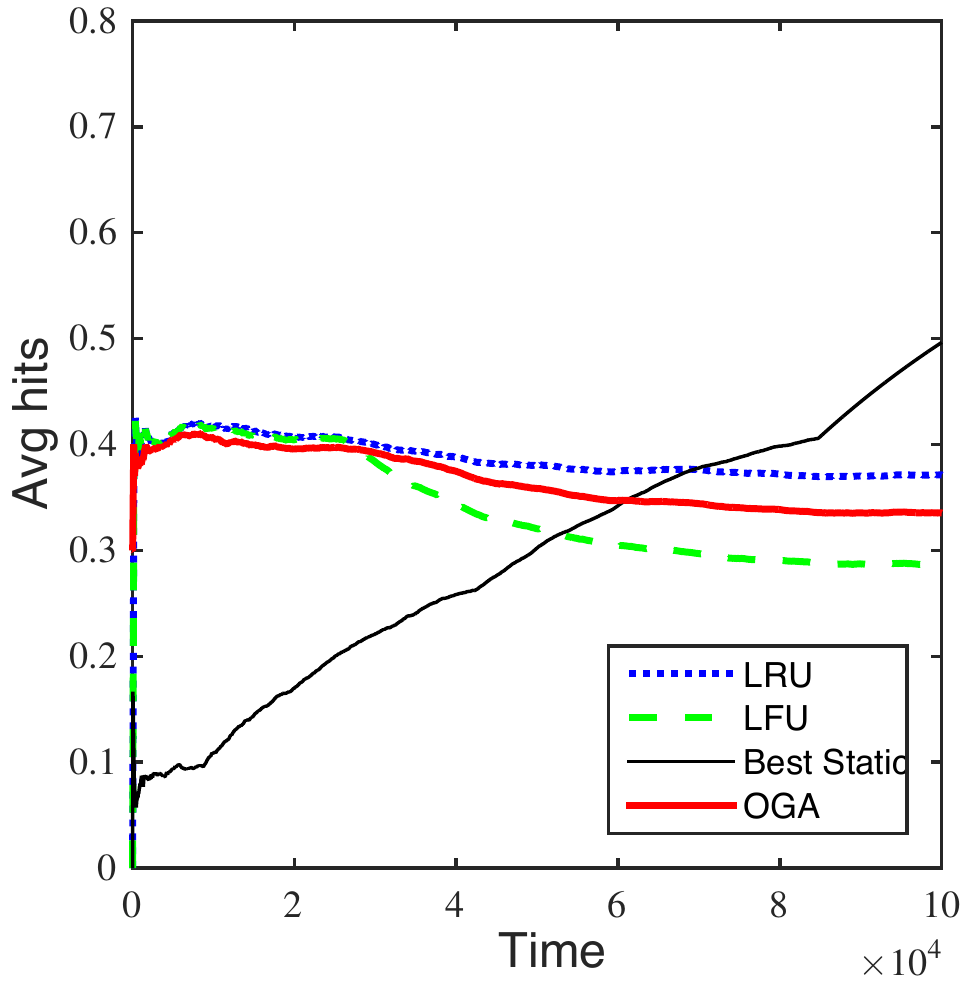}
\label{Fig:Fig3}}
\subfigure[Ephemeral Torrents (model \cite{Elayoubi2015})]{\includegraphics[width=1.6in] 
{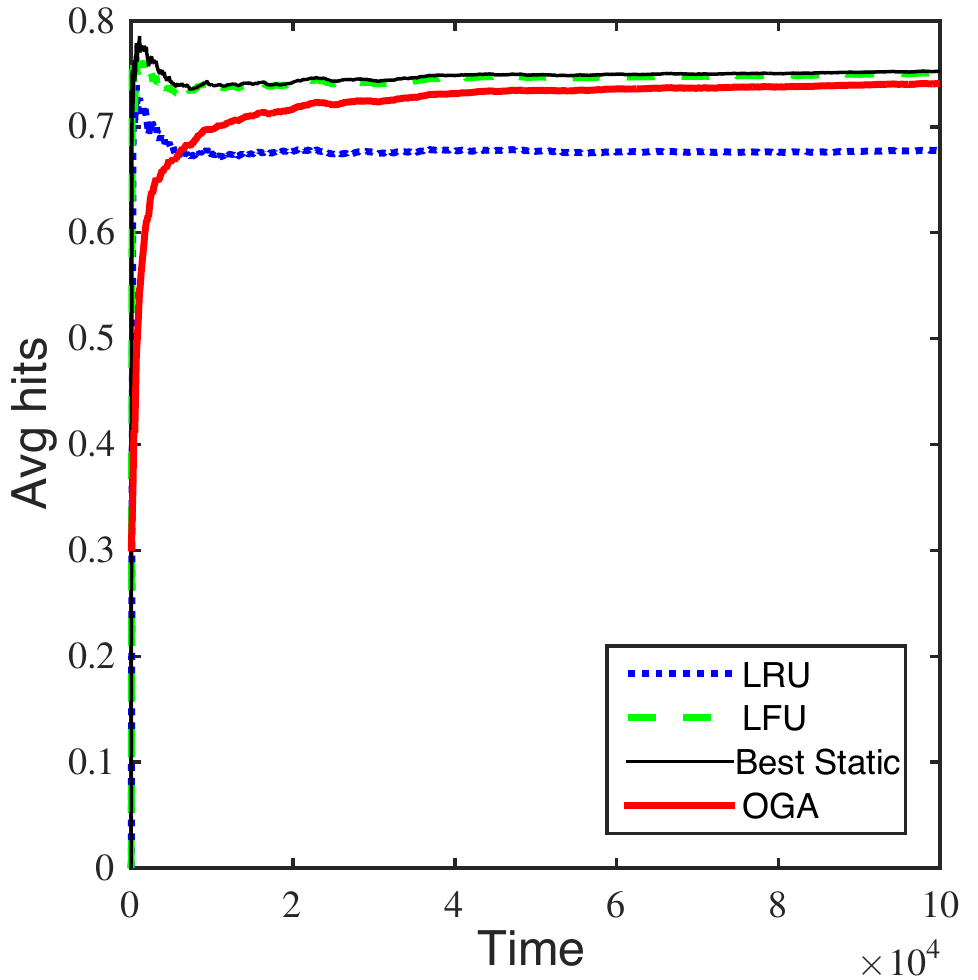}
\label{Fig:Fig4}}
\caption{Average hits under different request models \cite{fricker2012impact}; (a) i.i.d. Zipf, (b) Poisson Shot Noise \cite{snm}, (c) web browsing dataset \cite{Kurose08}, (d) random replacement \cite{Elayoubi2015}; Parameters: $\gamma=0.3, T=2\times 10^5, \eta=0.1$.}
\label{Fig:perf2}
\end{figure*}

Furthermore, OGA for $\eta=1$, $w^n=w$ also  bears similarities to the LRU policy, since recent requests enter the cache at the expense of older requests. Since the Euclidean projection drops some chunks from each  content  (see projection algorithm), we expect  least recent requests to drop first in OGA. The difference  is that OGA evicts  contents  gradually, chunk by chunk, and not in one shot. As a consequence, the two policies take strongly correlated decisions, but  OGA additionally ``remembers'' which of the {recent requests  are  also infrequent, and decreases corresponding $y^n$ values (as LFU would have done)}. 

Finally, in Fig.~\ref{Fig:perf2} we compare the performance of OGA to LRU, LFU, and the best in hindsight static configuration. We perform the comparison for catalogs of $10$K contents, with a cache that fits $3$K contents, and we use four different request models: {(a)} an i.i.d. Zipf model that represents requests in a CDN aggregation point \cite{fricker2012impact}; {(b)} a Poisson shot noise model that represents ephemeral YouTube video requests \cite{snm}; {(c)} a dataset from \cite{Kurose08} {with actual web browsing requests} at a university campus; and {(d)} a random replacement model from \cite{Elayoubi2015} that represents ephemeral torrent requests. We observe that OGA performance is always close to the best among LFU and LRU due to its universality. The benefits from the second best policy here is as high as 16$\%$ over LRU and 20$\%$ over LFU.

\section{Further notes on popularity prediction}\label{sec:prediction}

It should be now apparent that learning content popularity is central to caching optimization. Standard paging and caching policies such as LRU, LFU, and others, learn the content popularity in a certain heuristic manner, which is often embedded in the way they take caching decisions. Rather than following such reactive techniques, a recent research trend aims to predict content popularity and then optimize the content placement accordingly. In this concluding subsection, we provide a quick survey of \emph{popularity prediction} techniques.

Perhaps the most standard way of predicting popularity is to   assume its distribution is time-invariant, and use historical data for prediction. In this case,  an unbiased predictor of popularity is the content frequency, i.e., LFU  predicts  a static popularity. As we saw in chapter 2, however, the assumption of static popularity is rarely true since real datasets exhibit temporal locality. Indeed,  time-invariant models such as the IRM are considered  accurate for short time intervals \cite{breslau}, but applying them to a large time-scale analysis is problematic. If we assume time-variant popularity, we may take a Bayesian approach in predicting popularity, similar to the  rectangular-SNM classifier of \cite{mathieu} explained in section \ref{sec:class}. This approach had the advantage of correcting the prediction according to the information provided by the prior model. The  Bayesian methodology can be applied to other non-stationary models as well. However,  non-stationary models such as SNM have their own limitations when it comes to popularity prediction: (i) they have many degrees of freedom, hence making data fitting quite cumbersome, e.g., in \cite{snm} the authors advocate to restrict to 4 different  popularity profiles, and (ii) they subscribe to a specific ``non-stationarity'', which leads to overfitting, i.e., producing predictions that are tightly related to the specific non-stationary model. When reality deviates from our selected non-stationary model, prediction errors occur. The quest for the right non-stationary model of content popularity is still open.

Another methodology for prediction is to take a model-free  approach, which is agnostic to the prior stochastic model. For instance \cite{gunduz-reinforcement,giannakis-q-learning} employ \emph{Q-learning}, and \cite{mihaela-video-caching} leverages a scalable prediction approach; these methods assume that the evolution of popularity is  stationary across time but its distribution is unknown, see also stochastic bandit models which assume partial feedback \cite{blasco2014multi}, and transfer learning \cite{bacstuug2015transfer}. In the case of non-stationary unknown distributions, the online caching framework explained in section \ref{sec:online} can be used to derive the optimal learning technique if caching decisions are fractionals.
If we restrict to caching entire contents, then popularity prediction is studied in \cite{geulen2010regret,englert2013economical,lykouris-ML}.  

It is important for caching policies to  assimilate popularity changes as fast as possible. \cite{li2018accurate} suggests that (most) eviction policies can be modeled as Markov chains, whose mixing times give us a figure of how ``reactive'' the  policy is, or else how true to its stationary performance. A practical lesson learned is that although multi-stage $q$-LRU offers LFU-like hit rate performance, it  adapts slowly to popularity changes. 
From the analysis presented in this chapter, we saw that the regret in online caching is $\Theta(\sqrt{MT})$. This also provides an insight on how quickly we should expect to learn a good prediction. If $M$ is similar to $T$, the regret becomes $\Theta(T)$, which means that there is not enough time to accurately learn the optimal configuration. In other words, we expect to learn a good prediction at an horizon which is at least as big as the number of contents we can cache.

Other related works  look at how a trending file will evolve \cite{figueiredo2013prediction}, or how social networks can be used to predict the file popularity \cite{asur2010predicting}. Notably, some works even propose the use of recommendations as a means to actually engineer content popularity \cite{jordan_nudging,giannakas2018show}, and even jointly optimize recommendations and caching decisions \cite{kastanakis2018cabaret}.

The topic of learning popularity is an exciting one. In wireless networks, request patterns have low spatial intensity, which makes accurate popularity prediction essential but challenging. We mention here some practical challenges: {(i)} the learning rate depends on the rate of requests (number of requests per unit time), and hence on the  location (and aggregation layer) of the studied cache, {(ii)} the content popularity is affected by user community characteristics, and thus  learning relates to the  geographical characteristics of caches, and (iii) while a popularity model can enhance the accuracy of prediction, it also creates an inherent weakness, that on relying on a pre-specified model which might not be the true depiction of reality; the designer of a caching system must decide which of the two approaches (model-based or model-free) is the most beneficial at each occasion.

\chapter{Caching networks}\label{ch:4}


In this chapter we shift our attention to caching networks (CNs), i.e., systems of interconnected caches. The management of these complex systems requires the joint design of caching and routing policies, but involves also decisions about the deployment and dimensioning of the cache servers, Fig. \ref{fig:hierarchical}. These decisions were historically realized in different time scales, but the increasingly flexible network control tools tend to blur these boundaries. We focus on \emph{proactive caching} which refers to techniques that populate the caches based on the expected demand \cite{Bastug2014LivingOnTheEdge}, \cite{tadrous-proactive}. In the sequel we focus on the most important CN models and present key algorithms for optimizing their performance. 



\section{CN Deployment and Management}

\subsection{Caching Network Model}

The basic element of a caching network is a graph $G=(\mathcal{V},\mathcal{E})$, where $\cal V$ is the set of $V\!=\!|\cal V|$ nodes and $\cal E$ is the set of $E\!=\!|\cal E|$ directional links, defined as ordered pairs of nodes. Graph $G$ summarizes the available locations for the deployment of caches, the points of end-user demand, and the links connecting them. Each link $e\in \cal E$ has a maximum capacity of $b_e$ bytes/sec and each transferred byte induces $d_e$ units of monetary, delay or other type of cost. We assume that there is a total storage budget of $B$ bytes that can be allocated to nodes in $\mathcal{V}$, which in some cases is node-specific. The deployment of a cache at node $v$ induces $c_v$ cost per storage unit. Each node $v$ generates requests for files in the catalog $\mathcal{N}$ at a rate $\lambda_{vn}\geq 0$, $n\in\mathcal{N}$. This may correspond to the aggregate demand of several users which are accessing the network via node $u$.

We consider the \emph{anycast} model, where a file can be retrieved from any node that has a replica of it. We denote with $\mathcal{P}_{v,u}$ the set of paths\footnote{The paths can be given exogenously, i.e., being precomputed and used as parameters in the CN problem; or they can be calculated as part of the problem. } that can be used to transfer files from node $v$ to $u$. The set of all paths in the network is $\mathcal{P}$, and to facilitate notation we also define the aggregate cost $d_p=\sum_{e\in p}d_e$ of each path $p\in \cal P$. In summary:

\begin{opt}{Caching Network (CN)}
A caching network is a network graph $G=({\cal V}, {\cal E}, \bm{c}, B, \bm{\lambda}, \bm{b}, \bm{d})$ where:
\vspace{-1mm}
\begin{itemize}
\item ${\cal V}$ is a set of nodes, 
\item ${\cal E}$ is a set of directional links,
\item $\bm{c}=(c_v, v\in\mathcal{V})$ is the storage cost vector (\$/byte),
\item $\bm{B}=(B_v,v\in\mathcal{V})$ is the vector of cache capacities (bytes), 
\item  $\bm{\lambda}=(\lambda_{u,n}, v\in\mathcal{V}, n\in\mathcal{N} )$  is the demand matrix (requests/sec),
\item $\bm{b}=(b_e, e\in\mathcal{E})$ is the link capacity vector (bytes/sec), and
\item $\bm{d}=(d_e, e\in\mathcal{E})$ is the routing cost vector (\$/byte/sec).
\end{itemize}
\end{opt}

We can obtain certain caching networks from the above CN model as follows. The set of links ${\cal E}$ can be used to determine a specific graph structure, such as a bipartite or tree network; setting $b_e=\infty, \forall e$ makes the CN  \emph{link-uncapacitated}; and if $c_v=0, \forall v\in\mathcal{V}$ and/or $d_e=0, \forall e\in\mathcal{E}$ then the storage and routing costs become irrelevant. Finally, in some CNs the caches can be deployed only in a subset $\mathcal{V}_{c}\subset \cal V$ of nodes, and the requests emanate from certain nodes $\mathcal{V}_r\subset \cal V$. These can be captured by setting $\lambda_{v,n}=0$ for $v\in {\cal V}\setminus \mathcal{V}_{r}$, and $B_v=0$ for $v\in {\cal V}\setminus \mathcal{V}_{c}$, respectively.

\subsection{CN Optimization}

The full scale version of the CN design and management problem, henceforth referred to as DCR, includes the cache deployment, content caching and routing decisions. In detail, the CN-related decisions can be classified as follows:
\begin{itemize}
	\item	Cache Deployment: decide at which nodes to deploy the caches. This is also referred to as \emph{storage deployment}.		
	\item	Cache Dimensioning: select the storage capacity of each cache.	
	\item	Content Caching: decide which files will be placed at each cache. This is also referred to as the (proactive) \emph{caching policy}.
	\item	Content Routing: determine which cache will satisfy each request, and which paths will be employed to route files to requesters. These decisions constitute the \emph{routing policy}.	
\end{itemize}
Clearly, one can consider additional decisions involved in the design of CNs as, for example, dimensioning the network links or the servicing capacity of each cache server. Due to the different time-scale of these decisions, and not least due to the complexity of DCR, it is not common to optimize them jointly; with few notable exceptions such as \cite{BektasCOR07}, \cite{laoutarisComNet05}.

\begin{figure}
	\centering
	\includegraphics[width=0.8\textwidth]{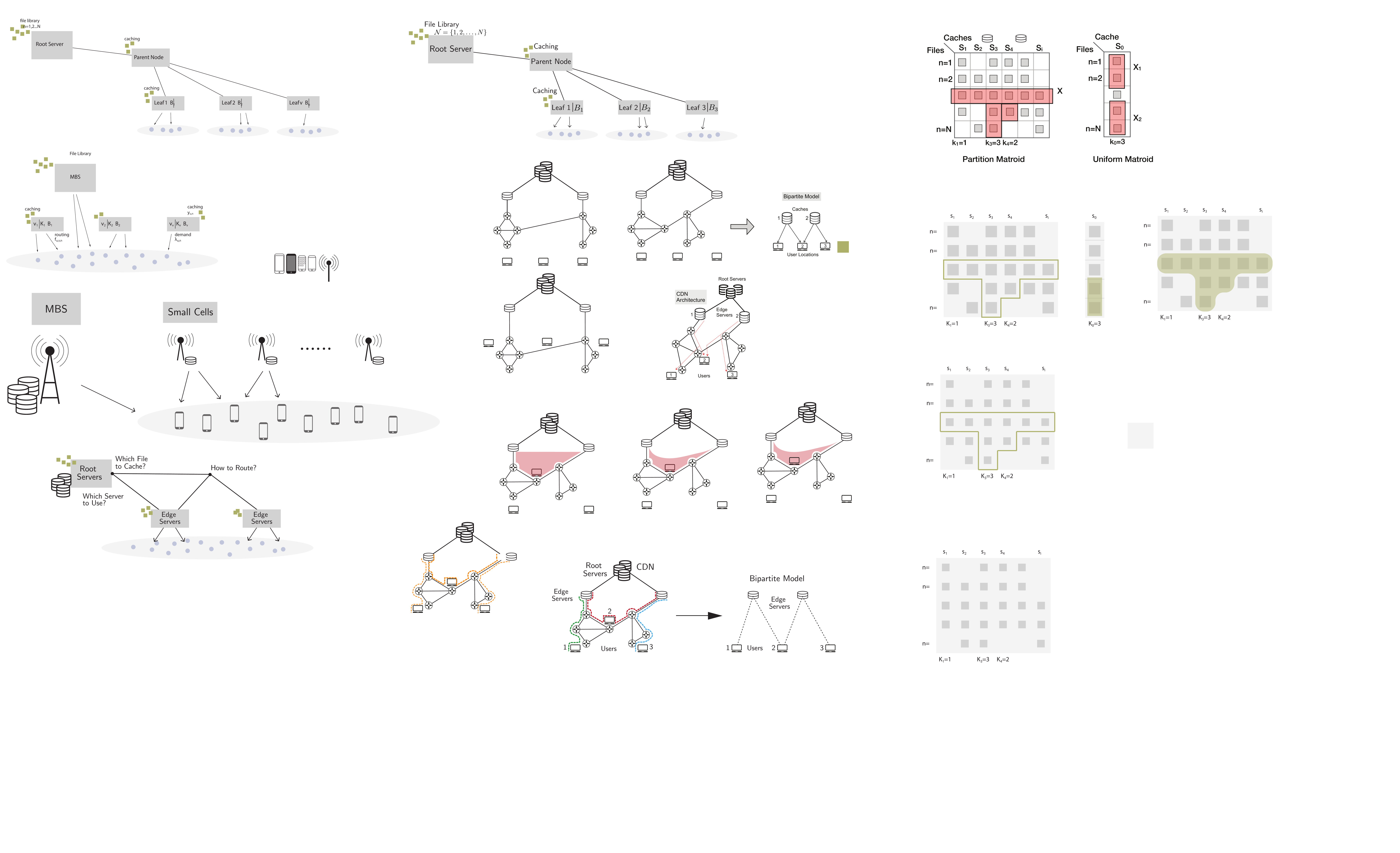}
	\caption{\small{Decisions in Caching Networks. \textbf{Small time scale}: at which server to cache? and from which server to fetch the file (routing)? The caching and routing decisions are inherently coupled, as a request can only be fetched from a server having the requested file. \textbf{Large time scale}: where to place servers, and how to dimension the server storage and the capacities of the  links connecting them?}}
	\label{fig:hierarchical}    
\end{figure}

\emph{Cache Dimensioning}. Each node $v\in \mathcal{V}$ can be equipped with a cache of size $x_v\geq 0$ storage units, and we define the vector $\bm{x}=(x_v,\,v\in \mathcal{V})$ for the network. Note that $x_v$ has been defined here in such a way that it simultaneously encompasses the deployment and dimensioning of cache $v$, but these decisions can be also considered separately. Vector $\bm{x}$ must satisfy the \emph{storage budget constraint}:
\begin{equation}
	\sum_{v\in \mathcal{V}} x_{v} \leq B, \label{eq:storage-budget-constraint}
\end{equation}
which might arise for economic or other reasons.\footnote{Dimensioning decisions are increasingly important because softwarization and cloud technologies allow us today to modify the size of a cache over time in a flexible manner.
} Then, the set of eligible cache deployment decisions can be defined as follows:
\begin{equation}
	\mathcal{X}=\{ \bm{x}\in \mathbb{R}_{+}^V \, \vert \, (\ref{eq:storage-budget-constraint})  \}.
\end{equation}

\emph{Content Caching}. We introduce the caching variables $y_{v,n}\in\{0,1\}$ which determine if file $n$ is placed at node $v$ ($y_{v,n}\!=\!1$), or not. The vector $\bm{y}=(y_{v,n}:v\in \mathcal{V}, n\in\mathcal{N})$ specifies the caching policy, which belongs to set:
\begin{equation}
	\mathcal{Y}=\{\bm{y}\in\{0,1\}^{V\times N}\}.
\end{equation} 
Clearly, the caching decisions must satisfy the \emph{cache size constraint}:
\begin{equation}
	\sum_{n\in\mathcal{N}} y_{v,n} \leq x_v,~~\forall v\in \mathcal{V},
\end{equation}
in order to ensure that files are cached only at nodes where storage has been deployed, and that their aggregate size does not exceed the cache capacity.

\emph{File Routing}. The routing is decided by variables $f_{p,n}\in\{0,1\},~p\in \mathcal{P}_{v,u}$ which determine whether path $p$ will be used to transfer file $n$ from cache $v$ to user $u$ ($f_{p,n}\!=\!1$). The routing policy is denoted $\bm{f}=( f_{p,n}:p\in\mathcal{P},n\in\mathcal{N})$. This definition corresponds to the \emph{unsplittable} traffic model. Another option is to allow \emph{splitting} the traffic in an arbitrary fashion among paths, i.e., $f_{p,n}\in [0,1]$. When demand is \emph{inelastic}, i.e., all requests must be satisfied, the following \emph{file delivery constraint} must be enforced:
\begin{equation}
\sum_{v\in \mathcal{V}}\sum_{p \in \mathcal{P}_{v,u} } f_{p,n} = 1,~~\forall u\in \mathcal{V},\,n\in\mathcal{N}. \label{eq:content-delivery}
\end{equation}
In case the link costs represent delay, one could consider a quality-of-service (QoS) requirement for delivering the files within a certain delay window $D_{th}$. This can be ensured by imposing the \emph{QoS constraint}:
\begin{equation}
f_{p,n}d_{p} \leq D_{th},~~\forall p\in \mathcal{P}, n\in\mathcal{N}. \label{eq:Qos-constraint}
\end{equation}
Note that if $f_{p,n}$ is continuous, then (\ref{eq:Qos-constraint}) needs to include an indicator function that does not allow a non-zero flow unless the path has an admissible delay. Finally, the total traffic over each link cannot exceed its capacity, hence we have the \emph{routing constraint}:
\begin{equation}
\sum_{v,u\in V}\sum_{p\in \mathcal{P}_{v,u}:e\in p}\sum_{n\in \mathcal{N}} f_{p,n}\lambda_{u,n} \leq b_e,~~\forall e\in \cal E. \label{eq:routing-constraint}
\end{equation}
Collecting the above constraints, we define the set of eligible routing decisions:
\begin{align}
	\mathcal{F}=\{\bm{f}\in\{0,1\}^{P\times N} \, \vert \, (\ref{eq:content-delivery}); (\ref{eq:Qos-constraint});(\ref{eq:routing-constraint}) \}.
\end{align}

One key property of DCR is the coupling among the caching and routing decisions. In particular, a request should not be routed to a node that does not have cached the requested file. This can be enforced with the \emph{file availability} constraint:
\begin{equation}
	f_{p,n}\leq y_{v,n}\,\,\,\,\forall p\in\mathcal{P},\,n\in\mathcal{N},\,u\in\mathcal{V},
\end{equation}
that allows $f_{p,n}$ to take non-zero values only if $y_{v,n}=1$. We will see in Sec. \ref{sec4:arbitrary} that this constraint can be omitted if the cost of a cache miss is incorporated in the objective. 

The objective function of DCR, $J(\bm{f}, \bm{x})$, captures the cost of deploying the caches and routing the files. A typical case is when these costs are linear:
\begin{equation}
J(\bm{f}, \bm{x})=\beta_1\underbrace{\sum_{u,v\in \mathcal{V}}\sum_{p\in \mathcal{P}_{v,u}}\sum_{n\in \mathcal{N}}f_{p,n}\lambda_{u,n}d_p}_{\text{routing cost}}  + \beta_2\underbrace{\sum_{v\in \mathcal{V} }x_vc_v}_{\text{deploym. cost}}\,,
\end{equation}
where parameters $\beta_1,\beta_2\geq 0$ tune the relative importance of the two cost components. Clearly, there are several other options for $J(\cdot)$. For example, one can include the caching operational expenditures which increase with the volume of stored files; and the cost $d_p$ might be a function of the amount of data traversing path $p$.

We can now formulate the joint cache deployment, file caching and routing problem:  
\begin{opt}{Deployment, Caching, and Routing (DCR)}
\vspace{-0.2in}
\begin{align}
	& \min_{\bm{f}\in \mathcal{F},\, \bm{x}\in\mathcal{X},\, \bm{y}\in\mathcal{Y} }\,\,\, J(\bm{f}, \bm{x}) \nonumber\\
	\text{s.t.}\,\,\,\,\,\,\,\,\,	&\,\,\, \sum_{n\in\mathcal{N}} y_{v,n} \leq x_v,\,\,\,\, \forall v\!\in\! \mathcal{V}; \label{eq:cache-storage-coupling} \\
	&\,\,\,\, f_{p,n}\leq y_{v,n},\qquad\forall\, u,v\in\mathcal{V},\,\,\, p\!\in\! P_{v,u},\,n\!\in\!\mathcal{N}. \label{eq:routing-caching-coupling}
\end{align}
\vspace{-3mm}
\end{opt}
\noindent In the general case and assuming linear cost functions, DCR is a Mixed Integer Linear Program (MILP) that is NP-hard to solve optimally, e.g. see \cite{bateni-uhcfl} for a reduction from the unsplittable hard-capacitated facility location problem. Furthermore, most DCR instances are solved for networks with hundreds or thousands of nodes, thousands or millions of files, and possibly many paths. Therefore, most often we focus on polynomial-time approximation algorithms or employ greedy algorithms or other heuristics that perform well in practice.

In the sequel we discuss certain important instances of DCR. First, we ignore the caching decisions and consider the storage deployment problem which can be studied separately or jointly with routing. This is a network design problem, solved at a coarse time scale. We next introduce the seminal femtocaching problem and study how to optimize caching policies on bipartite network graphs. We continue with capacitated bipartite networks, and then analyze joint routing and caching policies for hierarchical networks. We conclude with general CN problems involving arbitrary network graphs, congestible links, and non-linear objective functions.


\section{Design of Caching Networks} \label{sec:cn-design}

The design of CNs consists in  selecting at which nodes to deploy the caching servers, and how to assign the user demand to them. In some cases, it is also necessary to dimension the storage or servicing capacity\footnote{The servicing capacity constraint might arise for practical reasons, e.g., due to computing capacity or I/O limitations of the servers, or might be imposed by the system designer to achieve load balancing.} of the deployed caches. This problem is more challenging to solve in the presence of such servicing constraints; and when the network links are capacitated or induce load-dependent routing costs. In these cases the deployment decisions for the different caches are intertwined, and it does not suffice to consider shortest-path routing.

A meaningful design criterion is to minimize the cache deployment costs, including also the routing costs when relevant, for the expected demand $\bm{\lambda}$. In other cases we might need to consider QoS constraints that bound the delay for every delivered file, or we might decide to minimize the maximum delay experienced by any user. In sum, the CN design criteria can be broadly classified in two categories: \emph{(i)} those that optimize aggregate quantities (e.g., total cost); and \emph{(ii)} those that bound worst-case performance (e.g., maximum delay).


CN design problems can be solved by employing \emph{facility location theory} \cite{FL-book-1, FL-book-2} which studies the optimal deployment of facilities based on certain cost criteria. Although the problem of facility location is primarily encountered in the context of deploying warehouses for supply chains, the deployment of caching servers is not so different. In particular, caches for the Internet play the role of warehouses since they store content (product) before its delivery to the end users. Hence, their location is of extreme importance for optimizing the content delivery service (in terms of delay or routing costs), in a similar way that the location of warehouses shapes the performance of a supply chain. 

There is a vast literature studying the different FL problem variants. A basic criterion for their taxonomy is whether the number of  facilities is predetermined ($k$-center and $k$-median problems) or not (fixed charge problems). Also, the serving capacity of each facility can be bounded (hard-capacitated) or not (uncapacitated), and the demand of each point can be served concurrently by many facilities ({splittable} demand) or only by one facility ({unsplittable}). Any combination of these features creates a different FL problem, which are typically NP-hard to solve optimally as they generalize the set cover problem.


\begin{box_example}[detach title,colback=blue!5!white, before upper={\tcbtitle\quad}]{Facility Location Problems}
\small
\noindent FLP can either optimize maximum distance, e.g., k-center problem, or average distance criteria, e.g., k-median and fixed charge problems. We refer the reader to \cite{FL-book-3, tardosSTOC97} for a detailed taxonomy.

\vspace{1.5mm}

\noindent $\bullet$ \emph{k-center}. Given a set $\mathcal{V}$ of locations and distances $\mathcal{P}$, open $k$ facilities and select the demand-facility assignments $\mathcal{F}$, to \textbf{minimize the maximum distance} $P_{v,u}$ for any pair $(v,u)$ with $f_{v,u}=1$.

\vspace{1.5mm}

\noindent $\bullet$ \emph{k-median}. Given a set $\mathcal{V}$ of locations and distances $\mathcal{P}$, open $k$ facilities and select the demand-facility assignments $\mathcal{F}$, to \textbf{minimize the aggregate distance} $\sum_{v,u}P_{v,u}$ for all $(v,u)$ pairs with $f_{v,u}=1$.

\vspace{1.5mm}

\noindent $\bullet$ \emph{Uncapacitated fixed-charge}. Given a set $\mathcal{V}$ of locations, distances $\mathcal{P}$, and deployment costs $\bm{c}$, decide how many and which facilities to open, and select the assignments $\mathcal{F}$, to \textbf{minimize the aggregate distance} and \textbf{deployment costs}. 
\end{box_example}

There are several algorithms for solving the different FLP instances. Starting from the $k$-center problem, a greedy algorithm which iteratively opens a new facility at a location that is farther away from the existing ones, ensures cost no more than twice the optimal, i.e., guarantees a 2-approximation solution. It has been shown that there is no polynomial time algorithm that can achieve a better ratio \cite{shmoys-k-center}. The $k$-median problem is more intricate. It is known that it cannot be approximated better than $1+2/e \approx 1.73$ \cite{jain2002new}. The first constant approximation algorithm was based on LP-rounding and achieved  a $6\frac{2}{3}$ ratio \cite{shmoys-k-median-2}, subsequently reduced to 4 in \cite{vazirani-k-median}. An interesting approach was presented in \cite{arya2004local}, where a local search algorithm iteratively swaps $p$ of the $k$ facilities, and selects the minimum cost configuration. This was shown to provide a $3+2/p$ approximation, but the complexity increases fast with $p$. This algorithm has been extensively used in caching where typically $p\leq 3$ \cite{jamin2000placement, RossComCom02, Jsourlas11}. The approximation ratio has been improved to  $(1+\sqrt{3}+\epsilon)$ \cite{li-k-median} using a \emph{pseudo-approximation} technique, i.e., by disregarding the effect of the number of facilities in the approximation. 


For the fixed-charge problems, the complexity depends on whether there is an underlying metric space or not. The non-metric problems are  NP-hard to approximate. For metric settings, the \emph{Uncapacitated Facility Location} (UFL) instance attains a ratio no-less than 1.463 \cite{GuhaGreedy99}, while \cite{ShiLiFLP13} proposed a 1.488-approximation algorithm. A practical 1.861-approximation algorithm is given in \cite{jain2002new}, that uses the concept of budget for locations which are bidding to open facilities. The capacitated variant (CFL) is {NP-hard to approximate} when the demand is unsplittable \cite{bateni-uhcfl, tardosSTOC97}; admits a 5.83-approximation algorithm if demand is splittable \cite{zhang-capacitated} that can be further improved to 5 for uniform opening costs \cite{shmoys-cflp12}. See also \cite{vazirani-general-algo} for a technique that employs UFL algorithms for solving soft-capacitated FLPs, where multiple capacitated facilities can be deployed in each location. The above algorithms constitute a very useful toolbox for formulating and solving the various CN design problems. We study some representative instances below.









\subsubsection{Deployment Formulations}


We start with problem (D1). First, observe that the caching decisions $\bm{y}$ can be dropped, and this reduces substantially the dimension of the problem. Second, we are interested here in the aggregate demand $\lambda_u$ of each user $u\in \mathcal{V}_r$, instead of the per-file requests $\lambda_{u,n}$. Third, if the network links are uncapacitated and non-congestible, i.e., cost $d_e$ does not depend on the amount of data routed over link $e\in\mathcal{E}$, then the routing policy can be replaced by assignment decisions $f_{v,u}\in\{0,1\}, \forall v,u\in\mathcal{V}$, where $f_{v,u}\!=\!1$ selects the shortest path $p^\star\in\mathcal{P}_{u,v}$, which has delay $d_{v,u}=d_{p^\star}$. Finally, without loss of generality, we assume below that all caches have the same size $B$ and that only one cache can be deployed at each location.

Under these assumptions, the eligible routing and storage deployment policies for (D1) are:
\begin{align}\small
	\mathcal{F}_{D_1}\!=\!\left\{\!\bm{f}\in\{0,1\}^{V\times V} \Big| \! 
	\begin{array}{l}
	\sum_{v\in \mathcal{V}} f_{v,u}= 1,\,\forall\,u\in \mathcal{V}\\
	f_{v,u}=0\,\, \text{if}\,\,\mathcal{P}_{u,v}=\emptyset
	\end{array}\!\right\},	\mathcal{X}_{D_1}=\left\{ \bm{x} \in\{0,1\}^V \right\} \nonumber
\end{align}
and the respective problem is formulated as follows:   
\begin{opt}{Deployment (D1)}
	\vspace{-0.2in}
\begin{align}
	&\,\,\, \min_{\bm{f}\in\mathcal{F}_{D_1},\,\,\, \bm{x}\in\mathcal{X}_{D_1} }\,\,\, \sum_{v\in V}x_vc_v +\sum_{v,u \in \mathcal{V}}f_{v,u}\lambda_ud_{v,u}  \label{eq:D1-objective}\\
	\text{s.t. } &\,\,\,\,\,\,f_{v,u}\leq x_v, \,\,\,\,\forall\,v,u\in\mathcal{V}.
\end{align}
	\vspace{-0.25in}
\end{opt}
\noindent It is interesting to note that once $\bm{x}$ is decided, then the optimal $f$ for (D1) can be found by assigning each request to the cache offering the minimum routing cost. However, when there is a servicing capacity $F_v$ at each facility $v$,
\begin{equation}
	\sum_{u\in\mathcal{V}}f_{v,u}\lambda_{u}\leq F_v,\,\,v\in\mathcal{V},
\end{equation}
the routing policy is not trivially specified by the cache deployment. For example, the closest cache to a node might be reachable only via highly congested links. Problem (D1) can be mapped to the UFL problem, see Fig. \ref{fig:facility}, and hence we can employ the above approximation algorithms for its solution. 

\begin{figure}
	\centering
	\includegraphics[width=0.43\textwidth]{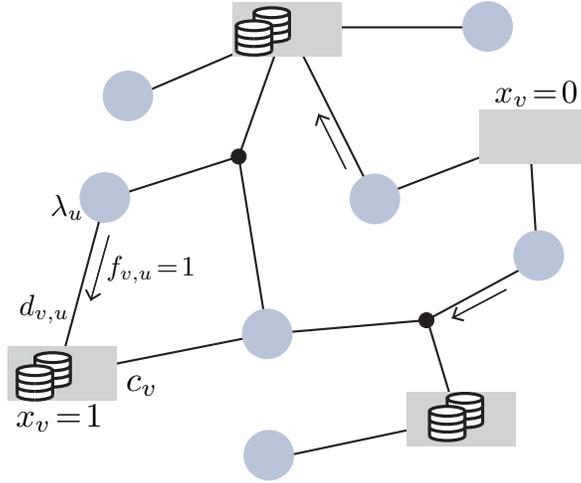}
	\caption{\small{The cache deployment as a facility location problem. Each node $v$ represents an eligible location for opening a facility, i.e., deploying a cache ($x_v=1$) with cost $c_v$. The files must be routed from open facilities to each requester $u$ ($f_{v,u}=1$), paying the respective routing cost $d_{v,u}$ for each unit of demand $\lambda_u$ at node $u$.
	}}
	\label{fig:facility}    
\end{figure}

A different version of this problem arises if there is no cache deployment cost but, instead, an upper bound $B$ on the number of caches we can deploy.\footnote{In other formulations, $B$ can capture the total amount of storage available for deployment in the different caches; or we might have a vector of such bounds $\bm{B}=(B_v, v\in\mathcal{V})$, one for each cache.} This is a classical $k$-median problem:
\begin{opt}{Deployment (D2)}
	\vspace{-0.2in}
	\begin{align}
		&\,\,\, \min_{\bm{f}\in\mathcal{F}_{D_2},\,\,\, \bm{x}\in\mathcal{X}_{D_2} }\,\,\, \sum_{u,v\in \mathcal{V}} f_{v,u}\lambda_ud_{v,u}  \label{eq:D2-objective}\\
		\text{s.t. }\,\,\,\, &f_{v,u}\leq x_v, \,\,\,\,\forall\,v,u\in\mathcal{V}.
	\end{align}
	\vspace{-0.25in}
\end{opt}
\noindent where $\mathcal{F}_{D_2}= \mathcal{F}_{D_1}$ and: 
\begin{equation}
	\mathcal{X}_{D_2}=\{ \bm{x}\in\{0,1\}^V\,\vert\, \sum_{v\in \mathcal{V} } x_{v}\leq B \}. \nonumber
\end{equation}
An interesting twist of (D2) appears when we wish to minimize the worst delivery cost. In this case, the objective includes (a function of) this cost variable, denoted $J_{th}$, and the set $\mathcal{F}_{D_2}$ must be amended to include the constraints: 
\begin{equation}
	\sum_{v \in\mathcal{V} }f_{v,u}\lambda_ud_{v,u}\leq J_{th},\,\,\forall u\in\mathcal{V}, \nonumber
\end{equation}
which ensures that no active route will exceed the selected value for $J_{th}$. 

Finally, in some CNs the caches need to sync with each other, periodically or on-demand, in order to ensure content consistency. This synchronization cost will be an additional component in the objective function and can be expressed as a Steiner tree connecting the caches. This is a suitable choice, for example, when we wish to minimize the bandwidth consumption or the delay of synchronization. This problem is a generalization of the connected facility location problem (CoFL), see \cite{guptaSTOC03}, \cite{swamyCFLP04}. CoFLP was introduced in \cite{guptaSTOC01}, and \cite{eisenbrand08} provided a randomized algorithm with an expected approximation ratio of $4$; \cite{BaevSIAM08} uses the term \emph{connected data placement} to describe this problem in the context of CNs. In another case the CN designer might wish to bound the synchronization delay $D_s$ for any pair of installed caches. This introduces the following constraint:
\begin{equation}
	x_{v_1}\cdot x_{v_2}\cdot d_{v_1, v_2}\leq D_{s},\,\,\forall\,v_1,v_2\in \mathcal{V}. \nonumber
\end{equation}
Such non-linear expressions in constraints or the objective compound further the optimization problem, but in certain cases they can be linearized, see \cite{Bektas} for an example.

\subsubsection{Discussion of Related Work}


The CN design problem has been extensively studied in the context of CDNs, see \cite{SahooSurvey17}. One of the first related studies is \cite{Li:Infoc99} which finds the delay-minimizing cache deployment when routing costs are fixed. The work \cite{qiu_replicas01} formulates an uncapacitated $k$-median CN problem to minimizes the delay. The uncapacitated variant of FLP is used in \cite{SmaragdInfocom07}, while \cite{bateni-uhcfl} analyzes the unsplittable hard-capacitated case. The work in \cite{DRazJSAC02} considers the cache synchronization cost, showing that increasing the number of caches beyond a certain threshold induces costs that surpass benefits. The same problem is formulated as a CoFLP in \cite{KazCNSM13}. Finally, \cite{GeorgiadisTPDS06} models the problem of server deployment as a splittable soft-capacitated FLP with path selection.
The CN design problem is simplified when the network is a tree graph (contains no cycles). For example, \cite{RobertTPDS08} considers the capacitated cache deployment problem in trees and formulates an ILP for the case of splittable and non-splittable requests; \cite{krishnan-ToN00} models the cache deployment as a {$k$-median} problem; and \cite{GuhaFOCS00} studies hierarchical caching with heterogeneous caches.


\section{Bipartite Caching Networks}\label{chapter4:bipartite}


\begin{figure}[t!]
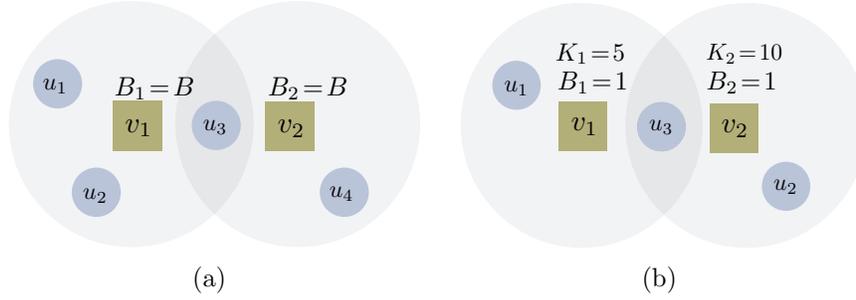

	\centering
	\subfigure[]{\includegraphics[width=5.5cm, height=3.3cm]{Femtocaching-example-new3}} \,\,\,
	\subfigure[]{\includegraphics[width=5.5cm, height=3.3cm]{TCOM-system-example-new3}}
	\caption{\small{\textbf{(a)}: \textbf{Impact of overlapping SBS areas}. If each SBS can cache $B$ files, users $u_1$ and $u_2$ prefer SBS $v_1$ to cache the $B$ most popular files; and similarly user $u_4$ for $v_2$. However, $u_3$ would prefer $v_1$ to cache the $B$ popular files and $v_2$ the second $B$ most popular. \textbf{(b)} \textbf{Link capacities impact caching policy}. Each SBS can cache at most one of the two files $n_1$ and $n_2$. The demands are $\lambda_{u_1, n_1}\!=\!1, \lambda_{u_2, n_1}\!=\!2, \lambda_{u_3, n_2}\!=\!10$. Assuming $v_1$ can serve $5$ requests and $v_2$ $10$ requests due to bandwidth constraints, the policy that maximizes the SBS hit ratio places $n_1$ to $v_1$ and $n_2$ to $v_2$. This leaves only 2/13 requests for the MBS. If however there were no bandwidth limitations, then the optimal caching policy places $n_2$ to $v_1$ and $n_1$ to $v_2$. Hence, adding the capacity limitation changes completely the optimal solution.}}
	\label{fig:bipartite-example}
\end{figure}


We turn now our focus to the design of caching and routing policies for bipartite network graphs where a set of caches serve a set of users. The bipartite model can capture a range of general caching networks, and has been recently used for the design of wireless edge caching systems. Namely, \cite{golrezaei2012femtocaching} proposed the \emph{femtocaching} architecture where edge caches are placed at small cell base stations (SBSs) that underlay a macrocellular base station (MBS), and proactively cache files which are then delivered to closely-located users. Femtocaching reduces network expenditures as it saves the expensive MBS capacity; alleviates the congestion at the SBS backhaul links; and improves the user experience since it employs low-latency energy-prudent links. 

 
 
The main goal in femtocaching design is to devise a caching policy that maximizes the hit rate of the SBS caches, or minimizes the file delivery delay by matching users to closest SBSs. These problems are equivalent when all wireless links have equal delay, and become NP-complete to solve optimally when the SBSs have overlapping coverage, Fig.~\ref{fig:bipartite-example}(a). An important version of this problem arises in massive demand (or, ultra-dense) scenarios where the SBS wireless capacity can be drained by the user requests. These limitations affect the effectiveness of caching, see Fig. \ref{fig:bipartite-example}(b), and require the joint derivation of user association and caching policies. We start below with the typical femtocaching model in Sec.~\ref{sec:uncap_femto} 
 and then we proceed to the analysis of its capacitated version in Sec.~\ref{sec:capacitated-femtocaching}. 



\begin{figure}[h!]
	\centering
	\includegraphics[width=0.9\textwidth]{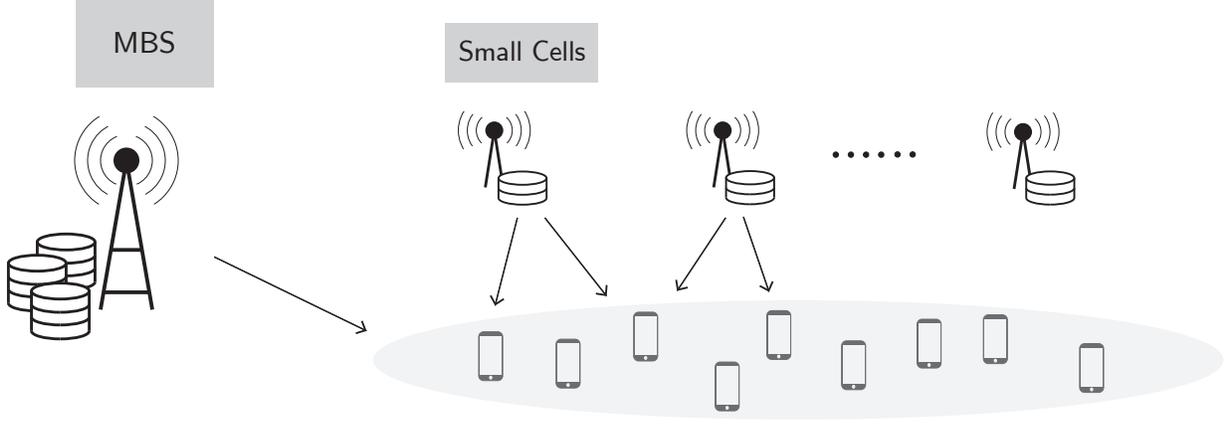}
	\caption{\small{Femtocaching system model: a set of users in range with cache-enabled small cell base stations that underlay a macrocellular base station where the root servers are located.}}
	\label{fig:femto_system_model}
\end{figure}

\subsection{The Femtocaching Problem}\label{sec:uncap_femto}

Consider an area where wireless users place random file requests to a set of SBSs which serve as dedicated content distribution nodes, \cite{femtocachingTransIT13}, \cite{golrezaei2012femtocaching}. The caches have limited storage capacity and transmission range, imposing constraints both on the file placement and the network connectivity. We assume there is an MBS which stores the entire file library and can serve all users, Fig. \ref{fig:femto_system_model}. In this context, the caching policy design can be formulated as follows: for a given file popularity distribution, cache capacity and network topology, \emph{how should the files be placed at the caches such that the average sum downloading delay of all users is minimized}? We next provide the detailed model and solution approach.



\subsubsection{Problem Definition}

Consider a caching network $G=({\cal V}, {\cal E}, \bm{c}, \bm{B}, \bm{\lambda}, \bm{b}, \bm{d}
)$, where nodes are partitioned to user locations $\mathcal{V}_r\subset \mathcal{V}$ and caching servers $\mathcal{V}_c=\mathcal{V}_{c}^0\cup \{0\}$, where element $\{0\}$ denotes the MBS and $\mathcal{V}_{c}^0$ the SBSs. Costs are irrelevant $c_v=0, v\in\mathcal{V}_c$, caches have equal capacity $B_v=B, v\in\mathcal{V}_{c}^0$ measured in number of files they fit, and link capacities are unbounded ($b_e=\infty, e\in\mathcal{E}$).


The existence of a link $e=(u,v)\in\mathcal{E}$ denotes that cache $v\in\mathcal{V}_c$ is reachable by user $u\in\mathcal{V}_r$, i.e., a wireless communication link connects them.\footnote{We assume the SBSs and the MBS operate in disjoint channels, hence there is no interference in these links.} The network graph is bipartite because no links exist between nodes of the same set. For convenience we also use $\mathcal{V}_c(u)$ and $\mathcal{V}_r(v)$ to denote the set of caches in range with user $u$, and the set of users in range with cache $v$, respectively. Also, we will be using $\mathcal{V}_{c}^{0}(u)=\mathcal{V}_{c}(u)\setminus\{0\}$ for each user $u$. The network performance is characterized by the average delay $d_{v,u}$ for delivering one byte from each cache $v$ to each connected user $u$, and we assume the MBS has the worst performance, i.e., $d_{0,u}\geq d_{v,u},\,\forall v\in\mathcal{V}_c$. Finally, the demand is considered homogeneous across users, i.e., $\lambda_{u,n}=\lambda_{n}, \forall u,n$. We revisit this assumption in Sec. \ref{sec:capacitated-femtocaching}.


The caching policy is described by the binary variables $y_{v,n}\in\{0,1\}$ which determine whether file $n$ is cached at SBS $v$. The set of eligible policies is: 
\begin{equation}
	\mathcal{Y}_{C_1}=\{\bm{y}\in \{0,1\}^{V_{c}^0\times N}\, \vert\, \sum_{n\in\mathcal{N}}y_{v,n}\leq B,\,\,v\in\mathcal{V}_{c}^0 \}.
\end{equation}
Let us denote with $\mathcal{M}_{v}$ the set of cached files at cache $v$. Taking the perspective of user location $u$, we denote with $\sigma_{u(j)}$ the index of the cache with the $j$th smallest delay for that user. That is, $\sigma_u$ is a permutation that ranks the $\mathcal{V}_c(u)$ caches in increasing delay: $d_{\sigma_{u(1)},u}\leq \ldots \leq d_{\sigma_{u(|\mathcal{V}_{c}(u)|)},u}\leq d_{0,u}$. Therefore, the average delay that $u$ experiences under policy $\bm{y}$ is: 
\begin{equation}
D_{u}(\bm{y})= d_{0,u}\sum_{n\in\mathcal{N} }\mathbbm{1}(n,u,0)\lambda_n + \sum_{j\in\mathcal{V}_{c}^{0}(u)}d_{\sigma_{u(j)},u}\sum_{n\in\mathcal{N}}\mathbbm{1}(n,u,j)\lambda_n\,, \label{eq:femto-delay1}
\end{equation}
where $\mathbbm{1}(n,u,j)$ is an indicator function taking value 1 if and only if under caching policy $\bm{y}$ file $n$ is placed at cache $\sigma_{u(j)}$, and not at any of the caches $i=1,2,\ldots, \sigma_{u(j-1)}$; and $\mathbbm{1}(n,u,0)$ denotes that the file is not cached in $\mathcal{V}_{c}^{0}(u)$. That is:
\begin{equation}
\mathbbm{1}(n,u,j)\!=\!\Big[\prod_{i=1}^{j-1}(1-y_{n,\sigma_{u(i)}} ) \Big] y_{n,\sigma_{u(j)}},~~
\mathbbm{1}(n,u,0)\!=\!\Big[ \prod_{i\in \mathcal{V}_{c}^{0}(u)}\big( 1-y_{n,\sigma_{u(i)}} \big) \Big]\,. \nonumber
\end{equation}
An equivalent formulation for (\ref{eq:femto-delay1}) that will be proved useful is:
\begin{align}  
D_{u}(\bm{y})=&\sum_{n\in \mathcal{N}_{ \sigma_{u(1)}}}\hspace{-0.15in}\lambda_nd_{\sigma_{u(1)},u} +\hspace{-0.052in} \sum_{n\in \mathcal{N}_{\sigma_{u(2)}}\setminus \mathcal{N}_{\sigma_{u(1)}}}\hspace{-0.2in}\lambda_nd_{\sigma_{u(2)},u} + \label{eq:femto-obj2} \\ &\sum_{n\in \mathcal{N}_{\sigma_{u(3)}}\setminus\{\mathcal{N}_{\sigma_{u(1)}}\bigcup \mathcal{N}_{\sigma_{u(2)}}\}}\hspace{-0.6in}\lambda_nd_{\sigma_{u(3)},u}+  \ldots
	+\hspace{-0.12in} \sum_{n\not\in \bigcup_{j=1}^{{V}_{c}^{0}(u)}\mathcal{N}_{\sigma_{u(j)}}}\hspace{-0.3in}\lambda_nd_{0,u}\,, \nonumber
\end{align}
where each term represents the aggregate delay experienced by $u$ when downloading files from the cache with the $j$th lowest delay, i.e., $v=\sigma_{u(j)}$, under the condition file $n$ is not cached in any other cache with smaller delay. 

Our goal is to minimize the average per-byte file delivery delay for all users, which can be equivalently expressed as maximizing the delay savings when using the SBSs instead of MBS: 
\begin{opt}{Uncapacitated Caching (C1)}
	\vspace{-0.2in}
	\begin{align}
		\max_{\bm{y}\in\mathcal{Y}_{C_1} }\qquad &J(\bm{y})=\sum_{u \in\mathcal{V}_r}\big( d_{0,u}- D_u(\bm{y}) \big) 	\nonumber 
	\end{align}
	\vspace{-0.2in}
\end{opt}
\noindent It is important to note that (C1) involves only caching decisions. This is possible because each user obtains the requested file from the cache that can deliver it with the smallest delay, and hence the routing decisions are directly determined by the caching policy. However, when the links are congestible or when there are hard capacity constraints, one needs to jointly (and explicitly) optimize the routing and caching decisions.


\subsubsection{Computational Intractability}

Given the theoretical and practical importance of the femtocaching problem, it is crucial to characterize its complexity. Unfortunately, (C1) is computationally intractable as the following theorem states.
\begin{thm}[theorem style=plain]{}{femto}
 Problem (C1) is NP-Complete.
\end{thm}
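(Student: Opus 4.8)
The plan is to establish NP-completeness by showing both membership in NP and NP-hardness. Membership in NP is immediate: given a caching assignment $\bm{y}$, we can verify in polynomial time that each cache stores at most $B$ files (checking $\mathcal{Y}_{C_1}$ feasibility) and compute the objective $J(\bm{y})$ by evaluating each $D_u(\bm{y})$ via the sorted delays $d_{\sigma_{u(j)},u}$; comparing against a threshold establishes the certificate check. The main work, and the hard part, is the NP-hardness reduction.

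For the hardness proof, I would reduce from a known NP-complete problem that captures the combinatorial essence of choosing overlapping subsets under a cardinality bound. The natural candidate is the \emph{2-Disjoint Set Cover} or, more directly, a maximum-coverage-style problem; however, a cleaner route exploits the intuition already flagged in Fig.~\ref{fig:bipartite-example}(a), namely that overlapping coverage regions force genuinely hard placement decisions. Concretely, I would reduce from \emph{Maximum Coverage} or from the decision version of the problem studied in the femtocaching literature. A standard and robust choice is a reduction from a variant of \textbf{Set Cover} / \textbf{Dominating Set}: build an instance of (C1) where each SBS corresponds to a set, each user to an element, the delay structure is chosen so that $d_{0,u}$ (the MBS delay) is large while all SBS delays are uniformly small, and cache capacity $B=1$. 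With $B=1$, each SBS caches exactly one file, so choosing which files to cache at which SBSs encodes the combinatorial selection; homogeneous demand $\lambda_n$ and a two-file or few-file library can be used to force the coverage structure. Maximizing the delay savings $J(\bm{y})$ then corresponds exactly to covering as many user-file requests as possible within the storage budget, so a ``yes'' instance of the covering problem maps to achieving a target objective value, and conversely.

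The key steps I would carry out in order are: (i) fix the source problem and its decision threshold; (ii) describe the polynomial-time construction of the bipartite graph $G$, the delays $d_{v,u}$, the popularity weights $\lambda_n$, and the capacity $B$; (iii) prove the forward direction, that a solution to the source instance yields a caching policy meeting the objective threshold; (iv) prove the reverse direction, that any policy achieving the threshold can be decoded into a valid solution of the source instance, possibly after a normalization argument showing that an optimal $\bm{y}$ can be assumed to cache distinct popular files greedily at the best-placed SBSs. The main obstacle will be step (iv): I must ensure the reduction is \emph{gap-preserving} in the exact sense, i.e., that the telescoping delay expression \eqref{eq:femto-obj2}, with its ``first cache that holds file $n$'' semantics, does not allow spurious objective gains from clever overlapping placements that do not correspond to covering solutions. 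Handling the interaction between the permutation $\sigma_u$ (which ranks caches by delay for each user) and the coverage accounting is the delicate part; I would neutralize it by choosing all SBS delays equal, so that ``which SBS holds the file'' becomes irrelevant to the savings and only ``whether some in-range SBS holds the file'' matters, collapsing \eqref{eq:femto-obj2} into a clean coverage objective.
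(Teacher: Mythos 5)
Your NP-membership argument is fine and matches the paper, and two of your gadget ingredients are exactly right: making all SBS delays equal so that \eqref{eq:femto-obj2} collapses to a pure coverage objective, and setting $B=1$ with a tiny file library so that each SBS must choose which single file to hold. The gap is in your choice of source problem. A reduction from plain \emph{Set Cover}, \emph{Maximum Coverage}, or \emph{Dominating Set} cannot go through, because all of those problems hinge on a cardinality budget on the number of sets (or vertices) selected, and (C1) has no such budget: every SBS carries its own free storage of size $B$, the objective is monotone in placements, so every cache is always filled to capacity and ``using'' an SBS costs nothing. There is simply no knob in a (C1) instance with which to encode ``at most $k$ sets are chosen,'' so your step (ii) construction cannot be completed for these source problems.

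What your $B=1$, two-file gadget actually encodes is not a selection of a sub-collection but a \emph{partition} (a two-coloring) of the SBSs: each cache holds file $1$ or file $2$, and asking that every user attain the maximum value forces every user to reach at least one cache of each color. That is precisely the \emph{2-Disjoint Set Cover} problem --- the candidate you named in your opening and then set aside --- and it is the reduction the paper uses: set $\mathcal{N}=\{1,2\}$, $B=1$, uniform delay savings $d_u$, popularity weights summing to $1$, and decision threshold $Q=V_r$, so that the objective reaches $V_r$ if and only if $\mathcal{A}_u=\{1,2\}$ for every user $u$, i.e., the caches holding file $1$ and the caches holding file $2$ form two disjoint set covers. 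With this source problem your worry in step (iv) about the permutation $\sigma_u$ and gap preservation evaporates, exactly as you anticipated, since only ``whether some in-range SBS holds file $n$'' matters. So the repair is to commit to 2DSC (NP-complete per the reference cited in the paper) as the source, rather than to a budgeted covering problem; as written, the main line of your hardness argument would fail at the construction step.
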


To prove this result, it suffices to consider a simplified version of the problem and use a reduction from the 2-disjoint set cover problem which is known to be NP-complete \cite{disjoint-proof}, and also prove that there is a polynomial-time verifier. In particular, consider an instance of (C1) where all delay parameters are equal, i.e., $d_{v,u}=d<d_{0,u}$, $\forall\,(v,u)\in\mathcal{E}$. In this case, we can define $d_u=d_{0,u}-d$ and rewrite (\ref{eq:femto-delay1}) as:
\begin{equation}
J(\bm{y})=\sum_{n=1}^N \lambda_n\sum_{u=1}^{V_r} d_u\mathbbm{1}(n,u),\, \text{where}\,\,\, \mathbbm{1}(n,\!u)\!=\!1-\prod_{v\in\mathcal{V}_{c}^{0}(u)}(1-y_{v,n})\,. \nonumber
\end{equation}
The indicator function $\mathbbm{1}(n,u)$ assumes value $1$ when at least one cache $v\in\mathcal{V}_{c}^{0}(u)$ has the file $n$ requested by user $u$. The last summation can be interpreted as the \emph{value} of each user, which is proportional to the probability of finding a file at the caches, multiplied by the respective delay savings. In order to prove our complexity claim, we consider the corresponding \emph{decision} problem. 

\textbf{Cache Decision Problem (CDP)}: Given a CN 
$G=({\mathcal{V}_c \cup \mathcal{V}_r},$ ${\cal E}, 0, \bm{B}, \bm{\lambda}, \infty, \bm{d})$, the catalog $\mathcal{N}$, the union $\mathcal{A}_u$ of the cached files at caches $\mathcal{V}_{c}^0(u)$ in range with $u$, and a number $Q\geq 0$, determine if there exists a policy $\bm{y}\in\mathcal{Y}_{C_1}$  such that:
\begin{align}
	\sum_{u \in\mathcal{V}_r} d_u\!\!\sum_{n\in\mathcal{A}_u}\!\lambda_n\geq Q\,. \label{eq:hlp-proof-object}
\end{align}
Let the above problem be denoted by $CDP({G}, \mathcal{N}, Q)$. Given a policy $\bm{y}$, we may verify in polynomial-time whether \eqref{eq:hlp-proof-object} is true, therefore $CDP$ is in NP. We will now use a reduction from the following NP-complete problem. 

\textbf{2-Disjoint Set Cover Problem (2DSC)}: Consider a bipartite graph $G_{sc}=(A,\Gamma,E)$ with edges $E$ between two disjoint vertex sets $A$ and $\Gamma$. For $b\in \Gamma$, define the neighborhood of $b$ as $\mathcal{N}(b)\subseteq A$. Clearly it is $A=\bigcup_{b\in \Gamma}\mathcal{N}(b)$. Do there exist two disjoint sets $\Gamma_1, \Gamma_2\subset \Gamma$ such that $|\Gamma_1|+|\Gamma_2|=|\Gamma|$ and $A=\bigcup_{b\in \Gamma_1}\mathcal{N}(b)=\bigcup_{b\in \Gamma_2}\mathcal{N}(b)$? Let this problem be denoted by $2DSC(G_{sc})$. 

2DSC is NP-complete \cite{disjoint-proof}, and we show in the following lemma that given a unit time oracle for the $CDP$ we can solve it in polynomial time.

\begin{lem}[theorem style=plain]{}{lem1}
2-Disjoint Set Cover $\leq_{P}$ Cache Decision Problem.
\end{lem}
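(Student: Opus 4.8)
The plan is to give an explicit polynomial-time mapping that turns an arbitrary instance $G_{sc}=(A,\Gamma,E)$ of 2DSC into an instance of $CDP$, and then to show that the constructed $CDP$ instance is a YES-instance precisely when $G_{sc}$ admits two disjoint set covers. The guiding idea is to identify the \emph{caches} with the \emph{sets} of the cover problem and the \emph{users} with the \emph{ground elements}, so that a caching policy becomes a labelling of the sets and ``user sees a file'' becomes ``some chosen set containing that element''.

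Concretely, I would build a bipartite caching network with user set $\mathcal{V}_r=A$ (one user per element of the universe), SBS set $\mathcal{V}_c^0=\Gamma$ (one SBS per set), plus the MBS $\{0\}$, and a link $(a,b)\in\mathcal{E}$ exactly when $a\in\mathcal{N}(b)$, so that $\mathcal{V}_c^0(a)=\{b\in\Gamma: a\in\mathcal{N}(b)\}$. I take a catalog of exactly two files $\mathcal{N}=\{1,2\}$, homogeneous unit demand $\lambda_1=\lambda_2=1$, uniform per-user savings $d_a=1$ for all $a\in A$ (all SBS links of equal delay, strictly below the MBS delay, so $d_a=d_{0,a}-d\geq0$), unit cache capacity $B=1$, and threshold $Q=2|A|$. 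Storage costs are zero and link capacities infinite, matching the $CDP$ tuple $G=(\mathcal{V}_c\cup\mathcal{V}_r,\mathcal{E},0,\bm{B},\bm{\lambda},\infty,\bm{d})$. This construction is computable in time polynomial in $|A|+|\Gamma|+|E|$.

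The core of the argument is a rewriting of the $CDP$ objective. Because $B=1$, every SBS caches at most one file, so any policy $\bm{y}\in\mathcal{Y}_{C_1}$ induces disjoint sets $\Gamma_1=\{b: y_{b,1}=1\}$ and $\Gamma_2=\{b: y_{b,2}=1\}$ (together with caches storing nothing). Since $\lambda_n=d_a=1$, the objective collapses to
\[
\sum_{a\in A} d_a \sum_{n\in\mathcal{A}_a}\lambda_n=\sum_{a\in A}|\mathcal{A}_a|\;\leq\;2|A|,
\]
with $\mathcal{A}_a\subseteq\{1,2\}$ the files found among the neighbours of $a$, and equality holding iff every user $a$ has at least one neighbour storing file $1$ and at least one storing file $2$, i.e.\ iff \emph{both} $\Gamma_1$ and $\Gamma_2$ cover $A$. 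This yields both directions of the reduction. If $G_{sc}$ has two disjoint covers, take a covering partition $\Gamma_1\cup\Gamma_2=\Gamma$, cache file $1$ at each $b\in\Gamma_1$ and file $2$ at each $b\in\Gamma_2$; every user then sees both files, the objective equals $2|A|=Q$, and $CDP$ is YES. Conversely, a policy attaining $\sum_a|\mathcal{A}_a|\geq Q=2|A|$ forces every $a$ to see both files, so $\Gamma_1$ and $\Gamma_2$ are disjoint covers of $A$, whence $G_{sc}\in 2DSC$.

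The single step that needs care, and the only place the argument could slip, is reconciling ``two disjoint covers'' with 2DSC's formally stronger requirement of a \emph{partition} $|\Gamma_1|+|\Gamma_2|=|\Gamma|$. I would handle this by the monotonicity of coverage under adding sets: any SBS that the policy leaves empty (so $b\notin\Gamma_1\cup\Gamma_2$) can be thrown into either cover without destroying coverage, which upgrades disjoint covers to a covering partition; symmetrically, a covering partition is in particular a pair of disjoint covers, so the two notions coincide. The remaining work is routine bookkeeping: verifying that the constructed tuple is a legal CN/$CDP$ instance and that the objective rewriting uses only $\lambda_n=d_a=1$ and $B=1$.
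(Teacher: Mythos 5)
Your reduction is essentially the paper's own: it uses the identical mapping (caches $=$ sets $\Gamma$, users $=$ elements $A$, catalog $\mathcal{N}=\{1,2\}$, $B=1$, uniform delays) and the same key observation that the threshold is met iff every user sees both files, forcing $\Gamma_1,\Gamma_2$ to be disjoint covers; the only cosmetic difference is your choice $\lambda_1=\lambda_2=1$, $Q=2|A|$ in place of the paper's $\bm{\lambda}=\bigl(\tfrac{1}{1+\epsilon},\tfrac{\epsilon}{1+\epsilon}\bigr)$, $Q=V_r$, which changes nothing in the argument. Your explicit handling of empty caches (assigning SBSs that store nothing to either cover to obtain the partition $|\Gamma_1|+|\Gamma_2|=|\Gamma|$) is a small point the paper's proof leaves implicit, and is correct.
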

\begin{proof}
Consider an instance of the $2DSC(G_{sc})$ problem and the CDP oracle algorithm. First, build an $CDP$ instance in the following manner. Set ${G}=G_{sc}$, $\mathcal{N}=\{1,2\}$, $\bm{\lambda}=\big(\frac{1}{1+\epsilon}$, $\frac{\epsilon}{1+\epsilon} \big)$ for some $\epsilon<1$, $\bm{d}=\{1,1,\ldots,1\}$, $B=1$, $Q=V_r$. For this $CDP$ instance, note that for any user $u$, $\mathcal{A}(u)$ can be $\{ \,\,\}, \{1\}, \{2\}, \{1,2\}$. Also, observe that the value of any user is $\sum_{n\in\mathcal{A}_u}\lambda_n\leq 1$ and it is exactly equal to $1$ only if $\mathcal{A}_u=\{1,2\}$. However, since $B=1$, any cache can cache at most one file $n=1$ or $n=2$. It follows that the objective is equal to $V_r$ only if $\mathcal{A}_u=\{1,2\}, \forall u$, implying that every user $u$ can reach at least one cache with file $1$ and another with file $2$. 

On the $CDP$ instance constructed above, call the oracle which returns a YES or NO in polytime. If it is a YES, we use the solution (content placement) and create a partition for $2DSC(G_{sc})$ by putting in set $\Gamma_1$ all caches with file $1$ and in set $\Gamma_2$ those with file $2$. These two sets then form a 2-disjoint set cover and we may conclude that the answer to $2DSC(G_{sc})$ is YES. If the oracle returns a NO for CDP, it follows that there is no content placement that will allow us to deliver both files to all users and achieve $Q=V_r$. It follows that there exist no partition of caches to sets $\Gamma_1$, $\Gamma_2$ such that every location is covered by both $\Gamma_1$ and $\Gamma_2$, therefore the answer to $2DSC(G_{sc})$ is NO. See Fig. \ref{fig:set-cover} for an example.
\end{proof}

\begin{figure}
	\centering
	\includegraphics[width=0.5\textwidth]{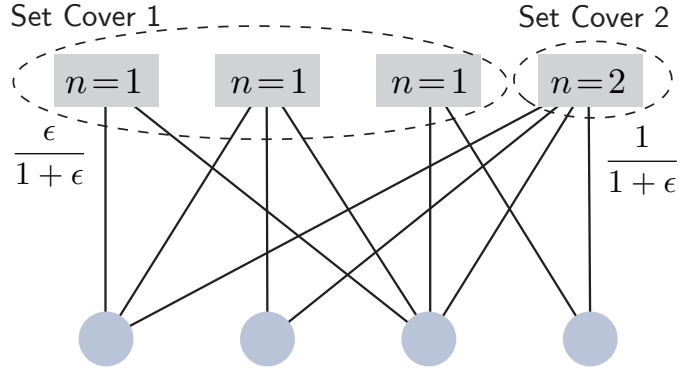}
	\caption{\small{Reduction of CDP to the 2-disjoint set cover problem. An example with 4 users and 4 caches.}}
	\label{fig:set-cover}    
\end{figure}

Given this result, we can only hope for a good approximation algorithm. It turns out that (C1) can be re-formulated as the maximization of a submodular function subject to matroid constraints. This favorable structure can be exploited to devise a computationally efficient algorithm with constant approximation ratio.

\subsubsection{A Greedy Algorithm with Constant Ratio}

First, we show that the caching constraints are independent sets of a matroid, see the box below for an introduction to matroids.

\begin{box_example}[detach title,colback=blue!5!white, before upper={\tcbtitle\quad}]{Matroids}
	\small
	extend the concept of independence to general sets. Informally, given a finite ground set $\mathcal{S}$, a matroid is a way to label subsets of $\mathcal{S}$ as ``independent''. In vector spaces, the ground set is a set of vectors, and subsets are called independent if their vectors are linearly independent in the usual linear algebraic sense. Formally, \cite{wolseybook88} a matroid $\mathcal{P}$ is a tuple $\mathcal{P}=(\mathcal{S}, \mathcal{I})$, where $\mathcal{S}$ is a finite ground set and $\mathcal{I}\subseteq 2^{\mathcal{S}}$ (a subset of the power set of $\mathcal{S}$) is a collection $\{X,Y,\ldots\}$ of subsets of $S$ that are \emph{independent}, where this property admits a problem-specific interpretation, such that:
	\begin{enumerate}[leftmargin=4.5mm]
		\item $\mathcal{I}$ is nonempty. 
		\item $\mathcal{I}$ is downward closed; i.e., if $Y\in \mathcal{I}$ and $X\subseteq Y$, then $X\in\mathcal{I}$. 
		\item If $X, Y\in\mathcal{I}$, and $|X|<|Y|$, then $\exists y\in Y\setminus X$ such that $X\cup \{y\}\in \mathcal{I}$.
	\end{enumerate}
	An important class of matroids for caching are \textbf{partition matroids}, where the ground set is partitioned into (disjoint) sets $\mathcal{S}_1, \mathcal{S}_2, \ldots, \mathcal{S}_l$ and each independent set $Y$ has a limited overlap with all partitions, i.e.,
	\begin{equation}
	\mathcal{I}=\{ Y\subseteq \mathcal{S}: |Y\cap\mathcal{S}_i|\leq k_i,\,\forall\,i=1,\ldots,l \}\,, \label{eq:partition-matroid}
	\end{equation}
	where $k_i>0$ is the overlap bound allowed for partition $\mathcal{S}_i$. Note that this definition can be used to express the constraint that each caching policy needs to respect the storage capacities, which are modeled with parameters $k_i:i=1,\ldots,l$. When the caches have equal capacity $k_i=k,\forall i$, then we can use as well the \textbf{uniform matroid}:
	\begin{equation}
	\mathcal{I}=\{ Y\subseteq \mathcal{S}: |Y|\leq k \}. \label{eq:uniform-matroid}
	\end{equation}
	The key benefit of matroids is that any maximal independent set is also maximum; and as a result, finding maximal independent sets can be achieved by greedy algorithms. 
\end{box_example}

For our femtocaching problem we define the following ground set:
\begin{align}
{S}=\{ s_{1}^1, s_{2}^1, \ldots, s_{N}^1; \ldots; s_{1}^{V_{c}^0}, s_{2}^{V_{c}^0}, \ldots, s_{N}^{V_{c}^0} \},\label{eq:femto-matroid1}
\end{align}
where $s_{n}^v$ is an abstract element denoting the placement of file $n$ into cache $v$. Set $S$ can be partitioned into $V_{c}^0$ disjoint subsets, $S_1, \ldots, S_{V_{c}^0}$, where $S_v=\{s_{1}^v, s_{2}^{v}, \ldots, s_{N}^v\}$ is the set of all files that might be placed at cache $v$.
\begin{lem}[theorem style=plain]{}{lem2}
The constraints of (C1) can be written as a partition matroid on the ground set $S$.
\end{lem}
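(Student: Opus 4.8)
The plan is to exhibit an explicit bijection between caching policies and subsets of the ground set $S$, under which the cache-capacity constraints become exactly the independence condition of a partition matroid, and then to verify the three matroid axioms.

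First, I would set up the correspondence. A policy $\bm{y}\in\{0,1\}^{V_c^0\times N}$ is identified with the subset
\[
Y(\bm{y})=\{\,s_n^v : y_{v,n}=1\,\}\subseteq S,
\]
and conversely any $Y\subseteq S$ determines a unique $\bm{y}$ by setting $y_{v,n}=1$ iff $s_n^v\in Y$. This gives a bijection between $\{0,1\}^{V_c^0\times N}$ and $2^S$ (consistent with $|S|=V_c^0\cdot N$). Under this identification, the number of files stored at cache $v$ equals $|Y\cap S_v|$, since $S_v=\{s_1^v,\dots,s_N^v\}$ collects precisely the ground-set elements associated with cache $v$. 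Hence the storage constraint $\sum_{n\in\Nc}y_{v,n}\le B$ that defines $\mathcal{Y}_{C_1}$ is equivalent to $|Y\cap S_v|\le B$, and the feasible policies correspond exactly to the family
\[
\mathcal{I}=\{\,Y\subseteq S : |Y\cap S_v|\le B,\ \forall\,v=1,\dots,V_c^0\,\},
\]
which matches the partition-matroid template \eqref{eq:partition-matroid} with partition $S_1,\dots,S_{V_c^0}$ and uniform overlap bounds $k_v=B$.

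Second, I would verify that $(S,\mathcal{I})$ is a matroid by checking the three axioms. Axiom~1 holds since $\emptyset\in\mathcal{I}$. Axiom~2 (downward closure) is immediate: if $Y\in\mathcal{I}$ and $X\subseteq Y$, then $|X\cap S_v|\le|Y\cap S_v|\le B$ for every $v$. The only substantive step is the exchange axiom~3. Given $X,Y\in\mathcal{I}$ with $|X|<|Y|$, the decompositions $|X|=\sum_v|X\cap S_v|$ and $|Y|=\sum_v|Y\cap S_v|$ force, by pigeonhole, some partition $S_{v^\star}$ with $|X\cap S_{v^\star}|<|Y\cap S_{v^\star}|$. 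Any element $s\in(Y\setminus X)\cap S_{v^\star}$ (which is nonempty, as the counts differ) then satisfies $X\cup\{s\}\in\mathcal{I}$: indeed $|(X\cup\{s\})\cap S_{v^\star}|=|X\cap S_{v^\star}|+1\le|Y\cap S_{v^\star}|\le B$, while the intersections with all other partitions are unchanged.

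I expect no real obstacle here, since partition matroids are a textbook construction; the only point requiring care is the exchange step, where one must locate the partition in which $Y$ is strictly larger than $X$ and check that the swapped element keeps the capacity bound satisfied. Once the three axioms are verified, the correspondence $\bm{y}\leftrightarrow Y(\bm{y})$ shows that $\mathcal{Y}_{C_1}$ is precisely the independent-set family of the partition matroid $(S,\mathcal{I})$, which is the claim.
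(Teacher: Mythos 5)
Your proposal is correct and follows essentially the same route as the paper: the identical bijection $\bm{y}\leftrightarrow Y(\bm{y})$, the observation that cached files at $v$ correspond to $Y\cap S_v$, and the identification of the feasible set with the partition-matroid template \eqref{eq:partition-matroid} with uniform bounds $k_v=B$. The only difference is that you additionally verify the three matroid axioms (including the pigeonhole-based exchange step), whereas the paper stops at matching the definition, taking the textbook fact that \eqref{eq:partition-matroid} defines a matroid as given.
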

\begin{proof}
In (C1) a content caching policy is expressed by matrix $\bm{y}$, and we define the respective caching set ${Y}\subseteq S$ such that $s_{n}^v\in Y$ if and only if $y_{v,n}=1$. Notice that the nonzero elements of the $v$th column of $\bm{y}$ correspond to the elements in $Y\cap S_v$. Hence, the eligible caching policies can be described as elements of the set $\mathcal{I}$, where:
\begin{equation}
	\mathcal{I}=\{Y\subseteq S: |Y\cap S_v |\leq B,\,\,\forall\, v=1,\ldots,V_c \} \label{eq:femto-matroid2}
\end{equation}
Comparing $\mathcal{I}$ and the partition matroid definition \eqref{eq:partition-matroid}, we see that our constraints form a partition matroid with $l=V_c$ and $k_i=B$, for $i=1,\ldots,V_{c}^0$. The partition matroid is denoted by $\mathcal{P}=(S,\mathcal{I})$. 
\end{proof}


Using now (\ref{eq:femto-obj2}), we can rewrite the average delay expression as a set function. This will allow us to show that (C1) has a submodular objective function. We define for each user $u$: 
\begin{align}  
D_{u}(Y)&=d_{\sigma_{u(1)},u}\sum_{n}\lambda_n Y_{\sigma_{u(1)}}^b(n) + d_{\sigma_{u(2)},u}\sum_{n}\lambda_n\big[Y_{\sigma_{u(2)}}^b(n)\wedge \bar{Y}_{\sigma_{u(1)}}^b(n)\big] \nonumber \\
&+ d_{\sigma_{u(3)},u}\sum_{n}\lambda_n\big[Y_{\sigma_{u(3)}}^b(n)\wedge \bar{Y}_{\sigma_{u(2)}}^b(n)\wedge \bar{Y}_{\sigma_{u(1)}}^b(n)\big]+  \ldots\nonumber \\
& + d_{0,u}\sum_{n}\lambda_n\big[\bar{Y}_{\sigma_{u(|\mathcal{V}_c(u)|-1)}}^b(n) \wedge \ldots\wedge \bar{Y}_{\sigma_{u(1)}}^b(n) \big]\,,\label{eq:femto-delay-2}
\end{align}
where $\wedge$ is the ``AND'' operation, $Y_{v}^{b}$ is the boolean representation of $Y_v = Y\cap\mathcal{S}_v$ and $\bar{Y}_{v}^b$ is the complement of $Y_{v}^{b}$. The $n$th element of $Y_{v}^b$ is denoted by $Y_{v}^b(n)$ which means that if $s_{n}^{v}$ is included in set $Y_v$, then $Y_{v}^b(n)=1$, and equal to $0$ otherwise. Using \eqref{eq:femto-delay-2} we can write:
\begin{equation}
	J(Y)=\sum_{u\in\mathcal{V}_r}\big(d_{0,u} -  D_{u}(Y) \big),
\end{equation}
that expresses function $J(\cdot)$ as a set function. The following lemma characterizes $J(y)$ \cite{femtocachingTransIT13}.
\begin{lem}[theorem style=plain]{}{lem3}
The objective of (C1) is a monotone submodular function.
\end{lem}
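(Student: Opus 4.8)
The plan is to exploit the fact that both monotonicity and submodularity are preserved under nonnegative linear combinations of set functions, and thereby reduce the claim to a single elementary building block: the ``maximum savings'' function attached to one user--file pair. First I would rewrite the objective as a sum of per-user, per-file contributions. For a user $u$ and a cache $v\in\mathcal{V}_c^0(u)$ in its range, set $w_{v,u}\triangleq d_{0,u}-d_{v,u}\geq 0$, the delay saved by serving $u$ from $v$ rather than from the MBS. Reading off \eqref{eq:femto-delay-2} (equivalently \eqref{eq:femto-obj2}), file $n$ requested by $u$ is served by the \emph{minimum-delay} cache in range that stores it, so the savings it contributes are exactly
\[
g_{u,n}(Y) = \max\bigl\{\, w_{v,u} : v\in\mathcal{V}_c^0(u),\ s_n^v\in Y \,\bigr\},
\]
with the convention $\max\emptyset = 0$ (the file is fetched from the MBS, zero savings). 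Because smaller delay means larger $w_{v,u}$, taking the maximum of $w_{v,u}$ correctly selects the nearest cache holding $n$. Then
\[
J(Y) = \sum_{u\in\mathcal{V}_r}\bigl(d_{0,u}-D_u(Y)\bigr) = \sum_{u\in\mathcal{V}_r}\sum_{n\in\mathcal{N}}\lambda_n\, g_{u,n}(Y).
\]
Since every $\lambda_n\geq 0$ and the class of monotone submodular functions is closed under nonnegative linear combinations, it suffices to prove that each $g_{u,n}$ is monotone and submodular.

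\textbf{Monotonicity and submodularity of the building block.} Monotonicity is immediate: if $Y\subseteq Y'$, the maximum defining $g_{u,n}(Y')$ is taken over a superset of indices, so $g_{u,n}(Y)\leq g_{u,n}(Y')$. For submodularity I would verify the diminishing-returns inequality directly. Fix $Y\subseteq Y'\subseteq S$ and an element $s=s_m^v\notin Y'$. If $m\neq n$ or $v\notin\mathcal{V}_c^0(u)$, then $s$ is irrelevant to $g_{u,n}$ and both marginal gains vanish. Otherwise the marginal gain of adding $s$ to $Y$ is
\[
g_{u,n}(Y\cup\{s\}) - g_{u,n}(Y) = \max\bigl\{\, w_{v,u}-g_{u,n}(Y),\ 0 \,\bigr\},
\]
and similarly for $Y'$. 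As the map $t\mapsto\max\{w_{v,u}-t,\,0\}$ is nonincreasing and $g_{u,n}(Y)\leq g_{u,n}(Y')$ by monotonicity, the marginal gain at $Y$ is at least that at $Y'$, which is precisely submodularity. Summing over $n$ (weights $\lambda_n\geq 0$) and over $u$ then yields that $J$ is monotone submodular.

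\textbf{Main obstacle.} The only delicate step is the first one: justifying that the nested ``AND''/inclusion structure of \eqref{eq:femto-delay-2}, which encodes ``serve each file from the closest cache that stores it,'' collapses to the clean maximum $g_{u,n}(Y)=\max\{w_{v,u}:s_n^v\in Y\}$. This requires matching each term $d_{\sigma_{u(j)},u}$ of the telescoping sum to the event that cache $\sigma_{u(j)}$ is the first (lowest-delay) cache in $u$'s ranking that holds $n$, and checking that translating delays $d_{v,u}$ into savings $w_{v,u}=d_{0,u}-d_{v,u}$ turns the ``nearest'' minimization into a maximization. Once this identification is secured, the submodularity of $J$ reduces to the standard argument for weighted maximum-coverage functions, and the nonnegative-combination closure together with the elementary monotonicity check completes the proof.
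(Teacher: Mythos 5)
Your proof is correct, but it takes a different route from the paper, which in fact offers no inline proof of this lemma at all: it defers to \cite{femtocachingTransIT13}, where (as in this paper's own proof of the analogous Lemma for (HC2)) submodularity is established by a direct case analysis of the marginal value $h(X\cup\{e\})-h(X)$ versus $h(Y\cup\{e\})-h(Y)$ for nested sets $Y\subseteq X$, tracking which cache becomes the best holder of the file. You instead decompose the objective into elementary building blocks: with $w_{v,u}=d_{0,u}-d_{v,u}\geq 0$ (nonnegative precisely because the paper assumes $d_{0,u}\geq d_{v,u}$), you write $J(Y)=\sum_{u}\sum_n \lambda_n\, g_{u,n}(Y)$ with $g_{u,n}(Y)=\max\{w_{v,u}: s_n^v\in Y,\ v\in\mathcal{V}_{c}^{0}(u)\}$, and reduce everything to the standard two-line check that a ``weighted maximum'' (facility-location-type) function is monotone submodular, plus closure under nonnegative combinations. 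The identification with \eqref{eq:femto-delay-2}--\eqref{eq:femto-obj2} that you flag as the delicate step is indeed valid: the nested products are exactly the indicator that $\sigma_{u(j)}$ is the lowest-delay cache in $u$'s range holding $n$, so the delay charged is $\min\{d_{v,u}: s_n^v\in Y\}$ and the savings are the corresponding max of $w_{v,u}$. One cosmetic caveat: the exact identity $\sum_n\lambda_n g_{u,n}(Y)=d_{0,u}-D_u(Y)$ requires $\sum_n\lambda_n=1$; for unnormalized demand rates the two sides differ by the constant $d_{0,u}(1-\sum_n\lambda_n)$, which affects neither monotonicity nor submodularity, so the lemma is untouched. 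What each approach buys: yours is shorter, modular, and makes the ``maximum coverage'' structure of femtocaching transparent; the direct marginal analysis used by the paper's reference is heavier here but is the argument that survives in settings like (HC2), where adding an element triggers a re-optimization (the leaf-node swap to file $n'$) and the objective is no longer a clean sum of per-request maxima.
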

Given the above properties of the objective and constraints of (C1), we can use a \emph{greedy} algorithm that myopically selects in each round the action that maximizes the objective. The algorithm starts with an empty set and at each iteration adds the element from the ground set with the highest marginal value while maintaining the solution feasibility.

The detailed steps are presented in Algorithm \ref{algo:greedy-submodular}. First we find the element from all currently available ground elements that has the largest contribution in our objective function (line 3). Selecting $s_{n^*}^{v^*}$ means that we need to place file $n^*$ at cache $v^*$, and then update the set $Y_{v^*}$ of this cache (line 5) and the overall policy $Y$ (line 6). At the same time, we remove this element from the set of available elements $Z$ (line 7), and we keep track of the available elements at cache $v^*$ by updating $Z_{v^*}$ (line 8). In case the capacity of the selected cache $v^*$ is exhausted, we reduce the ground set by removing all elements of $Z_{v^*}$ (line 9) to save computation time in the next iteration. These steps are repeated $B\cdot V_{c}^{0}$ times, i.e., as many as the number of files that we can cache at caches. Since the objective function is submodular, the marginal value of elements decreases in each iteration. Thus, the algorithm terminates at the iteration where the largest marginal value is zero. Each iteration would involve evaluating marginal value of at most $N\cdot V_c$ elements and takes $O(V_r)$ time. Hence the running time would be $O(N^2V_c^2V_r)$.

\IncMargin{1.3em}
\begin{algorithm}[t]
	\small
	\nl \textbf{Initialize}: $Z_v\leftarrow S_v,\, Y_v\leftarrow \emptyset,\,\,\forall\,v=1,\ldots,V_c$; $Z\leftarrow S$; $Y\leftarrow\emptyset$.\\%
	\nl \For{ $i=1,2,\ldots, B\!\cdot\!V_c$  }{
		\nl Find the placement maximizing the marginal utility:\\
		$s_{n^*}^{v^*}=\arg\max_{z\in Z}J_{Y}(z)=J(Y\cup\{z\})-J(Y)$;\\
		\nl \If{ $J_{Y}(s_{n^*}^{v^*})=0$}{Terminate;}
		\nl $Y_{v^*}\leftarrow Y_{v^*} \cup \{s_{n^*}^{v^*}\}$\,; \,\,\,\,\,\, \footnotesize{\%Update the cached files of cache $v$}. \\%
		\nl $Y \leftarrow Y \cup \{s_{n^*}^{v^*}\}$\,; \qquad\,\,\,\,\,\,\, \footnotesize{\%Update the caching policy}. \\%
		\nl $Z \leftarrow Z\setminus \{s_{n^*}^{v^*}\}$; \\%
		\nl $Z_{v^*} \leftarrow Z_{v^*}\setminus \{s_{n^*}^{v^*}\}$; \\%
		\nl \If{$|Y_{v^*}|=B$ }{ $Z\leftarrow Z\setminus Z_{v^*}$ ; \,\,\,\,\,\,\,\,\,\,\,\,\,\, \footnotesize{\%Remove the elements of caches with no cache space}. \\}}
	\nl Output the optimal content placement set $Y$.\\%
	\caption{Greedy Bipartite Caching Algorithm}\label{algo:greedy-submodular}
\end{algorithm}\DecMargin{1em}
\normalsize

Despite its simplicity, Algorithm 1 achieves a quite satisfactory result as the following Theorem states.
\begin{thm}[theorem style=plain]{Greedy Approximation}{greedy_approx}
	The Greedy Bipartite Caching algorithm achieves a result within a factor of 1/2 of the optimal solution for (C1).
\end{thm}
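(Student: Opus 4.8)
The plan is to recognize this as an instance of the classical theorem of Nemhauser, Wolsey and Fisher: the greedy algorithm for maximizing a monotone submodular function over the independent sets of a matroid attains a factor of $1/2$. All the structural work needed has already been carried out in the preceding lemmas — the feasible caching policies form a partition matroid $\mathcal{P}=(S,\mathcal{I})$, and the objective $J(\cdot)$ is monotone submodular — so the proof reduces to the standard exchange argument, which I would reproduce as follows.

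First I would fix the two solutions. Let $Y^G=\{g_1,\dots,g_k\}$ be the greedy output, indexed in the order the elements were selected, with $G_j=\{g_1,\dots,g_j\}$ the partial solutions, and let $Y^*$ be an optimal solution to (C1). Since $J$ is monotone, every independent set extends to a maximal one without decreasing the objective, so without loss of generality both $Y^G$ and $Y^*$ are bases of the partition matroid: if greedy halts early because the best marginal gain is zero, monotonicity forces all further marginals to vanish, and padding $Y^G$ up to a base changes nothing. In particular $|Y^G|=|Y^*|$.

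Next comes the combinatorial core. By the matroid exchange property (Brualdi's lemma) there is a bijection $\pi:Y^*\setminus Y^G\to Y^G\setminus Y^*$ such that $(Y^G\setminus\{\pi(x)\})\cup\{x\}\in\mathcal{I}$ for every $x\in Y^*\setminus Y^G$. For such an $x$, writing $g_j=\pi(x)$, downward closure gives $G_{j-1}\cup\{x\}\in\mathcal{I}$, so $x$ was an eligible candidate at greedy step $j$; greedy's myopic choice then yields $J(G_j)-J(G_{j-1})\ge J(G_{j-1}\cup\{x\})-J(G_{j-1})$, and submodularity (using $G_{j-1}\subseteq Y^G$) upgrades this to $J(G_j)-J(G_{j-1})\ge J(Y^G\cup\{x\})-J(Y^G)$. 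Summing these inequalities over $x\in Y^*\setminus Y^G$ — the step indices $j$ are distinct because $\pi$ is injective — the left side telescopes to at most $J(Y^G)-J(\emptyset)\le J(Y^G)$, giving $\sum_{x\in Y^*\setminus Y^G}\big(J(Y^G\cup\{x\})-J(Y^G)\big)\le J(Y^G)$.

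Finally I would close the loop with one more use of monotonicity and submodularity, adding the elements of $Y^*\setminus Y^G$ to $Y^G$ one at a time and bounding each marginal gain by its value relative to $Y^G$: $J(Y^*)\le J(Y^*\cup Y^G)\le J(Y^G)+\sum_{x\in Y^*\setminus Y^G}\big(J(Y^G\cup\{x\})-J(Y^G)\big)\le 2J(Y^G)$. Rearranging gives $J(Y^G)\ge\tfrac12 J(Y^*)$, the claimed bound. The only genuinely delicate points are the bookkeeping around the bijection and the reduction to both solutions being bases; but because the matroid here is a simple \emph{partition} matroid, in which each cache fills independently up to capacity $B$, the exchange bijection can even be made explicit partition-by-partition, so I do not expect the exchange step to present real difficulty.
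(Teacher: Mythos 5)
Your proof is correct and takes essentially the same approach as the paper: the paper establishes the $1/2$ bound by citing the classical Fisher--Nemhauser--Wolsey greedy-over-matroid theorem \cite{wolsey1978}, resting on exactly the two structural facts you invoke --- the partition-matroid constraint set (Lemma \ref{lem:lem2}) and the monotone submodularity of $J$ (Lemma \ref{lem:lem3}) --- and your write-up is the standard exchange argument behind that citation. One cosmetic remark: when padding the greedy output to a base you need submodularity \emph{together with} monotonicity (marginals are zero at termination and remain zero on supersets), not monotonicity alone, but this does not affect correctness.
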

This is a classical approximation result for this type of maximization problem, see \cite{wolsey1978}. Furthermore, a randomized algorithm which gives a $(1-1/e)$-approximation has been proposed in \cite{chekuri2007}, that consists of two parts. First, the combinatorial integer programming problem is replaced with a continuous one that is solved optimally. Then, the obtained feasible point is rounded using \emph{pipage rounding} \cite{pipage}, a technique that achieves tight approximations through integer relaxation, and has been proved particularly useful in caching problems. However, the implementation of this algorithm is computationally demanding and increases fast with quantity $V_c\cdot N$.

\subsection{Capacitated Femtocaching} \label{sec:capacitated-femtocaching}


The femtocaching problem becomes more challenging when the users are connected to the SBSs via links with limited bandwidth. In this capacitated case, routing to a cache is not to be taken for granted, since apart from the placement of the requested file, sufficient bandwidth for its delivery must also be secured \cite{Poularakis2014Approximation}. The arising problem is similar to the capacitated facility location (CFL) problem discussed in Sec. \ref{sec:cn-design}, yet more challenging since a cache can serve a request only if it has enough capacity to store the file, and enough link bandwidth to transmit it. In other words, it is a two-dimensional hard-capacitated problem, and its mapping to CFL is not straightforward. As a matter of fact, if there were no bandwidth limitations, one could consider this caching problem as a collection of Uncapacitated Facility Location (UFL) instances (one per file) that are coupled due to the storage constraint of each cache \cite{BaevSIAM08}. In the following, we use here the Unsplittable Hard-Capacitated Facility Location (UHCFL) formulation in order to devise a joint caching and routing policy.

\subsubsection{Problem Definition}

We begin with (C1) and further add the consideration of finite transmission capacity. The set of eligible caching policies is:
\begin{equation}
	\mathcal{Y}_{CR1}\!=\!\{ \bm{y} \!\in\! \{0, 1\}^{V_{c}^0\times N}\,\vert\,\sum_{n \in \mathcal{N}} y_{v,n}\! \leq\! B_v,\,\, v\!\in\!\mathcal{V}_{c}^0 \}, \nonumber
\end{equation}
which differs slightly from $\mathcal{Y}_{C_1}$ as each cache $v\in\mathcal{V}_{c}^0$ may have different storage capacity of $B_v$ files. Every SBS $v$ has a bounded transmission capacity and can deliver $K_v\geq 0$ data bytes within a time period $T$. Unlike (C1), we consider here heterogeneous demand and denote with $\lambda_{u,n}$ the expected number of requests for each file $n$ in the catalog $\mathcal{N}$, generated by user $u\in\mathcal{V}_r$ during $T$.\footnote{We simplify the presentation by assuming that files have unit size and that $T\!=\!1$, but the model and results can be easily generalized for any file size $s$ and $T$; see \cite{Poularakis2014Approximation}. } A request can be satisfied by any cache in $\mathcal{V}_c(u)$, Fig. \ref{fig:capacitated_femtocaching}, but the SBSs are preferable since they introduce smaller delay $d$ than the MBS, i.e., $d\!<\!d_0$.

\begin{figure}[t]
	\centering
	\includegraphics[width=0.75\textwidth]{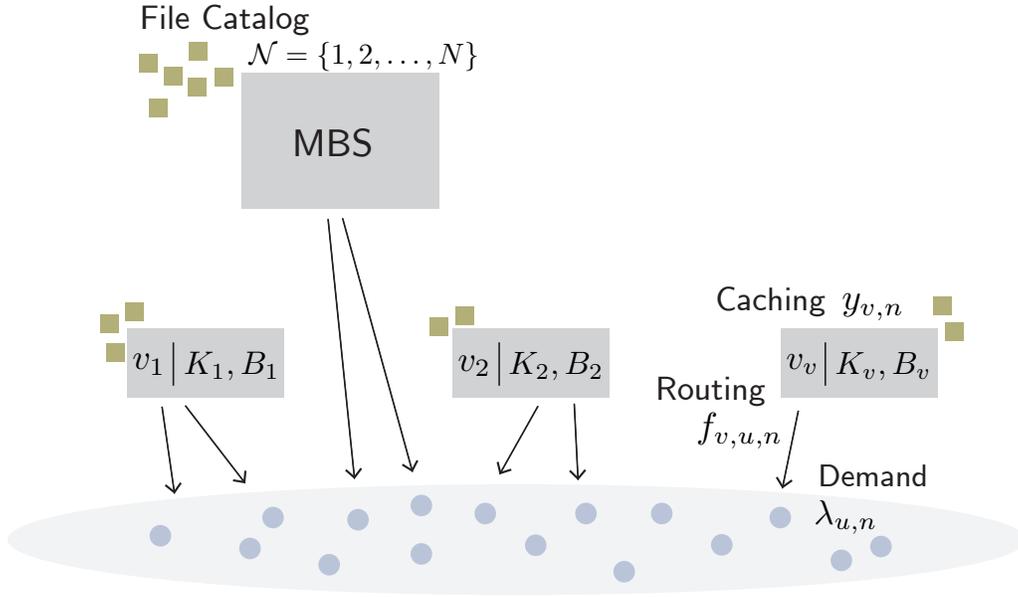}
	\caption{\small{System model for cache-enabled and bandwidth-constrained SBSs that underlay a macrocellular base station.}}
	\label{fig:capacitated_femtocaching}
\end{figure}

Due to the link capacities, we need to introduce routing decisions in order to explicitly determine how each request is satisfied. We consider the unsplittable routing model and denote with $f_{v,u,n}\in\mathcal{Z}^{+}$ the integer variable that decides the number of requests $(n,u)$ routed to SBS $v$, and with $f_{0,u,n}\in\mathcal{Z}^{+}$ those routed to MBS.\footnote{Since we have one-hop paths, we use directly pairs $(v,u)$ to index the routing decisions instead of the path indices $p\in\mathcal{P}_{v,u}$. } The {routing policy} is described by the vector $\bm{f}=\big(f_{v,u,n}: n\in\mathcal{N},v\in\mathcal{V}_c, u\in\mathcal{V}_r\big)$ which belongs to set:
\begin{align}
	&\mathcal{F}_{CR1}=\left\{ \bm{f}\in \mathcal{Z}_{+}^{V_c\times V_r\times N}\,\Bigg|\, \begin{array}{l}
	\sum_{v \in \mathcal{V}_c}  f_{v,u,n}\!=\!\lambda_{u,n}\\
	\sum_{u \in \mathcal{V}_r, n \in \mathcal{N}}\! f_{v,u,n} \leq K_v, v\!\in\!\mathcal{V}_{c}^{0}\\
	f_{v,u,n}\!=\!0,\, v\!\in\!\mathcal{V}_c \!\setminus\! \mathcal{V}_c(u), n\!\in\!\mathcal{N}, u\!\in\!\mathcal{V}_r
	\end{array}\right\},
\end{align}
where we have included constraints to ensure that all requests are satisfied (inelastic demand), each SBS cannot be assigned more demand than its bandwidth capacity,  and each SBS can serve only users within its range. 

Given the above, the problem of devising the joint routing and caching policy that minimizes the requests routed to MBS, is modeled as follows:
\begin{opt}{Caching and Routing (CR1)}
	\vspace{-0.2in}
 \begin{align}
 	\min _{\bm{y}\in \mathcal{Y}_{CR1},\,\,\bm{f}\in\mathcal{F}_{CR1} } \qquad  &\sum_{u \in \mathcal{V}_r} \sum_{n \in \mathcal{N}}  f_{0,u,n}   \nonumber \\
 	\text{s.t.}\qquad  & f_{v,u,n} \leq y_{v,n} \lambda_{u,n}, \text{ } n \in \mathcal{N}, u \in \mathcal{V}_r, v \in \mathcal{V}_{c}^{0}. \nonumber
 \end{align}
	\vspace{-0.3in}
\end{opt}
\noindent Note the coupling constraint between the routing and caching decisions, which ensures that requests are routed only to SBSs having cached the respective files, and do not exceed the total demand from each node $u$. Since the link delays are uniform, (CR1) is equivalent to a problem that minimizes the requests routed to MBS. Finally, it is easy to see that (CR1) is an NP-hard problem as it generalizes (C1) by incorporating bandwidth constraints.

\subsubsection{Reduction to FLP and Algorithms}

\begin{box_example}[detach title,colback=blue!5!white, before upper={\tcbtitle\quad}]{Unsplittable Hard-Capacitated Metric FLP (UHCMFL)}
	\small
	We are given a set $\mathcal{V}$ of locations, with a subset $\mathcal{V}_1\subseteq \mathcal{V}$ of locations at which we may open a facility, and a subset $\mathcal{V}_2\subseteq \mathcal{V}$ of client locations that must be assigned to an open facility. Let $\lambda_i\geq 0$ denote the demand of client $i\!\in\!\mathcal{V}_2$. Let $c_j$ denote the cost for opening a facility at location $j\in \mathcal{V}_1$, and $F_j$ its servicing capacity. Each client needs to assign its entire demand to a single facility. We denote $d_{ij}\geq 0$ the unit cost when facility $j$ serves one demand unit of client $i$. We assume that these costs form a \emph{metric}, i.e., they are non-negative, symmetric ($d_{ij}=d_{ji}$), and satisfy the triangle inequality: $d_{ij}+d_{jk} \geq d_{ik}$, $\forall i,j,k \in \mathcal{V}$.
	
	\vspace{2mm}
	
	Our goal is to determine the facility opening variables $x_j\in\{0,1\}$, $j\in\mathcal{V}_1$, and the client assignment variables $f_{ij}\in\{0,1\}$, $j\in\mathcal{V}_1$, $i\in\mathcal{V}_1$ so as to minimize the aggregate cost $Q$:
	\begin{equation}
	Q=\sum_{j \in \mathcal{V}_{1}} x_jc_j + \sum_{i \in \mathcal{V}_2}\sum_{j\in\mathcal{V}_1}\lambda_i d_{ij}f_{ij},
	\end{equation}
	while satisfying the demand, $\sum_{j\in\mathcal{V}_2}f_{ij}=1, \forall i$ and the facility capacity constraints, $\sum_{i\in\mathcal{V}_1}\lambda_if_{ij}\leq F_jx_j, \forall j$.
	
	\vspace{2mm}
	
	UHCMFL is NP-hard to approximate, and can be approximated within $O(\log V)$ distance if the capacities are violated by $(1+\epsilon)$, for some $\epsilon>0$; cf. \cite{bateni-uhcfl}. 
\end{box_example}

We will next establish the equivalence of (CR1) with a specific instance of the facility location problem. This will serve as the building block for our optimization framework, helping us to develop practical approximation algorithms based on existing FLP solutions. We start with the definition of the FLP instance that we will use here; see also \cite{bateni-uhcfl}.

In order to obtain a mapping between the two problems we will model caching a file as ``opening a facility''. However the connection between UHCMFL and (CR1) is non-trivial due to the co-existence of the bandwidth and storage constraints. Hence, we need to define demand in such a way so as to force that the opened facilities will respect the caching and link capacity constraints. In detail, we reduce any (CR1) problem 
to a specific UHCMFL instance, henceforth called $U_{CR1}$, by following the steps below:

\begin{itemize}[leftmargin=5mm]
\item	The set $\mathcal{V}_1$ consists of: (i) one facility named $a_{0}$ for the MBS, and (ii) a candidate facility location named $a_{vn}$ for every SBS $v\in\mathcal{V}_{c}^0$ and every file $n\in\mathcal{N}$. The capacity of $a_{0}$ is set to $+\infty$ and to $K_v$ for each $a_{vn}$, $\forall v\in\mathcal{V}_{c}^{0},\,n\in\mathcal{N}$. The opening cost is set to 0 for every facility location.
\vspace{-2mm}
\item	The set of clients $\mathcal{V}_2$ comprises subsets: (i) $\mathcal{V}_{2,1}$ that contains $\lambda_{u,n}$ clients $\forall u\in\mathcal{V}_r, n\in\mathcal{N}$, denoted as $b_{un1},\,b_{un2}\ldots,b_{un\lambda_{u,n}}$, (ii) $\mathcal{V}_{2,2}$, with $N-B_v$ clients, denoted $b'_{v1}$, $b'_{v2}$, etc., $\forall v\in\mathcal{V}_{c}^{0}$, and (iii) $\mathcal{V}_{2,3}$ which contains $(B_v -1)K_v$ clients, which are denoted $b''_{v1},\,b''_{v2}$ etc., $\forall v\in\mathcal{V}_{c}^{0}$. The demand of each client $b'_{vn}\in\mathcal{V}_{2,2}$ is equal to $K_v$. Each of the remaining clients $b_{unj}\in\mathcal{V}_{2,1}$ and $b''_{vn}\in\mathcal{V}_{2,3}$ has demand $1$. 
\vspace{-2mm}
\item Let $c$ be an arbitrarily small positive constant. Then, the unit serving cost for each facility-client pair is specified as follows: (i) each pair of type $(a_{0},\,b_{unj})$, $\forall u, n, j$, has cost $1+0.5+c$, (ii) each pair $(a_{vn},\,b_{unj})$, such that $v \in \mathcal{V}_c(u)$ and $j \in \{1,...,\lambda_{u,n}\}$, has cost $0.5+c$, (iii) each pair $(a_{vn},\,b'_{vj})$, $\forall v,n,j$, has cost $0.5+c$, (iv) each pair $(a_{vn},\,b''_{vj})$, $\forall v,n,j$ has cost $0.5+c$. The cost value for each one of the remaining pairs is equal to the cost of the shortest path that connect this pair. Thus, the costs form a metric. 
\vspace{-2mm}
\end{itemize}

Let us explain the intuition of this mapping. Facility $a_{0}$ represents the MBS, and facilities $a_{vn}$, $\forall n$ the SBS $v$. Hence, the facility capacity choices indicate that MBS can serve all requests, while each SBS $v$ can serve a certain number of them. Each client of type $b_{unj}\in\mathcal{V}_{2,1}$, $\forall v,n,j$ (whose demand is $1$) represents one user request, while $b'_{vn}\in\mathcal{V}_{2,2}$, and $b''_{vn}\in\mathcal{V}_{2,3}$, $\forall v,n$ denote \emph{virtual} requests that are designed to preserve the SBSs resource constraints. In particular, clients $b'_{vn}\in\mathcal{V}_{2,2}$ represent the files that cannot be cached in each SBS $v$, and there are $N-B_v$ such files. When a client $b'_{vn}$ with demand $K_v$ is associated to facility $a_{vn}$, it consumes all its capacity; which means that file $n$ will not be cached at $v$ and $a_{vn}$ is removed from the list of eligible facilities. On the other hand, $b''_{vn}$ clients ensure that the bandwidth constraint will be satisfied at SBS $v$. Without these clients the facilities violate $K_v$ since each file (if cached at $v$) is allowed to serve up to $K_v$ requests. This is achieved by the introduction of $(B_v -1){K_v}$ such virtual clients which will consume all the virtual bandwidth in the FLP instance. It is important to note that these virtual clients are indirectly connected to MBS through paths of high cost, and hence any FLP solution will opt to associate them with facilities of type $a_{vn}$. This choice of cost parameters ensures that the bandwidth and storage capacity constraints are satisfied.

After introducing the above, we can use any oracle for the $U_{CR1}$ to build a solution for the (CR1) problem, by following three simple rules:
\begin{tcolorbox}[boxrule=1.3pt,arc=0.6em, top=-0.4mm, bottom=-0.4mm] 
\begin{itemize}[leftmargin=0.51mm]
	\small
	\item Rule $1$: For each facility $a_{vn}$ \emph{not} serving any client of the form $b'_{vj}\in\mathcal{V}_{2,2}$, $\forall j$, place file $n$ to the cache of SBS $v$.
\vspace{-1.5mm}
	\item Rule $2$: For each facility $a_{vn}$ serving a client of the form $b_{unj}\in\mathcal{V}_{2,1}$, $\forall v,n,u,j$ route the $j^{th}$ request of user $u$ for file $n$ to SBS $v$.
\vspace{-1mm}	
	\item Rule $3$: The remaining requests are routed to MBS.
\end{itemize}
\end{tcolorbox}
\noindent Figure \ref{fig:FLP-mapping} presents an example. Squares represent the facilities and circles the clients. Users $u_1$ and $u_2$ request every file one time, while $u_3$ submits two requests for file $n=1$. Solid lines connect clients to facilities with cost $0.5\!+\!c$, and dashed lines model the links with cost $1\!+\!0.5\!+\!c$. The cost value of each of the remaining pairs is equal to the cost of the shortest path that unites this pair. For example, the cost between $b'_{11}$ and facility $a_{21}$ is $1.5\!+\!3c$ (3 hops over links with cost $0.5+c$). The demand of each virtual client is $1$, except for the clients $b'_{vn}\in\mathcal{V}_{2,2}, \forall v, n$, whose demand is $2$. The capacity of each facility is $2$, except for $a_{0}$ which is uncapacitated.

\begin{figure}[h!]
	\centering
	\includegraphics[width=0.99\textwidth]{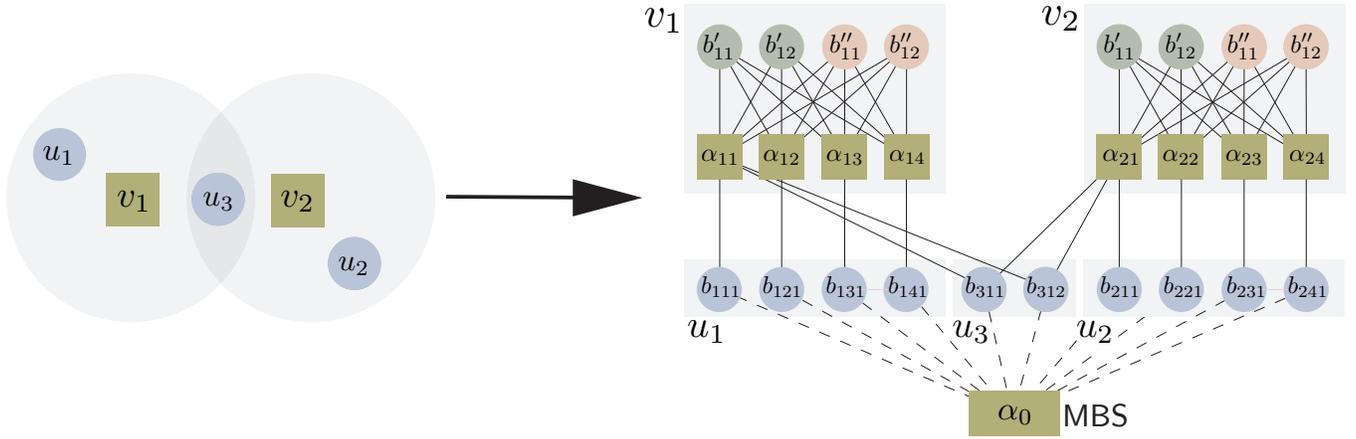}
	\caption{\small{An example of constructing a UHCMFL instance from a (CR1) problem with $N=4$ files, 2 SBSs with $B_1=B_2=2$ and $K_1=K_2=2$, and 3 users. Each SBS corresponds to the 8 components in the top of the UHCMFL instance ($N=4$, thus we need $4$ facilities and $4$ clients), each user $u$ to $\sum_{n=1}^4\lambda_{u,n}$ of the bottom clients, and the MBS is the facility $a_0$}.}
	\label{fig:FLP-mapping}
\end{figure}

We now prove that the preceding reduction holds, using two lemmas. Let $D$ denote the total demand of the clients in $U_{CR1}$. Then, we have:
\begin{lem}[theorem style=plain]{}{lem4}
	For every feasible solution of (CR1) with value $C$, there is a feasible solution to $U_{CR1}$ with total cost $C + D(0.5+c)$.
\end{lem}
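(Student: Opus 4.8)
The plan is to exhibit, for any feasible $(\bm y,\bm f)$ of (CR1) with objective $C=\sum_{u\in\mathcal V_r}\sum_{n\in\mathcal N}f_{0,u,n}$, an explicit feasible assignment for the instance $U_{CR1}$ whose cost equals $C+D(0.5+c)$, and then simply add up the costs. First I would assume without loss of generality that every cache is filled, i.e. $\sum_{n\in\mathcal N}y_{v,n}=B_v$ for all $v\in\mathcal V_c^0$: caching additional files never tightens the coupling constraint $f_{v,u,n}\le y_{v,n}\lambda_{u,n}$ nor alters the routing, so it leaves $C$ unchanged. Since all opening costs are zero I open $a_0$ and every $a_{vn}$ (opening is free, hence irrelevant to the cost), and then specify the client assignment in three families: (i) for each cache $v$ I send each of its $N-B_v$ virtual clients $b'_{vj}\in\mathcal V_{2,2}$ to a distinct \emph{uncached} facility $a_{vn}$ (one with $y_{v,n}=0$), of which there are exactly $N-B_v$; (ii) for each triple $(v,u,n)$ with $v\in\mathcal V_c^0$ I route $f_{v,u,n}$ of the request clients $b_{un\cdot}\in\mathcal V_{2,1}$ to $a_{vn}$ and the remaining $f_{0,u,n}$ of them to $a_0$; (iii) the $(B_v-1)K_v$ virtual clients $b''_{vj}\in\mathcal V_{2,3}$ I distribute among the \emph{cached} facilities of $v$ so as to fill their residual capacity.

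Next I would verify feasibility. Every client is served, and the cost-$(0.5+c)$ edges used for the request clients are legitimate because $f_{v,u,n}=0$ whenever $v\notin\mathcal V_c(u)$, so the chosen $a_{vn}$ is always in range of $u$. Each uncached facility receives a single $b'$ client of demand $K_v$, exactly saturating its capacity $K_v$; this also forces every $b''$ client onto a cached facility. The crucial check is the bandwidth bookkeeping on the $B_v$ cached facilities: their aggregate capacity is $B_vK_v$, the real demand routed onto them is $\sum_{u,n}f_{v,u,n}\le K_v$ by the (CR1) bandwidth constraint, so their residual capacity is at least $(B_v-1)K_v$, which is precisely the number of demand-one $b''$ clients. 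Hence a greedy fill assigns all $b''$ clients without violating any per-facility capacity, and the resulting FLP solution is feasible.

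Finally I would account for the cost, recalling that opening costs vanish so the total cost equals $\sum_i(\text{demand of }i)\times(\text{unit cost of its facility})$. Writing $S=\sum_{v\in\mathcal V_c^0}\sum_{u,n}f_{v,u,n}$ for the SBS-served requests (so $S+C$ is the number of request clients), the request clients contribute $(0.5+c)S+(1.5+c)C$, which I would split as $(0.5+c)(S+C)+C$; the $b'$ clients contribute $(0.5+c)\sum_v(N-B_v)K_v$ and the $b''$ clients $(0.5+c)\sum_v(B_v-1)K_v$. Since the total demand in $U_{CR1}$ is $D=(S+C)+\sum_v(N-B_v)K_v+\sum_v(B_v-1)K_v$, summing the four contributions collapses to $(0.5+c)D+C$, which is the claimed cost.

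The main obstacle I expect is item (iii): certifying that the residual capacity of the cached facilities absorbs \emph{all} $(B_v-1)K_v$ of the $b''$ clients. This is exactly where the (CR1) bandwidth constraint $\sum_{u,n}f_{v,u,n}\le K_v$ enters, and the matching of ``total cached capacity minus real load'' against ``number of $b''$ clients'' is the whole reason the virtual $b''$ clients were introduced into $U_{CR1}$. Everything else is routine bookkeeping once the full-cache normalization and the zero opening costs are in place.
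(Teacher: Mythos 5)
Your construction is exactly the paper's: open all facilities for free, saturate each uncached facility $a_{vn}$ with one $b'_{vj}$ client, mirror the routing $\bm f$ for the request clients (sending the $f_{0,u,n}$ leftovers to $a_0$ at unit cost $1.5+c$), pack the $b''$ clients into the cached facilities, and collapse the cost to $C+D(0.5+c)$. The only difference is that you make explicit the residual-capacity count $B_vK_v-\sum_{u,n}f_{v,u,n}\geq (B_v-1)K_v$ that certifies feasibility of the $b''$ packing --- a step the paper merely asserts --- so your proof is correct and, if anything, slightly more complete than the original.
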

\begin{proof}
We construct the solution to $U_{CR1}$ as follows:
\begin{enumerate}[leftmargin=5mm]
		\item We open all the facilities at zero cost.
		\item For each file $n$ \emph{not} cached at SBS $v$, we assign the (entire) demand of one client of type $b'_{vj}\in\mathcal{V}_{2,2}$, $j \in \{1,...,N-B_v\}$ to the facility $a_{vn}$.
		\item For each request generated by a user $u$ for a file $n$ served by an SBS $v$, we assign the demand of one client of type $b_{unj}\in\mathcal{V}_{2,1}$, $ j \in \{1,...,\lambda_{u,n}\}$ to the facility $a_{vn}$.
		\item The demand of a client of type $b''_{vj}\in\mathcal{V}_{2,3}$, $j \in \{1,...,(B_v-1)K_v\}$, is randomly assigned to one of the facilities of the form $a_{un}$, $\forall n\in\mathcal{N}$, without violating their capacity constraints.
		\item For each client $b_{unj}\in\mathcal{V}_{2,1}$ that has not been covered, we assign its demand to facility $a_{0}$. Thus, every demand unit of the clients was assigned to a facility. An assignment to facility $a_{0}$ induces a per unit cost $1+0.5+c$, while all other assignments induce a per unit cost $0.5+c$. By construction, the total demand assigned to $a_{0}$ is equal to the requests that are routed to MBS (C). Thus, the solution has cost equal to $C+D(0.5+c)$.
	\end{enumerate}\vspace{-0.3in}
\end{proof}
\vspace{-0.1in}
\begin{lem}[theorem style=plain]{}{lem5}
	For every minimum cost solution of the $U_{CR1}$ instance with total cost $C$, there is a feasible solution to the (CR1) problem with value $C-D(0.5+c)$.
\end{lem}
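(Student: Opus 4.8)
The plan is to invert the construction of Lemma \ref{lem:lem4}: starting from a minimum-cost solution of $U_{CR1}$, I would apply Rules~1--3 and show that the resulting pair $(\bm y,\bm f)$ is feasible for (CR1) and has the claimed value. Since every facility opens at zero cost and every client routes its entire demand to a single facility, the only cost incurred is the per-unit serving cost, which is at least $0.5+c$ for \emph{every} demand unit (the cheapest admissible pairs cost exactly $0.5+c$). Hence the total cost can be written as $(0.5+c)D$ plus a surcharge of $1$ for each demand unit served by the MBS facility $a_0$ rather than an SBS facility, and the whole argument reduces to showing that in an optimal solution these surcharges are paid exactly by the real requests routed to the MBS.

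First I would establish the structural properties forced by optimality together with the metric cost design: \emph{(i)} every virtual client $b'_{vj}\in\mathcal{V}_{2,2}$ and $b''_{vj}\in\mathcal{V}_{2,3}$ is served by a facility $a_{vn}$ of its own SBS $v$ at unit cost $0.5+c$; and \emph{(ii)} every real request $b_{unj}\in\mathcal{V}_{2,1}$ is served either by the matching-file facility $a_{vn}$ of an in-range SBS $v\in\mathcal{V}_c(u)$ (cost $0.5+c$) or by $a_0$ (cost $1.5+c$), but never by a wrong-file or out-of-range facility. Both follow because any such alternative assignment pays the shortest-path metric cost, which the construction makes strictly larger than the intended one (e.g. at least $1.5+3c>1.5+c$), so an optimal solution avoids it. For \emph{(i)} I would combine this with a capacity count: the facilities $\{a_{vn}\}_{n\in\mathcal{N}}$ of SBS $v$ offer total capacity $NK_v$, while $v$'s virtual demand totals $(N-B_v)K_v+(B_v-1)K_v=(N-1)K_v\le NK_v$, so there is always room to reroute a misplaced virtual client onto a same-SBS facility and strictly reduce the cost. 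Making this exchange argument airtight---verifying that the slack guaranteed by the count really admits a feasible reassignment---is the main obstacle.

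Given this structure, the remaining steps are bookkeeping. Each client $b'_{vj}$ has demand exactly $K_v$ and therefore saturates one facility $a_{vn}$; since there are $N-B_v$ such clients and two of them cannot share a facility, they block exactly $N-B_v$ distinct facilities, leaving $B_v$ facilities free per SBS. Defining the caching policy by Rule~1 (cache file $n$ at $v$ iff $a_{vn}$ serves no $b'$ client) then gives $\sum_{n}y_{v,n}\le B_v$, so $\bm y\in\mathcal{Y}_{CR1}$. The $(B_v-1)K_v$ clients of type $b''_{vj}$ consume $(B_v-1)K_v$ of the $B_vK_v$ capacity of the free facilities, leaving precisely $K_v$ units for real requests; hence the requests that Rule~2 routes to SBS $v$ number at most $K_v$, and by \emph{(ii)} they are routed only to free, matching-file facilities $a_{vn}$, i.e. to cached files on in-range SBSs. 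This verifies the bandwidth constraint $\sum_{u,n}f_{v,u,n}\le K_v$, the coupling constraint $f_{v,u,n}\le y_{v,n}\lambda_{u,n}$, and the range constraint, while Rule~3 sends every remaining request to the MBS so that $\sum_{v}f_{v,u,n}=\lambda_{u,n}$; thus $\bm f\in\mathcal{F}_{CR1}$.

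Finally I would close the cost accounting. By \emph{(i)}--\emph{(ii)}, only real requests $b_{unj}$ can be assigned to $a_0$, and by Rule~3 these are exactly the requests routed to the MBS, whose number is the (CR1) objective value $C_{CR1}=\sum_{u,n}f_{0,u,n}$. Writing the total cost of the optimal $U_{CR1}$ solution as $C=(0.5+c)D+C_{CR1}$ and rearranging gives $C_{CR1}=C-D(0.5+c)$, which is the value of the constructed feasible (CR1) solution, completing the proof.
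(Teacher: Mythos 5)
Your proposal is correct and follows essentially the same route as the paper's proof: the same optimality-plus-metric argument forcing virtual clients $b'_{vj}$, $b''_{vj}$ onto own-SBS facilities and real requests onto matching in-range facilities or $a_0$, the same capacity counts ($N-B_v$ saturated facilities, $(B_v-1)K_v$ virtual units leaving exactly $K_v$ for real demand), Rules 1--3, and the cost identity $C=(0.5+c)D+\sum_{u,n}f_{0,u,n}$. The ``main obstacle'' you flag is resolved in the paper not by exhibiting an explicit rerouting but by a per-unit exchange comparison---misplacing a $b'$ client costs at least $1+2c$ extra per unit while the best possible gain from the freed capacity is $1$ per unit---so a minimum-cost solution can never misplace a virtual client in the first place.
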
\vspace{-0.2in}
\begin{proof}
	We construct the solution to (CR1) as follows:
	
	(i) \emph{For each facility $a_{vn}$ \emph{not} serving any client of the form $b'_{vj}\in\mathcal{V}_{2,2}$, $\forall j$, place file $n$ to the cache of SBS $v$} (Rule $1$). Observe that each client $b'_{vj}\in\mathcal{V}_{2,2}$, $\forall j$, must be assigned to a facility of the form $a_{vn}$, $\forall n$, at per unit cost $0.5+c$. This is because each of the other choices incurs at least $1+0.5+3c$ per unit cost. Thus, the extra cost paid is at least $1+2c$. On the other hand, each client $b_{unj}\in\mathcal{V}_{2,1}$, $\forall u, n, j$, can always be assigned to the facility $a_{0}$ at per unit cost $1+0.5+c$. This means that the potential gain for assigning it to a facility of the form $a_{vn}$, $\forall n\in\mathcal{N}$, at cost $0.5+c$, is equal to $1$, which is strictly lower than the extra cost paid above. We also observe that, the demand of each of these clients is equal to the capacity of each of the facilities of the form $a_{vn}$, $\forall n\in\mathcal{N}$. There are $N-B_v$ such clients. Thus, these clients fully occupy the capacity of $N-B_v$ of these facilities. Consequently, exactly $B_v$ of the above facilities will remain \emph{uncovered} corresponding to the files placed at the cache of SBS $v$.
	
	(ii) \emph{For each facility of the form $a_{vn}$ serving a client of the form $b_{unj}\in\mathcal{V}_{2,1}$, $\forall v,n,u,j$ route the $j$th request of user $u$ for file $n$ to SBS $v$} (Rule $2$). Observe that each of the clients of type $b''_{vj}\in\mathcal{V}_{2,3}$, $\forall j$, must be assigned to one of the $B_v$ \emph{uncovered} facilities of the form $a_{vn}$, $\forall n$, similarly to the above case. The capacity of each of these facilities is equal to $B_n$. There exist $(B_v -1)K_v$ such clients, each of them with demand equal to $1$. Thus, the remaining capacity suffices for serving at most $K_v$ units of demand of the clients $b_{unj}\in\mathcal{V}_{2,1}$, $\forall u,n,j$. By construction, a client $b_{unj}\in\mathcal{V}_{2,1}$ can be served by a facility $a_{vn}$ with cost equal to $0.5+c$ iff $v \in \mathcal{V}_c(u)$. The cost for serving $b_{unj}\in\mathcal{V}_{2,1}$ by $a_{vn}$, $\forall v \notin \mathcal{V}_c(0)$ is more than the serving cost by $a_{0}$. Thus, at most $K_v$ requests generated by users in the coverage area of an SBS $v$ will be routed to $v, \forall v\in\mathcal{V}_c$. The remaining $C-D(0.5+c)$ requests will be routed to the MBS (Rule $3$).
\end{proof}

We now present an approximation framework for (CR1) based on the above two-way reduction. Given that it is NP-hard to approximate the optimal solution of UHCMFL, previous work focused on obtaining \emph{bi-criteria} approximation algorithms \cite{bateni-uhcfl}, \cite{tardosSTOC97}. Formally, an $(\alpha,\beta)-$bi-criteria approximation algorithm finds an infeasible solution with cost at most $\alpha \geq 1$ times higher than the optimal cost, and aggregate demand assigned to each facility at most $\beta\geq 1$ times larger than its capacity. Similarly, we can define an $(\alpha,\beta)-$bi-criteria approximation algorithm for (CR1) such that its solution violates the SBS link capacities by a factor of $\beta$. Clearly, when $\beta=1$, a feasible solution is attained. In case the facilities have equal capacities (\emph{uniform} case) we can improve the approximation ratios \cite{Poularakis2014Approximation}.


Although (CR1) and UHCMFL problems are equivalent in terms of their optimal solution, the extension of approximation algorithms from one to the other is not straightforward. The following Theorem describes how the bi-criteria bound changes when translating the solution to handle CR1.
Let us define:
\begin{equation}
	c'=\frac{D(0.5+c)}{ \sum_{u \in \mathcal{V}_r} \sum_{n \in \mathcal{N}} (\lambda_{u,n})- \sum_{v\in \mathcal{V}_c}K_v }
	\label{eqn:c'}
\end{equation}
Then, we have the following theorem proved in \cite{Poularakis2014Approximation}:
\begin{thm}[theorem style=plain]{Capacitated Femtocaching with Violations \cite{Poularakis2014Approximation}}{poular-approxim1}
For any $(\alpha,\beta)-$bi-criteria approximation algorithm for the UHCMFL problem there is an $\big(\alpha + (\alpha-1)c',(\beta-1) N+1\big)-$bi-criteria approximation algorithm for the (CR1) problem, requiring the same computational complexity. 
\end{thm}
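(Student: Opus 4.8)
The plan is to treat the reduction of the previous subsection as a black-box translation scheme and to push the $(\alpha,\beta)$ guarantee through it, controlling the cost blow-up and the capacity blow-up separately. First I would fix a (CR1) instance, build the associated UHCMFL instance $U_{CR1}$, run the given $(\alpha,\beta)$-bi-criteria algorithm on $U_{CR1}$ to obtain a (possibly infeasible) facility solution of cost $\mathrm{ALG}_U$, and then apply Rules $1$--$3$ to read off a caching-and-routing policy for (CR1) with objective value $C$. The analysis then splits into two independent accounting steps, plus a feasibility check.

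For the cost ratio I would first upgrade the one-directional bounds of Lemmas \ref{lem:lem4} and \ref{lem:lem5} into the exact identity $\mathrm{OPT}_U = \mathrm{OPT}_{CR1} + D(0.5+c)$: Lemma \ref{lem:lem4} gives $\mathrm{OPT}_U \le \mathrm{OPT}_{CR1}+D(0.5+c)$ and Lemma \ref{lem:lem5} gives the reverse inequality. Next, since every client's demand is charged $0.5+c$ except for an extra $+1$ per unit routed to the MBS facility $a_0$, and the metric costs force all virtual clients ($b'$ and $b''$) onto $a_{vn}$ facilities, the translated policy satisfies $C = \mathrm{ALG}_U - D(0.5+c)$. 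Chaining $\mathrm{ALG}_U \le \alpha\,\mathrm{OPT}_U$ yields $C \le \alpha\,\mathrm{OPT}_{CR1} + (\alpha-1)D(0.5+c)$. The final ingredient is the bound $\mathrm{OPT}_{CR1} \ge \sum_{u,n}\lambda_{u,n} - \sum_{v}K_v$ --- at least this many requests must reach the MBS because the aggregate SBS bandwidth is $\sum_v K_v$ --- which together with $D(0.5+c)=c'\big(\sum_{u,n}\lambda_{u,n}-\sum_vK_v\big)$ gives $D(0.5+c)\le c'\,\mathrm{OPT}_{CR1}$ and hence $C \le (\alpha+(\alpha-1)c')\,\mathrm{OPT}_{CR1}$.

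For the link-capacity factor I would argue per SBS $v$. All $b'$ and $b''$ clients attached to $v$ live on the $N$ facilities $a_{v1},\dots,a_{vN}$, each carrying at most $\beta K_v$ units in the bi-criteria solution, so their total load is at most $N\beta K_v$. Subtracting the fixed virtual demand $(N-B_v)K_v+(B_v-1)K_v=(N-1)K_v$ leaves the real traffic $R_v$ routed to SBS $v$ bounded by $R_v \le N\beta K_v - (N-1)K_v = K_v\big((\beta-1)N+1\big)$, i.e. a violation factor of $(\beta-1)N+1$. I would then check storage feasibility: in the relevant $(1+\epsilon)$-violation regime $\beta<2$ of the cited UHCMFL result each facility holds at most one $b'$ client, so the $N-B_v$ such clients cover $N-B_v$ distinct facilities and Rule $1$ caches exactly $B_v$ files per SBS.

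The hard part will be that Lemma \ref{lem:lem5} was established for an \emph{exact} minimum-cost UHCMFL solution, whereas here I must re-run its argument on an approximate, capacity-violating solution. Two points need care: (i) justifying that the bi-criteria solver still assigns every virtual client to an $a_{vn}$ facility rather than paying the large metric detour to $a_0$, which underlies both the clean identity $C=\mathrm{ALG}_U-D(0.5+c)$ and the demand bookkeeping of the capacity bound; and (ii) ensuring Rule $1$ returns a storage-feasible caching set once capacities are inflated. I expect (i) to be the delicate step --- it should be resolvable either by a cost-neutral post-processing that redistributes any misrouted virtual clients, or by showing that such misrouting can only tighten the bounds --- whereas (ii) is dispatched by the $\beta<2$ observation above.
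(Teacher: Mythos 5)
Your proposal is correct and takes essentially the same route as the paper's reduction-based argument: you push the $(\alpha,\beta)$ guarantee through the $U_{CR1}$ construction, combine Lemmas \ref{lem:lem4}--\ref{lem:lem5} into the identity $\mathrm{OPT}_{U}=\mathrm{OPT}_{CR1}+D(0.5+c)$, use the definition of $c'$ in \eqref{eqn:c'} together with the trivial lower bound $\mathrm{OPT}_{CR1}\geq \sum_{u,n}\lambda_{u,n}-\sum_{v}K_v$ to obtain the cost factor $\alpha+(\alpha-1)c'$, and use the virtual-demand accounting $(N-B_v)K_v+(B_v-1)K_v=(N-1)K_v$ against the inflated aggregate capacity $N\beta K_v$ to obtain the violation factor $(\beta-1)N+1$ --- precisely the argument the paper defers to \cite{Poularakis2014Approximation}. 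The delicate step you flag, namely re-running the exchange argument of Lemma \ref{lem:lem5} on an approximate, capacity-violating solution so that all virtual clients remain on their designated facilities (which underpins both the identity $C=\mathrm{ALG}_U-D(0.5+c)$ and the capacity bookkeeping), is indeed where the cited proof does its work, and it is resolved there exactly as you suggest, via the same cost comparison (a loss of at least $1+2c$ per misrouted unit versus a potential gain of $1$) already embedded in the proof of Lemma \ref{lem:lem5}.
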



We can use this result to design approximation algorithms. For example, Theorem $\ref{thm:poular-approxim1}$ combined with the Algorithms presented in \cite{tardosSTOC97} provides bi-criteria approximation algorithms for (CR1), see also Fig. \ref{fig:FLP-mapping}. However, as the bandwidth capacities of the SBSs may be violated with this approach, the designer may need to endow the base stations with additional bandwidth in order to ensure the described approximation ratio. In many cases, it is not possible to perform additional investments and hence a fraction of the requests that reach an SBS will be rerouted to the MBS, further increasing its load. This additional cost can be quantified using the following Theorem that characterizes the worst case scenario:
\begin{thm}[theorem style=plain]{Capacitated Femtocaching without Violations \cite{Poularakis2014Approximation}}{TCOM-3}
	For any $(\alpha,\beta)-$bi-criteria approximation algorithm for the UHCMFL problem there is an $(\alpha + (\alpha-1)c') c''(\beta)-$approximation algorithm for the (CR1) problem, requiring the same computational complexity, where:
	\begin{equation}
			c''(\beta)=\frac{\sum_{u \in \mathcal{V}_r} \sum_{n\in\mathcal{N}}(\lambda_{u,n})-\sum_{v \in \mathcal{V}_c} K_v} {\sum_{u \in \mathcal{V}_r}\sum_{n \in \mathcal{N}}(\lambda_{u,n})-((\beta -1)N+1)\sum_{v\in \mathcal{V}_c}K_v}
	\end{equation}
\end{thm}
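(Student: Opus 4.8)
The plan is to prove the statement as a feasibility-repair argument built directly on top of Theorem~\ref{thm:poular-approxim1}. Write $\Lambda\triangleq\sum_{u\in\mathcal{V}_r}\sum_{n\in\mathcal{N}}\lambda_{u,n}$ for the total demand, $\mathcal{K}\triangleq\sum_{v}K_v$ for the aggregate SBS bandwidth, and $\beta'\triangleq(\beta-1)N+1$ for the violation factor, so that $c''(\beta)=(\Lambda-\mathcal{K})/(\Lambda-\beta'\mathcal{K})$. I would work throughout in the massive-demand regime $\Lambda>\beta'\mathcal{K}$, which is exactly what makes $c''(\beta)$ finite, positive, and at least $1$. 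First I would invoke Theorem~\ref{thm:poular-approxim1} to obtain, in the same running time, a (generally infeasible) caching/routing pair whose cost, i.e.\ the number of requests sent to the MBS, is at most $A\triangleq\alpha+(\alpha-1)c'$ times the optimum $\mathrm{OPT}$ of (CR1), and in which each SBS $v$ is assigned a load $s_v\le\beta'K_v$.

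The second step is the repair to feasibility. Keeping the caching decisions $\bm{y}$ untouched, for every overloaded SBS $v$ I would drop the excess $\max(0,s_v-K_v)$ of its served requests and re-route them to the MBS. Since dropping requests only relaxes the storage constraint of $\mathcal{Y}_{CR1}$ and tightens nothing, the result lies in $\mathcal{F}_{CR1}$ and is feasible; its cost is $C_{\text{feas}}=\Lambda-\sum_v\min(s_v,K_v)$, whereas the infeasible cost is $C_{\text{bi}}=\Lambda-\sum_v s_v$. The crux is then the purely arithmetic claim
\[
C_{\text{feas}}\;\le\; c''(\beta)\,C_{\text{bi}},
\]
which I would establish by maximizing the ratio $\bigl(\Lambda-\sum_v\min(s_v,K_v)\bigr)/\bigl(\Lambda-\sum_v s_v\bigr)$ over all admissible load vectors $0\le s_v\le\beta'K_v$.

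For that optimization I would argue by monotonicity in each coordinate $s_v$ separately. When $s_v\ge K_v$ the numerator is insensitive to $s_v$ while the denominator strictly decreases, so the ratio grows; when $s_v<K_v$ numerator and denominator decrease by the same amount, and since the ratio exceeds one this again increases it. Because each SBS is capped independently at $\beta'K_v$, the joint maximizer is the full-violation configuration $s_v=\beta'K_v$ for every $v$, where $\sum_v\min(s_v,K_v)=\mathcal{K}$ and $\sum_v s_v=\beta'\mathcal{K}$, giving precisely $c''(\beta)$. Chaining the two estimates then yields $C_{\text{feas}}\le c''(\beta)\,C_{\text{bi}}\le A\,c''(\beta)\,\mathrm{OPT}$, which is the claimed $(\alpha+(\alpha-1)c')\,c''(\beta)$ ratio; the complexity is unchanged since both the call to Theorem~\ref{thm:poular-approxim1} and the linear re-routing are cheap.

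I expect the main obstacle to be this coordinate-wise monotonicity step: one must verify that the extremal load vector is full violation on \emph{every} SBS rather than a concentration of load, and that the denominator $\Lambda-\sum_v s_v$ stays positive throughout, which is exactly where the standing assumption $\Lambda>\beta'\mathcal{K}$ enters. A secondary point worth checking carefully is that the repair genuinely preserves feasibility of $\mathcal{Y}_{CR1}$ and $\mathcal{F}_{CR1}$ at once, i.e.\ that lowering some $f_{v,u,n}$ and compensating with $f_{0,u,n}$ never disturbs a caching constraint that was already satisfied, so that the bi-criteria solution is mapped to an honestly feasible point of (CR1).
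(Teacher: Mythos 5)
Your repair-and-reroute argument is correct and is essentially the proof the paper intends: the discussion immediately preceding the theorem already describes rerouting the excess SBS load to the MBS and ``characterizing the worst case scenario,'' which is precisely your step of taking the bi-criteria solution from Theorem~\ref{thm:poular-approxim1}, dropping each SBS's excess $\max(0,s_v-K_v)$ onto the uncapacitated MBS, and showing by coordinate-wise monotonicity that the ratio $(\Lambda-\sum_v\min(s_v,K_v))/(\Lambda-\sum_v s_v)$ is maximized at full violation $s_v=((\beta-1)N+1)K_v$, yielding exactly $c''(\beta)$. The paper defers the details to \cite{Poularakis2014Approximation}, and your standing assumption $\Lambda>((\beta-1)N+1)\sum_v K_v$ is indeed implicit in the statement, since $c''(\beta)$ is only finite and at least $1$ in that massive-demand regime (the same regime that makes $c'$ well defined).
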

\noindent The detailed proof is provided in \cite{Poularakis2014Approximation} that also discusses the special case of SBSs with identical bandwidth capacities.

\subsection{Discussion of Related Work}




There are several works on the (un)capacitated femtocaching problem, including \cite{shakkottai2014} that considers a dynamic model and stabilizes the request queues, and \cite{massoulieATC15} which minimizes costs through load-balancing and content replication. The idea of creating a femtocaching network through leased caches was proposed in \cite{poularakisTNSM16}, and further extended in \cite{tasos-jsac18} which investigated the interplay between the user association rule and the caching policy. In \cite{poularakisTWC16}, the authors proposed the joint design of MBS multicast transmissions and SBSs caching policies, noting that both techniques aim to exploit the recurrence of user requests either in space (caching) or in time (multicast). Finally, the femtocaching model has been  extended to deployment of services (instead of content) at edge servers, see \cite{hong-hou-jsac} and references therein.

It is interesting to note the existence of different methods for using facility location theory to tackle caching problems. Here, we modeled every cached item as a facility \cite{Poularakis2014Approximation}, but it is possible to use the FLP variant with service installation costs \cite{shmoysSODA04}, where facilities model the caches and services represent the cached files. For an additional discussion about the application of FL theory to CN problems we refer the reader to \cite{BaevSIAM08}.

The bipartite CN model can be used for networks that do not have, at-a-first-glance, this structure. In fact, any arbitrary network where the routing paths do not have hard capacity constraints or load-dependent costs, can be transformed to an equivalent uncapacitated bipartite network, see Fig. \ref{fig:bipartite_model}. Moreover, one can use various criteria for selecting the paths when building the equivalent bipartite graph. For example, paths that exhibit longer delay than an agreed threshold can be excluded from the new graph, see e.g., \cite{RossComCom02}. Using this transformation, one can apply the bipartite greedy caching algorithm on the transformed graph, and find a caching policy with bounded optimality gap for the initial network when the objective satisfies the necessary properties. And, finally, there are also sophisticated approaches that extend this model to caching networks with link capacities, see, e.g., \cite{fleischerSoda06}. A question that remains is whether in case of specific graphs we can actually guarantee a better result than the $50\%$ of greedy and the $63\%$ of pipage routing. We explain in the next section that for tree network graphs, it is possible to improve the approximation results and/or obtain algorithms with lower complexity.

\begin{figure}[t]
	\centering
	\includegraphics[width=0.95\textwidth]{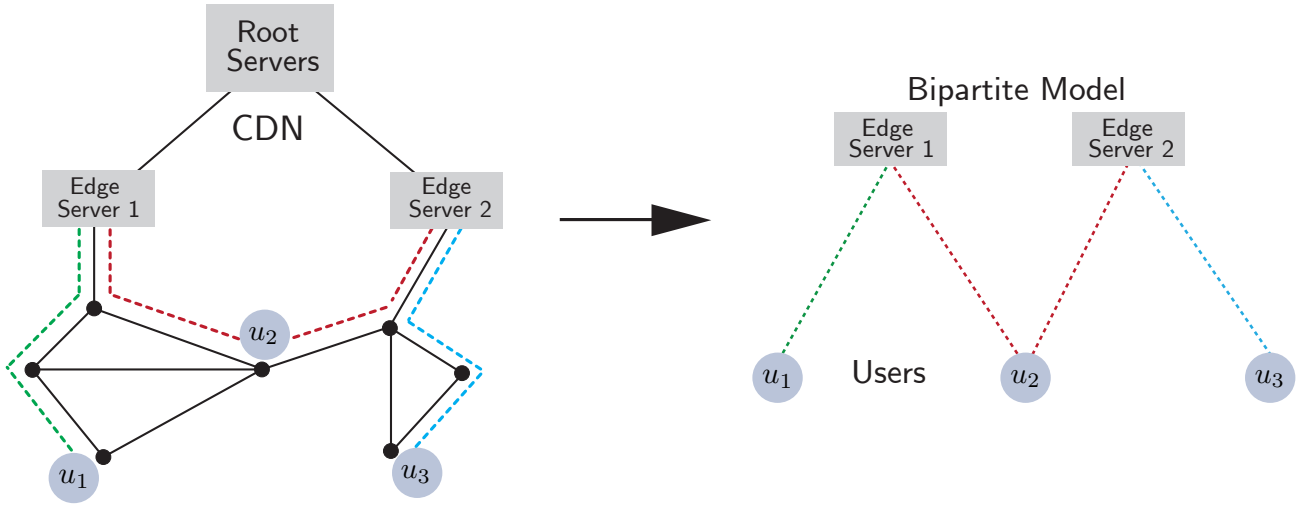}
	\caption{\small{The bipartite caching network can be used to model other CNs that do not have a bipartite structure but their operation is based on a threshold rule, such as a maximum delay for the content delivery. Here, we present an example of a content distribution network (CDN) that can be modeled with a bipartite graph. Users $1$ and $3$ can be only served by the respective nearby edge servers (caches), while user $2$ can be served by both servers.}}
	\label{fig:bipartite_model}
\end{figure}


\section{Hierarchical Caching Networks}



An important class of CNs arises when the network graph has a tree structure, i.e., when caches are organized in layers and each cache is only connected to caches in the layers immediately above and below. Examples of such caching networks are CDNs, IPTV, and cellular networks \cite{paschos-jsac}. These tree networks have the following interesting features: (i) the demand emanates from the leaf nodes; (ii) an origin server is always placed at the root; (iii) a request that is not served by a cache at a low layer is routed at a higher layer (i.e., towards the root) at the expense of additional routing cost; (iv) caching contents at the lower layers of the hierarchy reduces the delivery cost  but  serves less requests, while at the higher layers we have more costly delivery  but can potentially serve more users. The tree network structure facilitates the deployment of caches as we already discussed \cite{krishnan-ToN00, GuhaFOCS00, RobertTPDS08}, but also the design of caching and routing policies as we will explain in detail in this section.



We also consider here the general case where the leaf caches may also be connected to each other, hence having an hierarchical but not necessarily a tree structure. And we will consider both the case of cooperative networks where leaf nodes can exchange files directly (if connected) or indirectly through their parent nodes; and non-cooperative networks where the content can be fetched only from higher layer caches.

%

\subsection{Minimizing Routing Costs}

\begin{figure}[t]
	\centering
	\includegraphics[width=1.0\textwidth]{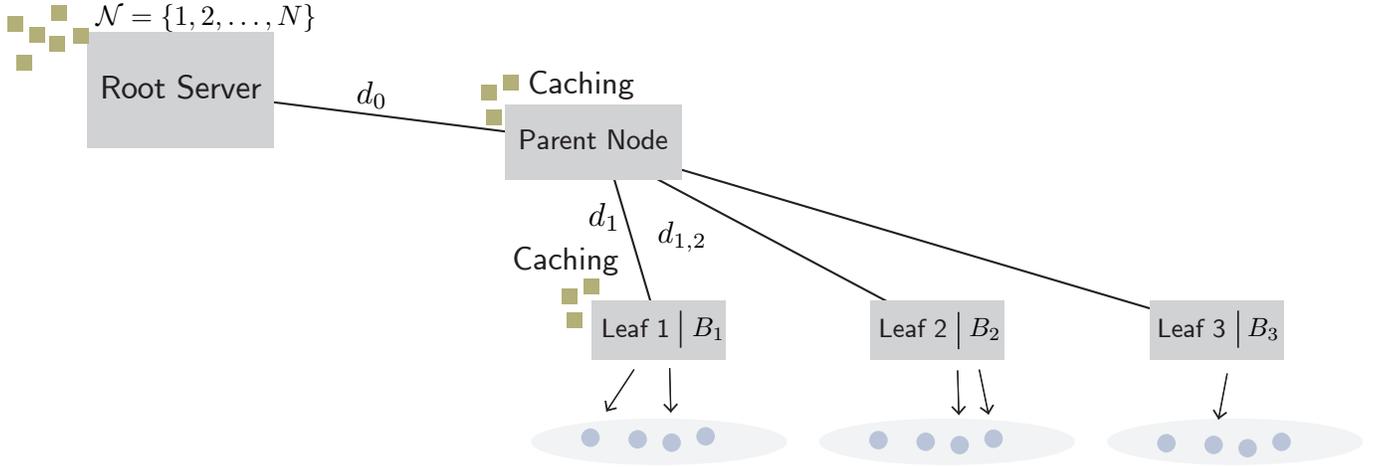}
	\caption{\small{An hierarchical caching network with three tiers. The demand is generated at the leaf nodes $\mathcal{V}^0$. Each node $v$, including the parent, has limited storage capacity of $B_v$ files. The root server stores the entire catalog $\mathcal{N}$. An item is delivered from the parent node to leaf $1$ with cost $d_1$, from root to parent node with cost $d_0$, and from leaf 1 to leaf 2 with cost $d_{1,2}$.}}
	\label{fig:hierarchical_model}
\end{figure}

We start with the simplest non-trivial hierarchical CN, with 3 layers comprising 1 root server, 1 parent node, and $V$ leaf nodes. The goal is to find the routing and caching policy minimizing the aggregate routing cost, see also \cite{BorstInfocom10}. We will discuss both the cooperative and non-cooperative model. In detail, this CN has $\mathcal{V}^{0}$ \emph{leaf} nodes\footnote{In this model all nodes, except the parent, create content requests; hence we simplify notation and use $\mathcal{V}$ (and $\mathcal{V}^0$) instead of $\mathcal{V}_c$, and also omit set $\mathcal{V}_r$.} connected to a single parent node ($v\!=\!0$) which in turn is linked to a distant root server ($v\!=\!-1$) that stores the entire catalog, same as in Fig. \ref{fig:hierarchical_model}. The files have different sizes $s_1,\ldots, s_N$, and $\lambda_{u,n}$ is the demand for the $n$th file at leaf $u$. Parameter $d_0$ is the routing cost for transferring a unit of file from the root to parent node, $d_u$ the cost for transferring a unit of file from the parent node to $u$, and $d_{u,v}$ the respective cost when it is delivered to $v$ through leaf $u$. We assume that it is cheaper to fetch a file from another leaf than from the root, i.e., $d_{u,v}\leq d_0+d_u$, $\forall u,v\in\mathcal{V}^0$.

The caching policy $\bm{y}$ is drawn from set:
\begin{equation}
	\mathcal{Y}_{CR2a}=\{ \bm{y}\in\{0,1\}^{V\times N}\,\vert\, \sum_{n\in\mathcal{N}}s_n y_{v,n} \leq B_v,\,   v\in\mathcal{V}\}\,, \nonumber
\end{equation}
where $\mathcal{V}=\mathcal{V}^0\cup\{0\}$. And the set of eligible routing policies $\bm{f}$ belongs to:
\begin{equation}
	\mathcal{F}_{CR2a}=\{\bm{f}\in\{0,1\}^{V^0\times V\times N} \}\,. \nonumber
\end{equation} 
Note that the assignment of a requester $u$ to cache $v$ specifies also the data transfer route since there exists a unique path connecting any two nodes. Hence, we have replaced again the subscript $p$ in the routing variables with $(v,u)$.

Our goal is to minimize the routing cost, or equivalently, maximize the routing savings. The latter are defined as the costs we avoid paying when a file requested at $u\in\mathcal{V}^0$ is delivered through a nearby node $v\in\mathcal{V}$ (including the case $v\!=\!u$) instead of fetching it from the root server, a formulation we used also in (C1). For example, when $f_{v,u,n}=1$, the $\lambda_{u,n}$ requests are satisfied by $v$ with cost $d_{u,v}$, and we save cost $d_u+d_0$, per unit of file, that we would have paid if it was fetched from the root. The objective function can be thus written as $J_{CR2a}(\bm{f}, \bm{y})=$
\begin{equation}
\sum_{u\in\mathcal{V}^0} \sum_{n\in\mathcal{N}}  s_n\lambda_{u,n}\Big[ (d_0+d_u)y_{u,n}+c_0f_{u,0,n}+\sum_{j\in\mathcal{V}^{0}} (d_0\!+\!d_u\!-\!d_{u,v})f_{v,u,n}  \Big]. \nonumber
\end{equation}

Combining the above, we define the following  problem:
\begin{opt}{Hierarchical Caching and Routing (CR2a)}
	\vspace{-0.2in}
\begin{align}
\underset{ \substack{\bm{y} \in\mathcal{Y}_{CR2}\\ \bm{f}\in\mathcal{F}_{CR2a}}}\max   &J_{CR2a}(\bm{f}, \bm{y})   \nonumber \\
	\text{s.t.}\,\,\,\, & f_{v,u,n}\leq y_{v,n}, \,\,\,u\in\mathcal{V}^0, v\in\mathcal{V}, n\in\mathcal{N} \label{eq:xyconstraint-borst-1a} \\
	& y_{u,n}\!+\!f_{0,u,n}\!+\!\sum_{v\in\mathcal{V}^0}f_{v,u,n}\leq 1,\,\, u\in\mathcal{V}^0,\,n\in\mathcal{N}. \label{eq:xyconstraint-borst-1b}
\end{align}
	\vspace{-3mm}
\end{opt}
\noindent Constraints (\ref{eq:xyconstraint-borst-1b}) ensure that each request is satisfied one time. The reader might have noticed that the demand satisfaction constraint is set as equality in some cases and as inequality in others. The typical formulation is the former, yet we can use inequalities when the objective is to maximize a function that is increasing in $y$ (as the case of caching gain in (CR2a)). Another case where inequalities are valid is for elastic demand, where the CN might select not to satisfy some requests in order to save costs.


To facilitate the presentation we discuss next the symmetric case of (CR2a) where the leaf nodes are similar in terms of bandwidth costs, user demand and cache sizes, i.e., $d_u=d, d_{u,v}\!=\!d_{u}^{\prime}\!=\!d^{\prime}, \lambda_{u,n}\!=\!\lambda_n$, and $B_u\!=\!B$ for all $u\in\mathcal{V}^0$. The asymmetric case is discussed in \cite{BorstInfocom10}. First, we relax the integrality constraints and obtain a linear program which gives an upper bound on the bandwidth savings (or, a lower bound for the achievable routing cost). Let us define the following parameters and variables:
\begin{align}
	d^{\prime\prime}=V^0(d_0+d)-(V^0-1)d^{\prime}\,;\,\,\,\,\, r_n=\min\{1, y_{0,n}+\sum_{i\in\mathcal{V}^0}y_{u,n} \},\nonumber \\ p_n=r_n-y_{0,n}\,;\,\,\,\,\,q_n=(y_{0,n} +\sum_{i\in\mathcal{V}^0}y_{u,n}-r_n)/(V^0-1)\,, \nonumber
\end{align}
where $r_n$ is the fraction of file $n$ that is collectively stored at the leaf nodes and the parent node; variable $p_n\in[0,1]$ decides whether some fraction of $n$ is stored at the leaf nodes (this enables cooperative exchanges), and variable $q_n$ indicates whether $n$ is fully replicated across leaf nodes. Then, (CR2a) can be transformed to an equivalent Knapsack-type problem with the auxiliary continuous variables $\bm{p}$, $\bm{q}$ and the discrete caching variables $\bm{y}_0=(y_{0,n}:n\in\mathcal{N})$. In detail, we can write: 
\begin{opt}{Symmetric HRC (CR2b)}
	\vspace{-0.2in}
 \begin{align}
 \underset{\bm{p}, \bm{q} \in [0,1]^N, \bm{y}_0\in\{0,1\}^N}	\max \,\, &\sum_{n\in\mathcal{N}} s_n\lambda_{u,n}\Big[ d^{\prime\prime}p_n+d^{\prime}(V^0-1)q_n+V^0d_0y_{0,n}\Big]   \nonumber \\
 	\text{s.t.}\qquad &   \sum_{n\in\mathcal{N}}s_n y_{0,n} \leq B_0, \label{eq:storeconstraint-borst-2} \\
 	&  \sum_{n\in\mathcal{N}}s_n\big(p_n+(V^0-1)q_n\big)\leq V^0 B, \label{eq:xyconstraint-borst-2a} \\
 	&  p_n+y_{0,n}\leq 1,\,\,\,n\in\mathcal{N}, \label{eq:xyconstraint-borst-2b} \\
 	&  q_n+y_{0,n}\leq 1,\,\,\,n\in\mathcal{N}. \label{eq:xyconstraint-borst-2c}
 \end{align}
	\vspace{-0.35in}
\end{opt}
There are two knapsacks of sizes $B_0$ (cache of parent node) and $V^0B$ (aggregate leaf cache storage) and $2N$ items of sizes $a_n=s_n$, $a_{N+n}=(V^0-1)s_n$, $n=1,\ldots,N$. Items $N+1,\ldots,2N$ cannot be included in the first knapsack, by definition. The value of item $n$ when included in the first knapsack is $d_0V^0$, which is equal to the aggregate bandwidth savings for all leaf nodes that will receive this item from the parent and not the root node. On the other hand, the values of items $n$ and $N+n$ when included in the second knapsack are $d^{\prime\prime}$ and $(V^0-1)d^{\prime}$, respectively. The latter observation may be interpreted as follows. Storing item $n$ in the first of $V^0$ leaf nodes yields bandwidth savings of $s_n\lambda_nd^{\prime\prime}$ since this node will save $d+d_0$ bandwidth and all other leaf nodes will save $d+d_0-d^{\prime}$. Also, caching $n$ in each additional leaf node yields further bandwidth savings of $s_n\lambda_nd^{\prime}$ because this node will have to pay the cost $d^{\prime}$. When $y_{0,n}=1$, $p_n$ and $q_n$ must be zero, since serving a content request both from the parent and leaf nodes cannot be optimal.


\IncMargin{0.2em}
\begin{algorithm}[t]
\small
	\nl \textbf{Initialize}: Set arbitrary values for $y_{u,n}\in\{0,1\},\, n=1,\ldots,N$.\\%
	\nl \Repeat{}{
		\nl Consider the new request for $n$;\\%
		\nl Case (a): If $y_{u,n}=0$, item is fetched from leaf node $v$ with $y_{v,n}=1$; \\%
		\nl Case (b): If $y_{u,n}=0$ and $y_{v,n}=0$ $\forall v\in\mathcal{V}^0$, then $n$ is fetched from root; \\%
		\nl Set the \emph{utility} $w_{u,n}$ for item $n$, $w_{u,n}=d^{\prime}\lambda_n$ in (a), and $w_{u,n}=d^{\prime\prime}\lambda_n$ in (b); \\%
		\nl Compare $w_{u,n}$ with $w_{u,m}$ where $m$ is a cached item at $u$ offering currently the smallest utility; \\%
		\nl If $w_{u,n}>w_{u,m}$, then replace $m$ at $u$ with $n$. \\%
	}
	\caption{\small{Local Search Caching Algorithm}}\label{algo:greedy-borst}
\end{algorithm}\DecMargin{1em}
\normalsize

There are two interesting versions of the above problem: (i) when there is only intra-layer cooperation and the parent node does not have storage; and (ii) when the leaf nodes cannot cooperate. We can obtain the formulation for (i) by removing variables $y_{0,n}$ from the objective and constraints in (CR2b). Similarly, we obtain the problem formulation for case (ii) by removing variables $f_{v,u,n}$, $u,v\in\mathcal{V}^0$ from (CR2a). These problems can be conveniently solved by \emph{local search algorithms}, which are distributed and achieve the best known performance by iteratively searching for an improved solution in the ``neighborhood'' of the current cache configuration. 





Algorithm \ref{algo:greedy-borst} is an example that executes local search in distributed fashion for the case of intra-layer cooperation. In each iteration, it selects randomly a node and a non-cached file from the library and examines if a swap with a cached file would improve the overall CN performance. There are two ways for evaluating this improvement, namely as an increase in the utility of the specific cache (strictly local; shown in line 8), and the version where the swap is assessed with respect to the total demand in the network. We refer to the algorithm using this latter metric as the \emph{generalized} local search. The following theorem summarizes the main result:
\begin{thm}[theorem style=plain]{Hierarchical Caching Approximation}{}
The local search caching algorithm achieves a $3/4$-approximation for (CR2b); and the generalized local search algorithm achieves a $1/2$-approximation for (CR2a).
\end{thm}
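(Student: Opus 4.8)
The plan is to treat both claims within the single framework of local search for maximizing a monotone submodular set function subject to a matroid constraint, and then to sharpen the estimate for the symmetric problem (CR2b) by opening up its special structure. First I would eliminate the routing variables: for a fixed placement $\bm{y}$ the optimal routing in both problems simply serves each request from the reachable copy of smallest cost, so I can define the set function $J(Y)=\max_{\bm{f}} J_{CR2a}(\bm{f},\bm{y})$, where $Y\subseteq S$ encodes the placement through the ground set \eqref{eq:femto-matroid1}. Exactly as in Lemma \ref{lem:lem3} for (C1), $J$ is monotone (an extra cached copy can only create savings) and submodular (the marginal saving of a new copy of file $n$ at a node equals the current best routing cost minus the new one, which only shrinks as further copies appear). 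The feasible placements satisfy $|Y\cap S_v|\leq B_v$ and therefore form a partition matroid, so both problems are instances of monotone submodular maximization over a partition matroid.

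For the generalized local search on (CR2a) I would run the classical argument of Fisher, Nemhauser and Wolsey. Let $A$ be the placement returned by the algorithm, a local optimum under single swaps evaluated against the global objective $J$, and let $O$ be an optimal placement. The matroid exchange property supplies a map $\pi:O\setminus A\to A\setminus O$, used at most once per element, such that $A-\pi(o)+o$ is feasible for every $o\in O\setminus A$. Local optimality gives $J(A-\pi(o)+o)\leq J(A)$, i.e.\ every such swap has nonpositive gain. Summing these inequalities over $o\in O\setminus A$ and repeatedly invoking submodularity to telescope the marginal contributions of the added elements of $O\setminus A$ and the removed elements $\pi(o)$ yields $J(O)-J(A)\leq J(A)$, hence $J(O)\leq 2J(A)$, which is the stated $1/2$ guarantee.

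For the strictly local search on (CR2b) the generic estimate only delivers $1/2$, so I would exploit the symmetric reduction to the two-knapsack problem. There the contribution of each file is a concave, piecewise-linear function of its number of stored copies: the first leaf copy is worth $d''$, every additional leaf copy is worth $d'\leq d''$, and a parent copy is worth $d_0V^0$. Because the leaf nodes are interchangeable, the strictly-local utility $w_{u,n}$ of Algorithm \ref{algo:greedy-borst} agrees, after the symmetric bookkeeping, with the true marginal value of a swap. The refined step is to pair $O$ with $A$ file by file and to charge the value that $O$ extracts but $A$ forgoes against the swaps that local optimality has already rejected; the concavity of the per-file value means each rejected swap certifies a proportionally larger share of the optimum than in the generic case, improving the charging factor from $2$ to $4/3$ and hence the ratio to $3/4$.

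The main obstacle is precisely the $3/4$ bound. The $1/2$ part is the textbook matroid local-search estimate and requires only that $J$ is monotone submodular. The difficulty is concentrated in the symmetric case, where generic submodularity is too weak: one must use the explicit two-knapsack structure of (CR2b), track how parent-cache placements $y_{0,n}$ trade off against the leaf replication variables $p_n,q_n$, and verify that swaps scored with the strictly-local metric still certify the sharper global ratio. Converting the concavity of the per-copy values exactly into the factor $4/3$, so that the charging argument is tight rather than merely better than $1/2$, is the delicate calculation I expect to dominate the proof.
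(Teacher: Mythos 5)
A point of reference first: the paper does not actually prove this theorem --- immediately after the statement it attributes the result to \cite{BorstInfocom10} --- so your proposal has to stand on its own merits. Your first half is largely the right argument: eliminating routing, observing that the per-request utility is the maximum savings over cached reachable copies (monotone, and submodular as a sum of max-type functions, with nonnegativity coming from $d_{u,v}\leq d_0+d_u$), and running the Fisher--Nemhauser--Wolsey one-swap telescoping is the standard route to the $1/2$ bound. One real caveat, though: the model in (CR2a) has heterogeneous file sizes $s_1,\ldots,s_N$, so the per-cache constraints $\sum_n s_n y_{v,n}\leq B_v$ are knapsacks, \emph{not} a partition matroid. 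The exchange map $\pi:O\setminus A\to A\setminus O$ and the feasibility of $A-\pi(o)+o$ that your telescoping relies on only exist when all items have the same size; as written, your $1/2$ argument implicitly assumes unit-size files, and you should either state that assumption or generalize the swap neighborhood to size-feasible exchanges.

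The $3/4$ half is a genuine gap, and you have effectively conceded it yourself: the sentence asserting that concavity ``improves the charging factor from $2$ to $4/3$'' \emph{is} the theorem for (CR2b), and it is nowhere derived. Moreover, the framework you set up does not transfer to this case. After the symmetric reduction, (CR2b) is a two-knapsack problem in the variables $(\bm{p},\bm{q},\bm{y}_0)$ with items of heterogeneous sizes $s_n$ and $(V^0-1)s_n$, coupled by the exclusivity constraints \eqref{eq:xyconstraint-borst-2b}--\eqref{eq:xyconstraint-borst-2c} between parent and leaf placements, so there is no one-for-one pairing of optimal items with local-optimum items: a single rejected swap must in general be charged simultaneously against the first-copy value $d''$, the bulk replication value $(V^0-1)d'$, and the parent value $V^0 d_0$ of optimal items of different sizes. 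Concavity of the per-file value in the copy count is true and relevant, but by itself it is compatible with any ratio in $[1/2,1]$; the constant $3/4$ only emerges from an explicit worst-case analysis over the three placement modes (parent copy, first leaf copy, full replication) --- or from comparing the local optimum against the fractional knapsack relaxation, which upper-bounds the integral optimum --- and since the $3/4$ bound is tight, any slack in the charging destroys the constant. As it stands, your proposal proves (modulo the size caveat) the $1/2$ claim and only conjectures the $3/4$ claim.
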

\noindent This result is due to \cite{BorstInfocom10} which also shows that a similar greedy algorithm achieves a tight approximation ratio for the model with inter-layer cooperation.  

From a different perspective, this algorithm can be seen as a dynamic policy which, upon the arrival of a request at a node (line 3), delivers the content (if cached) or fetches it from another leaf node (line 4), or the root if not available at any leaf node (line 5). Then, the algorithm decides if the content will be cached locally. In other words, this algorithm could encompass an eviction policy as well (see Chapter 3).

Finally, it is interesting to discuss some aspects of Algorithm \ref{algo:greedy-borst}. First, the property of locality arises due to the fact that the swapping decisions can be devised by each node independently. However, for the general asymmetric model this requires knowledge of the global CN utility. Second, given that this is a link-uncapacitated and non-congestible network, one could transform it to an equivalent bipartite caching graph, and use the respective algorithms. However, note that in the symmetric case, Algorithm \ref{algo:greedy-borst} achieves an improved approximation compared to the best result for the femtocaching problem, and involves low-complexity local search. This advantage disappears in the asymmetric case, for which one could use Algorithm \ref{algo:greedy-submodular}. Finally, it would be possible even to combine the two solutions and obtain an improved result. That is, first use the greedy bipartite caching algorithm to produce a feasible placement with 0.5 ratio, and then continue with local search to refine the approximation. Intuitively, this improvement will achieved due to the local search algorithm re-evaluating some of the arbitrarily-broken ties of the greedy iterations. 




\subsection{Minimizing Root-Server Traffic}


We next study the design of caching policies in a tree network \cite{Poularakis2016Complexity}, aiming to minimize the requests that are sent to the root server by satisfying as many of them as possible from in-network caches, see Fig. \ref{fig:hierarchical_model}. Reducing the load of root servers is very important as they are typically deployed at distant locations, and hence the off-network content transfer cost is typically very high. Moreover, in large networks these servers cannot satisfy concurrent requests from all leaf nodes, and hence reducing their load is crucial for sustaining load surges and avoiding flash crowd phenomena. 


A request can be served locally if a  leaf node $v$  has already cached the file; otherwise it is routed upwards following the path ${P}_{v}$ and eventually reaching the root server\footnote{We change slightly here the terminology of \cite{Poularakis2016Complexity}. That is, given that we will be studying 2-layer hierarchical networks, we use the term \emph{root} to describe the distant content server, and refer to the higher layer node in the tree as \emph{parent node}.} if none of the caches on ${P}_{v}$ has the file. The goal is to find the caching policy $\bm{y}$ that minimizes the requests sent to the root, i.e.: \vspace{-0.15in}
\begin{align}
J(\bm{y})=\sum_{v\in\mathcal{V}^0}\sum_{n\in\mathcal{N}}\lambda_{v,n}\mathbbm{1}\big(\sum_{v'\in P_{v}}y_{v^{\prime}n}<1 \big).
\end{align}
Clearly, there is no benefit caching a file at more than one locations on each path ${P}_{v}$ for every leaf $v\in\mathcal{V}^0$. Based on this observation, minimizing function $J(\bm{y})$ is equivalent to maximizing the number of requests served by the caches, which can be formulated as follows: 
\begin{opt}{Hierarchical Caching (HC2)}
	\vspace{-0.2in}
\begin{align}
	\max_{\bm{y}\in\mathcal{Y}_{HC2} } \qquad  &\sum_{v\in\mathcal{V}^0}\sum_{n\in\mathcal{N}}\lambda_{v,n}\sum_{v^{\prime}\in{P}_v} y_{v,n} \nonumber
\end{align}
	\vspace{-3mm}
\end{opt}
\noindent where the set of eligible policies is:
\begin{align}
	\mathcal{Y}_{HC2}=\left\{\bm{y}\in\{0,1\}^{P_v\times N}\,\Bigg|\begin{array}{l}
		\sum_{v^{\prime}\in{P}_v}y_{v,n}\leq 1 \\
		\sum_{n\in\mathcal{N}}y_{v,n} \leq B_v
		\end{array},~\forall v\in\mathcal{V}, n\in\mathcal{N} \right\}, \nonumber
\end{align}
and we have assumed that all files have unit size. The solution will dictate that either all requests $\lambda_{v,n}$ will be served by one cache along path ${P}_v$, or will be routed to the root server. This structure arises due to the objective here which, unlike (CR2a-b), does not optimize the routing cost savings and hence does not depend on which in-network node has cached the items. It is important to clarify that we have no routing variables, since the latter is completely determined by the caching variables: when we decide to place a file in a certain cache, then all requests generated at its connected leafs are routed there through the unique path.

Despite its simpler structure compared to femtocaching, (HC2) is NP-hard \cite{Poularakis2016Complexity}. However, for networks with two layers (leaf nodes connected to parent nodes, connected to a root server) it can be expressed as the maximization of a submodular function subject to uniform matroid constraints, as stated in the following lemma. 
\begin{lem}[theorem style=plain]{}{}
The constraints of (HC2) can be written as a uniform matroid on the ground set defined $\mathcal{S}=\{e_n:n\in\mathcal{N}\}$, where item $e_n$ denotes the placement of file $n$ at the parent node. Also, the objective function is a monotone, increasing and submodular function.
\end{lem}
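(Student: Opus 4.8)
The plan is to treat the two claims separately: first verify that the feasible parent-caching sets form a uniform matroid on $\mathcal{S}=\{e_n:n\in\mathcal{N}\}$, and then show that the served-demand objective is monotone and submodular on that ground set.

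For the matroid part, I would identify each ground-set element $e_n$ with the decision to store file $n$ at the parent node, so that a feasible caching vector $\bm{y}$ corresponds to the set $Y=\{e_n:y_{v,n}=1\}\subseteq\mathcal{S}$. In the two-layer topology the only constraint restricting the parent's contents is the capacity bound $\sum_{n\in\mathcal{N}}y_{v,n}\leq B_v$, which in set language is simply the cardinality constraint $|Y|\leq B$. I would then check the three matroid axioms directly: $\emptyset$ is feasible; every subset of a feasible set is feasible (downward closure); and if $|X|<|Y|\leq B$, then adjoining any element of $Y\setminus X$ to $X$ keeps the cardinality at most $B$ (exchange). This is exactly the uniform matroid of \eqref{eq:uniform-matroid} with $k=B$, and the no-duplicate-on-path constraint $\sum_{v'\in P_v}y_{v',n}\leq 1$ imposes no further restriction, since each leaf's path meets the parent in a single node.

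For the objective I would rewrite $J$ as a coverage (served-demand) function of $Y$: a request pair $(v,n)$ contributes $\lambda_{v,n}$ whenever file $n$ is cached somewhere on the path $P_v$, and under the feasibility constraints $\sum_{v'\in P_v}y_{v',n}\in\{0,1\}$, so this contribution is counted at most once. Monotonicity then follows because inserting one more file into $Y$ can only convert previously unserved pairs into served ones while all $\lambda_{v,n}\geq0$. For submodularity I would establish the diminishing-returns inequality $J(A\cup\{e_n\})-J(A)\geq J(B\cup\{e_n\})-J(B)$ for $A\subseteq B$, by observing that the marginal value of caching file $n$ equals the residual demand for $n$ along the paths on which $n$ is not yet served; enlarging the cache from $A$ to $B$ can only enlarge the set of already-served requests, so this residual demand is non-increasing. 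This is the same coverage mechanism that makes the femtocaching objective submodular, and I would note that in the pure single-parent case the marginals are in fact constant, so $J$ degenerates to a modular function, a special case of submodular for which greedy selection over the uniform matroid is an exact maximizer rather than merely a $1/2$-approximation.

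The step I expect to be most delicate is the bookkeeping in the submodularity inequality: I must argue carefully that no request is double-counted and that the ``already-served'' set grows monotonically with $Y$, since it is precisely this monotone growth of the covered set that drives the marginal gains down. Once the served-demand indicator is written cleanly, making explicit that $\sum_{v'\in P_v}y_{v',n}\in\{0,1\}$ under feasibility, both monotonicity and the diminishing-returns inequality reduce to a term-by-term comparison of residual-demand sums.
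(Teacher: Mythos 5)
There is a genuine gap, and it lies in how you handle the leaf caches. Problem (HC2) has decision variables at \emph{every} node on each path, with capacity constraints $\sum_{n\in\mathcal{N}}y_{v,n}\leq B_v$ at the leaves as well as at the parent. The lemma's ground set $\mathcal{S}=\{e_n:n\in\mathcal{N}\}$ contains only parent placements, and the reason this reduction is legitimate is that the paper implicitly optimizes the leaf contents as a best response to the parent set: given $Y$, each leaf $v$ stores the set $\mathcal{A}_v(Y)$ of its $B_v$ most popular files (by $\lambda_{v,\cdot}$) among those \emph{not} cached at the parent, and the objective becomes $h(Y)=\sum_{v\in\mathcal{V}^0}\bigl(\sum_{n:e_n\in Y}\lambda_{v,n}+\sum_{n\in\mathcal{A}_v(Y)}\lambda_{v,n}\bigr)$. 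Your proposal never performs this reduction: you write the served-demand indicator over the whole path and then restrict attention to the parent variables, which amounts to treating the leaf contents as fixed. That silently discards the leaf capacity constraints from the matroid claim (only the parent bound $|Y|\leq B_0$ survives, but you have not justified why the leaf constraints vanish), and it produces the wrong function to analyze.

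The concrete symptom is your closing remark that the marginals are constant and $J$ ``degenerates to a modular function.'' That is false for the actual objective $h(Y)$, and it would trivialize the result (greedy would be exact, whereas the paper proves a $(1-1/e)$ guarantee precisely because the function is submodular but not modular). The marginal of adding $e_n$ with respect to leaf $v$ depends on $Y$ through $\mathcal{A}_v(Y)$: if $n\notin\mathcal{A}_v(Y)$ the gain is $\lambda_{v,n}$, but if $n\in\mathcal{A}_v(Y)$ the leaf swaps $n$ out for its next-best uncached file $n'$ and the gain is only $\lambda_{v,n'}$. The paper's proof is exactly the case analysis showing that for $Y\subseteq X$ these swap gains can only shrink ($\lambda_{v,n''}\leq\lambda_{v,n'}\leq\lambda_{v,n}$, since the candidate pool for the replacement file narrows as the parent set grows), which is where the diminishing-returns inequality actually comes from. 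To repair your argument you would need to (i) state the best-response leaf filling $\mathcal{A}_v(Y)$ and show it is without loss of optimality, which also disposes of the no-duplication and leaf capacity constraints, and (ii) run the two-case marginal comparison on $h(Y)$ rather than on a coverage function with frozen leaf contents.
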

\begin{proof}
Using as reference the parent node that has a capacity of $B_0$ files, every caching policy can be described by a subset $Y\subseteq \mathcal{S}$, where the elements of $Y$ correspond to files placed at the parent cache. Hence, it should be $Y\subseteq \mathcal{I}$ with:
\begin{align}
	\mathcal{I}=\{ Y\subset \mathcal{S}: |Y|\leq B_0 \}.
\end{align}
Comparing $\mathcal{I}$ and the definition of uniform matroid, we see that the constraints of this 2-layer caching problem constitute a uniform matroid $\mathcal{P}=(\mathcal{S}, \mathcal{I})$.

We now prove that the objective of (HC2) is a submodular function, see \cite{Poularakis2016Complexity}. A similar analysis can be found in \cite{femtocachingTransIT13, pacificiITC16}. First, recall that when a file $n$ is placed at the parent node, $e_n\in Y$, there is no need to place it at the leaf nodes; and let us denote with $\mathcal{A}_v(Y)$ the set of files cached at leaf $v$ for each policy $Y$. Then, for 2-layer networks, the objective is:
\begin{align}
	h(Y)=\sum_{v\in\mathcal{V}^0}\bigg( \sum_{n\in\mathcal{N}:e_n\in Y} \lambda_{v,n} + \sum_{n\in \mathcal{A}_v(Y)}\lambda_{v,n}\bigg),
\end{align}
which aggregates the requests served by the parent and leaf nodes. 

Since the summation preserves submodularity, it is sufficient if we show that every term of the external sum in $h(Y)$ is a submodular function. Focus on a single leaf cache $v\!\in\!\mathcal{V}^0$, and consider adding element $e_n$ in $Y$. We distinguish the following cases for a file $n$: 
\begin{itemize}
\item (i) $n\notin\mathcal{A}_v(Y)$. Then, if $n$ is placed at the parent node ($e_n$), the respective requests of leaf $v$ will not be sent to the root servers. This will yield additional value $h(Y\cup \{e_n\})-h(Y)=\lambda_{v,n}$. 
\item	(ii) $n\in\mathcal{A}_v(Y)$. Then, placing $n$ at the parent node will induce leaf $v$ to swap $n$ with the most popular file (based on $\lambda_v$), excluding the files already cached at the parent node and at $v$. We denote with $n^{\prime}$ that file, and the marginal value will be $h(Y\cup \{e_n\})-h(Y)=\lambda_{v,n^\prime}$. 
\end{itemize}

We now consider adding this element $e_n$ in a set $X\supseteq Y$. We distinguish the following two cases: 
\begin{itemize}
\item (i) $n\notin\mathcal{A}_v(X)$. Then $n$ is also not included in $\mathcal{A}_v(Y)$. To see this, recall that $\mathcal{A}_v(X)$ contains the $B_v$ most popular files after the $|X|$th popular file, while $\mathcal{A}_v(Y)$ contains the $B_v$ most popular files after the $|Y|$th popular file; and it holds $|Y|\!\leq\! |X|$. Hence, if demand $\lambda_{v,n}$ is not big enough to place $n$ among the $|X|$th and the $(|X|\!+\!B_v)$ most popular files, it would certainly not be enough to place it among the $|Y|$th and $(|Y|\!+\!B_v)$ most popular files. Hence, the marginal value will be $h(X\cup\{e_n\})-h(X)=h(Y\cup \{e_n\})-h(Y)=\lambda_{v,n}$. 

\item (ii) $n\in\mathcal{A}_v(X)$. Then, the marginal value is $h(X\cup \{e_n \} )-h(X)=\lambda_{v,n^{\prime\prime}}$, where $n^{\prime\prime}$ is the most popular file after those files already cached at the parent node or leaf $v$. We further distinguish the following two subcases here: (ii.a) $n\notin\mathcal{A}_v(Y)$. This is possible since it can be $Y\!\subseteq\! X$ and $v$ may not have the capacity to store the residual most popular files up to $n$. Then, it is $h(Y\cup\{e_n\})-h(Y)=\lambda_{v,n}\geq \lambda_{v,n^{\prime\prime}}$. (ii.b) $n\in\mathcal{A}_v(Y)$. Then, $h(Y\cup \{ e_n \})-h(Y)=\lambda_{v,n^{\prime}}\geq \lambda_{v,n^{\prime\prime}}$ where the last inequality holds because file $n^{\prime\prime}$ is picked among a subset of the files used for picking $n^{\prime}$. 
\end{itemize}
Therefore, the marginal value for adding an element in $X$ is always lower or equal to the one in $Y$, which results in $h(\cdot)$ being submodular. Similarly, we can show the submodularity of the component that is related to caching files at the parent node.
\end{proof}


\IncMargin{0.2em}
\begin{algorithm}[t]
	\small
	\nl \textbf{Initialize}: $\mathcal{Y}\leftarrow \emptyset$.\\%
	\nl \For{ $i=1,2,\ldots, B_0$ }{
		\nl Find the highest contribution element:\\
		$e_{n^*}=\arg\max_{ e_n\in \mathcal{S}\setminus Y }h_{Y}(e_{n})=H(Y\cup \{e_n \})-H(Y)$;\\
		\nl $Y \leftarrow Y \cup \{e_{n^*}\}$\,; \qquad\,\,\,\,\,\,\, \footnotesize{\%Update the caching policy}. \\%
	}
		\nl Place files in $Y$ at the parent node; \\%
		\nl Place files in $\mathcal{A}_v(Y)$ at each leaf node $v$;
	\caption{Hierarchical Greedy Caching Algorithm}\label{algo:greedy-poular}
\end{algorithm}\DecMargin{1em}
\normalsize


Given these properties of the problem, it can be solved with a greedy algorithm which, moreover, achieves an improved approximation ratio due to the special structure of the problem. 

\begin{thm}[theorem style=plain]{Greedy Hierarchical Caching}{}
	The greedy hierarchical algorithm achieves a $1/(1-1/e)$-approximation for the root request minimization problem (HC2).
\end{thm}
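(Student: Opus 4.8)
The plan is to reduce the statement entirely to the classical guarantee for greedily maximizing a monotone submodular function under a cardinality constraint, since the preceding lemma has already supplied the two structural facts we need: the objective $H(\cdot)$ of (HC2) is monotone and submodular, and the feasible family $\mathcal{I}=\{Y\subseteq\mathcal{S}:|Y|\le B_0\}$ is a \emph{uniform} matroid. The crucial point is that this constraint is uniform (a pure cardinality bound on the parent-node placements), rather than the partition matroid that appeared in the femtocaching problem; it is precisely this that upgrades the greedy guarantee from $1/2$ to $1-1/e$. Recall also that minimizing the root traffic is equivalent to maximizing $H(Y)$, the number of requests served in-network, so it suffices to show that Algorithm~\ref{algo:greedy-poular} returns a set $Y$ with $H(Y)\ge (1-1/e)\,H(Y^\star)$, where $Y^\star$ is an optimal feasible placement.

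First I would set up the standard per-iteration inequality. Let $Y^i$ denote the greedy set after $i$ additions (so $Y^0=\emptyset$ and $Y^{B_0}$ is the output), and let $Y^\star$ be optimal with $|Y^\star|\le B_0$. Monotonicity gives $H(Y^\star)\le H(Y^\star\cup Y^i)$, and submodularity lets me bound the gain of adding the elements of $Y^\star\setminus Y^i$ one at a time:
\begin{equation*}
H(Y^\star)-H(Y^i)\le \sum_{e\in Y^\star\setminus Y^i}\bigl(H(Y^i\cup\{e\})-H(Y^i)\bigr).
\end{equation*}
Because the greedy step selects the element of maximal marginal value, every summand is at most $H(Y^{i+1})-H(Y^i)$, and there are at most $B_0$ of them, so
\begin{equation*}
H(Y^\star)-H(Y^i)\le B_0\,\bigl(H(Y^{i+1})-H(Y^i)\bigr).
\end{equation*}

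Writing $\delta_i=H(Y^\star)-H(Y^i)$, this rearranges to $\delta_{i+1}\le (1-1/B_0)\,\delta_i$, whence $\delta_{B_0}\le (1-1/B_0)^{B_0}\delta_0\le e^{-1}\delta_0$. Since $H(\emptyset)\ge 0$ we have $\delta_0=H(Y^\star)-H(\emptyset)\le H(Y^\star)$, and therefore $H(Y^{B_0})=H(Y^\star)-\delta_{B_0}\ge (1-1/e)\,H(Y^\star)$. Finally I would observe that once the $B_0$ parent-node elements are fixed, filling each leaf cache with $\mathcal{A}_v(Y)$ is exactly the leaf assignment already folded into the definition of $H$, so the value realized by the output of Algorithm~\ref{algo:greedy-poular} coincides with $H(Y^{B_0})$; this yields the claimed $1/(1-1/e)$ ratio.

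I do not expect a genuinely hard obstacle, since the heavy lifting sits in the already-proved submodularity lemma; the points requiring care are (i) justifying the marginal-gain comparison, i.e. that the greedy increment dominates each marginal $H(Y^i\cup\{e\})-H(Y^i)$ for $e\in Y^\star\setminus Y^i$ — this is exactly where the uniform (cardinality) structure is used, as a general matroid would force a weaker exchange argument giving only $1/2$ — and (ii) confirming $|Y^\star\setminus Y^i|\le B_0$ so that the factor multiplying the greedy increment is $B_0$ and not larger. A secondary subtlety worth flagging is the two-layer bookkeeping: the set function $H$ is defined over parent-node placements only, with the leaf caches re-optimized implicitly through $\mathcal{A}_v(\cdot)$, so I would double-check that the marginal values $h_Y(e_n)$ the algorithm actually evaluates are the marginals of this same $H$, which they are by construction.
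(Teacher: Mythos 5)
Your proposal is correct and takes essentially the same route as the paper: the paper likewise derives the theorem directly from the preceding lemma (monotone submodular objective over a uniform, i.e.\ cardinality, matroid) by invoking the classical greedy guarantee for cardinality-constrained submodular maximization, of which your per-iteration recursion $\delta_{i+1}\le(1-1/B_0)\,\delta_i$ is the standard proof. The two subtleties you flag --- that every $e\in Y^\star\setminus Y^i$ remains addable under the cardinality constraint (which is exactly why the bound improves from the $1/2$ of the partition-matroid femtocaching case), and that the algorithm's marginals are those of the leaf-reoptimized set function $H$ via $\mathcal{A}_v(\cdot)$ --- are precisely the points the paper's lemma and construction already settle.
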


Note that unlike the femtocaching problem, here we obtain the $1/(1-1/e)$ bound without the need to use the computationally intensive pipage algorithm. On the other hand, this model is simpler as each leaf has only one parent node and the root-traffic minimization objective is insensitive in the different delivery delays the caching policies might yield, as long as the manage to satisfy the requests within the caching tree. The authors in \cite{Poularakis2016Complexity} provide also bounds -- and a methodology to construct them -- for trees with more than 2 layers. For such cases, and given that this is an uncapacitated and unweighted graph, one can consider also transforming the tree graph to a bipartite one and use Algorithm \ref{algo:greedy-submodular}.

\subsection{Discussion of Related Work}

Hierarchical or tree caching networks have been studied from the early days of Internet \cite{danzig-hierarchical} and distributed systems \cite{leff-hierarchical}. This problem is typically studied for two-layer networks, often with the addition of a distant root server. The seminal work \cite{korupolu99} studied a cooperative model with caches installed only at the leaf nodes, where the routing costs form an ultra-metric. The authors presented a polynomial-time exact algorithm with running time quadratic to $N\cdot V$. This work does not consider caches at higher layers nor hard link capacities. The idea of cooperative caching in trees arises naturally, and has been studied extensively in the context of web caching, including measurements for quantifying cooperation benefits \cite{wolman} and comparisons with flat architectures \cite{rodriguez-hierarchical}.

Caching policies for hierarchical networks have been proposed in \cite{li-hierarchical}, \cite{li-hierarchical2}, \cite{jia-hierarchical} that minimize routing costs using dynamic programming or heuristics, while \cite{poularakis-ciss} employed a matching problem for finding the optimal caching policy in 2-layer network. Unlike these works \cite{DaiInfocom12} proposed a 3-layer model, motivated by an actual IPTV network, with hard link capacity constraints. A routing and caching policy was designed, using a Lagrange decomposition technique (see Sec. \ref{sec:arbit-congest}), and the model was further extended for leaf nodes that can cooperate. Such hierarchical CN models arise also in wireless networks, see \cite{c-ran-caching}, \cite{pacificiITC16}.

In terms of solution algorithms for uncapacitated networks with non-congestible links, the hierarchical structure provides two important benefits. If the network is symmetric, the approximation ratio provided by the femtocaching Algorithm \ref{algo:greedy-submodular} can be improved with local search. Furthermore, a greedy algorithm offers the same guarantee as the computationally intensive pipage algorithm. In either case, clearly the exact performance of each algorithm depends on the selected optimization criterion and we presented here two examples, minimizing content delivery costs and reducing the root server load.

\section{Arbitrary Caching Networks} \label{sec4:arbitrary}


In this section we study more general CN models with hard link capacity constraints, or with routing costs that increase non-linearly with the volume of transferred content. Moreover, the objective criteria are non-linear functions of the cache hit ratio or of the routing costs, and the routing decisions can be multihop or multipath over arbitrary network graphs.

\subsection{Congestible Links and Non-linear Objectives} \label{sec:arbit-congest}

We start with a variation of the femtocaching model where the links are congestible and the objective is to minimize the aggregate delay \cite{Dehghan2015Complexity}. In detail, consider the network $G=( \mathcal{V}_c, \mathcal{V}_r, \mathcal{E}, \bm{\lambda}, D(\cdot))$, and the catalog $\mathcal{N}$ with files of unit size. Unlike the typical femtocaching model, the caches here are also connected with the root server and they can fetch a file in case of a cache miss, see Fig. \ref{fig:femto-towsley}. The SBS caching decisions should be drawn from set: 
\begin{equation}
	\mathcal{Y}_{CR3}=\{ \bm{y}\in\{0,1\}^{V_{c}^0\times N}\,\vert\, \sum_{n\in\mathcal{N}}y_{v,n}\leq B_v,\,v\in\mathcal{V}_{c}^{0} \},
\end{equation}
and routing is splittable, hence the routing policy $\bm{f}$ belongs to set:
\begin{equation}
	\mathcal{F}_{CR3}=\left\{\bm{f}\in[0,1]^{V_{c}^0\times V_r\times N }\,\Bigg|\begin{array}{l} \sum_{v\in\mathcal{V}_{c}^{0}}f_{v,u,n}\leq 1 \\
	 f_{v,u,n}=0, v\in\mathcal{V}_c\setminus \mathcal{V}_c(u)\end{array}\right\}. \nonumber
\end{equation}
We consider inelastic demand, where the fraction of traffic for each request $(u,n)$ routed to SBS caches $\mathcal{V}_{c}^0$ should not exceed $1$, and the remaining traffic is routed to MBS.\footnote{In earlier models we define $f_{v,u,n}$ as number of requests and not as a fraction. These are equivalent representations. Also, note that in $\mathcal{F}_{CR3}$ we do not include a hard constraint for satisfying all requests. This is not necessary because of the definition of $D(\cdot)$, where a request is routed to MBS unless satisfied by the SBSs.}

\begin{figure}[t]
	\centering
	\includegraphics[width=0.58\textwidth]{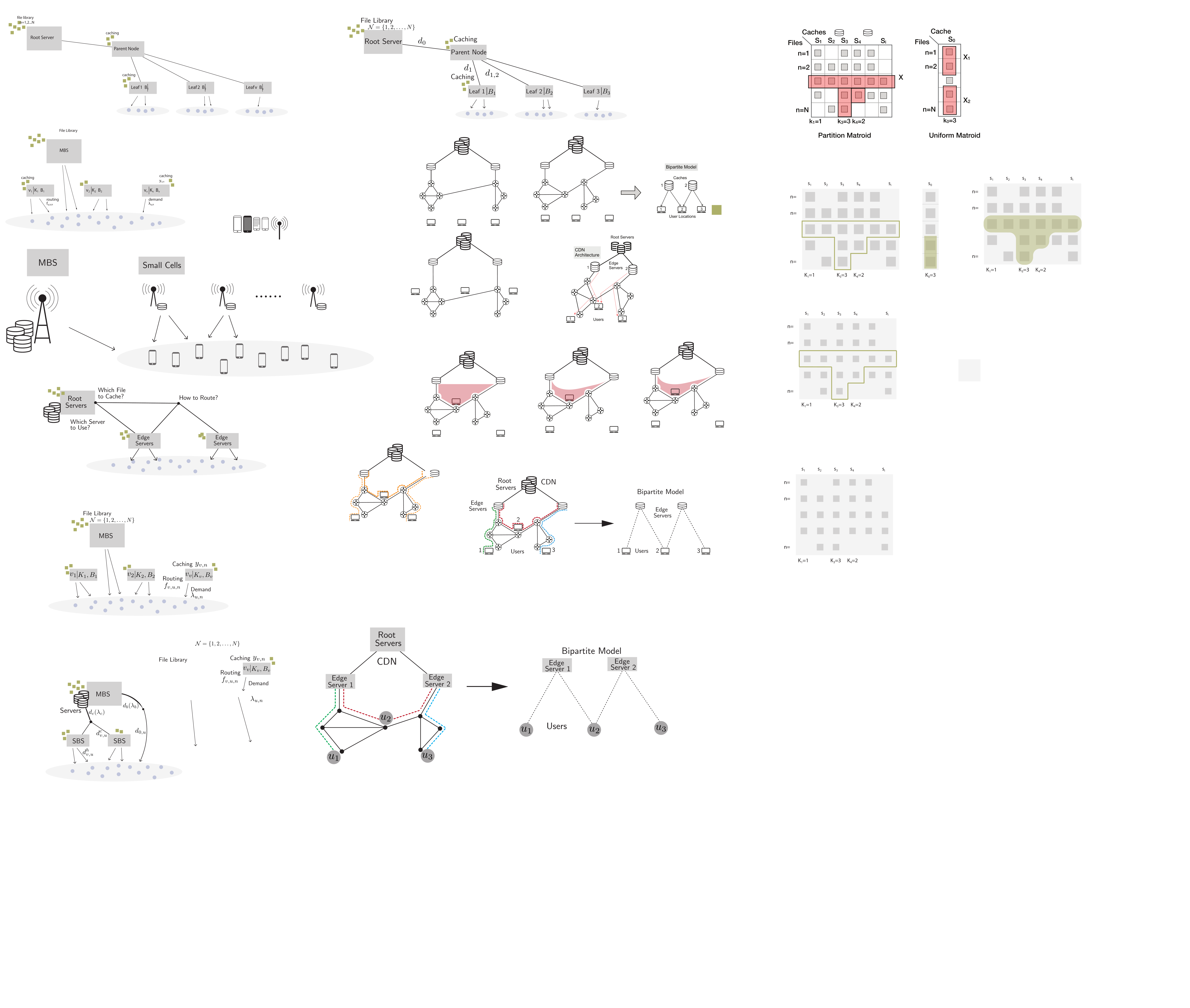}
	\caption{\small{Congestible femtocaching network with $\mathcal{V}_{c}^0$ SBSs, 1 MBS ($v=0$) and $\mathcal{V}_r$ users. The SBSs backhaul links are congestible, and the MBS transmissions induce load-dependent delay. Each user $u\in\mathcal{V}_r$ generates requests for file $n$ with rate $\lambda_{u,n}$, and all users are in range with the MBS that stores the catalog.}}
	\label{fig:femto-towsley}
\end{figure}

When a user $u$ receives a file from the MBS, it experiences delay equal to $d_{0,u}+d_b
\big( \lambda_{b}(\bm{f})\big)$, where $d_{0,u}$ is a constant delay parameter (due to distance) and $d_b(\cdot)$ a convex delay function that increases with the MBS link traffic:
\begin{equation}
\lambda_b(\bm{f})=\sum_{u \in\mathcal{V}_r}\sum_{n \in\mathcal{N}}\lambda_{u,n}\Big( 1-\sum_{v\in\mathcal{V}_{c}^{0}}f_{v,u,n} \Big)\,. \nonumber
\end{equation}
On the other hand, when the file is fetched from the servers through the backhaul links and transmitted by the SBS (cache miss), user $u$ experiences delay equal to $d_{v,u}^c+d_{c}\big(\lambda_c(\bm{f})\big)$, where $d_{v,u}^c$ is a constant and $d_c(\cdot)$ a convex function increasing with the backhaul load:
\begin{align}
\lambda_c(\bm{y},\bm{f})=\sum_{u\in\mathcal{V}_r}\sum_{n\in\mathcal{N}}\lambda_{u,n}\sum_{v\in\mathcal{V}_{c}^0}\Big( 1-y_{v,n}\Big)f_{v,u,n}\,. \nonumber
\end{align}
Finally, in case the requested file is available at the SBS (cache hit), it is delivered to the user with a fixed delay cost of $d_{v,u}^h$. Putting the above together, we can express the total delay as: 
\begin{align}
&D(\bm{y}, \bm{f})=\sum_{u\in\mathcal{V}_r}\sum_{n\in\mathcal{N} }\lambda_{u,n}\Big( \sum_{v\in\mathcal{V}_{c}^{0}}f_{v,u,n}y_{v,n}d_{v,u}^{h} +\big(1-\sum_{v\in\mathcal{V}_{c}^0}f_{v,u,n}\big)d_{0,u} + \nonumber\\ &\sum_{v\in\mathcal{V}_{c}^{0}}\big(1-y_{v,n}\big)f_{v,u,n}d_{v,u}^{c}  \Big)
+\lambda_b(\bm{f})d_b\big( \lambda_b(\bm{f}) \big) +\lambda_c(\bm{y}, \bm{f})d_c\big( \lambda_c(\bm{y}, \bm{f}) \big)\,.\nonumber
\end{align}
\noindent The goal here is to devise the policy $(\bm{y}, \bm{f})$ that minimizes $D(\bm{y}, \bm{f})$:
\begin{opt}{Caching and Routing with Congestible Links (CR3)}
	\vspace{-0.2in}
	\begin{align}
		\min _{\bm{y}, \bm{f} } \qquad  &  D( \bm{y}, \bm{f}) \nonumber \\
		\text{s.t.}\qquad 		&  \bm{y} \in\mathcal{Y}_{CR3}, \bm{f}\in\mathcal{F}_{CR3}
	\end{align}
	\vspace{-0.35in}
\end{opt}
\noindent It is interesting to note the absence of the typical constraint that prevents routing requests to caches that do not have the files. Indeed, this is not necessary here as this condition is incorporated in the objective function. Namely, we can only achieve delay savings for the requests that satisfy $f_{v,u,n}y_{v,n}>0$, i.e., the file is available at the employed cache. This creates a non-linear objective but simplifies the constraint set.

One of the simplest approaches for modeling the congestion-sensitive delay at the MBS is to assume that there is an $M/M/1$ queue for each link through which the files are served. The service delay  increases with the queue length and the total link traffic load. For this model, \cite{Dehghan2015Complexity} showed that (CR3) can be solved with a greedy algorithm achieving a $2$-approximation ratio. The proof shows that for any given caching policy $\bm{y}$ the delay is a convex function of the routing variables $\bm{f}$, and for fixed routing it is submodular on $\bm{y}$. This is then exploited by an algorithm that places greedily the items in caches and decides accordingly the routing policy.  More sophisticated delay models have been also considered in the literature, e.g., see \cite{carofiglioComNet16}.

The second CN model we consider in this section differs from (CR3) in two aspects. It includes hard constraints for the MBS and SBS link capacities, and the objective is to minimize a weighted cost function of delivery delay and network operating expenditures. This model was studied in \cite{poularVideoInfocom14} for the delivery of video files. The set of eligible caching policies $\mathcal{Y}_{CR4}$ is the same as in (CR3), $\mathcal{Y}_{CR4}= \mathcal{Y}_{CR3}$, but the routing decisions are more intricate due to the capacity constraints. Namely, they should belong to set:
\begin{align}
&\mathcal{F}_{CR4}\!=\!\left\{ \bm{f}\in[0,1]^{V_c\times V_r\times N}\Bigg|\!\! 
\begin{array}{l}
\sum_{u\in \mathcal{V}_r, n\in\mathcal{N}}\!\lambda_{u,n}f_{v,u,n}\!\leq \!K_v \\ 
\sum_{u\in \mathcal{V}_r, n\in\mathcal{N}}\!	\lambda_{u,n}f_{0,u,n}\!\leq\! K_0\,\\
f_{0,u,n}+\!\sum_{v\in\mathcal{V}_{c}^{0}(u)}f_{v,u,n}\!\leq\! 1
\end{array}
; \forall 
\begin{array}{l}
v\!\in\!\mathcal{V}_c\\
n\!\in\!\mathcal{N}\\
 u\!\in\!\mathcal{V}_r \end{array}
\right\}. \nonumber
\end{align}
where note that we also use explicit routing variables for the MBS, as an elastic model is assumed in this problem. Parameter $K_v$ is the wireless bandwidth of SBS $v$, and $K_0$ the MBS wireless link capacity. Additionally, we introduce the backhaul routing variables $z_{v,u,n}\in[0,1]$ denoting the fraction of requests that will be satisfied by fetching file $n$ via the backhaul link of SBS $v$ having capacity\footnote{The wireless links and backhaul link capacities are measured in bytes with reference to a certain time period $T$, which can be set equal to $1$.} $G_v$. The set of eligible backhaul routing policies is:
\begin{equation}
	\mathcal{Z}_{CR4}=\{ \bm{z}\in[0,1]^{V_{c}^0\times V_r \times N}\,\vert\, \sum_{u\in\mathcal{V}_r, n\in\mathcal{N}}\lambda_{u,n}z_{v,u,n}\leq G_v; v\!\in\!\mathcal{V}_c, n\!\in\!\mathcal{N}\}. \nonumber
\end{equation}

Assuming there are no expenditures for caching the content, the joint routing and caching policy can be formulated as follows: 
\begin{opt}{Caching and Routing with  Capacitated Links (CR4)}
	\vspace{-0.2in}
 \begin{align}
 	\min _{\bm{y}, \bm{f}, \bm{z} } \qquad  &  J(\bm{f}, \bm{z}) \nonumber \\
 	\text{s.t.}\qquad 	&  f_{v,u,n}\leq y_{v,n}\!+\!z_{v,u,n},\,\forall u\in\mathcal{V}_r,n\in\mathcal{N}, v\in\mathcal{V}_{c}^{0}(u) \label{eq:integer-video-poul-constraint4}\\
 	&  \bm{y} \in\mathcal{Y}_{CR4}, \bm{f}\in\mathcal{F}_{CR4}, \bm{z} \in\mathcal{Z}_{CR4}
 \end{align}
	\vspace{-0.25in}
\end{opt}
\noindent Constraint (\ref{eq:integer-video-poul-constraint4}) indicates that in order for an SBS to send a file to a user it needs either to have it cached or to fetch it from the root server through the backhaul link, and in the latter case, with an equal rate. The objective function may include a cost component that introduces a load-dependent delay as in (CR3); an increasing convex penalty for the fraction of unsatisfied requests; or an increasing cost function for the energy or other operating costs at the base stations. This detail modeling however increases the complexity of the problem, and reduces the chances to design efficient constant approximation algorithms. We discuss next a general method for solving such problems.

\subsubsection{Lagrange Decomposition Methods}

In particular, we will employ a Lagrange relaxation method where we dualize constraints (\ref{eq:integer-video-poul-constraint4}) that couple the different types of variables. This approach has several advantages. First, it decomposes the problem into a caching and a routing sub-problem. This is very important and has many practical implications. For example, in several cases a certain business entity is responsible for the routing policy (e.g., a network operator) and a different one for the caching policy (e.g., a CDN), see \cite{smaragdBook13, tasos-jsac18}. By decomposing the problem we enable them to coordinate and optimize jointly the overall content delivery performance. Second, this decomposition results in smaller, often also simpler, subproblems that can be solved faster and oftentimes in a parallel fashion.

The partial Lagrangian can be defined if we relax (\ref{eq:integer-video-poul-constraint4}) and introduce the dual variables $\mu_{v,u,n}\geq 0$:
\begin{align}
L(\bm{y}, \bm{f}, \bm{z}, \bm{\mu})=J(\bm{f}, \bm{z}) +\sum_{n}\sum_{u}\sum_{v}\mu_{v,u,n}\big( f_{v,u,n}-y_{v,n}-z_{v,u,n} \big)\,,\nonumber
\end{align}
and subsequently we can define the dual problem:
\begin{align}
	\max_{\bm{\mu}\geq \bm{0}} q(\bm{\mu})=\,\,\min_{\bm{y}\in\mathcal{Y}_{CR4}, \bm{f}\in\mathcal{F}_{CR4}, \bm{z}\in\mathcal{Z}_{CR4} }\,\,\,L(\bm{y}, \bm{f}, \bm{z}, \bm{\mu})\,, \nonumber
\end{align}
where $q(\cdot)$ is the dual function. Our goal is to solve the dual problem and obtain a lower bound for (CR4). The dual solution can be used either for recovering a feasible solution using a problem-specific heuristic, or for improving the performance of a branch-and-bound method \cite{fisherLagrange}.

The key observation here is that both the objective function and the constraints of the relaxed problem are separable. Therefore, for any given value of the dual variables $\hat{\bm{\mu}}$, we can find the optimal value of the Lagrangian subject to the remaining constraints. Namely, we can decompose $L(\cdot)$ and obtain the caching (CR4a) and routing (CR4b) subproblems: 
\begin{align}
	&\text{(CR4a)}\!:\min_{\bm{y} \in\mathcal{Y}_{CR4} }-\sum_{u\in\mathcal{V}_r }\sum_{v\in\mathcal{V}_c(u)}\sum_{n\in\mathcal{N} }\hat{\mu}_{v,u,n}y_{v,n}\,. \nonumber \\
	&\text{(CR4b)}\!:\min_{\bm{f}\in\mathcal{F}_{CR4}, \bm{z}\in\mathcal{Z}_{CR4} }J(\bm{f},\bm{z}) \!+\!\sum_{n\in\mathcal{N} }\sum_{u\in\mathcal{V}_r}\sum_{v\in\mathcal{V}_c(u)}\hat{\mu}_{v,u,n}(y_{v,u,n}\!-\!z_{v,u,n})\,.\nonumber
\end{align}
Under some typical assumptions about $J(\cdot)$, (CR4b) becomes a convex optimization problem. On the other hand (CR4a) can be further decomposed to $N$ Knapsack problems which, although NP-complete, can be solved in pseudo-polynomial time using dynamic programming methods\footnote{In some applications with equal  file sizes, the Knapsack problem simplifies and can be solved exactly.}. Clearly, this decoupling reduces the size of (CR4) and parallelizes its solution. However, note that in some problem formulations the objective function is not convex, even if the routing variables are continuous, e.g., see \cite{femtocachingTransIT13}, \cite{ioannidis_sigm16},  \cite{ioannidis_icn17}. This issue arises also when the objective depends directly on the caching policy, e.g., when there is cost for caching the files. 

More often than not, we cannot derive an analytical expression for $q(\bm{\mu})$ that would allow us to directly solve the dual problem. A typical method to overcome this obstacle is to use the subgradient method. The main idea is that we iteratively obtain improved dual values $\bm{\mu}^{(1)}, \bm{\mu}^{(2)},\ldots,\bm{\mu}^{(\tau)}$, by using the subgradient of $q(\cdot)$ which we calculate at each iteration $\tau$ using the relaxed problem. The details are provided in Algorithm \ref{algo:dual-algorithm} where note the bounds LB and UB are used to flag convergence. 


\IncMargin{1.9em}
\begin{algorithm}[t]
	\small
	\nl \textbf{Initialize}: Dual vars $\bm{\mu}^{(1)}=\bm{0}$;\,\, $LB=-\infty$;\,\, $UB=\infty$.\\%
	\nl Set $\tau\leftarrow 1$;\\
	\nl \Repeat{ $\frac{UB-LB}{UB}\leq \epsilon$ }{
		\nl Solve (CR4a) and find $\bm{y}^{(\tau)}$\,.\\%
		\nl Solve (CR4b) and find $\bm{f}^{(\tau)},\, \bm{z}^{(\tau)}$\,.\\%
		\nl \If{$q(\bm{\mu}^{(\tau)})>LB$}{$LB=q(\bm{\mu}^{(\tau)})$}
		\nl Calculate subgradients of $q(\bm{\mu}^{\tau})$: $g_{v,u,n}^{(\tau)}=f_{v,u,n}^{(\tau)}-y_{v,n}^{(\tau)}-z_{v,u,n}^{(\tau)},\forall\,u,v,n$.  \\%
		\nl Update the dual variables: $\mu_{v,u,n}^{(\tau+1)}=[\mu_{v,u,n}^{(\tau)} + \sigma^{(\tau)}g_{v,u,n}^{(\tau)}]^+,\,\,\forall\,u,v,n$ \\%
		\nl  $\tau\leftarrow \tau+1$ }
	\caption{Iterative Routing - Caching Algorithm}\label{algo:dual-algorithm}
\end{algorithm}\DecMargin{1em}
\normalsize

Further details about the subgradient methods, the selection of the step size and methods for setting the upper bound ($UB$) can be found in \cite{Ber98book}. Moreover, there are techniques that allow the convergence of such iterative algorithms even in the presence of errors in the dual variables, e.g., due to perturbations in the network parameters, by leveraging the properties of stochastic subgradient methods; see \cite{victorcdc16} for a comprehensive discussion on this topic.

It is important to stress that, albeit elegant, these Lagrangian methods have well-known drawbacks. First, (CR4) is an NP-hard problem (reduction from (C1)) and this algorithm does not guarantee an exact solution. Moreover, and perhaps more importantly, the obtained result might violate the relaxed constraints. Namely, given that strong duality does not hold and that often we only obtain an approximate solution for the dual problem, the result might violate the primal constraints that were dualized.  For example, the routing policy might be incompatible with the caching decisions. Hence, there is a need to adjust the solution with problem-specific heuristics, known as \emph{Lagrangian heuristics}, and recover a primal-feasible solution.

Despite these issues, this approach has been proven very effective in practice \cite{geoffrion}. Furthermore, in some cases we can quantify the optimality loss by characterizing the duality gap \cite{bertsekasdualgap}, which is reduced for large problem instances; see \cite{athanasTon17, wangTon17, wangSiam} for applications in base station association, multipath routing, and network economics, respectively. Lagrangian methods have been used in several cases for devising caching policies, e.g., \cite{DaiInfocom12}, \cite{poularakisTNSM16}. Furthermore, in case the caching decisions are continuous (as in the case of coded caching), this decomposition approach can provide an exact solution under some mild assumptions on the objective function and the constraint set \cite{Ber98book}. 

\subsection{Multihop Routing}

We explain how the above models can be generalized to arbitrary networks with multihop and/or multipath routing. The solution approaches in this case can be broadly classified based on whether: (i) they propose deterministic or randomized policies; (ii) they employ static or dynamic policies which make decisions on a slot-by-slot basis; (iii) the routing policy is based on the requester (path selection) or realized in a hop-by-hop fashion based on the requested file. We discuss next some illustrative cases. 


Consider a general caching network $\mathcal{G}=(\mathcal{V}, \mathcal{E}, \bm{\lambda}, \bm{d} )$. For each file $n$ there is a set $\mathcal{V}(n)$ of dedicated root servers in $\mathcal{G}$ that store it permanently. The user requests are routed towards these servers but can be also satisfied by in-network caches along the routing path. Each request $(u,n)$ is specified by the user $u\in \mathcal{V}$ and the requested item $n\in \mathcal{N}$. Let $f_{p,n}\in\{0,1\}$ denote the decision of selecting path $p\in\mathcal{P}_{v,u}$ that has length $p=|\mathcal{P}_{v,u}|$ nodes; and denote with $d_{p_kp_{k+1}}$ the routing cost (delay or bandwidth) for transferring one file over link $(p_k, p_{k+1})$ in this path. The eligible routing policies belong to set:
\begin{align}
	\mathcal{F}_{CRs}\!=\!\{ \bm{f}\in[0,1]^{P\times N}\,\vert\, \sum_{v\in \mathcal{V}(n)} \sum_{p\in\mathcal{P}_{v,u}}f_{p,n}=1,  u\!\in\!\mathcal{V}, n\!\in\!\mathcal{N}\}, \nonumber
\end{align}
while the set of possible caching policies is $\mathcal{Y}_{CRs}=\mathcal{Y}_{CR3}$ for $\mathcal{V}_{c}^0=\mathcal{V}$.

In this \emph{Source-Routing} (SR) model, we can express the cost of serving the requests as follows:
\begin{align}
	J_{SR}(\bm{f},\bm{y})=\sum_{(u,n)}\sum_{v\in\mathcal{V}(n)}\sum_{p\in \mathcal{P}_{v,u}} f_{p,n}\sum_{k=1}^{|p|-1}d_{p_{k+1}p_k} \prod_{k^{\prime}=1}^{k}\big(1-y_{p_k,n}\big)\,, \nonumber
\end{align}
where the last term ensures that whenever item $n$ is not cached at a node $p_k\in\mathcal{P}_{v,u}$, i.e., $y_{p_k,n}=0$, the request is further forwarded to node $p_{k+1}$ and the network incurs cost $d_{p_kp_{k+1}}$. In this simple way, the routing and caching decisions are jointly shaping the objective function. Then, the joint caching and routing optimization problem can be written:
\begin{opt}{Caching and Source-Routing (CRs)}
	\vspace{-0.2in}
	\begin{align}
		\min _{\bm{y}, \bm{f} } \qquad  &  J_{SR}(\bm{f}, \bm{y}) \nonumber \\
		\text{s.t.}\qquad 	&  \bm{y} \in\mathcal{Y}_{CRs}, \bm{f}\in\mathcal{F}_{CRs}
	\end{align}
	\vspace{-0.25in}
\end{opt}

Note that this source-routing approach presumes the computation of all available paths for each request in advance, similarly to the previous models. An alternative approach would be to employ a hop-by-hop routing policy that makes decisions based only on the requested item and not the requester. In this case, each intermediate node $c$ needs to make a next-hop routing decision to node $v$ for every requested content $n$, unless the item is cached at $c$. To avoid inefficient routing, it helps defining the subgraph $G^{(u,n)}$ for each request $(u,n)$ in a way to exclude loops and nodes that are not reachable by node $s$. In this case, the servicing cost can be written 	$J_{HH}(\bm{f},\bm{y})=$
\begin{equation}
\sum_{(u,n)} \sum_{(c,v)\in G^{(u,n)}} d_{c,v} f_{c,v,n}(1-y_{c,n}) \sum_{v\in\mathcal{V}(n)} \sum_{p\in \mathcal{P}_{v,u}^{c}}\prod_{k^{\prime}=1}^{|p|-1}r_{p_{k^{\prime}}p_{k^{\prime}+1}}\big(1-y_{p_{k^{\prime}},n}\big)\,,\nonumber
\end{equation}
where $\mathcal{P}_{(u,n)}^{c}$ is the set of paths connecting node $u$ (origin of request) with node $c$, and is conditioned on the requested item, as this determines the servers that are included in $G^{(u,n)}$.  

For the above models one can devise the respective randomized policies where the caching and routing decisions are probabilistic. Note that, even under this relaxation this problem is challenging as the objective might not be --- in the general case --- a convex function. The authors in \cite{ioannidis_icn17} provide an interesting solution for this problem that relies on a convex approximation, leveraging and extending the methodology in \cite{pipage}.


\subsection{Discussion of Related Work}

Unlike the Lagrange relaxation approach presented above, a technique that also decomposes the problem but finds an exact solution is the Benders' decomposition method \cite{benders62}. This approach does not relax any constraint but rather calculates iteratively cutting planes that are added to the problem in the form of constraints, and essentially simplifying the search space by removing some candidate solutions known to be suboptimal. This process has the advantage that it reduces the size of the initial problem and hence expedites its solution, even if the subproblems have the same order of complexity, but does not offer guarantees on the convergence time. We refer the reader to \cite{benders-overreview} for a recent survey with examples, and to \cite{tasos-jsac18} for its application to caching.


Moreover, \cite{Bektas} discusses a general suite of solution methodologies, from Lagrange relaxations to the Benders' decomposition method. These methods are iterative and trade-off precision with complexity, from the exact solution of Benders' method to the near-optimal with no guarantees Lagrange iteration. The same work also discusses tools for linearization of the objective function which can be very useful especially when the caching and routing decisions interact in the objective. When there are no hard capacity constraints and the only routing criterion is the delay or cost, which is load-independent per hop, then we can precompute all possible paths connecting each pair of user and cache. Without hard capacities, and if the routing costs are constant, the routing policy can always select for each request the shortest path towards the cache. Therefore, the caching policy takes as input the path delays, and the routing decisions are essentially determined, indirectly, by the caching policy. See, for example, \cite{GeorgiadisTPDS06} for a detailed analysis of this methodology in the context of service deployment in networks.

In another example, \cite{applegateConext10} considers the problem of video caching in a Video on Demand network. The model considers strict bounds for the node storage and link capacities, different routing costs, and files with possibly different sizes. The routing decisions are fractional, and hence the problem is formulated as a min-cost mixed integer program. Similarly, \cite{xie2012} formulates the joint optimization problem of collaborative caching and traffic engineering, and assumes a fractional solution (splittable requests) that minimize link utilization. Such joint mechanisms require the coordination of ISPs and CDNs and have attracted significant interest from industry, e.g., see \cite{smaragdBook13}.

Concluding, for all the discussed problems, the solution methods can be roughly classified as follows: (i) algorithms based on facility location theory; (ii) algorithms that rely on submodular optimization theory, often involving matroid constraints;  (iii) techniques that employ some type of relaxation, such as Lagrange or linear relaxations; and (iv) algorithms that find exact solutions, such as the cutting planes Benders' decomposition. Based on the problem formulation, but also on whether we are interested in solution speed or optimality, a different algorithm should be selected. After all, designing and optimizing caching networks is an art, and cannot be fully specified by predetermined rules.


\chapter{Online Bipartite Caching}\label{ch:5}

In this chapter we revisit the problem of bipartite caching in the setting where content popularity is unknown and possibly non-stationary. Upon receiving each request, the network must decide how to route the requested content and how to update the contents in each cache in order to maximize the accumulated utility over an horizon $T$. For this challenging setting we propose the \emph{Bipartite Supergradient Caching Algorithm} (BSCA) and prove that it achieves no regret ($\texttt{Reg}_T/T \to 0$). That is, as the time horizon $T$ increases, BSCA learns to achieve the same performance with the optimal bipartite CN configuration in hindsight, i.e.,  the one we would have chosen knowing all future requests. The learning rate of the algorithm is characterized by its regret expression, found to be $\texttt{Reg}_T = O(\sqrt{JT})$. We focus here on the basic bipartite caching model, but BSCA can be extended to other caching networks studied in Chapter \ref{ch:4}.

\section{Introduction to Online Bipartite Caching}

The  bipartite caching analysis of Chapter \ref{ch:4} introduced a network architecture where caches are attached to small-cell base stations (SBSs), and proposed an efficient algorithm for proactively caching content files at them. An essential weakness of these proactive caching techniques, however, is that they assume static and known file popularity. Practical experience has shown quite the opposite: \emph{file popularities changes rapidly, and learning them is challenging}. In what follows,  we study the performance of femtocaching networks from a new angle. \emph{We seek to find an online caching and routing policy that optimizes, on-the-fly, the network's performance under any file popularity (or, request) model}. This is very important, as it tackles the femtocaching problem in its most general form, and reveals a novel connection between content caching and routing mechanisms and the theory of Online Convex Optimization (OCO) \cite{zinkevich2003online,Shalev12,Mert18}.

\subsection{System model}

\textbf{Network Connectivity}. Following the notation of femtocaching, we consider a set ${\cal V}_R$ of cache-endowed SBSs and an MBS indexed with 0. There is  a set of user locations ${\cal V}_c$, where file requests are created. The existence of a link between cache $v$ and user location $u$ is denoted with $e_{v,u}=1$, and we set $e_{v,u}=0$ otherwise. The MBS is in range with all users.

\textbf{File Requests}. The system evolves in  slots, $t=1,2,\dots,T$, and users submit requests for obtaining files from the catalog $\mathcal{N}$. We denote with $R_{t}^{n,u}\!\in\!\{0,1\}$ the event that a request for file $n\in\mathcal{N}$ has been submitted by a user in location $u\in\mathcal{V}_c$, during slot $t$. At each slot we  assume that there is exactly one request. From a different perspective, this means that the policy is applied after every request, exactly as it happens with the standard LFU/LRU-type of single-cache policies, or the respective multi-cache reactive policies, see \cite{giovanidis-mLRU, leonardi-implicit} and references therein.\footnote{Alternatively, we may  consider batches of requests. If the batch contains one request from each location, the request pattern is biased to equal request rate at each location. An unbiased batch should contain an arbitrary number of requests from each location. The presented guarantees hold even for unbiased batches of arbitrary (but finite) length, multiplied by the  batch size.}
Hence, the request process can be described by a sequence of vectors $\{R_t\}_{t=1,\dots,T}$ drawn from set:
\begin{equation}
	\mathcal{R}=\left\{R\in \{0,1\}^{N\times V_r} ~\bigg |~ \sum_{n\in\mathcal{N}}\sum_{u\in\mathcal{V}_r} R^{n,u}=1\right\}.
\end{equation}

The instantaneous file popularity is expressed by the probability distribution $P(R_t)$ (with support $\mathcal{R}$), which is allowed to be unknown and arbitrary. The same holds for the joint distribution $P(R_1,\dots,R_T)$ that describes the file popularity evolution within the time interval $T$. This generic model captures all studied request sequences in the literature, including stationary (i.i.d. or otherwise), non-stationary, and adversarial models. The latter are the most general models one can employ, as they include request sequences selected by an \emph{adversary} aiming to disrupt the system performance, e.g., consider Denial-of-Service attacks.

\textbf{Caching}. Each SBS $v\!\in\!\mathcal{V}_c$ can cache only $B_v$ of the catalog files, i.e., $B_v\!<\!N, \forall v$, and the MBS can store the entire catalog, i.e., $B_0=N$. Different from the analysis in Chapter 4, here we follow the fractional caching model \cite{femtocachingTransIT13} and use the \emph{Maximum Distance Separable} (MDS) codes, where files are split into a fixed number of $K$ chunks, and each stored chunk is a pseudo-random linear combination of the original $K$ chunks. Using the properties of MDS codes, a user will be able to decode the file (with high probability) if it receives any $K$ coded chunks.

The above model results in the following: the caching decision vector $y_t$ has $N\!\times\!V_c$ elements, and each element $y^t_{v,n}\in [0,1]$ denotes the amount of random coded chunks of file $n$ stored at cache $v$.\footnote{The fractional model is justified by the observation that large files are composed of thousands of chunks, stored independently, see \emph{partial caching}~\cite{maggi2018adapting}.  Hence, by rounding the fractional decisions  to the closest integer in this finer granularity, induces only a small application-specific error. In some prior caching models,  fractional variables represent probabilities of caching \cite{Blaszczyszyn2014Geographic}.}
Based on this, we introduce the set of eligible caching vectors: 
\[
\mathcal{Y}=\left\{y\in [0,1]^{N\times V_c} ~\Bigg|~ \sum_{n\in\mathcal{N}}y_{v,n}\leq B_v, ~v\in \mathcal{V}_c\right\},
\]
which is convex. We can now define the online caching policy as follows:
\begin{dfn}[theorem style=plain]{Content catalog}{}
A caching policy $\sigma$ is a (possibly randomized) rule
\begin{equation}
	\sigma: (R_1, R_2, \ldots, R_{t-1}, y_1, y_2, \ldots, y_{t-1})\longrightarrow y_{t}\in \mathcal{Y}\,. \nonumber
\end{equation}
which at each slot $t=1,\dots,T$ maps past observations $R_1,\dots,R_{t-1}$ and configurations $y_1,\dots,y_{t-1}$ to a caching vector $y_t\in \mathcal{Y}$ for slot $t$. 
\end{dfn}
\noindent Note that unlike previous works that study only \emph{proactive caching} policies, here we assume that content can be fetched and cached dynamically as the requests are submitted by the users.

\textbf{Routing}. Since each location $u\!\in\!\mathcal{V}_r$ is possibly connected to multiple caches, we introduce  \emph{routing variables} to determine the cache from which  the requested content will be fetched. Namely, let $f_{v,u,n}^t\in[0,1]$ denote the portion of request $R_{t}^{u,n}$ that is fetched from cache $v$, and we define the respective routing vector $f_t$ at $t$. There are two important remarks here. Due to the coded caching model, the requests can be simultaneously routed from multiple caches and then combined to deliver the content file. Moreover, the caching and routing decisions are coupled and constrained since: \emph{(i)} a request cannot be routed from an unreachable cache; \emph{(ii)} a cache cannot deliver more data chunks than it has; and \emph{(iii)} each request must be fully routed.

Therefore, we define the set of eligible routing decisions conditioned on caching policy $y_t$ and request $R_t$: 
\[
{\cal F}(y_t)=\left\{z\in [0,1]^{N\!\times\! V_c\!\times \!V_r} \Bigg|
\begin{array}{c}
\sum_{j\in {\cal V}_c}f_{v,u,n}^t=R^{u,n}_t,~\forall n,i \vspace{1.5mm}\\

f_{u,v,n}^t\leq e_{u,v}y_{v,n}, ~\forall n, u, v
\end{array}
\right\}
\]
where the first constraint ensures that the entire request is routed, and the second constraint captures both connectivity and caching limitations. Here we note that routing from MBS (variable $f_{0,i,n}^t$) does not appear in the second constraint, because the MBS stores the entire catalog and is in range with all users. This {last-resort} routing option ensures that the set ${\cal F}(y_t)$ is non-empty for all $y_t\in \mathcal{Y}$. 

\section{Online Bipartite Caching: Statement \& Formulation}\label{sec:formulation}
 
We begin this section by defining the caching objective and then proving that the online femtocaching operation can be modeled as a regret minimization problem.

\subsection{Cache Utility}
 
We consider a utility-cache model and use the utility weights $d^{v,u,n}$ to denote the obtained benefit by retrieving a unit fraction of a content file (i.e., a coded chunk) from cache $j$ instead of the MBS, and we set $d^{0,i,n}=0$. This content-dependent utility can be used to model bandwidth economization from cache hits \cite{maggi2018adapting}, QoS improvement from using caches in proximity \cite{femtocachingTransIT13}, or any other content/cache-related benefit. Our model generalizes the respective model discussed in Chapter 4. We can then define the total network utility accrued in slot $t$ as: 
\begin{equation}\label{eq:biput}
	J_t(y_t)=\max_{f\in {\cal F}(y_t)} \sum_{n\in\mathcal{N}}\sum_{u\in\mathcal{V}_r}\sum_{v\in\mathcal{V}_c} d_{v,u,n} R_t^{n,i}f_{v,u,n}^t,
\end{equation}
where index $t$ is used to remind us that utility is affected by the request arrival vector $R_t$, which is adversarial. It is easy to see that $J_t(\cdot)$ states that benefit $d^{v,u,n}$ is realized when a unit of request is successfully routed to cache $v$ where the content $n$ is available. Note also that we have written $J_t(\cdot)$ only as function of caching, as for each $y_t$ we have already included in the utility definition the selection of the optimal routing $f_t$. As we will see next, this formulation facilitates the solution by simplifying the projection step in our algorithm.

\subsection{Problem Formulation}

Let us now formulate femtocaching as an OCO problem. This is non-trivial and requires certain conceptual innovations. For the description below please refer to Fig.~\ref{fig:model-oco}. First, in order to model that the request sequence can follow any arbitrary, and a priori unknown, probability distribution we use the notion of an \emph{adversary} that selects $R_t$ in each slot $t$. In the worst case, this entity generates requests aiming to degrade the performance of the caching system. This approach ensures that our caching and routing policy has robust performance for any possible request model. Going a step further, we model the adversary as selecting the utility function, instead of the request. Namely, at each slot $t$, the adversary picks a utility function $J_t(y)$ from the family of  functions $\{f(R_t,y); R_t\}$ by deciding the vector $R_t$. We emphasize that these functions are piece-wise linear. In the next subsection we will show that they are concave with respect to $y_t$, but not always differentiable. 

Finally, we consider here the practical online setting where $y_t$ \emph{is decided after the request has arrived} and the caching utility has been calculated. This timing reflects naturally the operation of caches and reactive policies, where first a generated request yields some utility (based on whether there was a cache hit or miss), and then the system reacts by updating the cached files. In other words, caching decisions are taken without knowing the future requests. All the above steps allow us to reformulate the caching problem and place it squarely on the OCO framework \cite{hazan-book}.

\begin{figure}[!t]
	\centering
	\includegraphics[width=5.5in]{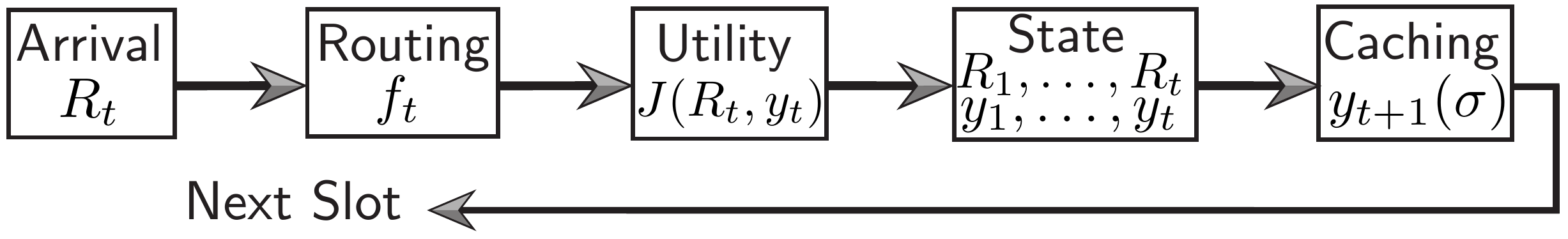} 
	\caption{\textbf{Online femtocaching model}. When a request $R_t$ arrives, the content is routed optimally based on the current cache configuration i.e., $f_t=\mathcal{F}(y_t)$. We accrue utility $J(R_t, y_t)$ and the caching decisions are updated using the state that includes the observed requests and caching decisions.}
	\label{fig:model-oco}
\end{figure} 

Given the adversarial nature of our model, the ability to extract useful conclusions depends  crucially on the choice of the performance metric. Differently from the competitive ratio approach of \cite{Sleator85}, we introduce a new metric for designing our policies. Namely, we will compare how our policy fares against the \emph{best static action in hindsight}. The latter is an ideal, hypothetical, cache configuration which optimizes the average of the  entire request sequence. This metric is commonly used in machine learning \cite{hazan-book, Mert18} and is known as the worst-case static regret. In particular, we define the \emph{regret of policy $\sigma$} as:
\begin{align}\label{eq:regret}
	\texttt{Reg}_T(\sigma)&=\max_{P(R_1,\dots,R_T)}
	\mean{\sum_{t=1}^T J_t\big(y^*\big)-\sum_{t=1}^T J_t\big(y_t(\sigma)\big)},
\end{align}
where $T$ is the time horizon. Note that the maximization is over all possible adversary distributions, and the expectation is taken with respect to the possibly randomized $R_t$ and $\{y_t(\sigma); f_t(y_t)\}$. The hindsight policy is determined by solving $y^*\in\arg\max_{y\in \Yc}\sum_{t=1}^T J_t(y)$.

Our goal is to study how the regret scales with the horizon $T$. A policy with sublinear regret $o(T)$ produces average loss $\lim_{T\rightarrow\infty} \texttt{Reg}_T(\sigma)/T=0$, w.r.t. the hindsight policy. Therefore, $\sigma$ learns which file chunks to store and how to route requests, without knowing the file popularity. We can now formally define the problem at hand as follows:
\begin{opt}{Online Femtocaching Problem (OFP)}
Find a policy $\sigma$ that satisfies:
\[
\texttt{Reg}_T(\sigma) = o(T),
\]
where $\texttt{Reg}_T(\sigma)$ is defined in \eqref{eq:regret}.
\end{opt}

Finally, we stress that while regret minimization typically focuses on time dimension, the number of caches and the catalog size in (OFP) are large enough to induce high regret (independently of $T$). Hence, it is crucial to study how the regret depends on $N$ or $B_v, v\in\mathcal{V}_c$.

\subsection{Problem Properties}

We prove that (OFP) is an OCO problem by establishing the concavity of $J_t(y)$ with respect to $y_t$. Note that we propose here a different formulation from the typical femtocaching model \cite{femtocachingTransIT13}, by including routing variables and the request arrival events. This re-formulation is imperative in order to fit our problem to the OCO framework, but also because otherwise (e.g., if were using \cite{femtocachingTransIT13}) we would need to make in each slot a computationally-challenging projection operation. This will become clear in the sequel.


First, note that we can simplify the expression of $J_t(y)$ by exploiting the fact that there is only one request at each slot. Let $\hat{n}$ be the content and $\hat{u}$  the location of the request in slot $t$. Clearly, $\hat{n},\hat{u}$ depend on $t$ but the notation is omitted for brevity. 
Then $J_t(y_t)$ is zero except for $R_{t}^{\hat{n}, \hat{u}}$. Simplifying the notation by setting $d_{v, \hat{u}, \hat{n}}=d_v$ and $f_{v,\hat{u}, \hat{n}}=f_v$, $\forall v$, \eqref{eq:biput} reduces to: 
\begin{align}
	J( y )\triangleq\,\,\,\,\,\,\max_{f\geq 0 }  \,\,\,&\sum_{v\in {\cal V}_c({\hat{u}}) } d_vf_v \label{eq:opt_zb1}\\
	\text{s.t. }& \sum_{ v\in {\cal V}_c({\hat{u}})  } f_v\leq1 \label{eq:opt_zs}\\
	&  f_v\leq \left\{\begin{array}{ll}
		y_v, &  v\in { {\cal V}_c({\hat{u}})   }, \\
		0, & v\notin {\cal V}_c({\hat{u}}).
	\end{array}\right. \label{eq:opt_zc}
\end{align}

It suffices to prove the following lemma.
\begin{lem}[theorem style=plain]{}{convx}
	The function $J(y)$ is concave in its domain $\Yc$.
\end{lem}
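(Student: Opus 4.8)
The plan is to recognize $J(y)$ as the optimal value of a linear program in which the parameter $y$ enters only through the right-hand sides of the coupling constraints \eqref{eq:opt_zc}, and to exploit the standard fact that the value function of such a parametric LP is concave in its bounds. I would prove this directly by a convex-combination argument in the primal, rather than invoking duality, since the construction is transparent and needs no regularity assumptions.

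First I would observe that for every $y\in\Yc$ the feasible region of \eqref{eq:opt_zb1}--\eqref{eq:opt_zc} in the variables $(f_v)_{v\in\mathcal{V}_c(\hat u)}$, namely $F(y)=\{f\geq 0:\ \sum_{v} f_v\leq 1,\ f_v\leq y_v\}$, is nonempty (it contains $f=0$, corresponding to serving the whole request from the MBS) and compact (closed and contained in the unit box), so the maximum defining $J(y)$ is attained. Fix $y^1,y^2\in\Yc$ and $\lambda\in[0,1]$, set $y^\lambda=\lambda y^1+(1-\lambda)y^2$, and let $f^1,f^2$ be optimal routings achieving $J(y^1)$ and $J(y^2)$ respectively.

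The key step is to verify that $f^\lambda\triangleq\lambda f^1+(1-\lambda)f^2$ lies in $F(y^\lambda)$. Nonnegativity and the simplex constraint $\sum_v f^\lambda_v\leq 1$ follow immediately by convex-combining the corresponding inequalities for $f^1,f^2$, and the coupling constraint holds componentwise because $f^\lambda_v=\lambda f^1_v+(1-\lambda)f^2_v\leq \lambda y^1_v+(1-\lambda)y^2_v=y^\lambda_v$. Since $f^\lambda$ is feasible (though not necessarily optimal) for $y^\lambda$ and the objective $\sum_v d_v f_v$ is linear, I would conclude
\[
J(y^\lambda)\ \geq\ \sum_{v} d_v f^\lambda_v\ =\ \lambda\sum_v d_v f^1_v+(1-\lambda)\sum_v d_v f^2_v\ =\ \lambda J(y^1)+(1-\lambda)J(y^2),
\]
which is precisely concavity of $J$ on $\Yc$.

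There is no deep obstacle here: the whole content is that the parameter appears only in the constraint bounds and does so linearly, so convex combinations of optimal solutions remain feasible for the averaged bounds. The only points requiring a little care are (i) confirming that $F(y)$ is nonempty and compact so that the maximizers $f^1,f^2$ exist, and (ii) checking feasibility of $f^\lambda$ for the box constraint $f_v\le y_v$, which is the single constraint genuinely linking $f$ to the parameter. An alternative route would write $J(y)$ via LP duality as a pointwise minimum of functions affine in $y$, which is manifestly concave; but that requires setting up the dual and is less self-contained than the primal argument above.
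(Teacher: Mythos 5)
Your proposal is correct and follows essentially the same route as the paper's proof: fix $y_1,y_2\in\Yc$, take optimal routings $f_1,f_2$, verify that $f_3=\lambda f_1+(1-\lambda)f_2$ satisfies both the simplex constraint \eqref{eq:opt_zs} and the coupling constraint \eqref{eq:opt_zc} for $y_3=\lambda y_1+(1-\lambda)y_2$, and conclude by linearity of the objective that $J(y_3)\geq \lambda J(y_1)+(1-\lambda)J(y_2)$. Your added remark on nonemptiness and compactness of the feasible routing set (so that the maximizers $f_1,f_2$ exist) is a small point of rigor the paper leaves implicit, but it does not change the argument.
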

\begin{proof}
Consider two feasible caching vectors $y_1,y_2\in\Yc$. We will show that:
\[
J\big(\lambda y_1+(1-\lambda)y_2\big)\geq \lambda J(y_1)+(1-\lambda) J(y_2),~\forall \lambda \in [0,1].
\]
We begin by denoting with $f_1$ and $f_2$ the routing vectors that maximize \eqref{eq:opt_zb1} for $y_1$, $y_2$ respectively. Immediately, it is $J(y_i)=\sum_v d_vf_{v,u}$, for $v=1,2$. Next, consider a candidate vector
$y_3=\lambda y_1+(1-\lambda)y_2$, for some $\lambda\in [0,1]$. We first show that routing $f_3=\lambda f_1+(1-\lambda) f_2$ is feasible for $y_3$. By the feasibility of $f_1$, $f_2$, we have: 
\begin{align}
	\sum_j f_3^j=\sum_j (\lambda f_1^j+(1-\lambda) f_2^j)= \lambda \sum_j f_1^j+(1-\lambda)  \sum_j f_2^j\leq\!1, \nonumber
\end{align}
which proves that $z_3$ satisfies \eqref{eq:opt_zs}. Also, for all $j$ it is:
\begin{align*}
	f_3^j&=\lambda f_1^j+(1-\lambda) f_2^j \leq \lambda y_1^j+(1-\lambda) y_2^j=y_3^j,
\end{align*}
which proves that $f_3$ also satisfies \eqref{eq:opt_zc}; hence $f_3\in \mathcal{F}(y_3)$. It follows that:
\begin{equation}
	J(y_3)\triangleq  \max_{f\in \mathcal{F}(y_3)}\sum_j d^j f^j\geq \sum_j d^j f^j_3. \nonumber
\end{equation}
Combining the above, we obtain:
\begin{align}
	&J\big(\lambda y_1+(1-\lambda)y_2\big)=J(y_3)\geq \sum_j d^j f^j_3=\nonumber \\
	&\lambda \sum_j d^j f^j_1+(1- \lambda) \sum_j d^j f^j_2\! =\!\lambda J(y_1)+(1-\lambda) J(y_2) \nonumber
\end{align}
which establishes the concavity of $J(y)$.	
\end{proof}
Observe now that the term $-\sum_{t=1}^TJ_t(y_t)$ that appears in the regret definition is convex, and the $\max$ operator applied for all possible request arrivals preserves this convexity. This places (OFP) on the framework of OCO. For the benefit of future work we  mention that the online caching problem remains OCO when we consider (i)  more general graphs, (ii) other families of convex functions $J_t(y_t)$, and (iii) extra convex constraints.

Finally, we can show with a simple example that $J_t(\cdot)$ does not belong to class $\mathbf{C}^1$, i.e., it is not differentiable everywhere. Consider a network with a single file $N=1$, and two caches with $B_1=B_2=1$, that serve one user with utility $d_1=d_2=1$. Assume that $y^t_{1,1}=y^t_{1,2}=0.5-\epsilon$ for some very small $\epsilon$. Notice that the partial derivatives $\partial J/\partial y^t_{1,1}=\partial J/\partial y^t_{1,2}=1$ (equal to $d$). But if we suppose a slight increase in caching variables such that $\epsilon$ term is removed, then the partial derivatives become zero. Since $y^t_{1,1}=y^t_{1,2}=0.5$, the two caches combine for the entire content file and yield maximal utility, hence extra caching of this content will not improve further the obtained utility. The same holds in many scenarios which make it impossible to guess when the objective changes in a non-smooth manner (having points of non-differentiability). Hence our algorithm will be based on supergradients.

\section{Bipartite Supergradient Caching Algorithm}\label{sec:bipartite}

Our solution employs an efficient lightweight gradient-based algorithm for the caching decisions, which also incorporates the optimal routing.

\subsection{Optimal Routing}
 
We first examine the routing decisions. Recall that the routing of a file is naturally decided after a request is submitted, at which time the caching $y_t$ has been determined, Fig.~\ref{fig:model-oco}. Thus, in order to decide $f_t$ we will assume the request $R_t$ and the cache configuration $y_t$ are given. The goal of routing is to determine which chunks of the requested file are fetched from each cache. 

Specifically, let us fix a request for file $\hat{n}$ submitted to location $\hat{u}$ as before, 
we may recover an optimal routing vector as one that maximizes the utility: 
\begin{align}
	\hat{f}\in \,\,\,\,\,\,{\arg\max}_{f\geq 0, \eqref{eq:opt_zs},\eqref{eq:opt_zc} }  \,\,\,&\sum_{v\in\mathcal{V}_c(\hat{u}) } d_vf_v. \label{eq:opt_routing}
\end{align} 
Ultimately, the routing at $t$ is set to be:
\[
f^t_{v,u,n} = \left\{\begin{array}{ll}
\hat{f}_{v} & \text{if } n=\hat{n},~ u=\hat{i}, \\
0 & \text{otherwise}.
\end{array}\right.
\]
Problem \eqref{eq:opt_routing} is a Linear Program (LP) of a dimension at most $\text{deg}$. Computationally, solving such a problem is very cheap and can be done by means of the interior point method, or the simplex method \cite{bertsimas1997introduction}. Interestingly, however, due to the special problem structure, a solution can be found by inspection as follows. First, order the reachable caches in decreasing utility, by letting $\phi(.)$ be a permutation such that $d_{\phi(1)}\geq d_{\phi(2)} \geq \dots \geq d_{\phi(|\mathcal{V}_{c}(\hat{u})|)}$. Start with the first element and set $f_{\phi(1)}=\min\{1,y_{\phi(1),\hat{u}, \hat{n}}\}$. Continue in the following manner, as long as the total routing after $k$ steps is $\sum_{v=1}^k f_{\phi(k)}<1$ proceed to the $k+1$ cache and set: 
\begin{equation}
f_{\phi(k)}=\min\{1,\sum_{v=1}^k f_{\phi(k)}+y_{\phi(k), \hat{u},\hat{n}}\}-\sum_{v=1}^k f_{\phi(k)}. \nonumber
\end{equation}
At every step, we visit the cache with the highest utility and route the available chunks until completing the entire content file. The iterative process stops when either all the reachable caches are visited, or $\sum_{v=1}^k f_{\phi(k)}=1$ for some $k$, where in the latter case the rest caches have $f_v=0$. Both approaches may be helpful in practice. By explicitly solving the LP we also obtain the value of the dual variables, which, as we shall see, will help us to compute the supergradient.

\subsection{Optimal Caching - BSCA Algorithm}

Since $J(y)$ is not necessarily differentiable everywhere the gradient $\nabla J(y)$ might not exist for some $y_t$. Therefore we need to find at each slot a supergradient direction.\footnote{A \emph{Supergradient} $g$ is the equivalent of subgradient for concave functions, i.e., $J(y)\geq J(y')-g^T(y'-y),~\forall y'\in \Yc$. } We describe next how this can be achieved. Consider the partial Lagrangian of \eqref{eq:opt_zb1}: 
\begin{equation}
L(y,f,\alpha,\beta)\!=\!\sum_{v\in\mathcal{V}_c(u)}d_vf_v+\alpha\big(1-\sum_{v\in\mathcal{V}_c(u) } f_v\big)+\sum_{v\in\mathcal{V}_c(u)}\beta_v(y_v-f_v) \label{eq:lagrangian}
\end{equation}
where $d_v\triangleq w^{v,\hat{u}, \hat{n}}$, and define the auxiliary function:
\begin{equation}
\Lambda (y,\beta)=L(y,f^*,\alpha^*,\beta)\triangleq \min_{\alpha\geq 0}\max_{f\geq 0}	L(y,f,\alpha,\beta).  \label{eq:lambda}
\end{equation}
From the strong duality property of linear programming, we may exchange $\min$ and $\max$ in the Lagrangian, and  obtain:
\begin{equation}\label{eq:strdual}
J(y)=\min_{\beta \geq 0}\Lambda (y,\beta).
\end{equation}

\begin{lem}[theorem style=plain]{Supergradient}{subgradient}
Let $\beta^*(y)\triangleq \arg\min_{\beta\geq 0}\Lambda (y,\beta)$ be the vector of optimal multipliers of  \eqref{eq:opt_zc}. Define:
\begin{align}\label{eq:supergrad}
g^{n,u,v} = \left\{\begin{array}{ll}
\beta^{v,*}(y) & \text{if } n=\hat{n},~ u=\hat{u}, ~v\in {\cal V}_c(u) \\
0 & \text{otherwise}.
\end{array}\right.
\end{align}
The vector $g\!\in\!\mathbb R^{N\times V_r \times V_c}$ is a supergradient of $J$ at $y$, i.e., it holds $J(y)\!\geq\! J(y')\!-g^T(y'-y),~\forall y'\!\in\! \Yc$. 
\end{lem}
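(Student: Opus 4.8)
The plan is to exploit the dual representation $J(y)=\min_{\beta\geq 0}\Lambda(y,\beta)$ from \eqref{eq:strdual} and to recognize $J$ as a pointwise minimum of \emph{affine} functions of $y$, for which a supergradient is simply read off from the affine piece that attains the minimum at the point of interest. This turns the lemma into an elementary envelope argument, once the affinity is established.

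First I would show that for each fixed $\beta\geq 0$ the map $y\mapsto \Lambda(y,\beta)$ is affine, with the coefficient of $y_v$ equal to $\beta_v$. Starting from \eqref{eq:lagrangian} and regrouping terms,
\[
L(y,f,\alpha,\beta)=\sum_{v\in\mathcal{V}_c(\hat u)}\beta_v y_v+\alpha+\sum_{v\in\mathcal{V}_c(\hat u)}f_v\big(d_v-\alpha-\beta_v\big),
\]
so the caching variables $y$ enter only through the separable term $\sum_v\beta_v y_v$, which is untouched by the inner $\min_{\alpha\geq 0}\max_{f\geq 0}$ in \eqref{eq:lambda}. Hence $\Lambda(y,\beta)=\sum_{v\in\mathcal{V}_c(\hat u)}\beta_v y_v+c(\beta)$, where $c(\beta)=\min_{\alpha\geq 0}\max_{f\geq 0}\big[\alpha+\sum_v f_v(d_v-\alpha-\beta_v)\big]$ does not depend on $y$. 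This affinity is exactly what I need.

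Next, given the minimizer $\beta^*(y)$, for any competing configuration $y'\in\Yc$ I would bound $J(y')$ by the value of this particular affine piece at $y'$ (using that $J$ is the infimum over all $\beta$), while at $y$ the piece is tight by the definition of $\beta^*(y)$. Concretely,
\[
J(y')\leq \Lambda\big(y',\beta^*(y)\big)=\sum_v\beta_v^*(y)\,y'_v+c\big(\beta^*(y)\big),\qquad J(y)=\Lambda\big(y,\beta^*(y)\big)=\sum_v\beta_v^*(y)\,y_v+c\big(\beta^*(y)\big).
\]
Subtracting these two identities cancels $c(\beta^*(y))$ and yields $J(y')-J(y)\leq \sum_v\beta_v^*(y)(y'_v-y_v)$, which is precisely $J(y)\geq J(y')-g^{\T}(y'-y)$ once $g$ is taken to be the vector whose only nonzero entries are $g^{\hat n,\hat u,v}=\beta_v^*(y)$ as in \eqref{eq:supergrad}. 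The entries of $g$ outside the requested pair $(\hat n,\hat u)$ vanish, which is consistent because $J$ does not depend on those coordinates of $y$; thus the inequality holds for arbitrary $y'\in\Yc$ regardless of their values there.

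The main obstacle is justifying the $\min$--$\max$ exchange that underlies \eqref{eq:strdual} and the identification of $\beta^*(y)$ (the optimal dual of \eqref{eq:opt_zc} in the routing LP) with the minimizer of $\Lambda(\cdot,\beta)$; this rests on strong duality for the linear program \eqref{eq:opt_zb1}, which holds since the program is feasible and bounded (the last-resort MBS route keeps the feasible set nonempty, and the utilities $d_v$ are finite). I would also flag the standard subtlety that $\beta^*(y)$ need not be unique at points of non-differentiability; the argument above uses only that \emph{some} optimal dual vector attains the minimum, and any such choice delivers a valid supergradient, which is all the lemma claims.
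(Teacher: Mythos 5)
Your proof is correct and takes essentially the same route as the paper's: both rely on strong LP duality \eqref{eq:strdual} together with the observation that $y$ enters the Lagrangian only through the separable term $\sum_{v}\beta_v y_v$ (this is exactly the paper's step $(a)$, where the inner optimizers $f^*,\alpha^*$ are independent of $y$), and then conclude by the envelope inequality $J(y')\leq \Lambda\big(y',\beta^*(y)\big)$ with equality at $y$. Your explicit factorization $\Lambda(y,\beta)=\sum_{v}\beta_v y_v + c(\beta)$ and the closing remarks on feasibility, boundedness, and non-uniqueness of $\beta^*(y)$ are simply a cleaner packaging of the paper's argument, not a different one.
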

\begin{proof}
First note that we can write: 
	\begin{align*}
		&J(y)\stackrel{\eqref{eq:strdual}}{=}\min_{\beta \geq 0}\Lambda (y,\beta)\triangleq \Lambda \big(y,\beta^*(y)\big) \stackrel{(a)}{=}\Lambda \big(y',\beta^*(y)\big)-\beta^*(y)^\top(y'-y).
	\end{align*}
	Where $(a)$ holds since it is:
	\begin{align}  
		L(y',f,\alpha,\beta)=\sum_{v\in\mathcal{V}_c(\hat{u}) }\!d_vf_v+\alpha\big(1-\sum_{v\in\mathcal{V}_c(\hat{u})} f_v\big)+\sum_{v\in\mathcal{V}_c(\hat{u})}\beta^v(y'^{\,v}\!-\!f_v)\nonumber
	\end{align}
	and by applying \eqref{eq:lambda}, where the optimization is independent of variables $y$ (or $y'$), we obtain $\Lambda(y',\beta)=L(y', f^*, \alpha^*, \beta)$, with $\alpha^*$ and $f^*$ being the same as those appearing in $\Lambda(y,\beta)=L(y, f^*, \alpha^*, \beta)$ (since their calculation is independent of $y$). Hence, we can subtract the two expressions (observe the linear structure of \eqref{eq:lagrangian}), plug in a certain vector $\beta$ and obtain:
	\begin{equation}
		\Lambda\big(y,\beta^*(y)\big)-\Lambda\big(y', \beta^*(y)\big)=-\beta^*(y)^\top(y'-y).
	\end{equation}
	where $\beta^*(y)\!=\!\arg\min \Lambda(y,\beta)$. Finally, note that it holds $\Lambda (y',\beta^*(y))>\Lambda (y',\beta^*(y'))$ by definition of $\beta^*$, hence:
	\begin{align}
		J(y)&=\Lambda \big(y',\beta^*(y)\big)-\beta^*(y)^\top(y'-y)\nonumber \\
		&\geq \Lambda\big(y', \beta^*(y')\big)-\beta^*(y)^\top(y'-y) \nonumber \\
		&=J(y')-\beta^*(y)^\top(y'-y), \nonumber
	\end{align}
which concludes the proof.
%
\end{proof}


\begin{algorithm}[t]
	\nl \textbf{Input}: $\{e_{v,u}\}_{(v,u)}$; $\{B_v\}_v$; $\mathcal{N}$; $\{d^{v,u,n}\}_{(v,u,n)}$; $\eta_t=\Delta_{\Yc}/K\sqrt{T}$.\\%
	\nl \textbf{Output}: $y_t$, $\forall t$.\\%
	\nl \textbf{Initialize}: $\hat{n}, \hat{u}$, $y_1$ arbitrarily.\\%
	\nl \For{ $t=1,2,\ldots $  }{
		\nl Observe  request $R_{t}$ and set ${\hat{n}, \hat{u}}$ for which  $R_{t}^{\hat{n},\hat{u}}\!=\!1$\,;\\
		\nl Find the  routing $f_t$ solving \eqref{eq:opt_zb1}-\eqref{eq:opt_zc}; $\%$ \emph{decides routing}		\\
		\nl Calculate the accrued utility $J_t(R_t,y_t)$\,;\\
		\nl Calculate the supergradient $g_t$ for $\hat{n}, \hat{u}$ using (\ref{eq:supergrad}); \\%
		\nl Update the vector $q_{t+1}=y_{t}+\eta_tg_t$\,; \\%
		\nl Project: $y_{t+1}=\Pi_{\mathcal{Y}}\left(q_{t+1}\right)$; \,\,$\%$ \emph{decides caching}\\%
	}
	\caption{{\small \!Bipartite Supergradient Caching Algorithm }}	\label{alg1}
\end{algorithm}

Intuitively, the dual variable $\beta^{j,*}(y)$ (element of vector $\beta^*(y)$) is positive only if the respective constraint \eqref{eq:opt_zc} is tight (and some other conditions are met) which ensures that increasing the allocation $y^{\hat{n},\hat{u},v}$ will induce a benefit in case of a request with $R^{\hat{n},\hat{u}}\!=\!1$ occurs in future. The actual value of $\beta^{v,*}(y)$ is proportional to this benefit. The reason the algorithm emphasizes this request, is that in the online gradient-type of algorithms the last function (in this case a linear function with parameters the last request) serves as a corrective step in the ``prediction'' of future. Having this method for calculating a supergradient direction, we can extend the seminal online gradient ascent algorithm \cite{zinkevich2003online}, to design an online caching policy for (OFP). In detail: 

\begin{box_example}[detach title,colback=blue!5!white, before upper={\tcbtitle\quad}]{Bipartite Subgradient Caching Algorithm (BSCA).}
	\small
Upon a request $R_t$, find the optimal routing and then adjust the caching decisions with a supergradient: 
\[
y_{t+1}=\Pi_{\mathcal{Y}}\left(y_t+\eta_t g_t\right),
\]
where $\eta_t$ is the stepsize, $g_t$ can be taken as  in Lemma~\ref{lem:subgradient}, and
\begin{equation}
\Pi_{\mathcal{Y}}\left(q\right)\triangleq \argmin_{y\in\Yc}\|q-y \|, \nonumber
\end{equation}
is the Euclidean projection of the argument vector $q$ onto $\mathcal{Y}$, which is performed with the projection algorithm of Section \ref{sec:projection} in Chapter~\ref{ch:3}.\end{box_example}

Algorithm \ref{alg1} explains how BSCA can be incorporated into the network operation for devising the caching and routing decisions in an online fashion. Note that the algorithm requires as input only the network parameters $\ell_{ij}, B_j, \mathcal{N}, w^{n,i,j}$. The stepsize $\eta_t$ is computed using the set diameter $\Delta_{\mathcal{Y}}$, the upper bound on the supergradient  $G$, and the horizon $T$. The former two depend on the network parameters as well. Specifically, define first the diameter $\Delta_{\mathcal{S} }$ of  set $\mathcal{S}$ to be the largest Euclidean distance between any two elements of this set. In order to calculate this quantity for $\mathcal{Y}$, we select two vectors $y_1, y_2\in \Yc$ which cache $B_j$ different files at each cache $j\!\in\! \mathcal{J}$, and hence we obtain:
\begin{equation}\label{eq:diam}
\Delta_{\mathcal{Y}}=\sqrt{\sum_{n,v}(y^1_{v,u,n}-y^2_{v,u,n})^2}=\sqrt{\sum_{v\in \mathcal{V}_c}2B_v}\leq \sqrt{2BV_{c}},
\end{equation}
where $B\!=\!\max_{v}B_v$. Also, we denote with $G$ the upper bound on the norm of the supergradient vector. By construction this vector is non-zero only at the reachable caches, and only for the specific content. Further, its smallest value is zero by the non-negativity of Lagrangian multipliers, and its  largest is no more than the \textbf{maximum utility}, denoted with $d_{(1)}$. Therefore, we can bound the supergradient norm as follows: 
\begin{equation}\label{eq:lipschitz}
\|g\|= \sqrt{\sum_{ v\in\mathcal{V}_c(\hat{u})} (d_{(1)})^2} \leq d_{(1)}\sqrt{\text{deg}}\triangleq K,
\end{equation}
where $\text{deg}\!=\!\max_u |\mathcal{V}_c(u)|$ is the maximum number of reachable caches from any location $i\in\mathcal{I}$. 

The algorithm proceeds as follows. At each slot, the system observes the submitted request (or, \emph{reacts} to a request) (line 4). Then, it calculates the supergradient (lines 5) using low-complexity computations, e.g. by solving an LP with at most $\text{deg}$ variables and finding the dual variables. This information is used to update the caching policy $\hat{y}$ (line 6), which is then projected onto the feasible caching region (line 7). Then the request arrives (line 8) and the optimal routing decision is calculated based on the current cache configuration (line 8).\footnote{Please note that one can shift the algorithm's steps (preserving the relative order) and start from the request submission.} Since the supergradient computation in  line 5 of BSCA, and the optimal routing  explained in the previous subsection, both require the solution of the same LP, it is possible to combine these as follows. When the optimal routing is found, the dual variables can be stored and used for the direct computation of the supergradient in the next iteration of BSCA. Finally, we mention that the algorithm state is only the vector $y_t$, and therefore the memory requirements of the algorithm are very small.

\subsection{Performance of BSCA}
Following the rationale of the online gradient descent analysis in \cite{zinkevich2003online}, we show that our policy achieves no regret and we analyze how the different network and content file parameters affect the regret expression. The next theorem holds.
\begin{thm}[theorem style=plain]{Regret of BSCA}{BSCA}
\[
\texttt{Reg}_T(\textup{BSCA})\leq 
w^{(1)}\sqrt{2\textup{deg} V_c BT}, \,\,\,\,\,\, B=\max_v B_v
\]
\end{thm}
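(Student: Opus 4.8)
The plan is to follow the standard online gradient ascent regret analysis established in the proof of Theorem~\ref{thm:2} (Regret of OGA), but with two modifications: the gradient is replaced by the supergradient $g_t$ of Lemma~\ref{lem:subgradient}, and the problem-specific constants $\textit{diam}(\mathcal{Y})$ and $L$ are replaced by the bipartite quantities $\Delta_{\mathcal{Y}}$ and $K$ derived in \eqref{eq:diam} and \eqref{eq:lipschitz}.

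First I would invoke the non-expansiveness of the Euclidean projection $\Pi_{\mathcal{Y}}$ (as in the OGA proof), together with the BSCA update $y_{t+1}=\Pi_{\mathcal{Y}}(y_t+\eta_t g_t)$, to write for each slot
\[
\|y_{t+1}-y^*\|^2\leq \|y_t+\eta_t g_t-y^*\|^2=\|y_t-y^*\|^2+2\eta_t g_t^{\!\top}(y_t-y^*)+\eta_t^2\|g_t\|^2.
\]
Fixing $\eta_t=\eta$, rearranging, and summing telescopically over $t=1,\dots,T$, I would obtain (using $\|y_1-y^*\|\leq \Delta_{\mathcal{Y}}$ and dropping the nonnegative term $\|y_{T+1}-y^*\|^2$)
\[
\sum_{t=1}^T g_t^{\!\top}(y^*-y_t)\leq \frac{\Delta_{\mathcal{Y}}^2}{2\eta}+\frac{\eta T K^2}{2},
\]
where the bound $\|g_t\|\leq K$ comes directly from \eqref{eq:lipschitz}.

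The step that differs conceptually from OGA, and which I regard as the \emph{main point} to get right, is converting this gradient-alignment bound into a regret bound. In OGA the utilities are linear, so $U_t(y_t)-U_t(y)=\nabla U_t^{\!\top}(y_t-y)$ with equality; here the per-slot utilities $J_t$ are only concave (and not differentiable everywhere), so linearity no longer holds. Instead I would use the defining supergradient inequality from Lemma~\ref{lem:subgradient}, namely $J_t(y)\geq J_t(y_t)-g_t^{\!\top}(y-y_t)$ for all $y\in\mathcal{Y}$; evaluated at $y=y^*$ this gives $J_t(y^*)-J_t(y_t)\leq g_t^{\!\top}(y^*-y_t)$. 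Summing over $t$ and chaining with the telescoped bound yields
\[
\sum_{t=1}^T\big(J_t(y^*)-J_t(y_t)\big)\leq \frac{\Delta_{\mathcal{Y}}^2}{2\eta}+\frac{\eta T K^2}{2}.
\]
Since this pathwise inequality holds for every realization of the (possibly randomized) requests and caching decisions, taking expectations and then the maximum over adversary distributions $P(R_1,\dots,R_T)$ preserves it, so the right-hand side upper-bounds $\texttt{Reg}_T(\textup{BSCA})$.

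Finally I would optimize the stepsize: minimizing $\Delta_{\mathcal{Y}}^2/(2\eta)+\eta T K^2/2$ over $\eta$ gives $\eta=\Delta_{\mathcal{Y}}/(K\sqrt{T})$ (matching the input in Algorithm~\ref{alg1}) and the balanced value $\Delta_{\mathcal{Y}}K\sqrt{T}$. Substituting the explicit constants $\Delta_{\mathcal{Y}}\leq\sqrt{2BV_c}$ from \eqref{eq:diam} and $K=d_{(1)}\sqrt{\textup{deg}}$ from \eqref{eq:lipschitz}, and writing $w^{(1)}$ for the maximum utility $d_{(1)}$, produces
\[
\texttt{Reg}_T(\textup{BSCA})\leq \Delta_{\mathcal{Y}}K\sqrt{T}\leq w^{(1)}\sqrt{2BV_c}\,\sqrt{\textup{deg}}\,\sqrt{T}=w^{(1)}\sqrt{2\,\textup{deg}\,V_c B T},
\]
which is exactly the claimed bound. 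The one subtlety I would flag is ensuring the supergradient inequality is applied to the \emph{same} function $J_t$ that the adversary selects in slot $t$, and that the expectation/maximization over the adversary can be pulled outside because the pathwise bound is uniform in the randomness; since the supergradient bound and the telescoping argument hold deterministically for each sample path, this causes no difficulty.
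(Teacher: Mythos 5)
Your proof is correct and follows essentially the same route as the paper's: projection non-expansiveness, a telescoping sum with constant stepsize, the supergradient inequality replacing the linearity used in the OGA proof, stepsize optimization to $\eta=\Delta_{\mathcal{Y}}/(K\sqrt{T})$, and substitution of the constants from \eqref{eq:diam} and \eqref{eq:lipschitz}. One transcription slip to fix: with $g_t$ the supergradient at $y_t$, the inequality should read $J_t(y_t)\geq J_t(y)-g_t^{\top}(y-y_t)$ for all $y\in\mathcal{Y}$ (you swapped the roles of $y$ and $y_t$), although the consequence you actually use downstream, $J_t(y^*)-J_t(y_t)\leq g_t^{\top}(y^*-y_t)$, is the correct one.
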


\begin{proof}
Using now the non-expansiveness property of the Euclidean projection \cite{Ber99book}, we can bound the distance of each new value $y_{t+1}$ from the best static policy in hindsight $y^*$, as follows:
\begin{align}
&\|\Pi_{\mathcal{Y}}\left(y_t+\eta_t  g_t\right)-y^*\|^2\leq \|y_{t}+\eta_tg_t-y^*\|^2=\nonumber \\
&\|y_t-y^*\|^2+2\eta_t{g_t}^\top(y_t-y^*)+\eta_t^2\|g_t\|^2,\label{eq:ineqsub1}
\end{align}
where we expanded the norm. If we fix the step size $\eta^t=\eta$ and sum telescopically over all slots until $T$, we obtain:
\begin{equation*}
\|y_{T}-y^*\|^2\!\!\leq\! \|y_1-y^*\|^2+2\eta\sum_{t=1}^T{ g_t}^\top(y_t-y^*)+\eta^2\sum_{t=1}^T\!\|g_t\|^2.
\end{equation*}
Since $\|y_{T}-y^*\|^2\geq 0$, rearranging the terms and using $\|y_1-y^*\|\leq \Delta_{\Yc}$ and $\|g_t\|\leq K$, we obtain: 
\begin{equation}\label{eq:teleonl}
\sum_{t=1}^T{g_t}^\top(y^*-y_t)\leq \frac{\Delta_{\Yc}^2}{2\eta} +\frac{\eta TK^2}2.
\end{equation}

Any concave function $J_t$ satisfies the inequality $J_t(y_t)\leq J_t(y)+ {g_t}^T(y_t-y)$, $\forall y\in\mathcal{Y}$, and hence the same holds for the $J_t$ that maximizes regret. Plugging in $\texttt{Reg}_T(\cdot)$, we get:
\begin{align*} 
\texttt{Reg}_T(BSCA)\!=\!\sum_{t=1}^T\big(J_t(y^*)-J_t(y_t)\big)\! \leq\! \sum_{t=1}^T{g_t}^{\!\top}\!(y_t\!-\!y^*)\stackrel{\eqref{eq:teleonl}}{\leq}\! \frac{\Delta_{\Yc}^2}{2\eta} +\frac{\eta TK^2}2.
\end{align*}
We can minimize the regret bound by selecting the optimal step size. That is, using the first-order condition w.r.t. $\eta$, for the RHS above we obtain $\eta^*=\Delta_{\Yc}/K\sqrt{T}$, which  yields:
\begin{align}
	\texttt{Reg}_T(BSCA)\leq  \Delta_{\mathcal{Y}}K\sqrt{T} . \label{eq:conv_reg}
\end{align}
The theorem follows from \eqref{eq:diam}-\eqref{eq:lipschitz}.
\end{proof}

Theorem \ref{thm:BSCA} shows that the regret of BSCA scales as $O(\sqrt{T})$ and therefore BSCA solves (OFP). The learning rate $O(\sqrt{T})$ is known to be the fastest possible learning rate achievable when functions $f_t$ are general concave functions \cite{Abernethy08}--it can be improved only if functions $J_t$ have additional properties such as being strongly convex, which does not hold in our problem. This establishes that BSCA offers the fastest possible way to learn to cache and route when (i) the distribution of popularity is unknown and highly time-varying, and (ii) we disregard  constants that do not scale with $T$. Any improvement on regret we hope to achieve over BSCA, can only be in relation to these constants. 

Furthermore, the regret expression is indicative of how fast the algorithm \emph{learns} the right caching decision, and therefore the detailed constants we obtain in the theorem are of great value. For example, we see that the regret bound is independent of the size of the catalog $N$. This is particularly important for caching problems where the catalog is typically the physical quantity that drives the problem's dimension. Another interesting observation from the constants is that the learning rate of the algorithm might become slow (i.e. resembling regret behavior of $\sim\!\!O(T)$) when $B$ is very large, or comparable to $T$. This justifies empirical observations suggesting that in order to extract safe conclusions about caching performance under fluctuating popularity, one should simulate datasets whose size $T$ is larger than the cache size $B$. Note that in practice it is $B\!<<\!N$, since otherwise the caching problem becomes trivial, i.e., any policy performs well. 


Despite the superior learning rate performance, projected gradient algorithms like BSCA may become problematic for large problem instances because they involve an Euclidean projection step (see line 9 of BSCA). Projection operations are oftentimes computationally expensive \cite{duchi-projection, schmidt-projection, hao-projection}, and it is not uncommon to constitute the bottleneck step in, otherwise-fast, algorithms. However, our BSCA algorithm is engineered to have a simple projection step, which can be addressed using the algorithm of Section \ref{sec:projection} in at most $O(N\log N)$ computations. This is an important feature of BSCA, which allows its implementation in caches that can fit a large number of contents.

\subsection{Choosing the Step Size}
We proved that by using the specific constant step size we are able to obtain sublinear regret. However, calculating  $\eta^*=\Delta_{\Yc}/K\sqrt{T}$ requires knowledge of the time horizon $T$, which in some cases might not be convenient. To relax this requirement, one can use the standard \emph{doubling trick} where essentially we select a certain time horizon $T$ and then ``restart'' the algorithm, calculating the optimal step size rule for $2T$, and so on. This is known to  add only a small constant factor to the regret bound \cite[Sec. 2.3]{Shalev12}. An alternative approach is to use a time varying step. In detail, if we start from \eqref{eq:ineqsub1} and sum telescopically for the first $T$ slots and different step sizes, we obtain the  regret performance under varying step size $\texttt{Reg}_{T}^v(BSCA)$:  
\begin{equation}\label{eq:2nd-ubound}
	\texttt{Reg}_{T}^v(BSCA)\leq  \frac{\Delta_{\mathcal{Y}}^2}{\eta_{T}}+ \frac{ K^2 \sum_{t=1}^T\eta_t }{2}.
\end{equation}
Now, it is easy to see that if we set $\eta_t=1/\sqrt{t}$, then the two terms in \eqref{eq:2nd-ubound} yield factors of order $O(\sqrt{T})$, and we obtain:
\begin{equation}\label{eq:22nd-ubound}
\texttt{Reg}_{T}^v(BSCA)\leq  \frac{\Delta_{\mathcal{Y}}^2 \sqrt{T}}{2}+ \Big(\sqrt{T}-\frac{1}{2} \Big)K^2. 
\end{equation}

Comparing the two expressions for the regret of BSCA, \eqref{eq:conv_reg} and \eqref{eq:22nd-ubound}, we see that they are both sublinear with the same order of magnitude learning rate $\sqrt{T}$. Their exact relationship depends on the relative values of parameters $K$ and $\Delta_{\mathcal{Y}}$.

\section{Model Extensions and Performance}\label{sec:extensions}

The model and algorithms we have introduced in this paper can be used in much more general settings than (OFP). First of all, as it was explain in Sec. \ref{chapter4:bipartite} the bipartite  model can be used for caching networks that do not have, at-a-first-glance, the bipartite structure, as long as they do not have hard link capacity constraints or load-dependent routing costs. Figure \ref{Fig:bipartite} showcases two representative cases. Namely, a CDN that can be transformed to a bipartite graph by mapping each uncapacitated path to a link in the bipartite graph, and a system of connected memories that jointly serve requests submited at a central I/O point. Such multi-memory paging systems arise in data centers or disaggregated server systems  \cite{dredbox}, \cite{disag}, and BSCA can effectively determine the routing of requests and the cached contents at each memory.




\begin{figure}[t!]
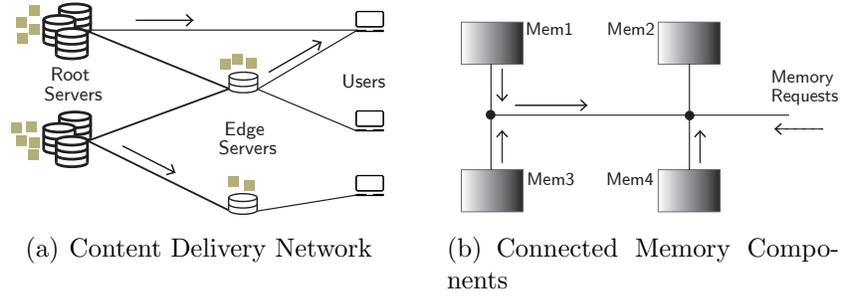

	\centering
	\subfigure[Content Delivery Network]{
		\centering
		\includegraphics[width=2.0in]{intro-model2}\label{Fig:cdn}
	}\,\,\,\,\,\,\,
		\subfigure[Connected Memory Components]{
		\centering
	\includegraphics[width=2.0in]{disag-example}\label{Fig:disag}}
	\caption{Bipartite model for general networks. (a) A content delivery network with root and edge servers, and uncapacitated links. Each path can be modeled as a super-link in a bipartite graph of (root or edge) servers and end-users. (b) A network of connected memory elements in a disaggregated server system. }
	\label{Fig:bipartite}
\end{figure}



Another interesting model extension is when the utility parameters change dynamically. This means that either due to the value of caching or the cost of routing, parameters $d^{t}_{v,u,n}$ vary over time. For example, the connectivity graph $\{e_{v,u}^t\}_{(v,u)}$ might change from slot to slot, because the link channel gains vary. BSCA can handle this effect by simply updating the step where in each slot now we observe both the submitted request $R_{t}^{\hat{n}, \hat{u}}$ and the current utility vector $d_{t}$. Interestingly, this generalization does not affect the regret bound which already encompasses the maximum possible distance between the utility parameters.

Finally, an important case arises when there is cost for prefetching the content over the backhaul SBS links. First, note that BSCA might select to reconfigure the caches in a slot $t$ ($y_t\neq y_{t-1}$) even if the requested files were already available, if this update improves the aggregate utility.\footnote{For instance, in a 3-cache network with 1 file, BSCA might decide to change the $t-1$ configuration $[0.33, 0.33, 0.33]$ to $[0.66, 0.44, 0.33]$ if this increases the content delivery utility, e.g., if cache 1 is  better.} Nevertheless, such changes induce cost due to the bandwidth consumed for fetching the new file chunks, and therefore the question ``\emph{when should a file be prefetched?}'' is of very high importance in caching. This question arises of course in other policies as well, such as the LRU-ALL \cite{giovanidis-mLRU}, that make proactive updates in order to improve their cache hit ratio. Unlike these policies, however, BSCA can be modified in order to make reconfiguration decisions by balancing the expected utility benefits and induced costs.

Namely, if we denote with $c_{n,j}$ the cost for transferring one unit of file from the origin servers to cache $v\!\in\!\mathcal{V}_c$, then we can define the utility-cost function:
\begin{equation}
	H_t(y_t, y_{t-1})=J_{t}(y_t)  - \sum_{n\in\mathcal{N}}\sum_{v\in\mathcal{V}_c}c_{n,j}\max\{y^{t}_{v,n}- y^{t-1}_{v,n},0\}
\end{equation}
where $f_t(y_t)$ is given in \eqref{eq:biput}, and the convex $\max$ operator ensures that we pay cost whenever we increase the cached chunks of a file $n$ at a cache $j$ (but not when we evict data). Function $H_t$ is concave and hence BSCA can be easily updated to accommodate this change. Namely, it suffices to use the supergradient $q_t$ for $H_t(\cdot)$ instead of the supergradient $g_t$ of $J_t(\cdot)$. Using basic subgradient algebra we can write $h_t=g_t + q_t$, where
\begin{align}\label{eq:supergrad2}
	q_{t}^{\hat{n},v}=\left\{\begin{array}{ll}
		-c_{ \hat{n},v }, & \text{if  }\,y^{t}_{v,\hat{n}} - y^{t-1}_{v,\hat{n}}>0 \\
		0 & \text{otherwise}
	\end{array}\right.
\end{align}
which is calculated for each request $R_{t}^{ \hat{n}, j }$ and every cache $j$. The policy's learning rate is not affected by this change, and we only need to redefine parameter $G$ in the regret, by adding the maximum value of $q$. This makes our policy suitable for many placement problems beyond caching, e.g., service deployment \cite{hong-hou-jsac}, where reconfigurations might induce non-negligible costs.




\begin{figure*}[!t]
	\centering
	\subfigure[Example femtocaching network.]
	{
		\includegraphics[width=2.0151in]{bipartite-example2}
		\label{Fig:bipartite-example}
	}\,\,\,\, 
	\subfigure[Utility of BSCA \& Competitors]
	{
		\includegraphics[width=2.0151in, height=1.85in]{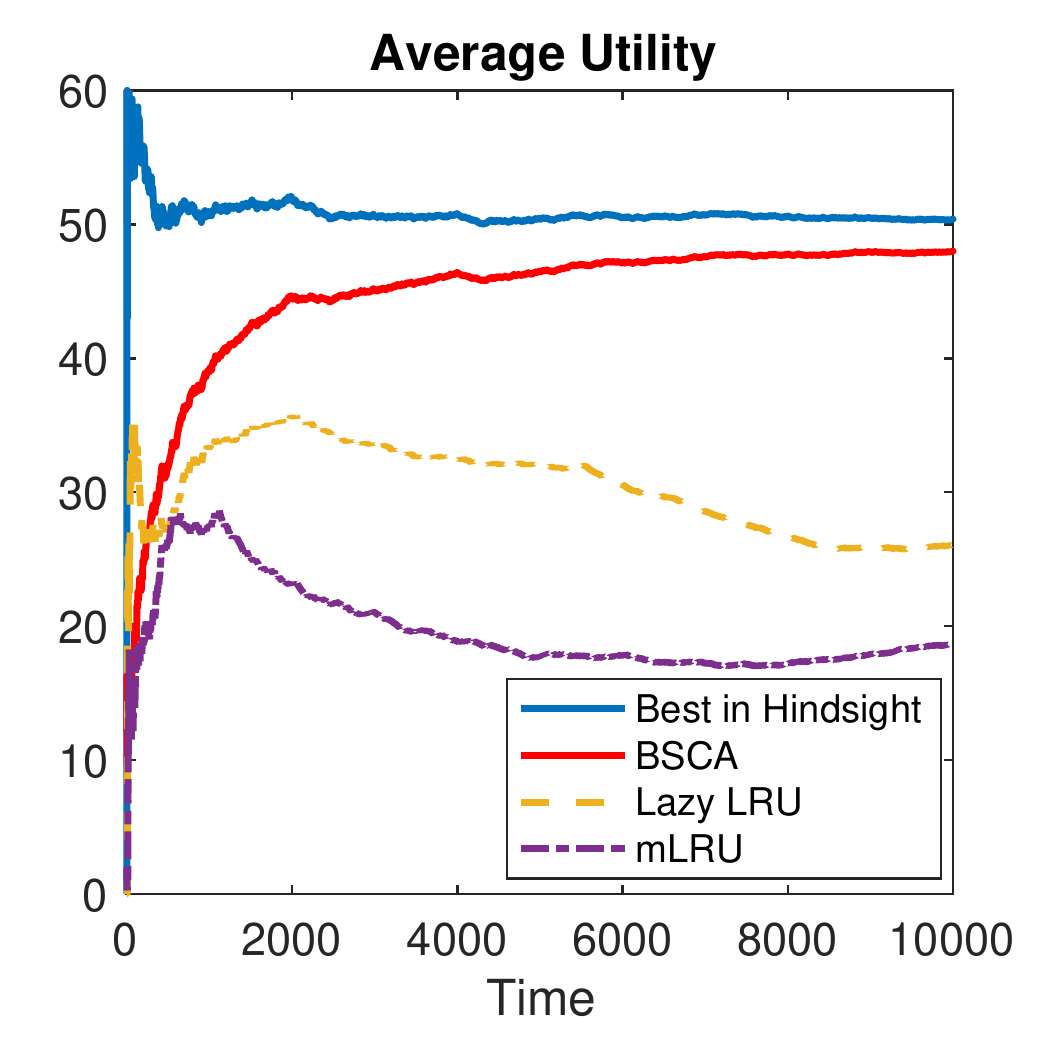}
		\label{Fig:bipartite-utility}
	}
	\caption{(a) User $3$ is connected to caches $v=2$ and $v=1$; a request for $n$ arrives at user $u=2$, $R_{t}^{2,n}=1$, and fetching it from cache $v=1$ yields $d_{1,2,n}$. (b) The average utility of BSCA and competitor policies for this network (from \cite{paschos-infocom19}). 
	} 
	\label{Fig:bipartite-sims}	
\end{figure*}

The generality and robustness of BSCA can be easily observed numerically as well. We consider a bipartite graph with 3 caches of size $B=10$, and 4 user locations, see  Fig.~\ref{Fig:bipartite-example}. The utility vector is $d_{n}=(1,2,100), \forall n$, hence an efficient policy needs to place popular files on cache $3$. The network is fed with stationary Zipf requests from a catalog of $N=100$ files, and each request arrives at a user location uniformly at random. We compare BSCA to the best static policy in hindsight, and state-of-the-art reactive policies for caching networks, namely \emph{(i)} the multi-LRU policy proposed in \cite{giovanidis-mLRU} where a request is routed to a given cache (e.g., the closest) which is updated based on the LRU rule; and \emph{(ii)} and the $q$-LRU policy with the ``lazy'' rule \cite{leonardi-implicit} for $q=1$, which works as the multi-LRU but updates the cache only if the file is not in any other reachable cache.

Fig. \ref{Fig:bipartite-example} depicts the considered network, and Fig. \ref{Fig:bipartite-utility} the comparison results. We see that BSCA converges to the best static policy in hindsight, which verifies that it is a universal no regret policy. That is, BSCA learns gradually what files are popular and increases their cache allocation at the high utility caches, while adjusting accordingly the less popular files at the other caches. The second best policy is lazy-LRU which is outperformed by BSCA by $45.8\%$. On the other hand, both lazy-LRU and mLRU have comparable performance and, interestingly, BSCA has lower performance for the first 500 slots (from lazy-LRU) but quickly adapts to the requests and outperforms its competitors.

\section{Discussion of Related Work}\label{sec:related}

The first OCO-based caching policy was proposed in \cite{paschos-infocom19} which reformulated the caching problem and embedded a learning mechanism into it. This opened the road for studying the broader network caching problem. In CNs one needs to jointly decide which cache will satisfy a request (routing) and which files will be evicted (caching), and these decisions are perplexed when each user is connected to multiple caches over different paths.\footnote{Note that for independent caches, the CN problem is essentially reduced to the single-cache version, as, e.g., in \cite{gunduz-reinforcement, giannakis-q-learning, mihaela-video-caching}.} Thus, it is not surprising that online policies for CNs (including femtocaching) are under-explored.

Placing more emphasis on the network, \cite{ioannidis_icn17, ioannidis-jsac18} introduced a joint routing and caching  algorithm for general graphs, assuming that file popularity is stationary. On the other hand, proposals for reactive CN policies include: \cite{Blaszczyszyn2014Geographic} which designed a randomized caching policy for small-cell networks; \cite{spyropoulos-icc} which studies a wireless network coordinating caching and transmission decisions across base stations; \cite{avrachenkov-acm17} that proposed distributed cooperative caching algorithms; and \cite{C_Dehghan_16} that introduced a TTL-based utility-cache model. Albeit important, all these solutions \emph{presume that the popularity model is fixed and known}. Clearly, in practice, and especially in femtocaching networks, this assumption does not hold.

More recently, \cite{giovanidis-mLRU} proposed the multi-LRU (mLRU) heuristic strategy, and \cite{leonardi-implicit} the ``lazy rule'' extending $q$-LRU to provide local optimality guarantees under stationary requests. These works pioneered the extension of the seminal LFU/LRU-type policies to the case of multiple connected caches and designed efficient caching algorithms with minimal overheads. Nevertheless, dropping the stationarity assumption, the problem of online routing and caching remains open. The approach presented in this chapter is fundamentally different from \cite{giovanidis-mLRU, leonardi-implicit} as \emph{we embed a learning mechanism into the system operation that adapts the caching and routing decisions to any request pattern (even one that is created by an attacker) and also to changes in network links or utilities}.

BSCA exploits the powerful OCO framework  \cite{hazan-book}, which has been used with great success in a variety of large-scale problems \cite{Mert18}. OCO is particularly attractive for the design of network protocols (as our caching solution here) for two reasons. Firstly, it offers general performance bounds that hold with minimum assumptions. Secondly, the algorithms are - potentially - simple, elegant, and can be implemented as protocols. In fact, we use here (a variation of) the online counterpart of the gradient algorithm which has played a key role in the design of Internet protocols under stationary assumptions.

On the other hand, employing OCO in networks raises new challenges. It requires careful problem formulations, handling constraints that typically do not appear in OCO, and ensuring that the bounds do not depend on parameters that can admit large values (as the content catalog size). Furthermore, a key challenge in OCO algorithms is that the obtained solution at each iteration has to be projected onto the feasible set. For the problem of caching, this projection can be addressed by the efficient algorithm given in Chapter~\ref{ch:3}. Another solution is to employ projection-free algorithms, which however typically have poor learning rates (slow convergence)  \cite{martin-jaggi}.


\chapter{Asymptotic laws for caching networks }\label{ch:6}

In chapter~\ref{ch:4} we saw that the different problems of content caching in networks are prohibitively complex to solve to optimality, especially for large networks. In this chapter, we examine a  caching network arranged in the form of a square grid with a number of nodes that grows to infinity, and we show that in such a regime the analysis can be simplified. 
Following the steps of \cite{infocom,savvas_ToIT,paderborn,J_Gitzenis_14} we show that  a relaxed macroscopic convex program can be used to produce  solutions that scale at optimal  rate, i.e., the gap from the optimal is of smaller order (with respect to size parameters) than the value of the optimization.
Therefore, the macroscopic analysis can be used to  extract safe conclusions about the sustainability of large caching networks, with respect to their capability of withstanding a growing video demand.

\subsubsection{Network sustainability}
The sustainability of wireless multihop networks can be  characterized by means of studying a network that grows in size and examining its capacity scaling laws. In their seminal work \cite{Kumar}, Gupta and Kumar studied the asymptotic behaviour of multihop wireless networks when communications take place between $K$ random pairs. They showed that  the maximum data rate is $O\mleft(\sfrac{1}{\sqrt{K}}\mright)$, hence as the network grows ($K\to\infty$) the per node data rate vanishes to zero. 
This finding  argues against the sustainability of multihop  communications. The limit $O\mleft(\sfrac{1}{\sqrt{K}}\mright)$  arises from the fact that as the network grows, the number of wireless hops between a random pair of nodes also increases, and each wireless transmission is limited by fundamental physical laws \cite{franceschetti}. Hence, the rate $O\mleft(\sfrac{1}{\sqrt{K}}\mright)$ is thought impossible to breach under the classical random communicating pair model. 

However, if each node in the network is interested in a content instead of directly communicating with another node, by replicating the contents  in the network caches the  demands can be served by  nearby caches reducing in this way the number of traversed hops. 
Hence, we are motivated to ask the  question: \emph{Can  a cache-enabled network achieve a sustainable throughput scaling?} 
To answer the question we derive the asymptotic laws for capacity scaling with  content replication and show that caching has a powerful effect on the sustainability of wireless networks since in some  regimes the  $O\mleft(\sfrac{1}{\sqrt{K}}\mright)$ law can be breached.

The fundamental  size parameters are  the number of nodes $K$ and the number of contents $N$, and thus the scaling laws will depend on how fast these increase to infinity. Another key parameter is the cache size $M$ which depicts the number of contents that can be cached at each node. 
Taking $M$ to infinity represents an interesting regime that reflects networks where nodes  are upgraded as storage gets abundant and inexpensive\cite{Roberts}.
We will see that an additional  influencing factor is the content popularity, and in particular the power law exponent $\Zipf$. 
With skewed popularity, some contents are requested multiple times and hence a smaller $M$ has a bigger impact on the system performance. 


\section{Analysis of Large Caching Networks}

\subsubsection{System Model}
\label{sec:replication}

$K$ nodes
are  arranged on a $\sqrt{K}\times\sqrt{K}$ square grid on the plane.  Each node is connected via undirected links to its four neighbours that lie next to it on the same row or column. By keeping the node density fixed and increasing the network size $K$, we obtain a scaling network similar to \cite{Kumar}, with random pair throughput scaling as $O\mleft(\sfrac{1}{\sqrt{K}}\mright)$. Moreover, to avoid boundary effects, we consider a toroidal structure as in \cite{franceschetti}.

\begin{figure}[!htp]
	\centerline{
		\includegraphics[width=0.25\columnwidth]{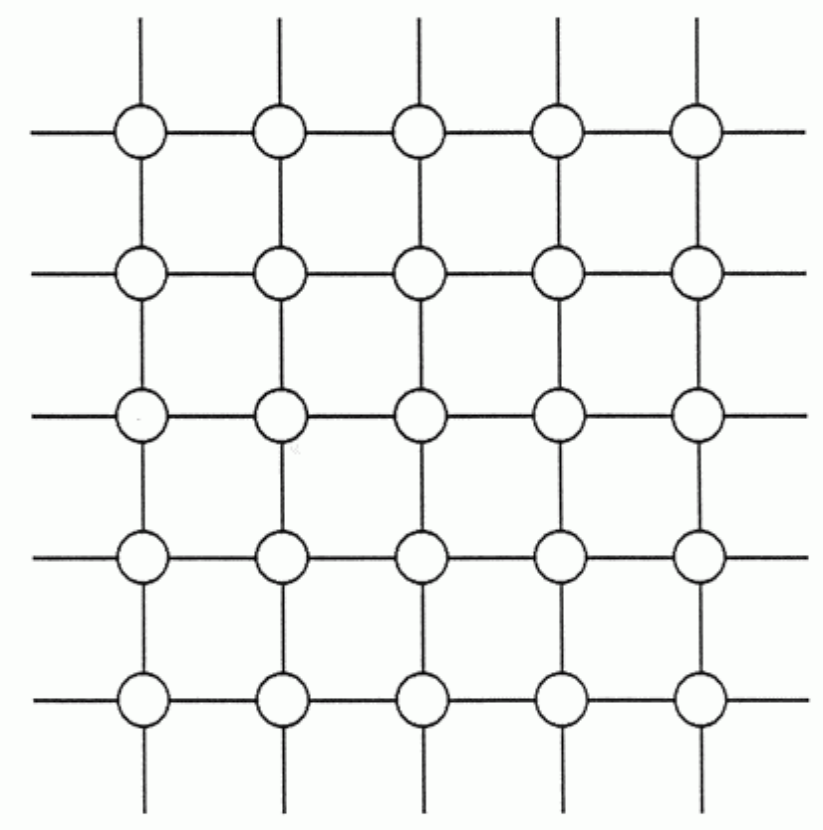}
	}
	\caption{The network topology  is  a toroidal square grid.}
	\label{fig:grid}
\end{figure}


Nodes generate requests to a catalog of contents $\mathcal N \triangleq\{1,2,\dots,N\}$. 
Node $k$  requests  content $n$ with rate $\lambda_n^k$. 
Instead of looking at the capacity region (=set of rate vectors $[\lambda_n^k]$ that are sustainable) as in \cite{niesen}, to obtain a quantitive result, we will consider a symmetric case where all nodes have $\lambda_n^k=\lambda_n,~\forall n$. 
We further consider a Poisson IRM where the requests for content $n$ are due to an independent Poisson process with intensity $\lambda_n=\lambda p_n$, where $[p_n]$ is the  content popularity distribution--here  assumed time-invariant.  
Last, we fix $\lambda=1$ and study the required link capacity to support the communications. 
This is the inverse of the classical approach where we fix the link capacity and study the scaling of maximum supportable throughput. For example, in a network where under fixed link capacities the  throughput scales as $O\mleft(\sfrac{1}{\sqrt{K}}\mright)$, our model will yield required link capacity scaling of $\Theta({\small \sqrt{K}})$ to support constant throughput $\lambda=1$. More generally, the obtained scaling laws for required link capacity can be inverted to reflect the throughput scaling laws. For sustainability we require the link capacity to be as small as possible, ideally to be $O(1)$.

Each node $k$ is equipped with a cache, whose contents are denoted with $\mathcal M_k\subseteq\mathcal N$. If a request at node $k$ regards a content $n\in\mathcal M_k$, then the content can be served locally. Due to the limited storage, $n$  will often be missing from  $\mathcal M_k$, in which case it will be obtained over the network from some other node $w$ such that  $n\in\mathcal M_w$. 
We denote with $[\mathcal{F}_{k,w}]$ the set of routes connecting nearby caches, and we say that a route is feasible if it is a path  that connects a requesting node $k$ to a cache on node $w$ that stores the requested content. The choice  of the sets $[\mathcal{M}_k]$ and $[\mathcal{F}_{k,w}]$ crucially affects the network link loading, and hence to discover the fundamental sustainability laws, we will have to optimize them.

Let $M$ be the cache size measured in the number of contents it can store, which is the same at all nodes, i.e., 
 $|\mathcal M_k| \le M$. 
The important size parameters of the system are $K,N,M$, and the studied regime  satisfies
\begin{align}
	\label{eq:KNM} 
	M & < N\le MK.
\end{align}
The first inequality implies that the cache size $M$ is not enough to fit all contents and hence each node needs to  make a selection of contents to cache. The second inequality  requires that the total  network cache capacity $MK$ (summing up all individual caches) is sufficient to store all contents at least once, and thus each content request can be served.

\subsection{Capacity Optimization}

To yield the correct scaling law of required link capacities, it is necessary to select the best placement and the best routing that minimizes the load at the worst link in the network. Let  $C_{\ell}$ be the  traffic load carried by link $\ell$. We are interested in the following optimization:

\begin{opt}{Worst-cast Link Load Minimization}\vspace{-0.18in}
\begin{align} C^*=~&\text{Minimize}_{[\mathcal{M}_k],[\mathcal F_{k,w}]}\max_{\ell} C_{\ell}\label{opt:wll1} \\
\text{s.t.}~~& |\mathcal{M}_k|\leq M, \quad \forall k \quad\quad\quad\quad\quad\quad ~~ \text{(cache constraint)},\label{opt:wll2} \\
& \sum_k {1}_{\{n\in \mathcal{M}_k\}}\geq 1,\quad \forall n\in\mathcal{N} \quad ~~~\text{(all contents are cached)}, \label{opt:wll3} \\
& [\mathcal F_{k,w}] \quad \text{are feasible routes}.\label{opt:wll4}
\end{align}
\end{opt}

This problem is a variant of the DPR problem for the square grid topology, and as explained in the previous chapter, it is a hard combinatorial problem.
In order to derive the scaling laws we must solve this problem in closed form. To this purpose, we will employ a number of simplifications and arrive at a relaxed problem which is amenable to analysis, while later it will be shown that the solution of the relaxed problem is in fact of the same  order as  the optimal solution of the problem above.
The simplifications are:  (a) the relaxation of the objective to minimizing the average link traffic $\mathop{\mathrm{avg}}_\ell C_\ell$ (instead of the worst-case $\max_{\ell} C_{\ell}$), (b) fixing the routing variables $[\mathcal F_{k,w}]$   to shortest paths, and (c) breaking the coupling between the individual caches $[\mathcal M_k]$, by introducing the notion of \emph{replication density} of content $n$.

\subsubsection{Macroscopic  density reformulation}
For the simplification (c) above, given a placement $[\mathcal M_k]$, consider the frequency of occurrence of each content $n$ in the caches, or {\em replication density} $d_n$ as the fraction of nodes that store content $n$ in the network: \pagebreak[0]
\begin{equation*}
	d_n=\frac{1}K \sum_{k\in\mathcal K} 1_{\{n \in \mathcal{M}_k\}}.
\end{equation*}

Based on this metric, we define a simpler macroscopic problem, where instead of optimizing over content placement, we only optimize the densities of each content (without caring about how the contents are actually cached at each individual node-hence the term macroscopic):
\begin{opt}{Replication Density Optimization}\vspace{-0.18in}
\begin{align} C=~&\text{Minimize}_{[d_n]} \displaystyle \sum_{n\in\mathcal{N}} \left(\frac{1}{\sqrt{d_n}}-1\right) p_n \label{pr:density}\\
\text{s.t.}~~& \sfrac{1}{K}\le d_n \le 1, \quad  \forall n\in\mathcal N\quad  \text{(replication constraint)}\label{pr:density-1}\\
& \sum_{n\in\mathcal N} d_n \le M \quad ~~~ \text{(total network cache constraint)}.\label{pr:density-2}
\end{align}
\end{opt}

In the objective, $\sfrac{1}{\sqrt{d_n}}-1$ approximates (order-wise) the average hop count from a random node to a cache containing $n$. 
The average hopcount (averaged also over the distribution $p_n$) expresses the average link load per request. Additionally, the constraint $\sum_{n\in\mathcal N} d_n \le M$ in \eqref{pr:density-2} reflects another relaxation, whereby the cache size constraint is not satisfied at every node, but only on average across the network. 
Due to this last relaxation, 
a feasible macroscopic solution 
 may yield content densities that force some nodes to cache more than $M$ contents.
It is clear that any feasible solution of problem~\eqref{opt:wll1}-\eqref{opt:wll4}  yields a content density  $[d_n]$ that is feasible for problem~\eqref{pr:density}-\eqref{pr:density-2}, but not vice versa, hence the density-based formulation is  a relaxed version of the original worst-case problem 
 and we readily have $C= O(C^*)$.

Furthermore,  problem~\eqref{pr:density}-\eqref{pr:density-2} is convex and its unique solution  can be found using the Karush-Kuhn-Tucker (KKT) conditions.\footnote{In fact, problem~\eqref{pr:density}-\eqref{pr:density-2} is similar to the Euclidean projection of the gradient step explained in Chapter 3.} Regarding the constraints on $d_n$ about its minimum and maximum value, either one of them can be an equality, or none. This partitions $\mathcal N$ into three subsets, the `up-truncated' $\mathcal N_\uptruncated=\{n:d_n=1\}$ containing contents stored at all nodes, the `down-truncated' $\mathcal N_\downtruncated=\{n:d_n=\sfrac{1}{K}\}$ containing contents stored in just one node, and the complementary `non-truncated' $\mathcal N_\nottruncated=\mathcal N\setminus (\mathcal N_\uptruncated \cup \mathcal N_\downtruncated)$ of contents with $\sfrac{1}K < d_n <1$. Arranging $p_n$ in decreasing order, the partitions become $\mathcal N_\uptruncated  = \{1,2,\dots,l -1\}$, $\mathcal N_\nottruncated  = \{l ,l +1,\dots,r -1\}$, and $\mathcal N_\downtruncated =\{r ,r +2,\dots,N\}$; $l$ and $r$ are integers with $1\le l \le r \le N+1$. 
The solution $d_n$ is equal to\pagebreak[0]
\begin{subnumcases}{\label{eq:density}d_n=}
	1, &   $n\in \mathcal{N}_{\uptruncated}$,\\
	\frac{ M - l + 1 - \frac{N - r + 1}{K}}{
		\sum_{j\in\mathcal N_\nottruncated}p_j^{\frac{2}3}}\  p_n^{\frac{2}3},       &   $n\in \mathcal{N}_{\nottruncated}$,\ \ \ \ \ \ 
	\label{eq:density_nottruncated}
	\\
	\sfrac{1}{K}, & \!\! $n\in \mathcal{N}_{\downtruncated}$.
\end{subnumcases}

Fig.~\ref{fig:density} illustrates such an example solution, depicting the density $d_n$, indices $l$ and $r$, as well as sets $\mathcal  N_\uptruncated$, $\mathcal N_\nottruncated$, $\mathcal N_\downtruncated$ when content popularities follow the Zipf law (see Section~\ref{sec:zipfLaw}).

\begin{figure}[!htp]
	\centerline{
		\begin{overpic}[scale=.277]{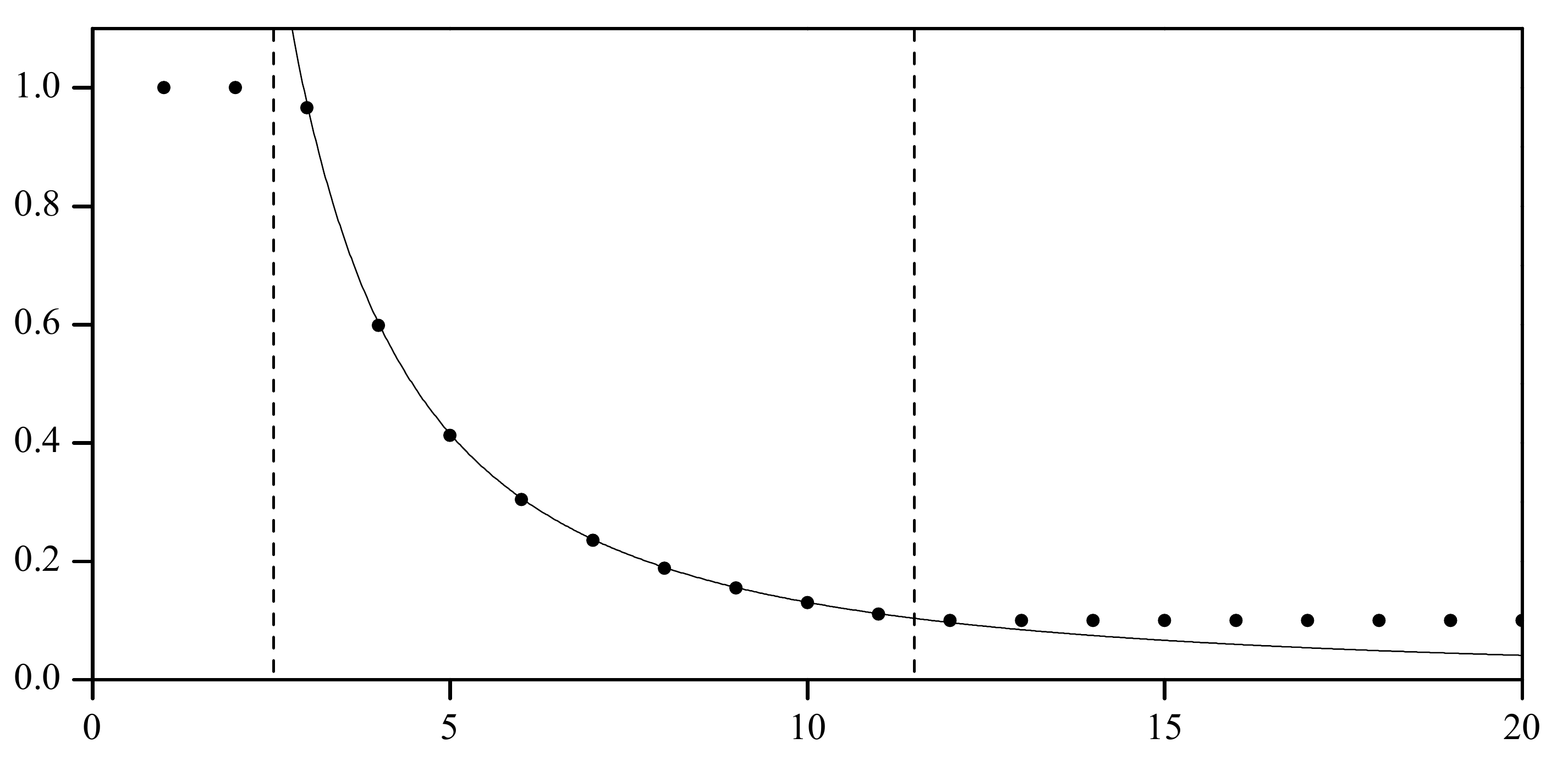}
			\put(-6.5,22.7){\small$d_n$}
			\put(44,-2.5){\small$n$}
			\put(19.1,1){\scriptsize$l$}
			\put(59.1,1){\scriptsize$r$}
			\put(9,22){$\mathcal N_\uptruncated$}
			\put(40,22){$\mathcal N_\nottruncated$}
			\put(77,22){$\mathcal N_\downtruncated$}
	\end{overpic}}
	\vskip 4pt
	\caption[An example case of density and set partitions.]{An example case of density $d_n$ and the $\mathcal N_\uptruncated, \mathcal N_\nottruncated$ and $\mathcal N_\downtruncated$ partitions. Solid line plots the $\sim n^{-\frac{2\tau}3}$ law of $n\in\mathcal N_\nottruncated$, when $p_n$ follows Zipf's law.}
	\label{fig:density}
\end{figure}

\subsubsection{Order Optimality of Rounded Densities}

The solution \eqref{eq:density} is not directly mapped to a feasible solution for Problem~\eqref{opt:wll1}-\eqref{opt:wll4}, but we may construct one via a two-step process, (i) first rounding $[d_n]$ to $[d_n^{\circ}]$, and (ii) second placing the content symmetrically on the network according to $[d_n^{\circ}]$ so that the constraints $|\mathcal{M}_k|\leq M$ are satisfied. For (i), we simply  define
$d_n^{\circ}\triangleq 4^{-\nu^\circ_n}$ rounded  to the largest power less or equal to $[d_n]$
\begin{equation}
	\label{d_truncated}
	d_n^{\circ}  \triangleq \max \left\{4^{-i}\!:4^{-i}\le d_n,\ i\in\{0,1,\dots,\nu\}\right\}.
\end{equation}

Then for (ii), \cite{savvas_ToIT} gives an algorithm to allocate the contents $\mathcal N$ in the caches $[\mathcal M_k]$ given the replication densities $d_n^{\circ}$. 
The algorithm can be explained by means of Fig.~\ref{fig_grid}. 
Suppose $[d_n^{\circ}]=(1,1/4,1/16,\dots,1/16)$.
We begin with the grey content which has $d_0^{\circ}=1$, this content is simply cached everywhere. Then for the content with $d_1^{\circ}=1/4$, we focus on a $2\times 2$ subgrid of nodes (any such subgrid suffices but we fix the origin to be the top left node in  figure~\ref{fig_grid}). In this subgrid we try to fill the diagonal, which in this case is achieved by placing content 1 at the coordinate $(1,1)$ (top left node in the grid). As a last step for this content, we place replicas by tiling the subgrid everywhere in the network. The result is that content 1 is replicated with density 1/4, as prescribed by the solution. Then for the contents with density 1/16 we enlarge the subgrid to $4\times 4$. In general the subgrid has a size $2^{\nu_m^{\circ}}\times 2^{\nu_m^{\circ}}$  and is aligned with all the considered subgrids. We then  fill the subgrid with the new contents starting  with the diagonal, notably contents 2, and 3 in the example. Then contents 4, 5, 6 are filled in the second diagonal which is below the first, while 7 completes the second diagonal by wrapping up at the coordinate $(1,4)$. During the filling, we only select nodes that have less contents than the maximum. For example, when we are filling object 2 in the subgrid $4\times 4$, we skip the node $(1,1)$ since that node has already two contents (content 0 and content 1), and we place content 2 at node $(2,2)$ which only had one content so far (content 0).
Where would we place content 10 with $d_{10}^{\circ}=1/64$? We would consider the subgrid $8\times 8$ (i.e. the entire grid), the first three diagonals are fully filled, hence we would place it in the first open spot in the fourth diagonal, that is node $(4,1)$.

Using the above canonical placement algorithm it is easy to show that a rounded feasible density solution $(d^{\circ}_n)$ can be mapped to a feasible solution for problem  \eqref{opt:wll1}-\eqref{opt:wll4}. This is because our algorithm places a content to the cache with the less stored contents so far. Therefore, if a cache has $M+1$ contents stored, it follows that all other caches must have at least $M$, and we may conclude that we have cached at least $NM+1$ replicas. This contradicts the fact that combining \eqref{d_truncated} and constraint \eqref{pr:density-2}, the total cached replicas can be at most $NM$. The value of this feasible solution is  $C^{\circ}=\Omega (C)$.
Finally, the following result is established in \cite{savvas_ToIT}.
\begin{figure}[!htp]
	\centerline{
		\begin{overpic}[scale=.377]{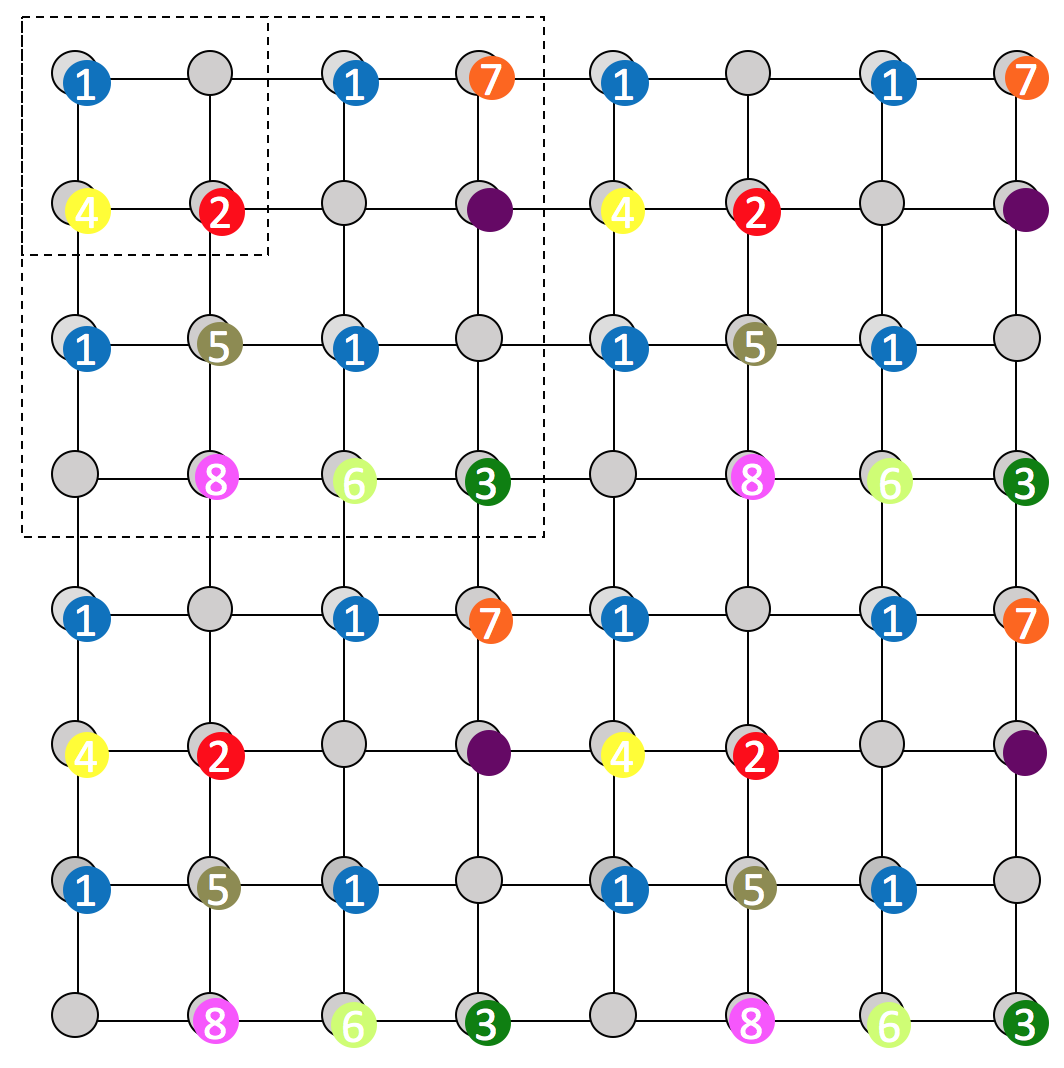} 
		\end{overpic}
	}
	\caption{An example of the canonical rounding used to obtain a feasible cache placement solution. 
		Here, the following items are placed $d_0^{\circ}=1$, $d_1^{\circ}=1/4$, $d_2^{\circ}=\dots=d_8^{\circ}=1/16$.}
	\vskip -3pt
	\label{fig_grid}
\end{figure}

\begin{thm}[theorem style=plain]{Order optimality of rounded densities}{CacheFillingANCNOptimality}
	There exist positive constants $a,b$ that depend on the distribution $(p_m)$, and cache capacity $K$, such that
	$$ 
	C^*\leq C^{\circ} \leq aC^*+b.
	$$
	Furthermore it is $\Theta(C^*)=\Theta(C^{\circ})=\Theta(C)$.
\end{thm}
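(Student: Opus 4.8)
The plan is to establish the two–sided bound $C^*\le C^\circ\le aC^*+b$ first, and then obtain the three $\Theta$–equivalences by chaining it with the two facts already available in the excerpt: the relaxation bound $C=O(C^*)$ and the rounding bound $C^\circ=\Omega(C)$. The left inequality $C^*\le C^\circ$ is immediate. The canonical placement algorithm maps the rounded densities $(d_n^\circ)$ to an actual placement $[\mathcal M_k]$ that respects $|\mathcal M_k|\le M$ and caches every content at least once (this feasibility was argued just above via the ``at most $NM$ replicas'' counting), so it is feasible for Problem~\eqref{opt:wll1}-\eqref{opt:wll4}. Since $C^*$ is the minimum worst-case link load over all feasible placements and $C^\circ$ is the worst-case load achieved by this particular one, we get $C^*\le C^\circ$.

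For the right inequality I would proceed in two steps. First, control the effect of rounding on the macroscopic objective. By \eqref{d_truncated} we have $d_n^\circ\in(d_n/4,\,d_n]$, hence $1/\sqrt{d_n^\circ}<2/\sqrt{d_n}$ and therefore $1/\sqrt{d_n^\circ}-1\le 2(1/\sqrt{d_n}-1)+1$. Weighting by $p_n$ and summing yields $\sum_n(1/\sqrt{d_n^\circ}-1)\,p_n\le 2C+1$, so rounding inflates the density objective by at most a constant factor plus a bounded additive term. Second — and this is the crux — I would bound the worst-case link load $C^\circ$ by a constant multiple of this rounded objective. Here one invokes the toroidal symmetry together with the periodic, tiled structure produced by the canonical algorithm: because every content of rounded density $4^{-i}$ is replicated identically on each aligned $2^i\times 2^i$ subgrid and the demand is symmetric Poisson with $\lambda_n^k=\lambda_n$, the link-load field is spatially periodic, so the maximum link load exceeds the average link load only by a constant factor. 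Since the average link load is order-wise captured by $\sum_n(1/\sqrt{d_n^\circ}-1)\,p_n$ (the term $1/\sqrt{d_n^\circ}-1$ approximating the mean hop count to the nearest replica of $n$), this gives $C^\circ\le \kappa_1\bigl(2C+1\bigr)$ for a constant $\kappa_1$. Combining with $C=O(C^*)$, say $C\le\kappa_2 C^*$, produces $C^\circ\le aC^*+b$ with $a=2\kappa_1\kappa_2$ and $b=\kappa_1$.

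The order statements then follow by chaining. From $C^*\le C^\circ$ we get $C^*=O(C^\circ)$, and from $C^\circ\le aC^*+b$ we get $C^\circ=O(C^*)$, so $\Theta(C^*)=\Theta(C^\circ)$. Moreover $C^*\le C^\circ\le\kappa_1(2C+1)$ gives $C^*=O(C)$, which combined with the relaxation bound $C=O(C^*)$ yields $\Theta(C)=\Theta(C^*)$; the chain thus closes to $\Theta(C^*)=\Theta(C^\circ)=\Theta(C)$, where throughout I use that in the regime of interest $C,C^*,C^\circ\to\infty$ so that the additive constants are absorbed. The main obstacle I anticipate is the second step of the right inequality: rigorously arguing that the worst-case link load is within a constant factor of the average. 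Both the hop-count approximation and the load balancing afforded by the symmetric tiling must be made precise — in particular one must verify that aggregating the periodic contributions of all density classes $\mathcal N_\uptruncated,\mathcal N_\nottruncated,\mathcal N_\downtruncated$ never produces a link whose load exceeds the average by more than a constant, uniformly in the size parameters; this delicate accounting is precisely what is carried out in \cite{savvas_ToIT}.
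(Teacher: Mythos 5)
Your proposal takes essentially the same route as the paper's (sketched) proof: the lower bound $C^*\le C^{\circ}$ from feasibility of the canonical placement, a constant-factor loss from the power-of-4 rounding of the densities, and the crux step of bounding the worst-case link load by a constant multiple of the average load of the canonical placement, which the paper likewise does not prove in-line but defers to the local-geometry analysis of \cite{savvas_ToIT}, exactly as you do. One caution you already half-acknowledge: spatial periodicity of the load field alone does not imply the maximum is within a constant of the average (a period can contain a hot spot), so the ``delicate accounting'' you defer to \cite{savvas_ToIT} is not a formality but the actual mathematical content of the second step --- you correctly identify it as such, which matches the paper's own treatment.
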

The theorem is proven in \cite{savvas_ToIT} in two steps. First, it is established that the feasible solution $C^{\circ}$ obtained by rounding the densities and applying the canonical placement algorithm is of the same order as \eqref{opt:wll1}-\eqref{opt:wll4} when the objective is the average load in the network. This fact is established simply by counting the average hopcount of the canonical placement. Second,  a more elaborate proof  shows the order optimality with respect to \eqref{opt:wll1}-\eqref{opt:wll4} itself, by analyzing the local geometry of the canonical solution.
Theorem \ref{thm:CacheFillingANCNOptimality} is very powerful because it allows us to derive the scaling laws of wireless networks with caching $C^*$ using the  macroscopic analysis of~\eqref{pr:density}-\eqref{pr:density-2}. We remark that since $a,b$ above depend on $M$, scaling laws with respect to $M$ need to consider this dependence carefully. This is taken into account in the analysis of \cite{J_Gitzenis_14}.

\subsection{Asymptotic Laws for Zipf Popularity}
\label{sec:asymptotic_constant_cache}
\label{sec:zipfLaw}
To determine the scaling of $C$, we switch from the arbitrary popularity to the Zipf law. 
Substituting the solution \eqref{eq:density} and plugging in the Zipf distribution in the objective of Problem~\ref{pr:density}, it follows:
\begin{align}
	\label{eq:cap}
	C&\triangleq\sum_{m\in\mathcal N}\left(d_m^{-\frac{1}2}-1\right)p_m
	= C_{\nottruncated}+ C_{{\downtruncated}}- \sum_{j=l}^N p_m,
\end{align}
where $\sum_{j=l}^N p_m=O(1)$ (as it lies always in $[0,1]$), and  
\begin{align}
	\label{eq:C_nottruncated}\displaybreak[0]
	C_{\nottruncated} & \triangleq \!\sum_{m\in\mathcal N_\nottruncated}\frac{p_m}{\sqrt{d_m}} \stackrel{\text{\eqref{eq:density},\eqref{eq:powerlaw}}}{=}  \frac{\left[H_{\frac{2\tau}3}(r-1)-H_{\frac{2\tau}3}(l-1)\right]^{\mathrlap{\frac{3}2}}}{\sqrt{M-l+1-\frac{N-r+1}{K}}\ H_{\tau}(N)},\\\displaybreak[0]
	\label{eq:C_downtruncated}
	C_{{\downtruncated}} & \triangleq\! \sum_{m\in\mathcal N_\downtruncated}\frac{p_m}{\sqrt{d_m}} \stackrel{\text{\eqref{eq:density},\eqref{eq:powerlaw}}}{=} \sqrt{K}\ \frac{H_{\tau}(N)-H_{\tau}(r-1)}{H_{\tau}(N)}. \end{align}

To analytically compute the law of $C$, we will approximate $H_{\tau}(x)$ from \eqref{eq:H_approx}, and further analyze $l$ and $r$ using the KKT conditions. Since $N$ scales to infinity, $l,r$ may also scale to infinity, or not, depending on the actual solution \eqref{eq:density}.
Moreover, observing  that the  expressions \eqref{eq:C_nottruncated}-\eqref{eq:C_downtruncated} depend on $H_{\tau},H_{\frac{2\tau}3}$, we expect different cases to appear due to the form of \eqref{eq:H_approx}. 
In fact, previous work \cite{savvas_ToIT}  yields 5 cases depending on the values of $\tau$. These in terms are mapped to scaling laws via \eqref{eq:C_nottruncated}-\eqref{eq:C_downtruncated}. The results are presented in Table~\ref{table:main}.
The cases $\tau=1$ and $\tau=\sfrac{3}2$ are omitted to avoid clutter; they are similar to the  cases $\tau<1$ and $\tau>\sfrac{3}2$ up to a  logarithmic factor.

\begin{table}[t]
	\small
	\caption{Scaling Laws with Constant Cache Size}
	\label{table:main}
	\centering
	\begin{tabular}{ | c|c  || c | c | c | c | }
		\hline
		\multicolumn{2}{|c||}{\multirow{2}{*}{$N$}} & \multicolumn{1}{c|}{\multirow{2}{*}{$N$ finite}} & $\!K\!\rightarrow\!\infty$ then $\!$ &  \multicolumn{2}{c|}{$N\sim MK$, hence $N\!=\!\Theta(K)$}    \\ 
		\cline{5-6} \multicolumn{2}{|c||}{} & & $N\!\rightarrow\!\infty$ & $MK\!-\!N=\omega(1)$ & $\!MK\!-\!N\!=\!O(1)_{\vphantom{A_a}}^{\vphantom{A^a}}\!$   \\ 
		\hline  \multirow{5}{*}[-8pt]{   } & $\tau\!<\!1$ & $\Theta(1)$ & $\Theta\!\left(\sqrt{N}\right)$ $\vphantom{\displaystyle \sum^.}$ & $\Theta\!\left(\sqrt{K}\right)$ & $\Theta\!\left(\sqrt{K}\right)^{\vphantom{A^A}}_{\vphantom{A_A}}$ \\
		\cline{2-6}  $\!C\!$ & $\!1\!<\!\tau\!<\!\frac{3}2\!$ & $\Theta(1)$ & $\Theta\!\left(N^{\frac{3}2-\tau}\right)$ & $\!\Theta\!\left(\!\frac{\sqrt{K}}{(MK\!-\!N)^{\tau -1}}\right)^{\vphantom{A^a}}_{\vphantom{A_a}}$ & $\Theta\!\left(\sqrt{K}\right)$ \\
		\cline{2-6} & $\tau >\frac{3}2$ & $\Theta(1)$ & $\Theta(1)$ &  $\Theta\!\left(\!\frac{\sqrt{K}}{(MK\!-\!N)^{\frac{3(\tau -1)\!}{2\tau}}\!}\!\right)^{\vphantom{A^a}}_{\vphantom{A_a}}$ & $\Theta\!\left(\sqrt{K}\right)$ \\ \hline
	\end{tabular}
\end{table}

\textbf{Scaling laws for $M$ constant.} Table~\ref{table:main} shows how the solution of Problem~\ref{pr:density} scales with the system size parameters $K$ (number of nodes/users), and $N$ (catalog size); $M$ (cache size) in this table is a constant. From Theorem~\ref{thm:CacheFillingANCNOptimality}, the scaling of $C$ also provides the  required link capacity scaling for sustaining a uniform request rate $\lambda=1$, which was chosen to be our sustainability criterion. 
To help the reader, a $C$-scaling $O(1)$ means that the  network can  sustain unit throughput even if the link capacities are sufficiently large constants--the sustainable case. On the other hand,  $\Theta(\sqrt{N})$ means that to sustain unit per-user throughput, the link capacity needs to increase proportionally to the square root of the content catalog. Although we expect  $N$ to grow to infinity, link capacities obey Shannon limits, and this mismatch leads to an  unsustainable network. The admissible throughput scaling law for fixed capacities is given by the inverse;  for example $\Theta(\sqrt{K})$ is inverted to give $\lambda=\Theta\mleft(\sfrac{1}{\sqrt{K}}\mright)$ throughput with fixed capacities, which is the Gupta-Kumar scaling for random communicating pairs in a scaling network \cite{Kumar}. 

Our first observation from the last column of table~\ref{table:main} is that whenever the   replication capacity $MK-N=O(1)$ is low, i.e.~\emph{almost all cache slots are used to store each content once}, we retrieve the Gupta-Kumar regime. A small gain is obtained when we have more replications slots (column with $MK-N=\omega(1)$), but as long as the two scaling factors $K$ and $N$ increase proportionally to infinity, our caching technique has small (or no) effect.

The third column refers to the case where we take the limits in order, first $K\to\infty$ and then $N\to\infty$, i.e. we have $N=o(K)$ and the catalog grows  slower than (sublinearly to)  the network, a regime anticipated to be realistic--e.g. the catalogue of Youtube increases linearly in time, but the Youtube requests increase exponentially. The scaling laws in this case depend on the content catalog size, which yields potentially  a significant improvement over the Gupta-Kumar scaling. Also, the power law exponent of content popularity has a profound effect, where larger values of $\tau$ improve the sustainability (decrease the required link capacity).
Finally, when  $N$ is finite,  the system is always sustainable (required link capacity $\Theta(1)$). 

\textbf{Scaling laws for  $M\to\infty$.}
As memory becomes cheaper and cheaper, we may envisage the scenario where the per-node cache $M$ also scales to infinity. 
The extention of the analysis of \eqref{eq:C_nottruncated}-\eqref{eq:C_downtruncated}  in the case $M\to\infty$ is found in \cite{J_Gitzenis_14}. Here we briefly discuss the resulting scaling laws presented in table~\ref{table:synopsis}.
Considering how $N,M,K$ grow, we can study the system in different regimes of operation, which complicates the exposition of scaling laws. For this reason, we focus here on two specific regimes of interest. 
We compare the total network memory  $MK$ to the content catalog $N$, and split the analysis to two cases (i) $MK=\Omega(N)$ and (ii) $MK=O(N)$ (in fact \cite{J_Gitzenis_14} provides also the asymptotic constant that separates the two regimes).
The first case is called `High' $M$ and the second `Low' $M$.

\begin{table}[t]
	\renewcommand{\arraystretch}{1.9}
	\small
	\caption{Scaling Laws with Scaling Cache Size}
	\label{table:synopsis}
	\centering
	\begin{tabular}{ | c|c  || c | c | }
		\hline
		\multicolumn{2}{|c||}{} & `High' $M$ & `Low' $M$ \\ 
		\hline  \multirow{5}{*}[-8pt]{   } & $\tau<1$ & $O \left( \sqrt{ \frac{N}M}\right)$ & $\Theta\!\left(\sqrt{K}\right)^{\vphantom{A^A}}_{\vphantom{A_A}}$ \\[3pt]
		\cline{2-4}  $\!C\!$ & $\!1\!<\!\tau\!<\!\frac{3}2\!$  & $O\mleft( \frac{N^{\frac{3}2 - \tau}}{\sqrt{M}}\mright)$  & $\Theta\mleft( \frac{\sqrt{K}}{\left(MK -  N\right)^{\tau - 1}} \mright)$ \\[3pt]
		\cline{2-4} & $\tau >\frac{3}2$ &   $O(1)$ & $\Theta\mleft( \sqrt{ \frac{K}{MK - N}}\mright)$ \\[3pt] \hline
	\end{tabular}
\end{table}

\begin{itemize}
\item \textbf{`High' $M$ ($MK=\Omega(N)$).}
The most interesting regime to explore for perfect sustainability is the one of $C=O(1)$. As the formulas show, to keep $C$ bounded, the hardest case is on $\tau<1$:  cache size $M$ should scale as fast as content volume $N$ (e.g. a fixed $\gamma=M/N$ must be achieved). In the intermediate case of $1<\tau<\sfrac{3}2$, cache $M$ has to scale with $N$, but slower, at a sublinear power.
The case of $\tau>\sfrac{3}2$ is quite interesting, as $C=O(1)$ always holds true even for fast scaling catalogues.

\item\textbf{`Low' $M$ ($MK=O(N)$).} This regime  is characterized as unsustainable, because the replication slots $MK-N$ are not sufficient to keep $C$ low;
 $C$ scales as fast as $\sqrt{K}$, the Gupta-Kumar law \cite{Kumar}. The same is true if $\tau<1$. Small gains are observed if $\tau>1$ and $MK-N=\omega(1)$.

\end{itemize}

\section{Discussion of Related Work}

The work of \cite{Azimdoost16} studied the asymptotic laws of caching in the context of Information-Centric networks. Another line of work, extended the asymptotic laws to more realistic models for the wireless physical layer (PHY), and considered the aspect of mobility, showing that mobility does not really improve the scaling laws  \cite{leonardi}. The intricancies of wireless PHY were also captured in caching asymptotic laws in \cite{Liu16}. 

Contrary to the above uncoded considerations,  \cite{MaddahAli2014Fundamental} proposed the technique of ``coded caching'', as the optimal way of caching over a perfect broadcast medium. This idea has stirred a wide interest in the research community, and there exist numerous generalizations and interesting findings, which are not mentioned here; the interested reader is referred to \cite{paschos-jsac} as a starting point.
Continuing in asymptotic laws for caching, \cite{Ji2014Fundamental} studied the fundamental performance laws in device-to-device (D2D) wireless networks, while \cite{Cui16} studied the multicast delivery. A related survey \cite{d2dcaching} suggests that multihop D2D with spatial reuse has the same performance with coded caching.

\bibliographystyle{abbrv}
\bibliography{caching_literature_v2}  
\end{document}